\tikzstyle{vertex}=[circle, draw, inner sep=0pt, minimum size=4pt, fill = black]
\newcommand{\multiline}[1]{%
  \begin{tabularx}{\dimexpr\linewidth-\ALG@thistlm}[t]{@{}X@{}}
    #1
  \end{tabularx}
}
\def\BState{\State\hskip-\ALG@thistlm}
\titlespacing{\section}{0pt}{3ex}{2ex}
\titlespacing{\subsection}{0pt}{2ex}{1ex}
\titlespacing{\subsubsection}{0pt}{0.5ex}{0ex}
\newtheorem{theorem}{Theorem}[section]
\newtheorem{corollary}[theorem]{Corollary}
\newtheorem{lemma}[theorem]{Lemma}
\newtheorem{claim}[theorem]{Claim}
\newtheorem{hypothesis}{Hypothesis}
\newtheorem{problem}[theorem]{Problem}
\let\c@fconjecture\c@conjecture
\let\c@fconj\c@conj
\def \eps {\varepsilon}
\def \poly {\mathop{\rm poly}} %
\def\tO{\tilde{O}}
\newcommand{\OO}{\tilde{O}}%
\newcommand{\APSP}{\mbox{\sf Real-APSP}}
\newcommand{\IntAPSP}{\mbox{\sf Int-APSP}}
\newcommand{\ExactTri}{\mbox{\sf Real-Exact-Tri}}
\newcommand{\AEExactTri}{\mbox{\sf Real-AE-Exact-Tri}}
\newcommand{\AEExactTriCount}{\mbox{\sf Real-\#AE-Exact-Tri}}
\newcommand{\AENegTriCount}{\mbox{\sf Real-\#AE-Neg-Tri}}
\newcommand{\IntExactTri}{\mbox{\sf Int-Exact-Tri}}
\newcommand{\IntAEExactTri}{\mbox{\sf Int-AE-Exact-Tri}}
\newcommand{\IntAEExactTriCount}{\mbox{\sf Int-\#AE-Exact-Tri}}
\newcommand{\IntAENegTriCount}{\mbox{\sf Int-\#AE-Neg-Tri}}
\newcommand{\ThreeSUM}{\mbox{\sf Real-3SUM}}
\newcommand{\IntThreeSUM}{\mbox{\sf Int-3SUM}}
\newcommand{\AllThreeSUM}{\mbox{\sf Real-All-Nums-3SUM}}
\newcommand{\IntAllThreeSUM}{\mbox{\sf Int-All-Nums-3SUM}}
\newcommand{\OddSUM}{\mbox{\sf Real-$(2k+1)$SUM}}
\newcommand{\OV}{\mbox{\sf OV}}
\newcommand{\kOV}[1]{\mbox{\sf #1-OV}}
\newcommand{\AESparseTri}{\mbox{\sf AE-Sparse-Tri}}
\newcommand{\AESparseTriCount}{\mbox{\sf \#AE-Sparse-Tri}}
\newcommand{\AEMonoTri}{\mbox{\sf AE-Mono-Tri}}
\newcommand{\AEMonoTriCount}{\mbox{\sf \#AE-Mono-Tri}}
\newcommand{\MinPlus}{\mbox{\sf Real-$(\min,+)$-Product}}
\newcommand{\IntMinPlus}{\mbox{\sf Int-$(\min,+)$-Product}}
\newcommand{\DistinctEq}{\mbox{\sf $(\mbox{distinct},=)$-Product}}
\newcommand{\ModeEq}{\mbox{\sf $(\mbox{mode},=)$-Product}}
\newcommand{\DistinctPlus}{\mbox{\sf $(\mbox{distinct},+)$-Product}}
\newcommand{\ColorBMM}{\mbox{\sf Colorful-BMM}}
\newcommand{\AEColorSparseTri}{\mbox{\sf AE-Colorful-Sparse-Tri}}
\newcommand{\ModeBMM}{\mbox{\sf Mode-BMM}}
\newcommand{\CNFSAT}{\mbox{\sf CNF-SAT}}
\newcommand{\TClong}{\mbox{\sf Triangle-Collection*}}
\newcommand{\TCollong}{\mbox{\sf Triangle-Collection}}
\newcommand{\TCsslong}{\mbox{\sf Triangle-Collection**}}
\newcommand{\ACPTCblong}{\mbox{\sf ACP-Triangle-Collection}}
\newcommand{\TC}{\mbox{\sf Tri-Co*}}
\newcommand{\TCol}{\mbox{\sf Tri-Co}}
\newcommand{\TCss}{\mbox{\sf Tri-Co**}}
\newcommand{\TCsss}{\mbox{\sf Tri-Co$_{\textrm{light}}$}}
\newcommand{\ACPTCb}{\mbox{\sf ACP-Tri-Co}}
\newcommand{\ACPTC}{\mbox{\sf ACP-Tri-Co*}}
\newcommand{\ACPTCss}{\mbox{\sf ACP-Tri-Co**}}
\newcommand{\ACPTCsss}{\mbox{\sf ACP-Tri-Co$_{\textrm{light}}$}}
\newcommand{\SetDisj}{\mbox{\sf Set-Disjointness}}
\newcommand{\SetInter}{\mbox{\sf Set-Intersection}}
\newcommand{\col}{\mathop{\it color}}
\newcommand{\up}[1]{\left\lceil#1\right\rceil}
\newcommand{\nhat}{\hat{n}}
\DeclareMathOperator*{\argmin}{arg\,min}
\title{Hardness for Triangle Problems under\\ Even More Believable Hypotheses:\\ Reductions from Real APSP, Real 3SUM, and OV}
\author{Timothy M. Chan\thanks{Supported by NSF Grant CCF-1814026.}\\UIUC\\tmc@illinois.edu \and Virginia {Vassilevska Williams}\thanks{Supported by an NSF CAREER Award, NSF Grants CCF-1528078, CCF-1514339 and CCF-1909429, a BSF Grant BSF:2012338, a Google Research Fellowship and a Sloan Research Fellowship.}\\MIT\\virgi@mit.edu \and Yinzhan Xu\thanks{Partially supported by NSF Grant CCF-1528078.}\\MIT\\xyzhan@mit.edu}
\date{}
\begin{document}
\pagenumbering{gobble} 

\maketitle
\begin{abstract}
The $3$SUM hypothesis, the All-Pairs Shortest Paths (APSP) hypothesis and the Strong Exponential Time Hypothesis are the three main hypotheses in the area of fine-grained complexity. 
So far, within the area, the first two hypotheses have mainly been about integer inputs in the Word RAM model of computation. 
The ``Real APSP" and ``Real $3$SUM" hypotheses, which assert that the APSP and $3$SUM hypotheses hold for real-valued inputs in a reasonable version of the Real RAM model, are even more believable than their integer counterparts.  

Under the very believable hypothesis that at least one of the Integer $3$SUM hypothesis, Integer APSP hypothesis or SETH is true, Abboud, Vassilevska W. and Yu [STOC 2015] showed that a problem called Triangle Collection requires $n^{3-o(1)}$ time on an $n$-node graph. 

The main result of this paper is 
a nontrivial lower bound for a slight generalization of Triangle Collection, called All-Color-Pairs Triangle Collection, under the even more believable hypothesis that at least one of the Real $3$SUM, the Real APSP, and the Orthogonal Vector (OV) hypotheses is true. Combined with slight modifications of prior reductions from Triangle Collection, we obtain polynomial conditional lower bounds for problems such as the (static) ST-Max Flow problem and dynamic versions of Max Flow, Single-Source Reachability Count, and Counting Strongly Connected Components, now under the new weaker hypothesis. 

Our main result is built on the following two lines of reductions. 
\begin{itemize}
    \item Real APSP and Real $3$SUM hardness for the All-Edges Sparse Triangle problem. Prior reductions only worked from the integer variants of these problems.
    \item Real APSP and OV hardness for a variant of the Boolean Matrix Multiplication problem. 
\end{itemize}

Along the way we show that Triangle Collection is equivalent to a simpler restricted version of the problem, simplifying prior work.
Our techniques also have other interesting implications, such as a super-linear lower bound of Integer All-Numbers $3$SUM based on the Real $3$SUM hypothesis, and a tight
lower bound for a string matching problem based on the OV hypothesis. 

\end{abstract}

\newpage
\pagenumbering{arabic}  
\section{Introduction}

Fine-grained complexity is an active area of study that gives a problem-centric approach to complexity. Its major goal is to prove relationships and equivalences between problems whose best known running times have not been improved in decades. Through a variety of techniques and sophisticated reductions many  problems from a huge variety of domains and with potentially vastly different running time complexities are now known to be related via {\em fine-grained reductions} (see e.g. the survey \cite{virgisurvey}). 

As a consequence of the known reductions, the hardness of most of the studied problems in fine-grained complexity can be based on the presumed hardness of three key problems: the {\sf 3SUM} problem, the All-Pairs Shortest Paths ({\sf APSP}) problem and {\sf CNF-SAT}\@. Their associated hardness hypotheses below are all defined for the Word RAM model of computation with $O(\log n)$-bit words:

\begin{itemize}
\item {\bf The (Integer) $3$SUM hypothesis.} There is no algorithm that can check whether a list of $n$ integers from $\pm[n^c]$
for some constant $c$ contains three integers that sum up to zero in $O(n^{2-\eps})$ time for $\eps > 0$.%
\footnote{
Throughout this paper, $[N]$ denotes $\{0,1,\ldots,N-1\}$ and $\pm[N]$ denotes $\{-(N-1),\ldots,N-1\}$.
}

\item {\bf The (Integer) APSP hypothesis.} There is no algorithm that can solve the {\sf APSP} problem in an $n$-node graph whose edge weights are from $\pm[n^c]$
for some constant $c$ in $O(n^{3-\eps})$ time for $\eps > 0$.

\item {\bf The Strong Exponential Time Hypothesis (SETH).} For every $\eps > 0$, there exists an integer $k$ such that $k$-{\sf SAT} on $n$ variables cannot be solved in $O(2^{(1-\eps) n})$ time. An equivalent formulation states that there is no $O(2^{(1-\eps)n})$ time algorithm for {\sf CNF-SAT} with $n$ variables and $O(n)$ clauses for any $\eps>0$.
\end{itemize}

All three hardness hypotheses were studied before Fine-Grained Complexity even got its name.
The complexity of {\sf 3SUM} was first used as a basis for hardness in the computational geometry community by Gajentaan and Overmars~\cite{gajentaan1995class}. The complexity of {\sf APSP} has been used as a basis of hardness in the graph algorithms community at least since the early 2000s (e.g.\ \cite{RodittyZ04}). SETH was first studied in 1999 by Impagliazzo and Paturi \cite{ip01}, though the name ``SETH'' was first given later in \cite{cip13}. 
Together, the three hypotheses have been very influential, giving very strong lower bounds for many problems.

The $3$SUM hypothesis is now known to imply tight hardness results for many geometric problems (e.g.\  \cite{erickson1995lower,geo2, geo3, geo4, geo5, geo6, geo7, geo8, geo9}), and also for some non-algebraic problems (e.g. \cite{patrascu2010towards,AbboudWW14,lincoln2020monochromatic,abboud2014popular,kopelowitz2016higher}); some convolution problems~\cite{lincoln2020monochromatic,patrascu2010towards} are known to be equivalent to {\sf 3SUM}.
The APSP hypothesis is known to imply tight hardness results for many problems (e.g. \cite{patrascu2010towards, abboud2014popular,BringmannGMW20}), and many problems are also known to be fine-grained equivalent to {\sf APSP} (\cite{focsyj,abboud2014centrality}). SETH is now known to imply an enormous number of lower bounds both for problems in exponential time (e.g.\  \cite{exact1, exact2, exact3, exact4}), and in polynomial time (e.g.\  \cite{AlmanW15,Williams05,BackursRSWW18} and many more); the hardness of a small number of problems is also  known to be equivalent to SETH \cite{CyganDLMNOPSW16}. See \cite{virgisurvey} for more known implications.

There are no known direct relationships between the three hypotheses, and there is some evidence (e.g.~\cite{carmosino2016nondeterministic}) that reducing between them might be difficult.  
As we do not really know which, if any, of these hypotheses actually hold, it is important to consider weaker hypotheses that still give meaningful hardness results.

A natural hardness hypothesis considered by Abboud, Vassilevska W. and Yu~\cite{abboud2018matching} is the following:

\begin{hypothesis}
\label{conj:conj1}
At least one of SETH, the (Integer) $3$SUM or the (Integer) APSP hypothesis is true.
\end{hypothesis}

Under Hypothesis~\ref{conj:conj1}, Abboud, Vassilevska W. and Yu proved polynomial lower bounds for a variant of maximum flow and several problems in dynamic graph algorithms: dynamically maintaining the maximum flow in a graph, the number of nodes reachable from a fixed source in a graph ({\sf \#SSR}), the number of strongly connected components in a directed graph ({\sf \#SCC}),  and more.

Dahlgaard \cite{dahlgaard2016hardness} showed that computing the diameter of an unweighted graph with $n$ nodes and $m$ edges requires $n^{1-o(1)}\sqrt{m}$ time under Hypothesis~\ref{conj:conj1}. The reductions of both \cite{abboud2018matching} and \cite{dahlgaard2016hardness} utilized a problem called \TCollong,\ which we will abbreviate as \TCol. 

In the \TCol\ problem, given a node-colored graph with $n$ nodes, one is asked whether it is true that for all triples of distinct colors $(a, b, c)$, there exists a triangle whose nodes have these colors. A key step in the above reductions proved in \cite{abboud2018matching} is the following tight hardness result for \TCol.

\begin{theorem}[\cite{abboud2018matching}]
\label{thm:intro:abboud}
Assuming Hypothesis~\ref{conj:conj1}, \TCol\ requires $n^{3-o(1)}$ time.  
\end{theorem}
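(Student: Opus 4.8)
The plan is to prove the bound \emph{separately} under each of the three hypotheses; since Hypothesis~\ref{conj:conj1} only asserts that at least one of them holds, an $O(n^{3-\eps})$ algorithm for \TCol\ would contradict all three at once. So I would give three reductions \emph{into} \TCol: one from \CNFSAT\ (invoked under SETH, routed through the $3$-Orthogonal Vectors problem, $3$-OV), one from \IntThreeSUM, and one from \IntAPSP, each mapping a size-$N$ source instance to an $n$-node node-colored graph with $n$ equal to a subpolynomial factor times the appropriate power of $N$, so that a truly subcubic \TCol\ algorithm would beat the conjectured running time of the source.

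The SETH case is the cleanest and fixes the pattern. Under SETH, $3$-OV on sets $A,B,C$ of $N$ Boolean vectors in dimension $d=\polylog N$ requires $N^{3-o(1)}$ time, the answer being \textsc{yes} iff some triple $(a,b,c)\in A\times B\times C$ is ``orthogonal'', i.e.\ no coordinate is $1$ in all three. I would assign each input set its own color class, with one color $\alpha_a$ (resp.\ $\beta_b$, $\gamma_c$) per vector; then for each coordinate $\ell$ I would create a node colored $\alpha_a$ for every $a\in A$ with $a_\ell=1$, similarly for $B$ and $C$, and make these new nodes form a complete tripartite graph (the pieces for different $\ell$ are otherwise disjoint). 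A triangle of colors $(\alpha_a,\beta_b,\gamma_c)$ then exists iff $a,b,c$ share a common $1$-coordinate, i.e.\ iff they are \emph{not} orthogonal, so every such color triple is covered iff the $3$-OV instance is a \textsc{no}-instance. The one wrinkle is that \TCol's universal quantifier also ranges over ``irrelevant'' color triples---those having two colors from a single class---so I would add a few node-disjoint gadget families (a clique on one node per $A$-color, and likewise for $B$ and $C$, plus a biclique-like gadget for each of the six ``two colors from one class, one from another'' patterns) to cover exactly those; crucially each gadget uses colors from at most two of the three classes, so no gadget can create a spurious triangle on an $(\alpha_a,\beta_b,\gamma_c)$ triple, and the relevant triples remain governed solely by the per-coordinate tripartite pieces. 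The construction has $n=N^{1+o(1)}$ nodes, so an $O(n^{3-\eps})$ \TCol\ algorithm would solve $3$-OV in $N^{3-\eps+o(1)}$ time, contradicting SETH.

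For \IntThreeSUM\ and \IntAPSP\ I would follow the same $\forall\exists$ template---reducing either directly, or through the standard triangle-finding surrogates (Exact-Triangle, Negative-Triangle) known to require essentially cubic time under these hypotheses---so that a color triple has a triangle iff an \emph{arithmetic} coincidence occurs: a compatible choice of high/low digits makes three numbers sum to the target, or makes a triangle's weight negative. I expect this to be the main obstacle: one must simulate these numerical predicates purely by triangle-existence, at a bucketing resolution fine enough to keep the node count at $n\approx N^{2/3}$ for \IntThreeSUM\ (so $n^3$ matches the $N^{2-o(1)}$ bound) and at the analogous threshold for \IntAPSP, while still ensuring every irrelevant color triple is automatically covered. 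With the three reductions in hand, an $O(n^{3-\eps})$ algorithm for \TCol\ would refute each of SETH, \IntThreeSUM, and \IntAPSP, contradicting Hypothesis~\ref{conj:conj1}; hence \TCol\ requires $n^{3-o(1)}$. (The proliferation of gadgets needed to neutralize irrelevant triples is exactly what motivates the later observation that \TCol\ is equivalent to a restricted variant in which only structured color triples need to be checked.)
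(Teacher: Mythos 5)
Theorem~\ref{thm:intro:abboud} is quoted from \cite{abboud2018matching}; the present paper does not reprove it, so your proposal has to be measured against the original proof as the paper summarizes it. Your overall strategy --- three separate reductions, one per hypothesis, so that a truly subcubic \TCol\ algorithm refutes all three --- is exactly that of \cite{abboud2018matching}, and your SETH branch is essentially their argument: route through \kOV{$3$}, assign one color per vector, build one complete tripartite block per coordinate so that a triangle on colors $(\alpha_a,\beta_b,\gamma_c)$ witnesses a common $1$-coordinate, and add side gadgets covering the color triples that do not take one color from each class. That branch is correct, and $n=N^{1+o(1)}$ gives the right bound.

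The gap is in the \IntThreeSUM\ and \IntAPSP\ branches, which you leave as a plan and yourself flag as ``the main obstacle''. Two concrete problems. First, the missing idea is precisely the one the paper attributes to \cite{abboud2018matching}: pass through \IntExactTri\ (which is known to require $n^{3-o(1)}$ time under both integer hypotheses) and then transform the integer edge weights into short vectors of tiny integers --- write each weight in base $n^{\eps}$ with $O(1/\eps)$ digits, enumerate the constant-length carry sequences, and for each (digit position, carry pattern) build a small per-coordinate gadget exactly as in the OV case, so that each color still has only $n^{O(\eps)}$ nodes. Without this digit-and-carry decomposition there is no way to test an exact arithmetic equality via triangle existence while keeping the graph near-linear in the number of colors; naive bucketing at a resolution fine enough to decide equality of $\pm[n^c]$-sized weights blows up the node count. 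Second, your stated polarity is inverted: mirroring your own 3-OV gadget (where a triangle witnesses non-orthogonality), a color triple must \emph{have} a triangle iff the corresponding numerical triple is \emph{not} a solution, so that \TCol\ answers NO exactly when a zero-sum (or negative) triangle exists in the source instance. As written (``a color triple has a triangle iff \ldots\ three numbers sum to the target''), a YES answer to \TCol\ would assert that \emph{every} triple is a solution, which detects nothing. So the SETH case stands, but the other two cases are an outline rather than a proof.
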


The main question that inspires this work is the following:
\begin{center}
\emph{Is there a natural hypothesis that is weaker than Hypothesis~\ref{conj:conj1} and implies similar hardness results? }
\end{center}

As mentioned, recent conditional lower bound results based on {\sf APSP} and {\sf 3SUM}
(especially since P{\u{a}}tra{\c{s}}cu's seminal paper~\cite{patrascu2010towards}) 
typically assumed the \emph{integer} variants of these hypotheses.
However, algorithms for  {\sf APSP} and {\sf 3SUM} are more often designed for
the \emph{real}-valued versions of the problems (this includes not only the traditional cubic or quadratic time algorithms, but the celebrated,  slightly subcubic or subquadratic algorithms of
Williams~\cite{Williams14a} or Gr{\o}nlund and Pettie~\cite{gronlund2014}, as we will review shortly).
This discrepancy between the literature on lower and upper bounds raises another intriguing question:

\begin{quote}
\centering
\em Do known conditional lower bounds derived from the integer versions of the
APSP and 3SUM hypotheses hold for the real versions of these hypotheses?
\end{quote}

Our work will give a positive answer to this second question for a plethora of known conditional lower bound results, and will hence provide an answer to the first question as well, since the real versions of the hypotheses are weaker/more believable.

\paragraph{Remarks on models of computation.}
Within fine-grained complexity it is standard to work in the Word RAM model of computation with $O(\log n)$-bit words. 
As we will consider variants of {\sf APSP} and {\sf 3SUM} with real-valued inputs, we will need to work in the Real RAM, a standard model in computational geometry.

The Real RAM (see e.g. Section 6 in the full version of ~\cite{erickson-real})
supports unit cost comparisons and arithmetic operations (addition, subtraction, multiplication, division) on real numbers, unit cost casting integers into reals, in addition to the standard unit cost operations supported by an $O(\log n)$-bit Word RAM\@.
No conversions from real numbers to integers are allowed, and randomization only happens by taking random $O(\log n)$-bit integers, not random reals.

Without further restrictions, the Real RAM can be unrealistically powerful.
However, it is not difficult to define a ``reasonable'' restricted Real RAM model 
for which our reductions still work, such that any algorithm for real-valued inputs in such a model can be converted into an algorithm in the word RAM for integer-valued inputs, running in roughly the same  time. See Appendix~\ref{sec:model} for
a detailed discussion, and several natural ways to define such a reasonable Real RAM model.  Thus, the real versions of the hardness hypotheses under such a model are indeed even more believable than the integer versions.

\paragraph{Real $3$SUM hypothesis.} 
Historically, the early papers on the 3SUM hypothesis from 
computational geometry were concerned with the real instead of the integer case.%
\footnote{
Technically, the original paper by Gajentaan and Overmars~\cite{gajentaan1995class} stated the 3SUM hypothesis for integer inputs, but they %
assumed a Real RAM model of computation, as in most work in computational geometry. %
} 

Let \ThreeSUM\ refer to the version of {\sf 3SUM} for real numbers (in contrast to \IntThreeSUM, which refers to the integer version).
In its original form, the Real $3$SUM hypothesis stated that there is no $o(n^2)$ time algorithm that solves \ThreeSUM\ in the Real RAM model, and some evidence was provided by Erickson~\cite{erickson1995lower},
who proved quadratic lower bounds for algorithms that are allowed to use only restricted forms of real comparisons  (testing the signs of linear functions involving just 3 input reals).
Gr{\o}nlund and Pettie refuted this hypothesis by giving an $O(n^2(\log\log n)^{2/3}/(\log n)^{2/3})$ time deterministic algorithm and an $O(n^2 (\log \log n)^2/ \log n)$ time randomized algorithm for \ThreeSUM~\cite{gronlund2014}. This was subsequently  improved by Freund \cite{freund2017improved}, Gold and Sharir \cite{gold2017improved}, and Chan \cite{chan3sum}, reaching an $n^2 (\log \log n)^{O(1)}/ \log^2 n$ deterministic running time.  Gr\o nlund and Pettie obtained their initial breakthrough by first proving a truly subquadratic upper bound, near $n^{3/2}$, on the linear-decision-tree complexity of \ThreeSUM\ (using comparisons of two sums of pairs of input reals).
More recently, Kane, Lovett, and Moran~\cite{KaneLM19} in another breakthrough improved the linear-decision-tree complexity upper bound to $\OO(n)$,\footnote{In this work, we use $\OO$ to suppress poly-logarithmic factors.} although their approach has not yet led to improved algorithms.
A truly subquadratic time algorithm remains open.

The modern Real $3$SUM hypothesis states that no $O(n^{2-\varepsilon})$  time  algorithm exists for \ThreeSUM{} with real-valued inputs, in a reasonable Real RAM model, for any $\varepsilon > 0$. 

The integer version of {\sf 3SUM} has gained more attention
after a groundbreaking result by P{\u{a}}tra{\c{s}}cu~\cite{patrascu2010towards} which showed that the Integer $3$SUM hypothesis implies lower bounds for many dynamic problems that may not even have numbers in their inputs. Notably, his reduction uses a hashing technique which doesn't quite apply to \ThreeSUM. Thus, these hardness results and many subsequent results~\cite{kopelowitz2016higher, abboud2014popular} following P{\u{a}}tra{\c{s}}cu's paper were not known to be true under the Real $3$SUM hypothesis. Thus, an intriguing question is whether these problems are also hard under the Real $3$SUM hypothesis.

\ThreeSUM\ could conceivably be harder than \IntThreeSUM. 
Baran, Demaine and 
P{\u{a}}tra{\c{s}}cu gave an  $n^2 (\log \log n)^{O(1)} / \log^2 n$ time algorithm for \IntThreeSUM\ as early as 2005 \cite{baran2005subquadratic}. However, \ThreeSUM\ did not have an algorithm with the same asymptotic running time (up to $(\log \log n)^{O(1)}$ factors) until more than ten years later \cite{chan3sum}. Also, many techniques that are useful for \IntThreeSUM\ stop working for \ThreeSUM, for example, the hashing technique used by Baran, Demaine and 
P{\u{a}}tra{\c{s}}cu.
Therefore, the Real $3$SUM hypothesis is arguably weaker than the Integer $3$SUM hypothesis.

\paragraph{Real APSP hypothesis. } 
Similarly, one  could naturally consider the APSP problem where the edge weights of the input graph are real numbers. In fact, 
historically, the APSP problem was first studied when the edge weights are real numbers \cite{floyd1962algorithm}.
Fredman~\cite{fredman1976new} gave the first slightly subcubic time algorithm for \APSP, running in $n^3 (\log\log n)^{O(1)}/(\log n)^{1/3}$ time.  Fredman obtained his result by first proving a truly subcubic upper bound, near $n^{5/2}$, on the linear-decision-tree complexity of \APSP\ (using what is now known as ``Fredman's trick'' that later inspired Gr{\o}nlund and Pettie's work on \ThreeSUM\@).
After a series of further improved poly-logarithmic speedups, Williams~\cite{Williams18} developed the current fastest algorithm for \APSP, with running time $n^3/2^{\Omega(\sqrt{\log n})}$ (which was subsequently derandomized by Chan and Williams~\cite{ChanW21}).
A truly subcubic algorithm remains open.

The Real APSP hypothesis states that no $O(n^{3-\eps})$ time algorithm exists for \APSP\ in graphs with real-valued weights for any $\eps > 0$ in a reasonable  Real RAM model. 

Although the current upper bounds for \APSP\ and \IntAPSP\ are the same, \IntAPSP\ could conceivably be easier than \APSP.  For example, in Williams' APSP paper \cite{Williams18}, he described a ``relatively short argument'' for an $n^3/2^{\Omega(\log^\delta n)}$ algorithm in the integer case as a warm-up, which did not immediately generalize to the real case (which required further ideas).

It is known \cite{Fischer71} that \IntAPSP\ (resp. \APSP) is equivalent to the seemingly simpler problem \IntMinPlus\ (resp. \MinPlus), where one is given two $n \times n$ integer (resp.\ real) matrices $A, B$, and is asked to compute an $n \times n$ matrix $C$ where $C[i,j] = \min_{k \in [n]} (A[i,k]+B[k,j])$. Thus, {\sf $(\min, +)$-Product} has been the main focus of algorithms and reductions for {\sf APSP}.

\paragraph{The Orthogonal Vectors (OV) hypothesis.} In the \OV\ problem, given a set of $n$ Boolean  vectors in $d$-dimension for some $d = \omega(\log n)$, one needs to determine if there are two vectors $u, v$ such that $\bigvee_{i=1}^d (u[i] \wedge v[i])$ is false. The OV hypothesis (OVH) states that no $O(n^{2-\eps})$  time algorithm exists for \OV\ for any $\eps > 0$. 
Williams~\cite{Williams05} showed that SETH implies OVH. In fact, a large fraction of the conditional lower bounds based on SETH actually use the \OV\ problem as an intermediate problem (see the survey \cite{virgisurvey} for an overview of problems that are hard under OVH and SETH). 

In the \kOV{$k$}\ problem we are given $k$ sets of Boolean vectors of small dimension and are asked if there exists a $k$-tuple of vectors, one from each set, that is orthogonal, i.e. there is no coordinate in which all $k$ vectors have $1$s. 
Williams' argument~\cite{Williams05} in fact implies a fine-grained reduction from \CNFSAT\ with $n$ variables and $O(n)$ clauses to \kOV{$k$}\ for any integer $k\geq 2$, so that under SETH, \kOV{$k$}\ requires $n^{k-o(1)}$ time. 

It is easy to reduce \kOV{$k$}\ for any $k>2$ to \OV, however a reduction in the reverse direction has been elusive. It is quite possible that, say \kOV{$1000$}\ has an $O(n^{999.99})$ time algorithm, yet \OV\ still requires $n^{2-o(1)}$ time. Thus basing hardness on \OV\ is arguably better than basing hardness on \kOV{$k$}\ for $k>2$, and basing hardness on any fixed \kOV{$k$}\ is better than basing hardness on SETH, as all one needs to do to refute SETH is to refute the presumed hardness of \kOV{$k$}\ for some $k$.

\subsection{Our results}
We begin by defining our new hypothesis and will then outline our reductions. 

Combining OVH, the  Real $3$SUM hypothesis and the Real APSP hypothesis, we consider the following very weak hypothesis. 
\begin{hypothesis}
\label{conj:conj2}
At least one of OVH, the Real $3$SUM hypothesis or the Real APSP hypothesis is true.
\end{hypothesis}

\subsubsection{Main result: hardness for Triangle Collection}
It is unclear whether any of the Real $3$SUM hypothesis, the Real APSP hypothesis, and the OV hypothesis (let alone Hypothesis~\ref{conj:conj2}) implies nontrivial lower bounds for the \TCol\ problem. Abboud, Vassilevska W. and Yu's prior reductions \cite{abboud2018matching}  from \IntThreeSUM\ and \IntAPSP\ to \TCol\ used hashing tricks that are not applicable to real inputs; furthermore, their reduction %
from \CNFSAT\ to \TCol\ does not go through \OV, but was from the stronger \kOV{$3$}\ problem.

The proof of Theorem~\ref{thm:intro:abboud} in \cite{abboud2018matching} 
 uses an intermediate problem, \IntExactTri, between \IntThreeSUM\ / \IntAPSP\ and \TCol.
In \IntExactTri, one is given an $n$-node graph with edge weights in $\pm[n^c]$ for some constant $c$ and one needs to determine whether the graph contains a triangle whose edge weights sum up to zero. 
One can define \ExactTri\ analogously by replacing edge weights with real numbers. 
The Integer (resp. Real) Exact-Triangle hypothesis states that no $O(n^{3-\eps})$ time algorithm for $\eps > 0$ exists for \IntExactTri\ (resp.\ \ExactTri) in the Word RAM model (resp.\ a reasonable Real RAM model), and it is well-known that the Integer $3$SUM hypothesis and the Integer APSP hypothesis imply the Integer Exact-Triangle hypothesis \cite{VWfindingcountingj}. 

If one tries to mimic the approach of going through \IntExactTri\ in the real case, one would first reduce \APSP\ and \ThreeSUM\ to \ExactTri, and then reduce \ExactTri\ to \TCol. However, it was unclear whether \APSP\ reduces to \ExactTri, since the previous reduction for the integer case relies on fixing the results bit by bit  \cite{VWfindingcountingj}.
The tight reduction from \IntThreeSUM\ to \IntExactTri\ also doesn't seem to apply since it relies on P{\u{a}}tra{\c{s}}cu's hashing technique \cite{patrascu2010towards}. (However, a previous non-tight reduction from  \IntThreeSUM\ to \IntExactTri\ by Vassilevska W. and Williams~\cite{VWfindingcountingj} does generalize to a reduction from \ThreeSUM\ to \ExactTri; see Appendix~\ref{sec:3sum2exactT} for more details.) 

Secondly, even if one manages to reduce \APSP\ to \ExactTri, one still faces an obstacle in reducing further to \TCol: the proof in \cite{abboud2018matching}  relies on transforming the integer edge weights into vectors of tiny integers, which isn't possible if the edge weights are real numbers.

The reduction from SETH to \TCol\ in \cite{abboud2018matching}  implicitly goes through the \kOV{$3$}\ problem. 
It is very natural to relate \kOV{$3$}\ to \TCol, 
since \kOV{$3$}\ asks whether all triples of vectors are not orthogonal, and \TCol\ asks whether all triples of colors have a triangle. It is then sufficient to embed \kOV{$3$}\ into \TCol\ so that a triple of vectors are not orthogonal if and only if their corresponding colors have a triangle. However, \OV\ and \TCol\ are seemingly conceptually  different (\OV\ is about pairs, whereas \TCol\ is about triples) and thus it is not quite clear whether one can reduce \OV\ to \TCol.

We consider a natural generalization of the \TCol\ problem: \ACPTCb\ (``ACP'' stands for ``All-Color-Pairs''). The input of \ACPTCb\ is the same as the input of \TCol, while \ACPTCb\ requires the algorithm to output for every
pair of distinct colors $(a, b)$, whether there exists a triangle with colors $(a, b, c)$ for every $c$ different from $a$ and $b$. 

As our main result, we give a nontrivial fine-grained lower bound for \ACPTCb\ based on the very weak Hypothesis~\ref{conj:conj2}.

\begin{restatable}{theorem}{mainThm}
\label{thm:intro:ACP-TC}
Assuming Hypothesis~\ref{conj:conj2}, \ACPTCb\  
requires $n^{2+\delta-o(1)}$ time for some $\delta > 0$, in a reasonable Real RAM model. 
\end{restatable}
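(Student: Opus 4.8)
The plan is to prove Theorem~\ref{thm:intro:ACP-TC} by a three‑way case analysis, one case for each hypothesis inside Hypothesis~\ref{conj:conj2}, and to funnel every case into \ACPTCb\ through two ``hub'' problems: the all‑edges triangle problem on sparse graphs, \AESparseTri\ (conjectured to need $m^{4/3-o(1)}$ time on $m$‑edge graphs), and a colorful variant of Boolean matrix multiplication, call it \ColorBMM. The Real $3$SUM and Real APSP cases will be routed through \AESparseTri; the OV and Real APSP cases through \ColorBMM; and finally both hubs will be reduced to \ACPTCb. To keep that last step manageable, I would first prove the advertised simplification: that \TCol\ --- and, with the same gadgets, \ACPTCb\ --- is equivalent to a structurally restricted version in which each color class has only a bounded number of ``layers'', so that the incoming reductions only need to produce instances of that restricted shape. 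Throughout, every reduction should be written to use only real comparisons and real arithmetic (on top of word‑RAM operations on $O(\log n)$‑bit integers), so that it is valid in the reasonable Real RAM model of the theorem.

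For the \AESparseTri\ hub I would give two real‑number reductions. From Real $3$SUM: first reduce \ThreeSUM\ to \ExactTri\ using the older non‑tight Vassilevska W.--Williams reduction, which, as the excerpt notes, survives over the reals; then reduce \ExactTri\ to \AESparseTri. From Real APSP: use $\APSP\equiv\MinPlus$ and reduce \MinPlus\ to \AESparseTri. The \emph{crux} in both directions is that the known integer reductions bucket the values via P\u atra\c scu‑style hashing, which is unavailable for reals. I would replace hashing by a purely order‑based bucketing: sort the relevant reals, cut the sorted order into blocks of equal size, and exploit that the block containing $x+y$ is determined up to $\pm 1$ block by the blocks of $x$ and $y$ --- a form of Fredman's trick. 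Triangles of the resulting sparse tripartite graph then correspond exactly to the zero/exact triangles, with a controlled size blow‑up. I expect this to be the main obstacle of the whole proof, since order‑based bucketing does not provide the clean load balancing that hashing gives for free, so one must work to keep the produced graph genuinely sparse --- which is precisely what is needed for the $m^{4/3-o(1)}$ hardness to descend to a super‑quadratic bound downstream.

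For the \ColorBMM\ hub I would define the problem so that it outputs, for each pair $(i,j)$, not merely whether row $i$ and column $j$ share a witness index $k$ but enough aggregate information about the witness set, and then show (i) \OV\ reduces to \ColorBMM\ by encoding each input vector as a row and as a column and reading off whether the witness set is empty, so OVH yields a quadratic‑in‑dimension lower bound; and (ii) \MinPlus\ reduces to \ColorBMM\ via the standard device of converting bounded‑range $(\min,+)$ queries into equality/witness queries by subtracting. This is the step that bridges the ``pairs versus triples'' gap flagged in the excerpt: \ColorBMM\ is still a problem about pairs, but its richer per‑pair output is exactly what lets one \ColorBMM\ instance be simulated inside a single \ACPTCb\ instance.

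Finally I would reduce both hubs to the restricted form of \ACPTCb. For \AESparseTri\ on a tripartite graph with parts $X,Y,Z$: let the color classes be indexed by $X$ and by $Y$, and for each $z\in Z$ attach a constant‑size gadget so that the color pair $(x,y)$ ``misses'' some color iff the edge $xy$ lies in a triangle through $z$, i.e.\ $(x,y)$ sees all colors iff $xy$ is in no triangle. This is where the old approach used vectors of tiny integers (impossible with real weights); an $O(1)$‑size gadget per $z$ sidesteps that. For \ColorBMM\ the construction is analogous, using the per‑pair witness information to decide which color triangles are present. Carefully tracking instance sizes through each chain --- $n$‑number \ThreeSUM\ $\to$ an $\tilde O(n^{3/2})$‑edge \AESparseTri\ instance $\to$ an \ACPTCb\ instance on $N$ nodes, and likewise for the OV chain --- then turns the hub lower bounds into $N^{2+\delta-o(1)}$ for an explicit $\delta>0$, the quantitative gain coming from the unbalanced/sparse structure of the hard instances at the hub stage; $\delta$ is the minimum of the gains obtained over the three branches.
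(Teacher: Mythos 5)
Your high-level architecture --- two hubs, \AESparseTri\ and \ColorBMM, feeding into \ACPTCb\ through a restricted Triangle Collection variant --- is exactly the paper's, and your \OV\ branch via \ColorBMM\ is sound. But two of your steps fail quantitatively. First, routing the Real $3$SUM branch through the non-tight reduction to \ExactTri\ destroys the bound: that reduction only shows \ExactTri\ requires $n^{2.5-o(1)}$ time under the Real $3$SUM hypothesis, whereas the reduction from \ExactTri\ to \AESparseTri\ produces on the order of $d^2\cdot(n/d)$ instances with $\OO(n^2/d^2+dn)$ edges each; chasing the parameters, refuting the $n^{2.5}$ bound this way would require solving \AESparseTri\ in \emph{sublinear} time, so the chain yields no nontrivial hardness at all. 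The paper instead reduces \ThreeSUM\ \emph{directly} to \AESparseTri\ by turning Gr{\o}nlund--Pettie's $\OO(n^{3/2})$-depth decision tree into oracle calls (the staircase observation, Fredman's trick, dyadic intervals over the sorted list of pairwise differences, and a random-sampling recursion for predecessors/successors). Relatedly, your claim that the block of $x+y$ is determined up to $\pm 1$ by the blocks of $x$ and $y$ is false for reals; the correct tool is the $O(n/d)$-size staircase of candidate block pairs per target.

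Second, reducing a \emph{single} \AESparseTri\ instance to \ACPTCb\ cannot give a super-quadratic bound. The hard instances produced by the \APSP\ reduction have $\OO(n)$ nodes and $\OO(n^{3/2})$ edges, and there are $\OO(n)$ of them sharing the total $n^{3-o(1)}$ budget, so each individual instance --- and hence each resulting \ACPTCb\ instance on $N=\OO(n)$ nodes under your per-vertex gadget encoding --- inherits only an $N^{2-o(1)}$ bound ($m^{4/3}=N^2$ when $m=N^{3/2}$). The paper's essential missing step is to pack the $n^2/m$ independent \AESparseTri\ instances into $\OO(1)$ instances of \AEMonoTri\ on $n$ nodes (the Lincoln--Polak--Vassilevska W.\ overlay with randomly permuted vertex labels), which concentrates the hardness to $n^{5/2-o(1)}$ (resp.\ $n^{9/4-o(1)}$ for the $3$SUM branch) in a single $n$-node instance, and only then encodes the edge colors bit by bit into the light-colors variant of \ACPTCb. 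Without that amplification, the \APSP\ and $3$SUM branches of your case analysis do not reach $n^{2+\delta}$.
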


Our reductions prove the above theorem for $\delta = 0.25$. 
Combining Theorem~\ref{thm:intro:ACP-TC} with the reductions by Abboud, Vassilevska W. and Yu \cite{abboud2018matching}, we obtain the following corollary. In the following, the {\sf \#SS-Sub-Conn}  problem  asks to maintain the number of nodes reachable from a fixed source in an undirected graph under node updates, and the {\sf ST-Max-Flow}  problem  asks to compute the maximum flow in a graph for every pair of vertices $(s, t) \in S \times T$ for two given subsets of nodes $S, T$.

\begin{corollary}\label{cor:dyn}
Assuming Hypothesis~\ref{conj:conj2},  there exists a constant $\delta > 0$ such that any fully dynamic algorithm for {\sf \#SSR}, {\sf \#SCC}, {\sf \#SS-Sub-Conn}, and {\sf Max-Flow}
requires either amortized $n^{\delta-o(1)}$ update or query times, or
$n^{2+\delta-o(1)}$ preprocessing time.

Also, assuming Hypothesis~\ref{conj:conj2}, {\sf ST-Max-Flow}
on a network with $n$ nodes and $O(n)$ edges requires $n^{1+\delta-o(1)}$ time for some $\delta > 0$, even when $|S| = |T| = \sqrt{n}$. 
\end{corollary}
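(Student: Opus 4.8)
The plan is to derive both statements of Corollary~\ref{cor:dyn} by feeding the hardness of \ACPTCb\ from Theorem~\ref{thm:intro:ACP-TC} into lightly modified versions of the reductions of Abboud, Vassilevska W.\ and Yu~\cite{abboud2018matching}, which reduce \TCol\ to these dynamic problems and to {\sf ST-Max-Flow} and already appear in essentially the right shape. The key observation is that \ACPTCb\ is exactly the ``decomposed'' form of \TCol: on an $N$-node colored graph the \TCol\ answer is the conjunction, over all ordered pairs of distinct colors $(a,b)$, of the \ACPTCb\ bit for $(a,b)$ (``for every $c\neq a,b$ there is an $(a,b,c)$-triangle''). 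The reductions of~\cite{abboud2018matching} are built precisely along these pairs: they iterate over the color pairs $(a,b)$ and, for each, extract from one query to the target problem whether some color $c$ is ``missing'' for $(a,b)$, finally ANDing all these bits to answer \TCol. Hence, with the trivial modification of reporting the whole vector of per-pair bits rather than their AND, the same constructions are reductions \emph{from} \ACPTCb, and the only work left is to recompute the running-time trade-offs against the weaker $N^{2+\delta-o(1)}$ lower bound. (Since \ACPTCb\ has a purely combinatorial input, a reasonable-Real-RAM lower bound for it is in particular a Word-RAM lower bound, so all the downstream reductions live in the ordinary Word RAM.)

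For the dynamic problems I would recall the shape of the~\cite{abboud2018matching} reduction from an $N$-node \ACPTCb\ instance: a preprocessing step building a graph on $\tilde{O}(N)$ nodes that encodes the colored input, followed by $\tilde{O}(N^2)$ update and query operations organized into $N$ rounds indexed by the first color $a$, where round $a$ reconfigures the graph for $a$ with $\tilde{O}(N)$ updates and then issues, for each second color $b$, a single query whose answer is the \ACPTCb\ bit of $(a,b)$. If the data structure has preprocessing time $p$ and amortized update/query time $t$, this solves \ACPTCb\ on $N$ nodes in time $p+\tilde{O}(N^2)\,t$, so Theorem~\ref{thm:intro:ACP-TC} forces $p\ge N^{2+\delta-o(1)}$ or $t\ge N^{\delta-o(1)}$; since the dynamic graph has $n=\tilde{\Theta}(N)$ nodes, this is the claimed dichotomy after absorbing the $\tilde{\Theta}$ into the $o(1)$ and shrinking $\delta$ if necessary. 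The same argument runs for {\sf \#SSR}, {\sf \#SCC}, {\sf \#SS-Sub-Conn}, and {\sf Max-Flow} using the respective gadget of~\cite{abboud2018matching}; for {\sf \#SS-Sub-Conn} one additionally checks that the gadget can be phrased with node activations/deactivations on an undirected graph.

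For {\sf ST-Max-Flow} I would use the~\cite{abboud2018matching} reduction that turns an $N$-node \ACPTCb\ instance into a sparse flow network with $n=\tilde{\Theta}(N^2)$ nodes and $\tilde{O}(N^2)$ edges, with source set $S$ and sink set $T$ each containing one vertex per color, so $|S|=|T|=\tilde{\Theta}(N)=\tilde{\Theta}(\sqrt n)$ (padding to make this exactly $\sqrt n$), arranged so that the maximum $(s_a,t_b)$-flow value (or a prescribed threshold on it) encodes the \ACPTCb\ bit of $(a,b)$. Building the network costs $\tilde{O}(n)$ time, so a $T(n)$-time {\sf ST-Max-Flow} algorithm solves \ACPTCb\ on $\Theta(\sqrt n)$-node graphs in time $\tilde{O}(n)+T(n)$; combining with the $m^{2+\delta-o(1)}$ lower bound for \ACPTCb\ on $m$-node graphs at $m=\Theta(\sqrt n)$ yields $T(n)\ge n^{1+\delta/2-o(1)}$, i.e.\ the stated bound with $\delta$ replaced by $\delta/2$.

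The only real obstacle is not a new idea but bookkeeping: checking that the reductions of~\cite{abboud2018matching} genuinely have the per-color-pair structure above, so that all $\tilde{\Theta}(N^2)$ answers can be read off with no global short-circuit (e.g.\ halting at the first ``bad'' pair) that would only produce the \TCol\ conjunction; and that their node, edge and operation counts are as quoted, since only then does the weaker $N^{2+\delta}$-type lower bound propagate to the claimed polynomial bounds. Where the original reductions start from a restricted variant of \TCol, one instead starts from the corresponding restricted variant of \ACPTCb\ and invokes its equivalence with the general problem, paralleling the equivalence between \TCol\ and its restricted form established elsewhere in this paper.
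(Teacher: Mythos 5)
Your proposal is correct and matches the paper's intent: the paper gives no separate proof of Corollary~\ref{cor:dyn}, stating only that it follows by combining Theorem~\ref{thm:intro:ACP-TC} with (slightly modified) reductions of Abboud, Vassilevska~W.\ and Yu, which is exactly what you do — observing that those reductions are organized per color pair and hence reduce from \ACPTCb\ (in its restricted form, via the equivalences established in Section~\ref{sec:TC}), then re-deriving the trade-offs against the weaker $N^{2+\delta-o(1)}$ bound. Your parameter accounting for both the dynamic problems and {\sf ST-Max-Flow} is consistent with the claimed bounds (up to the harmless rescaling of $\delta$).
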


We obtain Theorem~\ref{thm:intro:ACP-TC} via two conceptually different webs of reductions, together with equivalence results between variants of Triangle Collection. 

\subsubsection{Real APSP and Real \texorpdfstring{$3$}{3}SUM hardness via All-Edges Sparse Triangle} In the All-Edges Sparse Triangle problem, which we will abbreviate as \AESparseTri,
one is given a graph with $m$ edges, and is asked whether each edge in the graph is in a triangle. This problem has an $O(m^{2\omega / (\omega + 1)})$ time algorithm \cite{AlonYZ97} where $\omega$ is the square matrix multiplication exponent. This running time is $O(m^{1.41})$
with the current bound $\omega < 2.373$ \cite{vstoc12, legallmult, alman2021refined}, and $\tO(m^{4/3})$ if $\omega = 2$.

P{\u{a}}tra{\c{s}}cu \cite{patrascu2010towards} showed that this problem requires $m^{4/3 - o(1)}$ time assuming the integer $3$SUM hypothesis. Kopelowitz, Pettie, and Porat \cite{kopelowitz2016higher} extended P{\u{a}}tra{\c{s}}cu's result by showing conditional lower bounds for the \SetDisj\ problem and the \SetInter\ problem, which generalize
\AESparseTri. Recently, Vassilevska W. and Xu \cite{williamsxumono} showed a reduction from 
\IntExactTri\ to \AESparseTri. Combined with the known reductions from \IntAPSP\ and \IntThreeSUM\ to \IntExactTri~\cite{VWfindingcountingj,focsyj}, we get that \AESparseTri\ requires $m^{4/3-o(1)}$ time assuming either the Integer $3$SUM hypothesis or the Integer APSP hypothesis. 

These reductions fail under the Real APSP hypothesis or the Real $3$SUM hypothesis. P{\u{a}}tra{\c{s}}cu's and Kopelowitz, Pettie, and Porat's reductions rely heavily on hashing, which does not seem to apply for \ThreeSUM. Vassilevska W. and Xu's reduction uses \IntExactTri\ as an intermediate problem. As discussed earlier, it is unclear how to reduce \APSP\ to \ExactTri. Even if such a reduction is possible, one still needs to replace a hashing trick used by Vassilevska W. and Xu with some other technique that works for real numbers. 

We overcome these difficulties by designing conceptually very different reductions from before. On a very high level, instead of hashing, we use techniques inspired by ``Fredman's trick'' \cite{fredman1976new}. Using our new techniques, we obtain the following theorem. The resulting reduction also appears simpler than previous reductions, partly because we don't need to design and analyze any hash functions.  For example, the entire proof of our reduction from
\APSP\ to \AESparseTri\ fits in under two pages, and is included at the end of the introduction in Section~\ref{sec:proof_in_intro}.  The reader is invited to browse through Vassilevska W. and Xu's
longer, more complicated proof~\cite{williamsxumono} for a comparison.

\begin{theorem}
\label{thm:intro:AEsparse}
\AESparseTri\ on a graph with $m$ edges requires
\begin{itemize}
    \item $m^{4/3-o(1)}$ time assuming the Real APSP hypothesis;
    \item $m^{5/4-o(1)}$ time assuming the Real Exact-Triangle hypothesis; 
    \item $m^{6/5-o(1)}$ time assuming the Real $3$SUM hypothesis.
\end{itemize}
\end{theorem}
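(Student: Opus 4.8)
\medskip
\noindent\textbf{Proof plan.} All three bounds follow a common template: first reduce the source problem to an \emph{all-edges} triangle problem on a tripartite graph $I\cup J\cup K$ with \emph{real} edge weights, in which for each ``demand'' pair $(i,j)\in I\times J$ one must decide whether some $k\in K$ completes a triangle of total weight $<0$ (for \APSP) or $=0$ (for \ExactTri, and for \ThreeSUM after a preliminary step); then convert this weighted all-edges problem into an unweighted instance of \AESparseTri. The conversion is where ``Fredman's trick'' takes the place of the hashing in prior reductions: partition $K$ into $n/g$ groups of size $g$, and use the identity that $w(i,k)+w(k,j) < w(i,k')+w(k',j)$ iff $w(i,k)-w(i,k') < w(k',j)-w(k,j)$, so that a \emph{single} global sort --- of all left-hand differences $w(i,k)-w(i,k')$ (about $n^2g$ of them, one for each $i$ and each pair $k,k'$ inside a common group) merged with all right-hand differences $w(k',j)-w(k,j)$ --- linearizes at once every comparison the reduction will ever need. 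One then builds a layered unweighted graph whose vertices encode tuples (endpoint, group, position in the merged order) and whose edges encode the compatibility relations read off from that order, so that a triangle through the demand edge $(i,j)$ survives in the sparse graph exactly when the desired weighted triangle exists. Since this uses only sorting and comparisons of reals, it stays inside a reasonable Real RAM (Appendix~\ref{sec:model}).

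For \APSP we take the showcase route. By the Fischer--Meyer equivalence --- which is purely arithmetic and uses no hashing --- \APSP reduces to \MinPlus, which in turn reduces to the all-edges-negative-triangle problem above. The advantage of the ``$\min$'' structure is that within each group $S$ every demand pair $(i,j)$ is concerned only with its single minimizer $\argmin_{k\in S}(A[i,k]+B[k,j])$, whose identity is already pinned down by the merged order; this keeps the unweighted graph small. Balancing the two contributions to its edge count over the choice of $g$ yields an \AESparseTri instance with $m=n^{9/4+o(1)}$ edges, so an $O(m^{4/3-\eps})$ algorithm would solve \APSP in $O(n^{3-\Omega(\eps)})$ time, refuting the Real APSP hypothesis. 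This is the $m^{4/3-o(1)}$ bound; the complete (under-two-pages) argument is the one in Section~\ref{sec:proof_in_intro}.

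The \ExactTri case begins already in tripartite all-edges form (after the routine self-reduction of triangle detection to its layered tripartite version). The difference from \APSP is that ``$=0$'' forces us, within a group, to keep track of \emph{all} $k$ hitting a given target rather than only a minimizer, which inflates the construction by roughly a factor $g$ in the relevant layer; re-balancing $g$ gives $m=n^{12/5+o(1)}$ and hence the $m^{5/4-o(1)}$ bound. For \ThreeSUM we first invoke the hashing-free (non-tight) reduction from \ThreeSUM to \ExactTri discussed in Appendix~\ref{sec:3sum2exactT}, which turns $n$ reals into an \ExactTri instance of size $n^{1-\Omega(1)}$; feeding this into a size-sensitive version of the \ExactTri reduction and tuning $g$ propagates to $m=n^{5/3+o(1)}$, giving the $m^{6/5-o(1)}$ bound. (Alternatively one can fold the rank-one weight structure of \ThreeSUM --- edge weights depending on a single endpoint --- directly into the grouping step, with the same accounting.)

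The main obstacle is the conversion step, on two fronts. First, turning the Fredman-linearized comparison data into a genuinely \emph{unweighted, sparse} graph with \emph{no spurious triangles} --- every real-weight triangle of the desired type must survive through its demand edge, and no extra triangle may be created --- forces a careful choice of what each layer encodes and how positions in the merged order are bucketed; this is precisely what the hashing-based proofs handled by other means. Second, one cannot shortcut \APSP through \ExactTri, since no reduction in that direction is known, so the $m^{4/3}$ bound must come from a direct argument that truly exploits the ``$\min$'' structure to hold the edge count at $n^{9/4}$ rather than $n^{12/5}$ --- getting that parameter balance right, while respecting the Real RAM restrictions, is the crux.
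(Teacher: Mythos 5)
Your \APSP{} and \ExactTri{} routes follow the paper's actual approach in outline (Fredman's trick, a single global sort of the pairwise differences, and a layered tripartite encoding of positions in that order), but you omit the ingredient that makes Fredman's decision-tree argument into an \emph{algorithm}: the randomized iterative refinement that supplies, for each pair $(i,j)$, a current candidate index $k_{ij}$ (resp.\ candidate predecessor/successor pair) against which the oracle searches for an improvement (Lemmas~\ref{lem:apsp1}, \ref{lem:exacttri1}). Without those candidates there is no well-defined all-edges decision question to hand to \AESparseTri: saying the minimizer is ``already pinned down by the merged order'' is only an information-theoretic statement, and extracting it for all $n^2$ pairs is exactly what the $O(\log d\log n)$ rounds of oracle calls accomplish. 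Relatedly, framing the \APSP{} target as ``some $k$ completes a triangle of weight $<0$'' (i.e.\ \mbox{\sf AE-Neg-Tri}) would force you to locate a fixed threshold $-C[j,i]$ among the sums $A[i,k]+B[k,j]$, which is the predecessor problem and costs the extra factor $d$ of the \ExactTri{} reduction; the $m^{4/3}$ bound requires staying with the $(\min,+)$ formulation and comparing sum-of-two against sum-of-two. (Your per-instance edge counts $n^{9/4}$ and $n^{12/5}$ also do not describe what the construction produces --- it produces $\Theta(n)$, resp.\ $\Theta(nd)$, instances each with $\OO(n^{3/2})$, resp.\ $\OO(n^2/d^2+dn)$, edges --- though they happen to be consistent with the thresholds when read as single-instance totals.)

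The genuine gap is the \ThreeSUM{} bound. Chaining \ThreeSUM{} $\rightarrow$ \ExactTri{} (Appendix~\ref{sec:3sum2exactT}) $\rightarrow$ \AESparseTri{} (Theorem~\ref{thm:exacttri}) provably yields nothing: the first reduction is non-tight and only establishes $N^{5/2-o(1)}$ hardness for \ExactTri{} under the Real 3SUM hypothesis, while the second converts an $\OO(m^{5/4-\eps})$ algorithm for \AESparseTri{} into an $\OO(N^{3-4\eps/3})$ algorithm for \ExactTri; contradicting $N^{5/2}$ would require $\eps>3/8$, i.e.\ a sublinear (vacuous) bound for \AESparseTri. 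Concretely, the composition produces $n^{1/3}$ \ExactTri{} instances on $N=n^{2/3}$ nodes, each expanding to $\Theta(Nd)$ sparse-triangle instances with $\OO(N^2/d^2+dN)$ edges, and $\sum_i m_i^{6/5}$ exceeds $n^2$ for every choice of $d$; your $m=n^{5/3}$ figure comes from summing edge counts before applying the exponent, which is not how fine-grained reductions compose. The paper instead reduces \ThreeSUM{} \emph{directly} to \AESparseTri{} (Section~\ref{sec:3sum}), using Gr{\o}nlund--Pettie's grouping: the sorted blocks $A_i,B_j$ and the staircase observation reduce the number of relevant $(c,i,j)$ triples to $O(n^2/d)$, and the Fredman-style graph then has left/right nodes indexed by $(i,k^-,k^+)$ and $(j,\ell^-,\ell^+)$, giving one instance with $\OO(n^2/d+d^2n)$ edges and hence the $m^{6/5}$ bound. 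Your parenthetical alternative (folding the rank-one weight structure into the grouping) gestures at this, but it is a different reduction with different accounting, not ``the same accounting'' as the \ExactTri{} chain.
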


Theorem~\ref{thm:intro:AEsparse} immediately implies Real APSP and Real $3$SUM hardness for a large list of problems that were shown to be Integer APSP and Integer $3$SUM hard via \AESparseTri, such as dynamic reachability, dynamic shortest paths and Pagh’s problem \cite{patrascu2010towards, abboud2014popular}.

We obtain higher conditional lower bounds if we consider the counting version of \AESparseTri, \AESparseTriCount, where one is given a graph with $m$ edges, and is asked to output the number of triangles each edge is in. Note that the $O(m^{2\omega / (\omega + 1)})$ time algorithm by Alon, Yuster, and Zwick~\cite{AlonYZ97} still works for \AESparseTriCount. Therefore, the following theorem is tight if $\omega = 2$. 

\begin{theorem}
\label{thm:intro:AEsparseC}
\AESparseTriCount\ on a graph with $m$ edges requires $m^{4/3-o(1)}$ time if at least one of the Real APSP hypothesis,  the Real Exact-Triangle hypothesis or the Real $3$SUM hypothesis is true. 
\end{theorem}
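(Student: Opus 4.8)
The bound under the Real APSP hypothesis is already contained in Theorem~\ref{thm:intro:AEsparse}: any algorithm for \AESparseTriCount\ also solves \AESparseTri. So the substance of Theorem~\ref{thm:intro:AEsparseC} is to push the exponents $5/4$ and $6/5$ obtained there from the Real Exact-Triangle and Real $3$SUM hypotheses up to $4/3$ -- i.e., to exploit that we now output \emph{counts} of triangles through each edge, not just a detection bit. The plan is to route all three hypotheses through a single dense counting problem and then give one count-preserving ``sparsification'' reduction with the parameters tuned so that $m^{4/3}$ corresponds to cubic time.

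As intermediate I would use the problem of counting, for each edge of a complete tripartite $n$-node graph with real weights, the number of negative (resp.\ zero-weight) triangles through it -- that is, \AENegTriCount\ (resp.\ \AEExactTriCount). Step one is that all three source hypotheses force $n^{3-o(1)}$ time for these. For Real APSP this is a subcubic equivalence obtained by extending the standard equivalences among \MinPlus, negative triangle, and their all-edges/counting variants to real weights (these arguments are combinatorial and use no hashing or bit tricks). For the Real Exact-Triangle hypothesis, \ExactTri\ reduces trivially to \AEExactTriCount\ (some edge has positive count iff there is a zero-weight triangle). For Real $3$SUM, I would compose the \ThreeSUM\ $\to$ \ExactTri\ reduction of Appendix~\ref{sec:3sum2exactT} with the previous step (or reduce \ThreeSUM\ to \AEExactTriCount\ directly in the spirit of P{\u{a}}tra{\c{s}}cu's integer reduction, with sorting in place of hashing); here the counting structure is exactly what upgrades the resulting exponent, since pushing these compositions through the decision problems only recovers the weaker $5/4$ and $6/5$.

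The heart of the argument is then a reduction from \AENegTriCount\ (and likewise \AEExactTriCount) on an $n$-node dense graph to \AESparseTriCount\ on a graph with $m=\Theta(n^{9/4})$ edges -- note $m^{4/3}=n^3$ -- together with an $\OO(n^2)$-time decoding of the $n^2$ dense per-edge counts from the $m$ sparse per-edge counts. Following the Real APSP $\to$ \AESparseTri\ reduction of Section~\ref{sec:proof_in_intro}, each comparison $w(i,k)+w(k,j)\le -w(i,j)$ (and its equality version) is handled by Fredman's trick: sort, split the middle vertices into groups, keep only the ``borderline'' group-triples, and rewrite the comparison as a quantity depending only on the $(i,\cdot)$ side against one depending only on the $(\cdot,j)$ side, which is what lets us dispense with the integer-hashing steps of P{\u{a}}tra{\c{s}}cu's and Vassilevska W.\ and Xu's reductions. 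The decisive point is that in our gadget the number of triangles through an edge can be made to equal the \emph{rank} of a real value among a set of pair-sums -- precisely the quantity Fredman's trick manipulates -- so from these ranks one reconstructs the desired per-edge counts; a detection oracle cannot see ranks, which is why the decision reductions had to use lossier weight-spreading gadgets and only reached $5/4$ and $6/5$.

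The main obstacle, and the step I would spend the most care on, is this Fredman-trick gadget and its accounting: choosing the group sizes so that (number of borderline group-triples) $\times$ (gadget size per triple) is $\Theta(n^{9/4})$; ensuring the per-edge counts stay $\poly(n)$-bounded so they are read in unit time and uniquely decoded (and, if several sub-instances are superimposed on a shared vertex set to save edges, that their contributions occupy separate place values); and checking that an $m^{4/3-\eps}$-time algorithm for \AESparseTriCount\ indeed yields, after the $n^{9/4}\to n^3$ edge blow-up and the $\OO(n^2)$ decoding, an $n^{3-\eps'}$-time algorithm for the dense counting problem.
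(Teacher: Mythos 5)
Your overall insight---that per-edge triangle \emph{counts} in the Fredman-trick gadget encode ranks, which lets one work with a single threshold index instead of a predecessor/successor pair and thereby recover the exponent $4/3$---is exactly the mechanism the paper uses in Section~\ref{sec:sparsetri:count}: counting intermediate problems (\mbox{\sf Real-Exact-Tri-Variant$_3$} and \mbox{\sf Real-3SUM-Variant$_3$}) are each answered by two calls to the counting oracle whose outputs are subtracted, and the single-index graph $G_{k,P}$ from the \APSP\ reduction is reused verbatim. The \APSP\ branch of your argument is also fine.

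The genuine gap is your \ThreeSUM\ branch. You propose to route \ThreeSUM\ through a dense $N$-node \AEExactTriCount\ instance, either via the reduction of Appendix~\ref{sec:3sum2exactT} or via ``P{\u{a}}tra{\c{s}}cu's reduction with sorting in place of hashing.'' The appendix reduction is lossy: it only establishes $N^{2.5-o(1)}$ hardness for \ExactTri\ under the Real $3$SUM hypothesis (it spawns roughly $N^{1/3}$ instances on $N^{2/3}$ nodes). Composing it with any dense-to-sparse reduction calibrated so that $N^3$ time corresponds to $m^{4/3}$ therefore proves nothing useful: an $m^{4/3-\eps}$ algorithm yields only an $N^{3-9\eps/4}$ algorithm for the dense problem, which contradicts the $N^{2.5}$ bound only when $\eps>2/9$, i.e.\ it rules out nothing below $m^{10/9}$---\emph{weaker} than the $m^{6/5}$ bound already obtained from the decision version. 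Counting cannot ``upgrade'' this, because the loss occurs in the \ThreeSUM\ $\to$ \ExactTri\ step, before counting enters; and the fallback of de-hashing P{\u{a}}tra{\c{s}}cu's tight reduction is precisely what the paper argues is unavailable for real inputs. The paper's actual route never passes through a dense exact-triangle instance: it reduces \ThreeSUM\ directly to \AESparseTriCount\ via Gr{\o}nlund--Pettie's staircase decomposition into $O(n/d)$ blocks $A_i+B_j$, a predecessor/successor formulation, and the counting collapse (Lemmas~\ref{lem:3sum1}, \ref{lem:3sum23}, \ref{lem:3sum3}), producing a graph with $\OO(n^2/d+dn)$ edges so that $d=\sqrt{n}$ gives $m=n^{3/2}$ and $m^{4/3}=n^2$. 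A secondary point: your single sparse instance with $m=\Theta(n^{9/4})$ edges is not how the construction fragments; the paper produces many instances (e.g., $O(n)$ graphs of $\OO(n^{3/2})$ edges each for \ExactTri), which serves the lower bound equally well and avoids the ``place value'' superposition machinery you invoke.
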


We actually obtain slightly stronger results than Theorem~\ref{thm:intro:AEsparse}: we can reduce each of \APSP, \ExactTri\ and \ThreeSUM\ to some number of  instances of \AESparseTri.
Consequently, we obtain Real APSP and Real $3$SUM hardness for a problem called the \AEMonoTri\ (i.e.\ All-Edges Monochromatic Triangles, defined in Section~\ref{sec:prelim}),  
by combining a known reduction by Lincoln, Polak, and Vassilevska W.~\cite{lincoln2020monochromatic} from multiple instances of \AESparseTri\ to \AEMonoTri.

Finally, we will reduce \AEMonoTri\ to \ACPTCb, thus proving the Real APSP and Real $3$SUM hardness in Theorem~\ref{thm:intro:ACP-TC}.

\subsubsection{Real APSP and OV hardness via Colorful Boolean Matrix Multiplication}

As a key problem in our second line of reductions, we define a natural generalization of the Boolean Matrix Multiplication problem, \ColorBMM\@. In the \ColorBMM\  problem, we are given an $n\times n$ Boolean matrix $A$ and an
$n\times n$ Boolean matrix $B$ and a mapping $\col:[n]\rightarrow \Gamma$.  For each $i\in [n]$ and $j\in [n]$,
we want to decide whether $\{\col(k): A[i,k]\wedge B[k,j],\ k\in [n]\} = \Gamma$. In other words, for every pair of $i, j$, we want to determine whether the witnesses cover all the colors. 

We show that \APSP\ can be reduced to the \ColorBMM\  problem. Our reduction uses an idea from Williams' algorithm for \APSP\  \cite{Williams18}: to compute the Min-Plus product of two real matrices, it suffices to compute several logical ANDs of ORs. We then show that these logical operations can be naturally reduced to \ColorBMM\@. 
Since \OV\ also has a similar formulation of logical ANDs of ORs~\cite{abboud2014more}, we similarly reduce \OV\ to \ColorBMM\@. We obtain the following theorem using this idea. 

\begin{theorem}
\ColorBMM\ between two $n \times n$ matrices requires 
\begin{itemize}
    \item $n^{2.25 - o(1)}$ time assuming the Real APSP hypothesis;
    \item $n^{3-o(1)}$ time assuming OVH. 
\end{itemize}
\end{theorem}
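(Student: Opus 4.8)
The plan is to establish the two bounds by two essentially independent reductions, both built on the observation that a $\ColorBMM$ instance produces, in cell $(i,j)$, the Boolean value $\bigwedge_{\gamma\in\Gamma}\bigvee_{k:\col(k)=\gamma}\bigl(A[i,k]\wedge B[k,j]\bigr)$, i.e.\ an \emph{AND of ORs}. The task in each case is to cast the source problem in exactly this shape while keeping the four parameters of the produced instances (number of rows, number of columns, inner dimension, number of colors) balanced, so that the square $N\times N$ hardness statement is not wasteful.

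\smallskip\noindent\textbf{The $\OV$ lower bound.} I would start from the contrapositive of $\OV$: a list $v_1,\dots,v_n\in\{0,1\}^d$ with $d=n^{o(1)}$ has \emph{no} orthogonal pair iff for every ordered pair $(u,v)$ of input vectors there is a coordinate $\ell$ with $u[\ell]=v[\ell]=1$ — an AND over pairs of ORs over coordinates. Partition the $n$ vectors into $Y$ groups of size $X$ (so $XY=n$) and index each vector as $v_{(x,y)}$, $x\in[X]$, $y\in[Y]$. Build a $\ColorBMM$ instance with rows indexed by $x_1\in[X]$, columns by $x_2\in[X]$, colour set $\Gamma=[Y]^2$, inner index $(y_1,y_2,\ell)$ with $\col(y_1,y_2,\ell)=(y_1,y_2)$, and
\[ A[x_1,(y_1,y_2,\ell)]=v_{(x_1,y_1)}[\ell], \qquad B[(y_1,y_2,\ell),x_2]=v_{(x_2,y_2)}[\ell]. \]
Then cell $(x_1,x_2)$ covers all colours iff for every $(y_1,y_2)$ the vectors $v_{(x_1,y_1)}$ and $v_{(x_2,y_2)}$ share a $1$-coordinate; hence some cell fails to cover all colours iff the $\OV$ instance has an orthogonal pair. (A zero input vector makes its colour uncoverable by every cell, which is consistent — it is orthogonal to everything.) Choosing $X\approx Y^{2}d$ makes $X=n^{2/3+o(1)}$, $Y=n^{1/3+o(1)}$ and all four parameters $n^{2/3+o(1)}$, with construction cost $\tilde{O}(n^{4/3})$; an $O(N^{3-\eps})$ algorithm for $\ColorBMM$ on $N\times N$ matrices would then solve $\OV$ in $n^{2-2\eps/3+o(1)}$ time, contradicting OVH.

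\smallskip\noindent\textbf{The Real APSP lower bound.} Since $\APSP$ is equivalent to $\MinPlus$, it suffices to reduce $\MinPlus$ on $n\times n$ real matrices to $\ColorBMM$. Following the idea in Williams' $\APSP$ algorithm, cut the inner index $[n]$ into blocks of size $d$; the only nontrivial task is, for each block $K$ and all $(i,j)$, to compute the partial minimum $\min_{k\in K}(A[i,k]+B[k,j])$, after which the minimum over the $n/d$ blocks costs $O(n^{3}/d)$. Inside a block, Fredman's trick reduces "does $k^*\in K$ attain the partial minimum for $(i,j)$" to checking, for every $k'\in K$, the comparison $A[i,k^*]-A[i,k']\le B[k',j]-B[k^*,j]$; a single global sort of the $O(nd^{2})$ relevant differences — affordable on the Real RAM — replaces every such real comparison by a comparison of two bounded integers. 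The partial-minimum computation thereby becomes an AND over $k'\in K$ of these integer comparisons, i.e.\ an AND of ORs once each comparison is expanded bitwise, which I would encode as a $\ColorBMM$ instance with colours indexed by the $d$ elements $k'\in K$. The remaining freedom — the block size $d$, how finely the rows and columns are themselves blocked (so that the integers produced by the sort stay small), and how the comparison "$\le$" is turned into ORs — is then tuned so that the inner dimension and colour count of the produced instances match their row/column counts and so that (number of instances)$\,\cdot\,N^{2.25}$ comes out to $n^{3}$; this calibration is what yields the exponent $2.25$.

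\smallskip\noindent\textbf{Where the difficulty lies.} The $\OV$ direction is essentially the balancing computation above and is routine. The Real APSP direction is the technical core. Two things must be handled with care: (i) Fredman's global sort has to be arranged so that it is a lower-order term yet produces integers small enough to be encoded cheaply — which is why the rows and columns, not just the inner index, must be blocked; and (ii) the "$\le$" relation on these integers, viewed as a Boolean matrix, has essentially full Boolean rank, so it admits no short OR of rank-one Booleans in isolation — one is forced to exploit the outer AND over $k'\in K$ together with the blocking to keep the inner dimension of the $\ColorBMM$ instances down to the order of the other parameters. Making this trade-off close at exponent $2.25$ (rather than something weaker like $2$) is the crux of the argument.
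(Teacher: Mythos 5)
Your \OV\ half is correct and is essentially the paper's own reduction: your $(X,Y)$ grouping with colors $[Y]^2$ and inner index $(y_1,y_2,\ell)$ is exactly Lemma~\ref{lem:colorbmm:ov} with $X=n/d$, $Y=d$, and the balancing $Y^3\approx n$ giving $N=n^{2/3}$ is the same calibration as Theorem~\ref{thm:colorbmm:ov}.

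The Real APSP half, however, has two genuine gaps. First, your argmin-recovery setup does not close. You reduce ``does a \emph{fixed} $k^*\in K$ attain the partial minimum for $(i,j)$'' to an AND over $k'\in K$, which is one \ColorBMM\ instance per candidate $k^*$, hence $d$ instances per block and $d\cdot(n/d)=n$ instances total of size roughly $n\times n$; no value of $T(N)=N^{2.25-\eps}$ makes $n\cdot T(n)$ subcubic. The missing idea is to recover the argmin \emph{bit by bit}: after perturbing to break ties, the $t$-th bit of $k_{ij}=\argmin_k(A[i,k]+B[k,j])$ equals $\bigwedge_{k\in[d]\setminus K_t}\bigvee_{k'\in K_t}\bigl[A[i,k']-A[i,k]<B[k,j]-B[k',j]\bigr]$ where $K_t$ is the set of indices whose $t$-th bit is $1$; this costs only $O(\log d)$ \ColorBMM\ calls per block, with colors indexed by the outer AND variable $k$ (not the inner OR variable as you wrote). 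Second, you correctly flag that the comparison relation has no short rank-one decomposition but you do not supply the resolution, and the fix you gesture at (blocking rows and columns so the sorted ranks stay small) is neither necessary nor sufficient: the global sort of the $O(d^2n)$ differences is already a lower-order term, and the real issue is the \emph{inner dimension} of the Boolean matrices. The paper's resolution is (i) the dyadic-interval encoding --- $a<b$ iff there is a unique dyadic interval $I$ with $\mathrm{rank}(a)$ in its left half and $\mathrm{rank}(b)$ in its right half, so each comparison becomes an OR over interval-indexed rank-one terms with every value appearing in only $O(\log n)$ of them --- followed by (ii) a high-vs-low-degree pruning that handles inner indices of degree below $n^{1-\eps}/d^2$ by brute force ($\OO(dn^{2-\eps})$ extra work) and leaves only $\OO(d^4n^{\eps})\le n$ high-degree inner indices. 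Only after both steps does one instance per block of dimension $n\times n$ with $O(d)=O(n^{1/4})$ colors emerge, and $(n/d)\cdot n^{9/4-5\eps/4}=\OO(n^{3-\eps})$ gives the stated $n^{2.25-o(1)}$ bound.
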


We also obtain \APSP\ and \OV\ hardness for several natural matrix product problems, such as \DistinctEq\ and \DistinctPlus\ (see Section~\ref{sec:final} for their definitions).

We will then reduce \ColorBMM\ to \ACPTCb,\ as detailed in the following. 

\subsubsection{Equivalence between variants of Triangle Collection}

We also show some equivalence results between variants of \TCol. These equivalences will be useful when we further reduce the previous two lines of reductions to \ACPTCb\ to finish the proof of Theorem~\ref{thm:intro:ACP-TC}.

\TClong\ (\TC\ for short) as defined by \cite{abboud2018matching} is a ``restricted'' version of \TCol\ whose definition has two parameters $t$ and $p$ along with  other details about the structure of the input graphs. All previous reductions from \TCol\ \cite{abboud2018matching, dahlgaard2016hardness} are actually from the \TC\ problem with small parameters $t, p \le n^{o(1)}$ (or $t, p \le n^\eps$ for every $\eps > 0$). 

We consider a conceptually much simpler variant of \TCol,\  which we call \TCsss. The input and output of \TCsss\ are the same as those of \TCol;\ however, we use a parameter which denotes an upper bound for the number of nodes that can share the same color (``light'' means that the colors are light, i.e.\ have few vertices each). 

We show the following equivalence between \TC\ and \TCsss. 

\begin{theorem}
\label{thm:intro:TC***}
({\sf ACP-}) \TCsss\ with parameter $n^{o(1)}$ and ({\sf ACP-}) \TC\ with $t, p \le n^{o(1)}$ are equivalent up to $n^{o(1)}$ factors. 
\end{theorem}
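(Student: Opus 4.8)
The plan is to prove the equivalence by exhibiting a reduction in each direction, both of which turn out to be ``color-faithful'' enough to carry over to the \textsf{ACP-} variants with only $n^{o(1)}$ overhead. \emph{Reducing \TCsss\ to \TC\ is the substantive direction.} A \TC\ instance must satisfy all of the structural constraints in the definition of \cite{abboud2018matching} (the roles of the parameters $t$ and $p$, which vertices may be adjacent, and the form a witnessing triangle must take), whereas a \TCsss\ instance is an arbitrary colored graph whose only promise is that each color class has at most $q = n^{o(1)}$ vertices. I would first put the instance in layered/tripartite form: make three copies $V_1, V_2, V_3$ of the vertex set, join a copy in $V_\ell$ to a copy in $V_{\ell'}$ for $\ell \neq \ell'$ exactly when the two underlying vertices are adjacent, and keep each vertex's original color in all three copies; then a triple of distinct colors $(a,b,c)$ has a colored triangle with one vertex per layer in the new graph if and only if it had a colored triangle in the original. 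Next I would pad and re-index this layered graph into the exact normal form \TC\ requires: assign the at most $q$ vertices of each color to the ``$t$-indexed'' slots, add the bookkeeping/coordinate gadget vertices and edges the definition prescribes, and throw in dummy vertices so that every count is met exactly. Since $q = n^{o(1)}$ one can take $t, p \le n^{o(1)}$, and the vertex count grows only by an $n^{o(1)}$ factor. The correctness requirement is that, for every triple of distinct colors, a colored triangle exists in the \TC\ instance iff it existed in the original; this holds provided the gadget vertices and edges are \emph{forced}, so that they only re-encode the same adjacency relation rather than introducing new colored triangles. Designing these gadgets to be forced, while matching every clause of the (somewhat involved) definition of \TC, is the main obstacle of the proof.

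\emph{Reducing \TC\ to \TCsss\ is the routine direction.} Here one merely reinterprets an already-structured instance. In a \TC\ instance with $t, p \le n^{o(1)}$, the structural constraints guarantee that every color is realized by at most $\poly(t,p) = n^{o(1)}$ vertices; moreover \TC\ and \TCsss\ have identical input and output conventions and both ask about all triples of distinct colors. Hence the same colored graph is already a \TCsss\ instance with light-color parameter $n^{o(1)}$, up to relabeling or discarding a subpolynomial number of purely bookkeeping colors; since the underlying graph is untouched, the set of color triples possessing a colored triangle is unchanged.

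\emph{The \textsf{ACP-} variants.} Both reductions above embed the colors of the source instance injectively into the colors of the target instance and introduce only $n^{o(1)}$ extra ``dummy'' colors, whose participation in missing color triples is either impossible by construction or detectable by inspection. Thus from the all-color-pairs output of the target instance one recovers, for every pair $(a,b)$ of original colors and every original color $c$, whether a triangle with colors $(a,b,c)$ exists, after a simple $n^{o(1)}$-overhead pass over the output. Consequently the \textsf{ACP-} versions are equivalent under exactly the same reductions, again up to $n^{o(1)}$ factors, which completes the plan.
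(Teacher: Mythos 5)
There is a genuine gap in the substantive direction (\TCsss\ $\rightarrow$ \TC), and you have in fact flagged it yourself: you defer the entire construction to unspecified ``bookkeeping/coordinate gadgets'' that must be ``forced,'' and you call designing them ``the main obstacle of the proof.'' The obstacle is real and your plan does not overcome it. The difficulty is not meeting counts exactly or adding dummy vertices; it is that in a \TC\ instance every component $G_t$ must contain at most one $A$-node per color, and every $A$-node must have at most one neighbor of each color in $B$ and in $C$, whereas a \TCsss\ instance can have up to $p$ nodes of each color in each part adjacent in arbitrary ways. Merely ``assigning the at most $q$ vertices of each color to the $t$-indexed slots'' (i.e., partitioning the color classes across components) destroys triangles: a witnessing triangle on vertices $(a,i),(b,j),(c,k)$ uses an arbitrary combination of indices $i,j,k$, and any partition into disjoint components separates most such combinations.

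The idea your proposal is missing is replication rather than partition: for each of the $p^3$ index triples $(i,j,k)\in[p]^3$, build a separate component containing, for every color $a$ of the $A$-part, only its $i$-th vertex, for every color $b$ of the $B$-part only its $j$-th vertex, and for every color $c$ of the $C$-part only its $k$-th vertex, with edges inherited from the original graph. Each component then has at most one node of every color, so it is already a \TCss\ instance (hence a \TC\ instance with $p=1$ and $t=p^3$, with all the neighbor-distinctness clauses satisfied vacuously and no gadgets needed), and the disjoint union over all $(i,j,k)$ preserves exactly the set of color triples admitting a triangle. This costs a $p^3=n^{o(1)}$ blowup, which is why the theorem is stated with parameter $n^{o(1)}$. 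Your other two parts are fine and match the intended argument: the reverse direction is indeed just the observation that a \TC\ instance with $t,p\le n^{o(1)}$ has at most $tp\le n^{o(1)}$ nodes per color and is therefore already a \TCsss\ instance, and the ACP variants go through verbatim since the reduction is color-preserving. (One small simplification: \TCsss\ is defined in the paper as a restricted instance of the tripartite \TCol, so your initial tripartization step is unnecessary.)
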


Therefore, in order to reduce problems to \TC\ and thus to other problems known to be reducible from \TC, it suffices to reduce them to the much cleaner problem \TCsss.

We obtain our main result, Theorem~\ref{thm:intro:ACP-TC}, by combining our previous reductions with Theorem~\ref{thm:intro:TC***}. 

First, we obtain \APSP\ and \ThreeSUM\ hardness of  \ACPTC\  
by reducing \AEMonoTri\ to \ACPTCsss. 

Second, the \ColorBMM\ problem easily reduces to the \ACPTCb\ problem, establishing the \OV\ hardness of \ACPTCb\  and yielding another route of reduction from the \APSP\  problem. However, \ColorBMM\ doesn't seem to reduce to the more restricted problem \ACPTC. By unrolling our reduction from \OV\ to \ACPTCb, we show that \OV\ actually reduces to the original \TC\ problem.

\begin{theorem}
\TCsss\ with parameter $n^{o(1)}$ (and thus \TC\ with parameters $n^{o(1)}$ and \TCol) requires $n^{3-o(1)}$ time assuming OVH. 
\end{theorem}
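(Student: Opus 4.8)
The idea is to \emph{unroll} the composition of the two reductions established earlier — from $\OV$ to $\ColorBMM$, and from $\ColorBMM$ to $\ACPTCb$ — rather than use them as black boxes. Unrolling buys two things. First, the $\ColorBMM$ instance coming out of an $\OV$ instance is ``light'': only $n^{o(1)}$ witness indices $k$ map to any one color, so the triangle-collection graph we build puts each color on only $n^{o(1)}$ vertices, i.e.\ it is a bona fide $\TCsss$ instance with parameter $n^{o(1)}$ (equivalently, a $\TC$ instance with parameters $t,p\le n^{o(1)}$). Second, $\OV$ asks for a single bit — whether an orthogonal pair exists — which is the conjunction over all color pairs of the per-pair $\ACPTCb$ answers, so we may target the \emph{plain} $\TCol$/$\TCsss$ problem instead of the all-color-pairs version.

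In detail, I would take an $\OV$ instance on $N$ vectors of dimension $d=N^{o(1)}=\omega(\log N)$, partition the vectors into $n = N^{2/3+o(1)}$ groups of $N^{1/3-o(1)}$ vectors (the exponents chosen so that the sizes below come out to $n$), and form a $\ColorBMM$ instance whose rows and columns are indexed by groups, whose witness index $k$ ranges over triples $(\ell,p,q)$ with $\ell\in[d]$ a coordinate and $p,q$ local indices inside a left group and a right group, with $\col(\ell,p,q):=(p,q)$, with $A[G,(\ell,p,q)]$ equal to the $\ell$-th bit of the $p$-th vector of group $G$, and with $B[(\ell,p,q),H]$ equal to the $\ell$-th bit of the $q$-th vector of group $H$. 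Then $\bigvee_{k\in\col^{-1}(p,q)}\bigl(A[G,k]\wedge B[k,H]\bigr)$ is exactly the non-orthogonality of the $p$-th vector of $G$ and the $q$-th vector of $H$, so the witnesses for $(G,H)$ cover all colors iff groups $G,H$ contain no orthogonal pair, and all pairs $(G,H)$ are covered iff the $\OV$ instance has none. Each color class has exactly $d=n^{o(1)}$ witnesses, the inner dimension is $n$, and an $n^{3-\eps}$-time algorithm here solves $\OV$ in $N^{2-\Omega(\eps)}$ time.

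Next I would apply the $\ColorBMM$-to-$\ACPTCb$ construction to this instance: a tripartite graph on parts $I,K,J$ (indexed by left groups, witness indices, and right groups), with $I$-$K$ edges from $A$, $K$-$J$ edges from $B$, \emph{all} $I$-$J$ edges present, a private color on each vertex of $I$ and of $J$, and the color $\col(k)$ on each $k\in K$. Since the graph is tripartite, every triangle uses one vertex per part and has colors $(a_G,b_H,(p,q))$, and such a triangle exists iff the $p$-th vector of $G$ and the $q$-th vector of $H$ are non-orthogonal; so ``all triples of the form $(a_G,b_H,(p,q))$ are covered'' already equals the negated $\OV$ answer. To turn this into a plain $\TCsss$ instance I would force every remaining color triple — i.e.\ every triple using at least two colors from the same part — to be covered, by adding, for each of the three color classes $X\in\{I,J,K\}$, a clique of ``star'' vertices, one per color of $X$, together with one ``helper'' vertex of each color outside $X$ adjacent to that whole clique; each gadget then supplies a triangle for every triple with $\ge 2$ colors from its class, while — since no edges are added between gadgets, or between two helpers, or between gadget vertices and original vertices — no new triangle has one color in each of $I,K,J$. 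This adds $O(n)$ vertices and keeps every color on $n^{o(1)}$ vertices, so we obtain a $\TCsss$ instance with parameter $n^{o(1)}$ on $O(n)$ vertices whose answer is the negated $\OV$ answer, and an $n^{3-\eps}$-time algorithm for it refutes OVH. The bound for $\TC$ with parameters $n^{o(1)}$ then follows either by checking that the instance already meets its structural requirements or via the equivalence in Theorem~\ref{thm:intro:TC***}, and the bound for $\TCol$ follows since $\TC$ is a restricted special case of $\TCol$.

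The step I expect to be the crux is the last one: the dummy gadgets must cover \emph{all} shapes of junk triple (two colors from $I$, two from $J$, two from $K$, or all three from one class) while creating \emph{no} triangle whose colors lie one per part — such a triangle would spuriously satisfy a genuine orthogonality query — and while keeping every color light, since lightness is exactly what lets us use the single-bit plain-$\TCol$ formulation and land in the parametrized $\TC$ regime of Theorem~\ref{thm:intro:TC***}. Satisfying all three constraints at once requires a careful layout of the dummy adjacencies, but should need no idea beyond those already present in the $\ColorBMM$-to-$\ACPTCb$ reduction.
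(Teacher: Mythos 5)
Your reduction is correct and is essentially the paper's own proof of this statement: compose the $\OV\to\ColorBMM$ reduction (groups of size $N^{1/3}$, witness triples $(\ell,p,q)$ colored by $(p,q)$, so each color has only $f=N^{o(1)}$ witnesses) with the $\ColorBMM\to$ triangle-collection graph, and observe that the single $\OV$ bit is the conjunction over all color pairs, so the plain rather than all-color-pairs version suffices; the passage to $\TC$ then goes through Theorem~\ref{thm:intro:TC***}, exactly as you suggest. The one point where you overcomplicate: the step you flag as the crux (dummy gadgets covering triples with two colors from the same part) is unnecessary here, because the paper defines $\TCol$, and hence $\TCsss$, as the \emph{tripartite} problem that only queries triples with one color from each part (the equivalence with the original non-tripartite formulation is handled once and for all in a separate lemma); indeed your clique-plus-helpers gadgets would put edges inside a single part and so would take you outside the paper's definition of a $\TCsss$ instance, whereas the gadget-free graph you build is already a valid one.
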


Using Theorem~\ref{thm:intro:TC***} we are also able to prove the following surprising result: 
\begin{theorem}
\label{thm:intro:TC*}
If \TC\ with parameters $t, p \le n^\eps$ for some $\eps > 0$ has a truly subcubic time algorithm, then so does \TCol. 
\end{theorem}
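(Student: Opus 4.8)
The plan is to reduce general \TCol\ on an $n$-node graph $G$ to a single ``light-color'' sub-instance, solvable by the hypothesized algorithm, together with a few Boolean matrix multiplications that dispatch all triples using a large color class. Fix a threshold $s=n^{\eps'}$ for a small constant $\eps'>0$ to be chosen at the end, and call a color \emph{heavy} if its class $U_a$ has more than $s$ vertices and \emph{light} otherwise; there are at most $n/s$ heavy colors. Any triple of distinct colors $(a,b,c)$ is either all-light --- in which case any realizing triangle lies in the subgraph $H$ induced by the light-colored vertices --- or contains a heavy color, and I would handle these two cases separately, answering \TCol\ YES iff both the light sub-instance answers YES and every heavy-color check below passes; this split is clearly exhaustive.

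For the heavy case, note that a triple $(a,b,c)$ with $a$ heavy is realized iff some $u\in U_a$ has a neighbor $v$ of color $b$ and a neighbor $w$ of color $c$ with $vw\in E$ (pivot the triangle at its $a$-colored vertex). So for each heavy color $a$ I would compute the Boolean matrix $B_a = A[\cdot,U_a]\cdot A[U_a,\cdot]$, where $A$ is the adjacency matrix of $G$, so that $B_a[v,w]$ tells whether $v,w$ share a neighbor in $U_a$; this is a rectangular Boolean product of an $n\times|U_a|$ by an $|U_a|\times n$ matrix, costing $O(n^2|U_a|^{\omega-2})$. Since $\sum_{a\text{ heavy}}|U_a|\le n$ over at most $n/s$ colors, concavity of $t\mapsto t^{\omega-2}$ gives $\sum_{a}n^2|U_a|^{\omega-2}=O(n^3/s^{3-\omega})$, which is truly subcubic for every constant $\eps'>0$ since $3-\omega$ is a positive constant. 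From $B_a$ and the edge set I would form $S_a=\{(\col(v),\col(w)):vw\in E,\ B_a[v,w]=1\}$ in $O(n^2)$ time; then every triple with heavy color $a$ is realized iff $S_a$ (symmetrized) contains all pairs of colors other than $a$, which is checkable in $O((n/s)n^2)$ time overall --- again truly subcubic.

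For the light case, $H$ has at most $n$ nodes and all its color classes have size at most $s=n^{\eps'}$, so deciding \TCol\ on $H$ is an instance of \TCsss\ with parameter $n^{\eps'}$. I would feed this to the hypothesized \TC-algorithm through Theorem~\ref{thm:intro:TC***}, using the \emph{polynomial-parameter} form of that equivalence (which its proof should yield): \TCsss\ with parameter $p$ reduces, at overhead near-linear in $n$ and polynomial in $p$, to \TC\ with $t,p'\le\mathrm{poly}(p)$. Choosing $\eps'$ small enough that $\mathrm{poly}(n^{\eps'})\le n^{\eps}$ and that the size blow-up keeps the running time at $n^{3-\Omega(1)}$, this light call is truly subcubic, and combined with the heavy computations it decides \TCol\ on $G$ in truly subcubic time.

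I expect the main obstacle to be exactly this last point: one needs Theorem~\ref{thm:intro:TC***} with \emph{polynomially} (not merely subpolynomially) related parameters, since a subpolynomial threshold $s=n^{o(1)}$ would make the heavy part cost $n^3/s^{3-\omega}=n^{3-o(1)}$, i.e.\ no speedup --- it is the fixed polynomial threshold $n^{\eps'}$ that lets fast matrix multiplication buy a genuine polynomial saving. The rest --- the exhaustiveness of the light/heavy split, the pivoting characterization in the heavy case, and the convexity bound $\sum|U_a|^{\omega-2}\le(n/s)^{3-\omega}n^{\omega-2}$ --- is routine.
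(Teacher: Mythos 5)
Your proposal is correct and follows essentially the same route as the paper: the proof of Theorem~\ref{thm:TC:TCLC} performs the identical heavy/light split at a polynomial frequency threshold, handles each heavy color by a rectangular Boolean matrix product pivoted at that color class and then checks coverage over edges, and passes the induced light subgraph as a \TCsss\ instance through Theorem~\ref{thm:TCLC:TC**}, which indeed supplies the polynomial-parameter form ($p \mapsto p^3$, $O(np^3)$ nodes, $O(n^2p^3)$ time) that you correctly flagged as the crux. The only difference is in the accounting: you bound the total multiplication cost via concavity of $t^{\omega-2}$, giving $O(n^3/s^{3-\omega})$, whereas the paper uses rectangular exponents $\omega(1,p,1)$ (and $\omega(1,\eps,1)=2$ for $\eps<0.3$) to obtain the slightly sharper $n^{3-\min\{\eps,0.3\}+o(1)}$ overhead.
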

Theorem~\ref{thm:intro:TC*} establishes a subcubic equivalence between \TCol\  and \TC\ with parameters $n^\eps$, and in fact implies that all the known hardness results so far that were proven from \TC\ also hold from \TCol\  itself.

\subsubsection{Other Reductions}

Using our reductions and techniques, we also obtain the following list of interesting applications.

\paragraph{A hard colorful version of AE-Sparse-Triangle.} We give a \textit{tight} conditional lower bound under Hypothesis~\ref{conj:conj2} for the parameterized time complexity of a natural variant of \AESparseTri. Specifically, in the
\AEColorSparseTri\ problem, we are  given a graph $G=(V,E)$ with $m$ edges
and a mapping $\col:V\rightarrow\Gamma$.  For each edge $uv$, we want to decide whether
$\{\col(w): uwv\mbox{ is a triangle}\} = \Gamma$. When the degeneracy of the graph is $m^\alpha$, we can clearly solve the problem in $\OO(m^{1+\alpha})$ time by enumerating all triangles in the graph \cite{chiba1985}. We show that, under Hypothesis~\ref{conj:conj2}, for any constant $0 < \alpha \le 1/5$, no algorithm can solve
\AEColorSparseTri\ in a graph with $m$ edges and degeneracy $O(m^\alpha)$ in $\OO(m^{1+\alpha-\eps})$ time for $\eps > 0$.

\paragraph{Real-to-integer reductions.} It is an intriguing question whether we can base the hardness of a problem with integer inputs on the hardness of the same problem but with real inputs. For instance, it would be extremely interesting if one could show that the Real $3$SUM hypothesis implies the Integer $3$SUM hypothesis. We partially answer this question by showing two conditional lower bounds of this nature. 

First, if \IntAllThreeSUM\ (a variant of Integer $3$SUM where one needs to output whether each input number is in a $3$SUM solution) can be solved in $\OO(n^{6/5 - \varepsilon})$ time for $\varepsilon > 0$, then \AllThreeSUM\ can be solved in truly subquadratic time, falsifying the Real $3$SUM hypothesis. 

Second, if \IntAEExactTri\ (the ``All-Edges'' variant of \IntExactTri) can be solved in $\OO(n^{7/3-\varepsilon})$ time for $\varepsilon > 0$, then \AEExactTri\ can be solved in truly subcubic time. We also show variants of this result such as an analogous result for the counting versions of these problems. 

\paragraph{An application to string matching.} 

In the pattern-to-text Hamming distance problem, we are given a text string $T = t_1\cdots t_N$ and a pattern string $P = p_1\cdots p_M$ in $\Sigma^*$ with $M\le N$, and we want to compute for every $i=0, \ldots, N-M$, the Hamming distance between $P$ and $t_{i+1}\cdots t_{i+M}$, which is defined as $M-|\{j:p_j=t_{i+j}\}|$. The current best algorithm in terms of $N$ runs in $\OO(N^{3/2})$ time \cite{matching1} while unfortunately there isn't a matching conditional lower bound under standard hypotheses (though there is an unpublished non-matching $N^{\omega/2-o(1)}$ time conditional lower bound based on the presumed hardness of Boolean Matrix Multiplication that has been attributed to Indyk, see e.g. \cite{GawrychowskiU18}).

We consider a similar string matching problem which we call pattern-to-text distinct Hamming similarity. The input to pattern-to-text distinct Hamming similarity is the same as the input to pattern-to-text Hamming distance, but for each $i = 0, \ldots, N-M$, we need to output $|\{p_j:p_j=t_{i+j}\}|$ instead. The $\OO(N^{3/2})$ time algorithm for pattern-to-text Hamming distance can be easily adapted to an $\OO(N^{3/2})$ time algorithm for pattern-to-text distinct Hamming similarity. Using our reduction from \OV\ to \ColorBMM, we show a matching $N^{3/2-o(1)}$ lower bound for pattern-to-text distinct Hamming similarity based on OVH. 

\paragraph{Real APSP hardness of Set-Disjointness and Set-Intersection.}
\SetDisj\ and \SetInter\ are two generalized versions of \AESparseTri\ (see Section~\ref{sec:prelim} for their formal definitions). Kopelowitz, Pettie, and Porat~\cite{kopelowitz2016higher} showed Integer $3$SUM hardness of these two problems, and used them as intermediate steps for showing $3$SUM hardness of many graph problems, such as {\sf Triangle Enumeration} and {\sf  Maximum Cardinality Matching}. Later, Vassilevska W. and Xu \cite{williamsxumono} showed reductions from \IntExactTri\ to these two problems, and thus obtained Integer APSP hardness for \SetDisj\ and \SetInter. By generalizing the techniques used in our reduction from \APSP\ to \AESparseTri, we obtain Real APSP hardness for \SetDisj\ and \SetInter. Therefore, all the hardness results shown by Kopelowitz, Pettie, and Porat~\cite{kopelowitz2016higher} now also have Real APSP hardness. 

\subsection{An Illustration of Our Techniques: Reduction from \APSP{} to \AESparseTri{}}
\label{sec:proof_in_intro}

In this subsection, we include our complete reduction from \APSP{} to \AESparseTri{} to exemplify how our techniques are different from, and simpler than, the previous hashing techniques~\cite{williamsxumono}. %

Our main new insight is simple: we observe that
Fredman's beautiful method for \APSP~\cite{fredman1976new}, which yielded an $\tO(n^{5/2})$-depth decision tree but not a truly subcubic time algorithm, can actually be converted to an 
efficient algorithm when given an oracle to \AESparseTri, if we combine the method with a
standard randomized search trick and an interesting use of ``dyadic intervals'' (essentially corresponding to a one-dimensional ``range tree''~\cite{BergCKO08}).

We first recall the well-known fact~\cite{Fischer71} that \APSP{} is equivalent to computing the $(\min,+)$-product of
two $n\times n$ real-valued matrices $A$ and $B$, defined as the matrix $C$ with $C[i,j] = \min_{k} (A[i,k]+B[k,j])$. 
This problem in turn reduces to $O(n/d)$ instances of computing the $(\min,+)$-product of
an $n\times d$ matrix and $d\times n$
matrix.

We define the following intermediate problem, which we show is equivalent to the original problem by a random sampling trick:

\newcommand{\APSPVar}{\mbox{\sf Real-$(\min,+)$-Product-Variant}}

\begin{problem}[\APSPVar]\label{prob:apsp2}
We are given an $n\times d$ real matrix $A$ and 
a $d\times n$ real matrix $B$, where $d\le n$.
For each $i,j\in [n]$, we are also given an index $k_{ij}\in [d]$.

For each $i,j\in [n]$, we want to find an index $k'_{ij}\in [d]$ (if it exists) satisfying
\begin{equation}\label{eqn:apsp}
A[i,k'_{ij}]+B[k'_{ij},j] < A[i,k_{ij}]+B[k_{ij},j]. 
\end{equation}
\end{problem}

\begin{restatable}{lemma}{lemAPSPone}
\label{lem:apsp1}
\MinPlus\ of an $n\times d$ real matrix $A$ and 
a $d\times n$ real matrix $B$ reduces to $\OO(1)$ calls of an oracle for
\APSPVar\
using Las Vegas randomization.
\end{restatable}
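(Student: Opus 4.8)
The plan is to reduce the full $(\min,+)$-product to $\OO(1)$ calls to $\APSPVar$ by a standard iterative-refinement argument combined with random sampling. First I would observe that if I already knew, for each $(i,j)$, some candidate index $k_{ij}$, then a single call to $\APSPVar$ either certifies that $k_{ij}$ is optimal (no improving $k'_{ij}$ exists) or returns a strictly better index. So the task reduces to showing that $\OO(1)$ rounds of such refinement suffice to reach the true minimizer for every entry simultaneously, with high probability.

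The key idea for bounding the number of rounds is the usual ``random pivot halves the number of candidates'' trick. Fix a cell $(i,j)$ and sort the $d$ values $A[i,k]+B[k,j]$ for $k\in[d]$. Initialize $k_{ij}$ by picking a uniformly random $k\in[d]$ (the same random choice can be shared, or independent per cell — either works). After one $\APSPVar$ call we get an index whose value is strictly smaller; I want to argue that in expectation the \emph{rank} of the current candidate drops by a constant factor each round. Here I would use the following: if the current candidate has rank $r$ among the $d$ sums for cell $(i,j)$, and we then replace it by a \emph{random} index among those that are strictly better (rather than an arbitrary one), the new rank is uniform in $\{1,\dots,r-1\}$, so its expectation is $r/2$. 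To make $\APSPVar$ return a random improving index, I would randomly permute the columns $[d]$ (equivalently, relabel the index set by a random bijection) before each call; since $\APSPVar$ is only promised to return \emph{some} improving index, after the random relabeling whatever it returns is distributed as a uniformly random improving index — actually one must be slightly careful, the oracle is adversarial, so a cleaner route is: in each round, independently re-sample a fresh uniformly random index $k \in [d]$ as a \emph{second} candidate, call $\APSPVar$ with whichever of the two current candidates is smaller, and also call it with the fresh random one; take the min of all returned values together with the current best. A fresh uniform random index has rank at most $d/2^t$ after being compared against... hmm, that doesn't immediately shrink either.

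Let me restate the cleaner version: the right trick (this is exactly the ``standard randomized search trick'' alluded to in Section~\ref{sec:proof_in_intro}) is to maintain, for each $(i,j)$, a current best value, and in round $t$ use as the query index $k_{ij}$ a \emph{uniformly random} index drawn from the set of indices $k$ with $A[i,k]+B[k,j] < (\text{current best value})$. Since $\APSPVar$ returns an improving index over $k_{ij}$, and $k_{ij}$ was a uniform random improving index over the previous best, the rank of the true position relative to our query shrinks geometrically in expectation: after $O(\log d)=\OO(1)$ rounds the current best is the actual minimum for every cell with high probability, by a union bound over the $n^2$ cells (each round the failure to be within a constant factor of full progress happens with probability bounded away from $1$, so $O(\log(n^2 d))$ rounds drive the total failure probability below any polynomial). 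To sample a uniform random improving index without already knowing which indices improve — that is the one subtlety — I would note that we don't need to: we can just feed $\APSPVar$ a uniformly random index $k_{ij}\in[d]$ in round $t$ if we haven't yet found the optimum, and observe that conditioned on the optimum not yet being found, a uniform random index has probability $\ge 1/2$ of lying in the better half, hence each round independently has a constant probability of at least halving the rank of the best-known value; $\OO(1)$ independent rounds then succeed w.h.p. Using Las Vegas randomization means: we repeat until a final verification call to $\APSPVar$ (with the current best indices) reports ``no improvement for any cell,'' which certifies correctness, so the output is always correct and only the running time is randomized.

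The main obstacle I expect is making the probabilistic rank-decrease argument fully rigorous against an \emph{adversarial} oracle: $\APSPVar$ may return the \emph{worst} (barely-improving) index it is allowed to, so the improvement per call cannot be controlled through the returned index alone. The resolution, which I would make precise, is to drive all the randomness through the \emph{query} index $k_{ij}$ (chosen uniformly at random, independently each round, among all of $[d]$ or among a shrinking candidate pool), so that it is the query, not the response, that is guaranteed to be ``typically good''; then $\APSPVar$'s response only helps (it is never worse than $k_{ij}$), and the geometric decrease in the rank of the best-known value holds regardless of how adversarial the oracle is. Handling the union bound over $n^2$ cells and confirming the total number of oracle calls is $\OO(1)$ (a constant times $\log(n^2 d) = \OO(1)$ rounds, each round being $O(1)$ oracle calls) is then routine.
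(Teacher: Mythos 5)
Your proposal has a genuine gap in the step that bounds the number of oracle calls. The workable version of your rank-halving argument requires the query index in each round to be uniform over the set of indices that \emph{improve} on the current best, and you correctly identify that this set is unknown; but your fallback — querying a uniformly random index $k_{ij}\in[d]$ each round — does not deliver the claimed geometric decrease. If the current best for cell $(i,j)$ has rank $r$ among the $d$ sums and the random query has rank $s$ (uniform over $[d]$), an adversarial \APSPVar\ oracle may answer with an index of rank $s-1$; the new best is then $\min(r,s-1)$, which halves only when $s\le r/2+1$, an event of probability about $r/(2d)$, not a constant. Once $r\ll d$ a uniform query is almost surely worse than the current best and the oracle's answer need not improve anything, so against an adversarial oracle this scheme needs $\Omega(d)$ calls, not $\OO(1)$. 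Querying the current best itself only guarantees a rank decrease of $1$ per call, which is just as bad.

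The missing idea is to make the \emph{initialization} strong rather than the per-round progress: the paper takes a random subset $R\subseteq[d]$ of size $d/2$, recursively computes $k_{ij}^{(R)}=\argmin_{k\in R}(A[i,k]+B[k,j])$ by a half-size $(\min,+)$-product, and initializes $k_{ij}=k_{ij}^{(R)}$. With high probability only $O(\log n)$ of the $d$ sums lie below the minimum over a random half (the probability that the $t$ smallest indices all avoid $R$ decays like $2^{-t}$), so even if each oracle call improves the candidate by only one rank, $O(\log n)$ calls per recursion level suffice, for $O(\log d\log n)=\OO(1)$ calls total. This argument needs nothing from the oracle beyond strict improvement, which is exactly what makes it robust to adversarial answers — the property your approach was unable to secure.
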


\begin{proof}
We compute the $(\min,+)$-product of $A$ and $B$ as follows:
take a random subset $R\subseteq [d]$ of size $d/2$.
For each $i,j\in [n]$, first compute
\[ k_{ij}^{(R)} = \argmin_{k\in R} (A[i,k]+B[k,j]). \]
This can be done recursively by $(\min,+)$-multiplying an
$n\times (d/2)$ and a $(d/2)\times n$ matrix.

Next, initialize $k_{ij}=k_{ij}^{(R)}$.  Invoke the oracle for \APSPVar\ to find some $k_{ij}'$ satisfying (\ref{eqn:apsp}) for each $i,j\in [n]$.   If $k_{ij}'$ exists, reset $k_{ij}=k_{ij}'$.  Now, repeat.  When $k_{ij}'$ is nonexistent for all $i,j\in [n]$, we can
stop, since we would have $k_{ij}=\argmin_{k\in[d]} (A[i,k]+B[k,j])$ for all $i,j\in[n]$.

To bound the number of iterations, observe that for each $i,j\in [n]$,
the number of indices $k'\in[d]$ satisfying 
$A[i,k']+B[k',j] < A[i,k_{ij}^{(R)}]+B[k_{ij}^{(R)},j]$
is $O(\log n)$ w.h.p.%
\footnote{``w.h.p.'' is short for ``with high probability'', i.e., with probability $1-O(1/n^c)$ for an arbitrarily large constant $c$.}\  
(since in a set of $d$ values, at most $O(\log n)$ values are smaller than the minimum
of a random subset of size $d/2$ w.h.p.).  Thus, $O(\log n)$ iterations suffice w.h.p.

As the recursion has $O(\log d)$ depth, the total number of
oracle calls is $O(\log d \log n)$ w.h.p.  
\end{proof}

We now present our main reduction from \APSPVar\ to
\AESparseTri, which is inspired by ``Fredman's trick''%
~\cite{fredman1976new}---namely, the obvious but crucial observation that
$a'+b' < a+b$ is equivalent to $a'-a < b-b'$.

\begin{lemma}\label{lem:apsp2}
\APSPVar\ reduces to $O(d)$  instances of
 \AESparseTri\ on graphs with $\OO(n^2/d+dn)$ edges.
\end{lemma}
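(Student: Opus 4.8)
The plan is to combine ``Fredman's trick'' (the observation, noted just above the lemma, that $a'+b'<a+b$ iff $a'-a<b-b'$) with a middle layer that implements a one-dimensional range tree of ``dyadic intervals'', so that a triangle through an $(i,j)$-edge encodes exactly the existence of an improving index $k'$. First I would group the pairs $(i,j)$ by the value $k := k_{ij}$, which ranges over only $d$ values; write $S_k=\{(i,j):k_{ij}=k\}$, so $\sum_k|S_k|=n^2$. For a fixed $k$, Fredman's identity rewrites the target inequality~(\ref{eqn:apsp}) as
\[
 A[i,k'] - A[i,k] \;<\; B[k,j] - B[k',j],
\]
and the crucial point is that with $k$ fixed the left side $x_{i,k'}:=A[i,k']-A[i,k]$ is a function of $(i,k')$ only, while the right side $y_{k',j}:=B[k,j]-B[k',j]$ is a function of $(k',j)$ only --- exactly the separation one needs to define graph edges without referencing the ``third'' index. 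I would then sort the $2nd$ reals $\{x_{i,k'}\}\cup\{y_{k',j}\}$ into ranks in $[N]$ with $N\le 2nd$, breaking ties so that $x_{i,k'}<y_{k',j}$ as reals iff $\mathrm{rank}(x_{i,k'})<\mathrm{rank}(y_{k',j})$.

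Next, for each fixed $k$ I build a tripartite graph with an $i$-part, a $j$-part, and a middle part indexed by pairs $(k',u)$, where $u$ runs over the nodes of a balanced binary tree on the universe $[N]$; let $L_u$ (resp.\ $R_u$) be the set of leaves in $u$'s left (resp.\ right) subtree. Add the edge $i$--$(k',u)$ iff $\mathrm{rank}(x_{i,k'})\in L_u$, the edge $(k',u)$--$j$ iff $\mathrm{rank}(y_{k',j})\in R_u$, and the edge $i$--$j$ iff $(i,j)\in S_k$. Since every leaf of $L_u$ precedes every leaf of $R_u$, a triangle $i$--$(k',u)$--$j$ exists iff $\mathrm{rank}(x_{i,k'})<\mathrm{rank}(y_{k',j})$ for some $k'$ (take $u$ to be the lowest common ancestor of the two ranks), i.e.\ iff some $k'$ satisfies~(\ref{eqn:apsp}) for $(i,j)$. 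Thus \AESparseTri\ on this graph reports, for every $(i,j)\in S_k$, whether a valid $k'_{ij}$ exists. Each leaf lies in $O(\log N)$ of the sets $L_u$ and $O(\log N)$ of the sets $R_u$, so the middle edges number $\OO(nd)$, while the $i$--$j$ edges number $|S_k|$. To respect the $\OO(n^2/d+dn)$ budget I split each $S_k$ into $\lceil |S_k|d/n^2\rceil$ chunks of at most $n^2/d$ pairs, restricting the $i$- and $j$-parts to the vertices each chunk touches; summing $1+|S_k|d/n^2$ over $k$ yields $O(d)$ instances, each with $\OO(n^2/d+dn)$ edges.

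Since \APSPVar\ asks for the index $k'_{ij}$ and not merely for its existence, I would finish with a binary search: maintain for each pair a candidate sub-interval of $[d]$, starting with all of $[d]$, and in each of $O(\log d)$ rounds rebuild the graphs above with the middle layer restricted to the left half of each pair's current interval --- using ``split'' vertices $(i,\iota)$ and $(j,\iota)$ tagged by the interval $\iota$, so that one graph per $k$ still suffices and the edge bound is preserved (a vertex $(i,\iota)$ emits edges only to the $\le d/2^r$ values $k'\in\iota$) --- then replace each interval by the half still containing a witness. After $O(\log d)$ rounds every interval is a singleton, which is $k'_{ij}$. Feeding this routine into Lemma~\ref{lem:apsp1} closes the chain from \MinPlus.

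I expect the real obstacle to be the middle-layer gadget together with its size bookkeeping, not the algebra: one must see that the range-tree construction makes ``$i$--$j$ lies in a triangle'' coincide \emph{exactly} with ``$\exists k'$ satisfying~(\ref{eqn:apsp})'', and then juggle the $|S_k|$-versus-$n^2/d$ chunking (capping edges per instance while keeping the instance count $O(d)$) and, in the witness-finding rounds, the $d/2^r$-versus-$n2^r$ trade-off that keeps the middle-edge count at $\OO(nd)$. By comparison, Fredman's identity, the tie-breaking convention, and the Las Vegas wrapper in Lemma~\ref{lem:apsp1} are routine once this gadget is in place; this is also why the proof ends up much shorter than the hashing-based reduction it replaces.
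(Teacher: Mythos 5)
Your construction is essentially the paper's: Fredman's trick to separate the inequality into an $(i,k')$-term and a $(k',j)$-term, a middle layer of nodes $(k',I)$ indexed by dyadic intervals (your range-tree nodes $u$ with $L_u,R_u$ are exactly these, with the LCA argument giving the ``left half vs.\ right half'' characterization of $a<b$), and the chunking of each $P_k$ into pieces of size $O(n^2/d)$ to get $O(d)$ instances with $\OO(n^2/d+dn)$ edges each. The one place you diverge is in extracting the witness $k'_{ij}$ rather than its mere existence: the paper simply invokes the standard equivalence of \AESparseTri\ with its witness-reporting version (via random sampling), whereas you run an explicit $O(\log d)$-round binary search with interval-tagged split vertices. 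Your variant is fine and keeps the per-instance edge bound, but it multiplies the instance count to $O(d\log d)$, which technically overshoots the lemma's stated $O(d)$ (harmless downstream, since Theorem~\ref{thm:apsp} only needs $\OO(d)$ oracle calls).
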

\begin{proof}
To solve \APSPVar, we first sort the following list of $O(d^2n)$ elements in $O(d^2n\log n)$ time:
\[ L\ =\ \{A[i,k'] - A[i,k] : i\in [n],\, k,k'\in [d]\} \ \cup\ 
     \{B[k,j] - B[k',j] : j\in [n],\, k,k'\in [d]\}. \]

Fix $k\in[d]$.  Let
$P_k=\{(i,j)\in [n]^2: k_{ij}=k\}$.
Divide $P_k$ into $\up{\frac{|P_k|}{n^2/d}}$ subsets
of size $O(n^2/d)$.
Fix one such subset $P\subseteq P_k$.
We solve \AESparseTri\ on the following tripartite graph $G_{k,P}$:
\begin{enumerate}
\item[0.] The left nodes are $\{x[i]: i\in [n]\}$, the middle nodes are $\{y[k',I]: k'\in [d],\ \mbox{$I$ is a dyadic interval}\}$, and
the right nodes are $\{z[j]: j\in [n]\}$.  Here, a \emph{dyadic} interval
refers to an interval of the form $[2^s t, 2^s(t+1))\subset [0,4d^2n)$ for nonnegative
integers $s$ and $t$ (here, $4d^2n-1$ is an upper bound for the smallest power of $2$ that is larger than or equal to $2d^2n$).  We don't explicitly enumerate all these nodes, but generate a node when we need to add an edge to it. 
\item For each $(i,j)\in P$, create an edge $x[i]\,z[j]$.
\item For each $i\in [n]$, $k'\in [d]$, and each dyadic interval $I$,
create an edge $x[i]\, y[k',I]$ if the rank of $A[i,k']-A[i,k]$ in $L$ is in the left half of $I$.
\item For each $j\in [n]$, $k'\in [d]$, and each dyadic interval $I$,
     create an edge $y[k',I]\, z[j]$ if the rank of $B[k,j]-B[k',j]$ in $L$ is in the right half of $I$.
\end{enumerate}

Step~1 creates $O(n^2/d)$ edges.  Steps 2--3 create $O(dn\log n)$ edges, since any fixed value lies in $O(\log n)$ dyadic intervals.
Thus, the graph $G_{k,P}$ has $\OO(n^2/d + dn)$ edges.
The total number of graphs is $\sum_{k\in [d]}\up{\frac{|P_k|}{n^2/d}}=O(d)$.  

For any given $(i,j) \in [n]^2$, take $k=k_{ij}$ and $P$ to be the subset of $P_k$ containing $(i,j)$.  Finding a $k'$ with $A[i,k']+B[k',j] <  A[i,k_{ij}]+B[k_{ij},j]$ is equivalent to finding a $k'$ with
$A[i,k']-A[i,k] < B[k,j]-B[k',j]$, which is equivalent to finding
a triangle
$x[i]\,y[k',I]\,z[j]$ in the graph $G_{k,P}$.  
Here, we are using the following fact: for any two numbers $a,b\in [2d^2n]$, we have $a<b$ iff there is a (unique) dyadic interval $I$ such that $a$ lies on the left half of $I$ and $b$ lies on the right half of $I$.
Thus, the answers to
\APSPVar\ can be deduced from the answers
to the \AESparseTri\ instances.
Note that we have assumed an equivalent version of the \AESparseTri\ problem with ``witnesses'', i.e., that
outputs a triangle through each edge whenever such a triangle exists~\cite{duraj2020equivalences}.  
\end{proof}

A near $m^{4/3}$ lower bound for \AESparseTri\ immediately follows, assuming the \APSP\ hypothesis; this matches known upper bounds if $\omega=2$.  The new proof not only strengthens the previous proof by Vassilevska W. and Xu~\cite{williamsxumono}, which reduces from integer APSP, but it is also simpler (not requiring more complicated hashing arguments).

\begin{restatable}{theorem}{thmAPSP}
\label{thm:apsp}
If \AESparseTri\ could be solved in $\OO(m^{4/3-\eps})$ time, then \APSP\ could be solved in $\OO(n^{3-3\eps/2})$ time using
Las Vegas randomization.
\end{restatable}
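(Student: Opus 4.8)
The plan is to chain together Lemma~\ref{lem:apsp1} and Lemma~\ref{lem:apsp2} and simply count the total work, choosing the parameter $d$ to balance the two cost terms. Concretely, I would first fix an algorithm solving \AESparseTri\ on $m$-edge graphs in $\OO(m^{4/3-\eps})$ time, and then trace through the reduction of Lemma~\ref{lem:apsp1}: solving \MinPlus\ of an $n\times d$ by a $d\times n$ matrix reduces, via Las Vegas randomization, to $\OO(1)$ calls to an oracle for \APSPVar\ (on the same dimensions). So it suffices to bound the cost of one \APSPVar\ call on $n\times d$ and $d\times n$ matrices.

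Next I would invoke Lemma~\ref{lem:apsp2}: one \APSPVar\ call reduces to $O(d)$ instances of \AESparseTri, each on a graph with $\OO(n^2/d + dn)$ edges, plus an $O(d^2 n \log n)$ sorting overhead. Feeding $m = \OO(n^2/d + dn)$ into the assumed oracle, each \AESparseTri\ instance costs $\OO\!\bigl((n^2/d + dn)^{4/3-\eps}\bigr)$, and there are $O(d)$ of them, for a total of roughly $\OO\!\bigl(d\,(n^2/d+dn)^{4/3-\eps}\bigr)$ per \APSPVar\ call. To turn this into an \APSP\ (equivalently full $n\times n$ \MinPlus) bound, recall from the text that the $n\times n$ $(\min,+)$-product reduces to $O(n/d)$ instances of the rectangular $n\times d$ by $d\times n$ product; multiplying through, the overall running time is about $\OO\!\bigl(\tfrac{n}{d}\cdot d\cdot (n^2/d + dn)^{4/3-\eps}\bigr) = \OO\!\bigl(n\,(n^2/d + dn)^{4/3-\eps}\bigr)$, plus the dominated sorting terms.

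The main (really the only) nontrivial step is choosing $d$ to minimize $n\,(n^2/d+dn)^{4/3-\eps}$. The two terms inside the parenthesis balance when $n^2/d = dn$, i.e.\ $d = \sqrt{n}$, giving $n^2/d + dn = \Theta(n^{3/2})$ and hence a running time of $\OO\!\bigl(n \cdot n^{(3/2)(4/3-\eps)}\bigr) = \OO\!\bigl(n^{1 + 2 - 3\eps/2}\bigr) = \OO(n^{3 - 3\eps/2})$, exactly as claimed. I would then check that the leftover additive costs -- the sorting term $O(d^2 n\log n) = \OO(n^2)$ and any per-instance bookkeeping -- are all $\OO(n^{3-3\eps/2})$ for $\eps$ small (and for larger $\eps$ the statement is only easier), so they do not affect the bound. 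Finally I would note that the randomization is inherited from Lemma~\ref{lem:apsp1} and is Las Vegas, and that Lemma~\ref{lem:apsp2}'s reduction needs the ``witness'' version of \AESparseTri, which is equivalent to the decision version up to polylogarithmic factors~\cite{duraj2020equivalences}, hence absorbed in the $\OO$. That completes the proof.
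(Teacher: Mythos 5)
Your proposal is correct and follows essentially the same route as the paper's proof: combine Lemma~\ref{lem:apsp1} and Lemma~\ref{lem:apsp2} to get $\OO(n\cdot T(n^2/d+dn))$ for the full $(\min,+)$-product, then set $d=\sqrt{n}$ to obtain $\OO(n\cdot T(n^{3/2}))=\OO(n^{3-3\eps/2})$. The additional checks you mention (sorting overhead, witness version, Las Vegas randomization) are all consistent with the paper.
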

\begin{proof}
By combining the above two lemmas, if \AESparseTri\ could be solved in $T(m)$ time, then
the $(\min,+)$-product of an $n\times d$
and a $d\times n$ real matrix could be computed in $\OO(d\cdot T(n^2/d + dn))$ 
time.  The $(\min,+)$-product of two $n\times n$
real matrices, and thus \APSP, reduce to $n/d$ instances of
such rectangular products and could then be computed in $\OO(n\cdot T(n^2/d + dn))$ time.  By choosing $d=\sqrt{n}$, the time bound becomes $\OO(n\cdot T(n^{3/2}))=\OO(n^{3-3\eps/2})$ if $T(m)=\OO(m^{4/3-\eps})$.
\end{proof}

Our reduction from \ThreeSUM\ to \AESparseTri\ (see Section~\ref{sec:3sum}) is based on a similar insight: we observe that Gr\o nlund and Pettie's elegant method (based on Fredman's work), which yielded an $\tO(n^{3/2})$-depth decision tree for \ThreeSUM, can also be converted to an efficient algorithm 
when given an oracle to \AESparseTri.  Since Gr\o nlund and Pettie's method is a bit cleverer, this reduction is technically more
challenging
(though it is still comparable in simplicity with P{\u{a}}tra{\c{s}}cu's original reduction from \IntThreeSUM~\cite{patrascu2010towards}), and because of these extra complications, the resulting conditional bounds are not tight.  Nevertheless, it is remarkable that nontrivial conditional lower bounds can be obtained at all, and that we do get tight bounds for a certain range
of degeneracy values, and tight bounds for reductions to the counting variant of \AESparseTri, when $\omega=2$.
(For applications to some of the dynamic graph problems, the earlier polynomial lower bounds weren't tight anyways.)

Our reduction from \APSP\ to \ColorBMM\ (see Section~\ref{sec:colorBMM}) is even simpler (under one page long), and is inspired by ideas from Williams' \APSP\ algorithm~\cite{Williams14a}, also based on Fredman's trick.
Our reduction from \OV\ to \ColorBMM\ is more straightforward, but we view the main innovation here to be the introduction of the
\ColorBMM\ problem itself, which we hope will find further applications.  Sometimes, the key to conditional lower bound proofs lies in formulating the right intermediate subproblems.  Our combined reduction from \OV\ to \ACPTCb\ via \ColorBMM\ is essentially as simple as
 Abboud, Vassilevska W. and Yu's original reduction from \kOV{3} to \TCol~\cite{abboud2018matching}, but adds more understanding of the \TCol{} and related problems as it reveals that a weaker hypothesis is sufficient
  to yield conditional lower bounds for all the dynamic graph problems in Corollary~\ref{cor:dyn}.

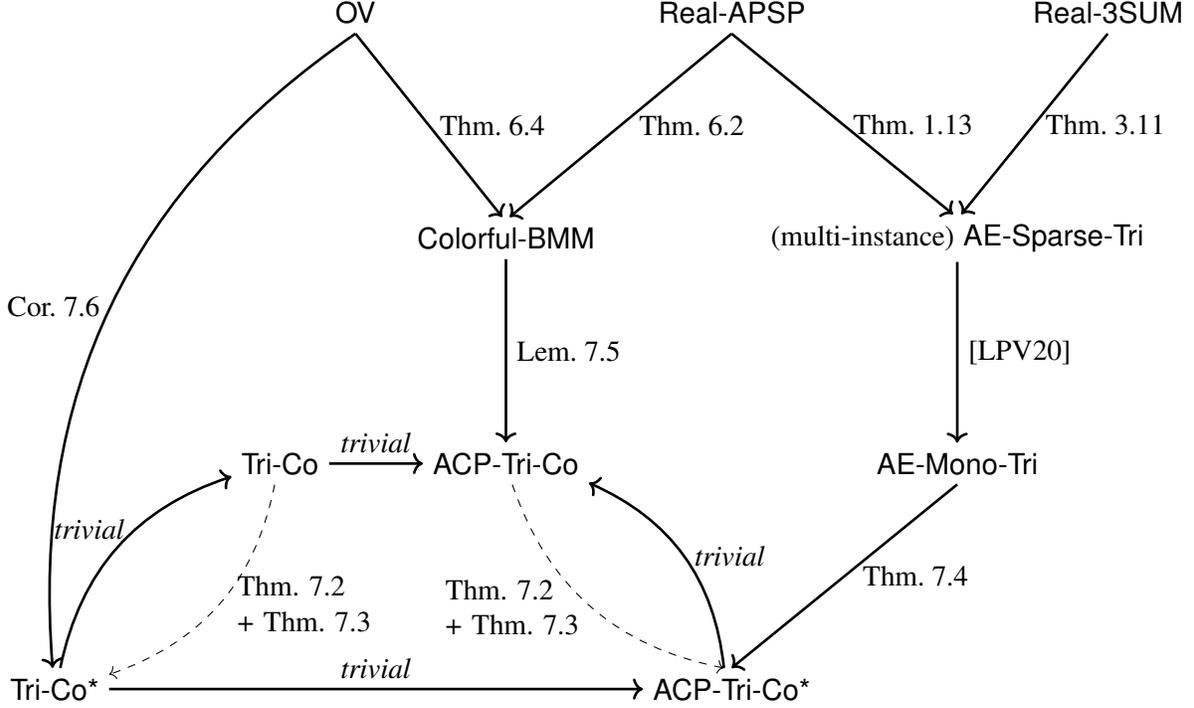
\begin{figure}[ht]
    \centering
    \begin{tikzpicture}
    
        \pgfmathsetmacro{\dx}{0}
        \pgfmathsetmacro{\dy}{0}
        
        \node at(\dx + 5, \dy)  [anchor=center] (3sum){\ThreeSUM};
        \node at(\dx, \dy)  [anchor=center] (apsp){\APSP};
        \node at(\dx - 5, \dy)  [anchor=center] (OV){\OV};
        
        \node at(\dx + 3, \dy-3)  [anchor=center] (AESparseT){(multi-instance) \AESparseTri};
        
        \node at(\dx + 3, \dy-6)  [anchor=center] (monoT){\AEMonoTri};
	
	    \node at(\dx - 3, \dy-3)  [anchor=center] (colorBMM){\ColorBMM};
	    
	     \node at(\dx , \dy - 9)  [anchor=center] (ACPTC1){\ACPTC};
	     
	      \node at(\dx-3 , \dy - 6)  [anchor=center] (ACPTC){\ACPTCb};
		
		 \node at(\dx-9 , \dy - 9)  [anchor=center] (TC1){\TC};
		 
		   \node at(\dx-6, \dy - 6)  [anchor=center] (TC){\TCol};

		  \draw[->,line width=1pt] (3sum.south) to[]  node[right] {Thm.~\ref{thm:3sum}} (AESparseT.80);
		
	 \draw[->,line width=1pt] (apsp.south) to[]  node[right] {Thm.~\ref{thm:apsp}} (AESparseT.100);
	
		 \draw[->,line width=1pt] (AESparseT.south) to[]  node[right] {\cite{lincoln2020monochromatic}} (monoT.north);
		 
		 \draw[->,line width=1pt] (monoT.south) to[]  node[right, xshift=0.1cm] {Thm.~\ref{thm:mono:ACPTC}} (ACPTC1.north);

		 \draw[->,line width=1pt, bend right] (ACPTC1.110) to[]  node[right] {\textit{trivial}} (ACPTC);
		 
		 \draw[->,dashed, bend right] (ACPTC) to[]  node[left, text width=1.8cm]  {Thm.~\ref{thm:TCLC:TC**} \ + Thm.~\ref{thm:TC:TCLC}} (ACPTC1.110);

	   \draw[->,line width=1pt] (TC1) to[]  node[above] {\textit{trivial}} (ACPTC1);

	   \draw[->,line width=1pt] (OV.south) to[]  node[right] {Thm.~\ref{thm:colorbmm:ov}} (colorBMM.100);
	   
	   \draw[->,line width=1pt] (apsp.south) to[]  node[right, xshift=0.1cm] {Thm.~\ref{thm:distincteq}} (colorBMM.80);
	   
	   \draw[->,line width=1pt] (colorBMM.south) to[]  node[right] {Lem.~\ref{lem:colorBMM:ACPTC}} (ACPTC);
	   
	   \draw[->,line width=1pt, bend right] (OV.south) to[]  node[left] {Cor.~\ref{cor:OV:TC}} (TC1);
	   
	   \draw[->,line width=1pt, bend left] (TC1) to[]  node[pos=0.6, left] {\textit{trivial}} (TC);
	   
	   \draw[->,line width=1pt] (TC) to[]  node[above] {\textit{trivial}} (ACPTC);
	   
	   \draw[->,dashed, bend left] (TC) to[]  node[right, xshift=0.1cm,text width=1.8cm] {Thm.~\ref{thm:TCLC:TC**} \ + Thm.~\ref{thm:TC:TCLC}} (TC1);
    \end{tikzpicture}
    \caption{The main reductions in this paper. The solid arrows represent usual fine-grained reductions. The two dashed arrows represent sub-cubic reductions that only hold when \TC\ and \ACPTC\ have parameters $n^\eps$ for $\eps > 0$. }
    \label{fig:main}
\end{figure}

\begin{table}[ht]
\small
\centering
\begin{tabular}{| c | cc | c | c|}
\hline
Problems &  \multicolumn{2}{c|}{Lower Bounds }  & Hypotheses & References \\ 
\hline

\multirow{5}*{\AESparseTri} & $m^{4/3-o(1)}$ & $\dag$ & \IntThreeSUM & \cite{patrascu2010towards}\\
\cline{2-5}
 & $m^{4/3-o(1)}$ & $\dag$ & \IntAPSP, \IntExactTri & \cite{williamsxumono}\\
 \cline{2-5}
 & $m^{6/5-o(1)}$ && \ThreeSUM & Thm.~\ref{thm:3sum}\\
 \cline{2-5}
 & $m^{5/4-o(1)}$ && \ExactTri & Thm.~\ref{thm:exacttri}\\
 \cline{2-5}
 & $m^{4/3-o(1)}$ & $\dag$ & \APSP & Thm.~\ref{thm:apsp}\\
 
\hline

\multirow{2}*{\AESparseTriCount} & $m^{4/3-o(1)}$ & $\dag$ & \ThreeSUM & Thm.~\ref{thm:3sum:count} \\
\cline{2-5}
& $m^{4/3-o(1)}$ & $\dag$ & \ExactTri &  Thm.~\ref{thm:exacttri:count}\\

\hline

\multirow{5}*{\AEMonoTri} & $n^{5/2-o(1)}$ & $\dag$ & \IntThreeSUM & \cite{lincoln2020monochromatic}\\
\cline{2-5}
& $n^{5/2-o(1)}$ & $\dag$ & \IntAPSP & \cite{williamsxumono} \\
\cline{2-5}
& $n^{9/4-o(1)}$ && \ThreeSUM & Thm.~\ref{thm:3sum:monotrinoncount} \\
\cline{2-5}
& $n^{7/3-o(1)}$ && \ExactTri & Thm.~\ref{thm:exacttri:monotrinoncount} \\
\cline{2-5}
& $n^{5/2-o(1)}$ & $\dag$ & \APSP & Thm.~\ref{thm:apsp:monotri} \\

\hline

\multirow{2}*{\AEMonoTriCount} & $n^{5/2-o(1)}$ & $\dag$ & \ExactTri & Thm.~\ref{thm:exacttri:monotricount}\\
\cline{2-5}
& $n^{5/2-o(1)}$ & $\dag$ & \ThreeSUM & Thm.~\ref{thm:3sum:monotri}\\

\hline

\multirow{2}*{\ColorBMM} & $n^{9/4-o(1)}$ && \APSP & Thm.~\ref{thm:distincteq}\\
\cline{2-5}
& $n^{3-o(1)}$ & $\ddag$ & \OV & Thm.~\ref{thm:colorbmm:ov}\\

\hline

\TCol & $n^{3-o(1)}$ & $\ddag$ & \IntAPSP, \IntThreeSUM, {\sf SETH} & \cite{abboud2018matching}\\

\hline

\multirow{3}*{\ACPTCb} & $n^{9/4-o(1)}$ && \ThreeSUM & Thm.~\ref{thm:3sum:monotrinoncount} + Thm.~\ref{thm:mono:ACPTC} + Thm.~\ref{thm:TCLC:TC**}\\
\cline{2-5}
 & $n^{5/2-o(1)}$ && \APSP & Thm.~\ref{thm:apsp:monotri} + Thm.~\ref{thm:mono:ACPTC} + Thm.~\ref{thm:TCLC:TC**}\\
\cline{2-5}
& $n^{3-o(1)}$ & $\ddag$ & \OV & Cor.~\ref{cor:OV:TC}\\
\hline

\IntAllThreeSUM & $n^{6/5-o(1)}$ && \ThreeSUM & Cor.~\ref{cor:real-3sum:int-3sum}\\
\hline 

\IntAEExactTri & $n^{7/3-o(1)}$ && \AEExactTri & Cor.~\ref{cor:real-exactT:int-exactT}\\
\hline 

\begin{tabular}{@{}c@{}}pattern-to-text distinct \\ Hamming similarity\end{tabular} & $n^{3/2-o(1)}$ & $\ddag$ & \OV & Thm.~\ref{thm:OV:ptt-Hamming}\\
\hline

\end{tabular}
\caption{Summary of some of our results and previous results. Ignoring $n^{o(1)}$ factors, entries marked $\dag$ match known upper bounds assuming $\omega=2$; entries marked $\ddag$ match known upper bounds without the assumption.}\label{tbl:main}
\end{table}

\subsection{Paper Organization}

In Section~\ref{sec:prelim}, we define necessary notations. In Section~\ref{sec:sparsetri}, we show hardness of \AESparseTri\  under the Real $3$SUM, Real APSP and the Real Exact-Triangle hypotheses. In Section~\ref{sec:sparsetri:count}, we show hardness of \AESparseTriCount\ under the Real $3$SUM and the Real Exact-Triangle hypotheses. From the results in Sections~\ref{sec:sparsetri}--\ref{sec:sparsetri:count}, we show hardness of \AEMonoTri\ (and its counting version) under the Real $3$SUM, Real APSP and the Real Exact-Triangle hypotheses in Section~\ref{sec:mono}. In Section~\ref{sec:colorBMM}, we show hardness of \ColorBMM\  based on the Real APSP and the OV hypotheses. We prove our main theorem in Section~\ref{sec:TC}, by showing equivalence between variants of \TCol\ problems and reducing problems in Section~\ref{sec:sparsetri} and Section~\ref{sec:colorBMM} to variants of \TCol. Finally, in Sections~\ref{sec:other}--\ref{sec:final}, we describe further applications of
our techniques and results.
Figure~\ref{fig:main} depicts the main reductions in the paper, and Table~\ref{tbl:main} summarizes our main lower bound results.

\section{Preliminaries}
\label{sec:prelim}
In this section, we give definitions and abbreviations of terminologies and problems we will consider throughout the paper. We group related definitions together for easier navigation.

Some problems require to output whether each edge in a given graph is in some triangle with certain property. We use the prefix \mbox{\sf AE-} (short for ``All-Edges-'') to denote that we need to output a Boolean value for each edge indicating whether each edge is in such a triangle. For these problems, we can instead use the prefix \mbox{\sf \#AE-}  to indicate that we need to count the number of such triangles involving each edge. 

\subsection{Fine-Grained Reductions}
We use the notion of fine-grained reduction \cite{focsyj,virgisurvey} which is as follows. Let $A$ and $B$ be problems, and let $a(n)$ and $b(n)$ be running time functions where $n$ is the input size or a suitable measure related to the input size such as the number of nodes in a graph.

We say that $A$ is $(a,b)$-fine-grained reducible to $B$ if for every $\eps>0$ there is a $\delta>0$ and an $O(a(n)^{1-\delta})$ time algorithm that solves size (or size measure) $n$ instances of $A$ making calls to an oracle for problem $B$, so that the sizes (or size measures) of the instances of $B$ in the oracle calls are $n_1,n_2,\ldots,n_k$ and $\sum_{i=1}^k (b(n_i))^{1-\eps} \leq a(n)^{1-\delta}$.

If $A$ is $(a,b)$-fine-grained reducible to $B$, and if there is an $O(b(n)^{1-\eps})$ time algorithm for $B$ for some $\eps>0$, then there is also an $O(a(n)^{1-\delta})$ time algorithm for some $\delta>0$.

All the reductions in the paper will be fine-grained, and hence we will often omit ``fine-grained'' when we say reduction.

\subsection{Hard Problems}

In this section, we define the problems that we will consider as our hardness sources. In case a problem \mbox{\sf P} has numbers in the input, we will define the generic version of problem \mbox{\sf P} and use \mbox{\sf Int-P} to denote the version where the input numbers are integers from $\pm[n^c]$ for some sufficiently large constant $c$, and use \mbox{\sf Real-P} to denote the version where the input numbers are reals. 

\begin{problem}[\mbox{\sf 3SUM}]
Given three sets  $A, B, C$ of numbers of size $n$, determine whether there exist $3$ numbers $a \in A, b \in B, c \in C$ such that $a+b+c=0$. 
\end{problem}

\begin{problem}[\mbox{\sf All-Nums-3SUM}]
Given three sets  $A, B, C$ of numbers of size $n$, for each $c \in C$, determine whether there exist  $a \in A, b \in B$ such that $a+b+c=0$. 
\end{problem}

It is known that \IntThreeSUM\ (resp. \ThreeSUM) and \IntAllThreeSUM\ (resp. \AllThreeSUM) are subquadratically equivalent \cite{focsyj}. 
Note that a ``one-set'' version of {\sf 3SUM} (instead of the above ``three-sets'' version) has also been used in the literature, but they are known to be equivalent.
Often, it is more convenient to consider the version of the problem where the third set is negated, in which case we are seeking a triple $(a,b,c)$ with $a+b=c$.

\begin{problem}[\mbox{\sf APSP}]
Given an $n$-node directed graph whose edge weights are given as numbers and which has no negative cycles, compute the shortest path distances from $u$ to $v$ for every pair of nodes $u$ and $v$ in the graph.
\end{problem}

\begin{problem}[\mbox{\sf $(\min,+)$-Product}]
Given an $n \times d$ matrix $A$ and a $d \times n$ matrix $B$ whose entries are given as numbers, compute an $n \times n$ matrix $C$ such that $C[i,j] = \min_{k \in [d]} (A[i,k]+B[k,j])$. 
\end{problem}

\begin{problem}[\mbox{\sf AE-Negative-Triangle (AE-Neg-Tri)}]
Given an $n$-node graph whose edge weights are given as numbers, for each edge determine whether there is a triangle containing that edge whose edge weights sum up to negative. 
\end{problem}

It is known \cite{Fischer71,focsyj} that \APSP, {\sf Real-AE-Neg-Tri} and \MinPlus\ between  $n \times n$ real-valued matrices are all subcubically equivalent; similarly, \IntAPSP, \mbox{\sf Int-AE-Negative-Triangle}, and \IntMinPlus\ between  $n \times n$ integer-valued matrices are all subcubically equivalent as well.

\begin{problem}[\mbox{\sf AE-Exact-Triangle (AE-Exact-Tri)}]
Given an $n$-node graph whose edge weights are given as numbers, for each edge determine whether there is a triangle containing that edge whose edge weights sum up to zero. 
\end{problem}

It is known that the Integer $3$SUM and the Integer APSP hypotheses imply that \IntAEExactTri\ requires $n^{3-o(1)}$ time \cite{VWfindingcountingj}. However, such tight reductions aren't known for their real variants. However, we can adapt one previous non-tight reduction from \IntThreeSUM\ to  \IntAEExactTri\ \cite{VWfindingcountingj} to the real case, which implies \AEExactTri\ requires $n^{2.5-o(1)}$ time assuming the Real $3$SUM hypothesis (see more details in Appendix~\ref{sec:3sum2exactT}). 

\begin{problem}[\OV]
Given a set of $n$ Boolean vectors in $f$ dimensions, determine whether the set contains two vectors that are orthogonal. 
\end{problem}

\subsection{Triangle Collection and Variants}

We define a tripartite version of \TCollong, which is slightly different from the original definition of Abbound, Vassilevska W. and Yu~\cite{abboud2018matching}. In Section~\ref{sec:TC} we will show these two definitions are equivalent. 

For brevity, we will give short names for \TCollong\ and \TClong, \TCol\ and \TC\ respectively. Similar abbreviations will also be given for the All-Color-Pairs versions.

\begin{restatable}[\TCollong\  (\TCol)]{problem}{TriangleCollection}
Given a tripartite graph $G=(V,E)$ on partitions $A,B,C$ such that the colors of the nodes in $A$ are from a set $K_A$, the colors of the nodes in $B$ are from a set $K_B$ and the colors of the nodes in $C$ are from a set $K_C$, where $K_A\cap K_B\cap K_C=\emptyset$, and one needs to determine whether for all triples of colors $a\in K_A,b\in K_B,c\in K_C$ there exists some triangle $x,y,z\in V$ such that $\col(x)=a,\col(y)=b,\col(z)=c$. 
\end{restatable}

\begin{problem}[\ACPTCblong\ (\ACPTCb)]
Given the input to a \TCol\ instance, one needs to determine for every $a \in K_A, b \in K_B$, whether for all $c \in K_C$ there exists some triangle $x, y, z \in V$ such that $\col(x) = a, \col(y) = b, \col(z) = c$. 
\end{problem}

All other variants of \TCol\ will have the same output as \TCol, so they also naturally have an All-Color-Pairs (ACP) variant. We will only define their normal versions for conciseness. 

\begin{restatable}[\TClong\ (\TC)]{problem}{TriangleCollectionStar}
An instance of \TC\ is a restricted instance of \TCol. For  parameters $p$ and $t$ it is a node-colored graph $G$ which is a disjoint union of graphs $G_1,\ldots, G_t$. $G$ (and hence all the $G_i$s) is tripartite on partitions $A,B,C$. The aforementioned value $p$ is an upper bound on the number of nodes of any particular color in any $G_i$.

The node colors are from $[3] \times [n]$. For every $t$, the nodes of $G_t$ are:
\begin{itemize}
\item Nodes in $A$ of the form $(a,t)$ of color $(1, a)$. Note that this means that each $G_t$ has nodes of distinct colors in $A$.

\item Nodes in $B$ of the form $(b,t,j)$ of color $(2,b)$, where $j\leq p$. For each $(a,t)\in A$ and every color $(2, b)$, there is at most one node $(b,t,j)$ in $B$ that $(a,t)$ has an edge to. 

\item Nodes in $C$ of the form $(c,t,j)$ of color $(3, c)$ for $j\leq p$. For each $(a,t)\in A$ and every color $(3, c)$, there is at most one node $(c,t,j)$ in $C$ that $(a,t)$ has an edge to.

\end{itemize}
The last two bullets mean that in each $G_t$ all the neighbors of a node in $A$ have distinct colors. There is no restriction on the edges between nodes in $B$ and $C$ (beyond that the graphs $G_i$ are disjoint).

An algorithm for \TC\ needs to output whether for all triples $(a, b, c)$, there is a triangle with node colors $(1, a), (2, b), (3, c)$. 
\end{restatable}

Here is an even simpler version of \TC\ that we will call \TCsslong\ (\TCss\ for short).

\begin{restatable}[\TCsslong\ (\TCss)]{problem}{TriangleCollectionStarStar}
An instance of \TCss\ is a restricted instance of \TCol.\ For an integer parameter $t$, it is a disjoint union of graphs $G_1,\ldots,G_t$, each of which contains at most one node of every color.
\end{restatable}

Clearly, \TCss\ is a special case of \TC. 
We now define a third version of \TCol\ , \TCsss, that has \TC\ as a special case. The ``light'' part of the name stands for ``Light Colors'', meaning that each colors has few nodes.

\begin{restatable}[\TCsss]{problem}{TriangleCollectionStarStarStar}
An instance of \TCsss\ is a restricted instance of \TCol.\ For an integer parameter $p$, it is a graph that has at most $p$ nodes of any fixed color.
\end{restatable}

\subsection{Other Problems}

\begin{problem}[{\sf AE-Sparse-Triangle} (\AESparseTri)]
Given a graph with $m$ edges, for each edge determine whether it is in a triangle. 
\end{problem}

\begin{problem}[{\sf AE-Monochromatic-Triangle} (\AEMonoTri)]
Given a graph on $n$ nodes where each edge has a color, for each edge determine whether it is in a triangle whose three edges share the same color.
\end{problem}

In some of our reductions, it is convenient to assume that in \AESparseTri, for each edge where the answer is yes, a witness (a triangle through the edge) must also be provided.  This version is equivalent (up to poly-logarithmic factors) by standard random sampling techniques for witness finding \cite{AlonGMN92,seidel1995} (e.g., see the proof of Lemma~\ref{lem:exacttri23}),
which requires only Las Vegas randomization. The same is true for \AEMonoTri\ as well.

\begin{problem}[\SetDisj]
Given a universe $U$, a collection of sets $\mathcal{F} \subseteq 2^U$, and $q$ queries of the form $(F_1, F_2) \in \mathcal{F} \times \mathcal{F}$, an algorithm needs to compute whether $F_1 \cap F_2 = \emptyset$ for every query. 
\end{problem}

\begin{problem}[\SetInter]
Given a universe $U$, a collection of sets $\mathcal{F} \subseteq 2^U$, and $q$ queries of the form $(F_1, F_2) \in \mathcal{F} \times \mathcal{F}$, an algorithm needs to output $T$ elements in the $q$ intersections $F_1 \cap F_2$ for a parameter $T$ (an element in multiple intersections is counted multiple times). 
\end{problem}

\begin{problem}[{\sf Colorful Boolean Matrix Multiplication} (\ColorBMM)]
Given an $n\times n$ Boolean matrix $A$ and an
$n\times n$ Boolean matrix $B$ and a mapping $\col:[n]\rightarrow \Gamma$, for each $i\in [n]$ and $j\in [n]$, decide whether $\{\col(k): A[i,k]\wedge B[k,j],\ k\in [n]\} = \Gamma$.
\end{problem}

\begin{problem}[{\sf AE-Colorful-Sparse-Triangle} (\AEColorSparseTri)]
Given a graph $G=(V,E)$ with $m$ edges
and a mapping $\col:V\rightarrow\Gamma$.  For each edge $uv$, we want to decide whether
$\{\col(w): uwv\mbox{ is a triangle}\} = \Gamma$.
\end{problem}

\section{Hardness of All-Edges Sparse Triangle}\label{sec:sparsetri}

In this section, we show conditional lower bounds of \AESparseTri\ 
based on the conjectured hardness of \APSP, \ExactTri, or \ThreeSUM.

\subsection{\APSP\ \texorpdfstring{$\rightarrow$}{rightarrow} \AESparseTri}\label{sec:apsp}

Recall that we proved Theorem~\ref{thm:apsp} in Section~\ref{sec:proof_in_intro}:
\thmAPSP*

More generally, we can obtain a near $mD$ conditional lower bound in terms of the degeneracy $D$, if $D\ll m^{1/3}$, which again matches known upper bounds (in fact, regardless of the value of $\omega$).

\begin{theorem}\label{thm:apsp:deg}
If \AESparseTri\ for graphs with $m$ edges and degeneracy $\OO(m^\alpha)$ could be solved in $\OO(m^{1+\alpha-\eps})$ time for some constant $\alpha\le 1/3$, then \APSP\ could be solved in $\OO(n^{3-2\eps/(1+\alpha)})$ time using
Las Vegas randomization.
\end{theorem}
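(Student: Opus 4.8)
The plan is to re-run the reduction from the proof of Theorem~\ref{thm:apsp} — namely \APSP\ $\to$ \MinPlus\ $\to$ $n/d$ rectangular $(\min,+)$-products $\to$ $\OO(1)$ instances of \APSPVar\ (Lemma~\ref{lem:apsp1}) $\to$ \AESparseTri\ instances (Lemma~\ref{lem:apsp2}) — but with a re-balanced choice of the width parameter $d$ and a refinement of the construction in Lemma~\ref{lem:apsp2} that controls the degeneracy of the graphs it produces.

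The first key observation concerns the degeneracy of the tripartite gadget $G_{k,P}$ built in Lemma~\ref{lem:apsp2}: its middle (``$y$'') nodes form an independent set, so we have freedom in where to place them in a degeneracy ordering. Placing the $x$- and $z$-nodes first, each then has forward-degree at most its $xz$-degree plus its $\OO(d)$ edges to middle nodes, after which the $y$-nodes are isolated; placing the $y$-nodes first, each then has forward-degree at most the number of active $x$- and $z$-nodes, while the $x$-nodes' $\OO(d)$ middle-edges become back-edges. Hence the degeneracy of $G_{k,P}$ is simultaneously $\OO(d+\min(\Delta_x,\Delta_z))$ and $\OO(|X_H|+|Z_H|)$, where $\Delta_x,\Delta_z$ are the maximum $xz$-degrees on the two sides and $X_H,Z_H$ are the active vertex sets. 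To exploit this I would (i) restrict the middle-edges of steps~2--3 to only the rows/columns that actually appear as endpoints of $xz$-edges in the piece, and (ii) partition each $P_k$ into pieces supported on few rows and few columns — e.g.\ via a combinatorial-rectangle cover with rectangles of side at most $s$, obtained by grouping rows and then columns — so that each \AESparseTri\ instance has $\OO(s^2+ds)$ edges and degeneracy $\OO(s)$; taking $s=\Theta(d^{\alpha/(1-\alpha)})$ makes the degeneracy $\OO(m^\alpha)$ for that instance.

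One then chooses $d$ and $s$ as powers of $n$ (as functions of $\alpha$) so that summing $\OO(m^{1+\alpha-\eps})$ over all \AESparseTri\ calls, over the $\OO(1)$ \APSPVar\ calls per rectangular product, and over the $n/d$ rectangular products yields $\OO(n^{3-2\eps/(1+\alpha)})$; for $\alpha=1/3$ this should recover the $d=\sqrt n$ choice and the bound of Theorem~\ref{thm:apsp}. The main obstacle — and the place that needs the most care — is that when the $(\min,+)$ witness indices $k_{ij}$ are ``spread out'' (so that the $P_k$'s are matching-like), every piece carries the $\Theta(d)$ comparison-gadget overhead per active row without being able to amortize it against many $xz$-edges, and a naive accounting then overshoots $n^3$. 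I expect to deal with this by a hybrid: for a piece whose \emph{actual} degeneracy $D$ is far below $m^\alpha$, do not call the oracle at all but enumerate all triangles through the degeneracy ordering in $\OO(mD)$ time (which beats $\OO(m^{1+\alpha-\eps})$ in that regime), reserving the oracle for the ``dense enough'' pieces where it buys a genuine $m^\eps$-factor speedup; one then bounds the total triangle-enumeration work across all pieces by $\OO(n^3)$ using $\sum_k|P_k|=n^2$ per \APSPVar\ call, and charges the oracle savings only against the dominant, near-extremal pieces. Checking that this bookkeeping evaluates to exactly $n^{3-2\eps/(1+\alpha)}$ — in particular that the degeneracy constraint $D\le m^\alpha$ is respected by every instance passed to the oracle while still making the instances large enough to reap the $m^\eps$ savings — is the crux of the proof.
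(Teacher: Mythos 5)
Your degeneracy observations about $G_{k,P}$ are correct, but the construction you build on them --- covering each $P_k$ by combinatorial rectangles of side $s=\Theta(d^{\alpha/(1-\alpha)})$ --- does not close, and the obstacle you flag at the end is fatal rather than a matter of bookkeeping. Each piece must carry $\Theta(d)$ comparison-gadget edges per active row/column, so a piece with at most $s$ active rows and columns has $\Theta(ds)$ edges while covering as few as $s$ pairs of a matching-like $P_k$; hence a single \APSPVar\ call generates $\Theta(n^2/s)$ pieces with $\Theta(n^2d/s)$ total edges, versus the $\OO(n^2+d^2n)$ obtained with $O(d)$ large pieces. The oracle cost then sums to $(n/d)\cdot(n^2/s)\cdot\OO((ds)^{1+\alpha-\eps})=\OO(n^3(ds)^{\alpha-\eps})$, which beats $n^3$ only when $\eps>\alpha$; in particular, for $\alpha=1/3$ and $d=\sqrt n$ it does \emph{not} recover Theorem~\ref{thm:apsp}, contrary to your sanity check. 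The hybrid does not rescue this: brute-forcing a sparse piece costs $\OO(mD)=\OO(ds^2)$ per piece, i.e.\ $\OO(n^3 s)$ overall, and even the trivial $O(d)$-per-pair check only gets back to $n^3$ with no savings --- the total edge mass across instances is already too large before any algorithm is run.

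The missing idea is to go in the opposite direction: keep the $O(d)$ pieces of Lemma~\ref{lem:apsp2} at full size $\OO(n^2/d+dn)$ and instead \emph{split high-degree left nodes}. Whenever $x[i]$ has $\Delta_x>d$ right-neighbors in its piece, replace it by $\up{\Delta_x/d}$ copies, each inheriting at most $d$ of those right-neighbors and \emph{all} $\OO(d\log n)$ middle-neighbors; this preserves exactly which triangles pass through each $xz$-edge, adds only $\OO(d\sum_x\up{\Delta_x/d})=\OO(n^2/d+dn)$ edges, and caps every left node's total degree at $\OO(d)$, so (peeling left, then right, then middle nodes) the degeneracy becomes $\OO(d)$. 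Choosing $d=n^{2\alpha/(1+\alpha)}$ gives $d=(n^2/d)^\alpha$, and $\alpha\le 1/3$ forces $d\le\sqrt n$, hence $dn\le n^2/d$; thus every instance has $m=\OO(n^2/d)$ edges and degeneracy $\OO(m^\alpha)$, and the $\OO(n)$ oracle calls cost $\OO(n\cdot(n^{2/(1+\alpha)})^{1+\alpha-\eps})=\OO(n^{3-2\eps/(1+\alpha)})$. Your own observation that the degeneracy is $\OO(d+\min(\Delta_x,\Delta_z))$ already points here: one only needs to cap \emph{one} side's $xz$-degree at $d$, not shrink both sides to $s\ll d$.
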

\begin{proof}
First, observe that the graph $G_{k,P}$ in Lemma~\ref{lem:apsp2}'s proof can be modified to
have degeneracy $\OO(d)$: Whenever a left node $x$ has $\Delta_{x}> d$ neighbors
among the right nodes,
we split $x$ into $\up{\frac{\Delta_{x}}{d}}$ copies, where each copy is linked to up to $d$ neighbors among the right nodes.
Each copy is also linked to all of the original $O(d\log n)$ neighbors among the middle nodes.  The number of left nodes increases to $\sum_x 
\up{\frac{\Delta_{x}}{d}}$, and so the number of edges increases by
$\OO(d\sum_x 
\up{\frac{\Delta_{x}}{d}})=\OO(\sum_x \Delta_x + dn)=\OO(n^2/d + dn)$.  Now, each left node has $\OO(d)$ neighbors among the middle and right nodes.  Each right node has $\OO(d)$ neighbors among the middle nodes.  It follows that
the degeneracy of the modified graph is $\OO(d)$.

To prove the theorem, choose $d$ so that $d=(n^2/d)^\alpha$,
i.e., $d = n^{2\alpha/(1+\alpha)}$.  Since $\alpha\le 1/3$, we 
have $d\le \sqrt{n}$ and so $dn\le n^2/d$.  
If \AESparseTri\ with $m$ edges and degeneracy $D$ could be solved in $T(m,D)$ time, then
\APSP\ could be solved in time
$\OO(n\cdot T(n^2/d,\,d))= \OO(n^{3-2\eps/(1+\alpha)})$ if $T(m)=\OO(m^{1+\alpha-\eps})$.
\hspace*{\fill}
\end{proof}

\subsection{\ExactTri\ \texorpdfstring{$\rightarrow$}{rightarrow} \AESparseTri}\label{sec:exacttri}

Next, we adapt our proof to reduce from \ExactTri.  As a matrix problem,
\ExactTri\ (or \AEExactTri) reduces to the following (after negating the third matrix): given three $n\times n$ real matrices $A,B,C$, decide for each $i,j\in [n]$ whether there exists $k\in[n]$ with $C[i,j]=A[i,k]+B[k,j]$.  We will solve a slightly stronger problem of finding the predecessor and successor of $C[i,j]$ among
$\{A[i,k]+B[k,j]: k\in [n]\}$. 
Here, if $C[i,j]$ is in this set, we will deviate from convention and define the predecessor of $C[i,j]$ to be itself.
The following is a version of the problem for rectangular matrices:

\newcommand{\ExactTriOne}{\mbox{\sf Real-Exact-Tri-Variant$_1$}} %
\newcommand{\ExactTriTwo}{\mbox{\sf Real-Exact-Tri-Variant$_2$}} %
\newcommand{\ExactTriThree}{\mbox{\sf Real-Exact-Tri-Variant$_3$}} %

\begin{problem}[\ExactTriOne]\label{prob:exacttri1}
We are given an $n\times d$ real matrix $A$,
a $d\times n$ real matrix $B$, and an $n\times n$ matrix~$C$, where $d\le n$.

For each $i,j\in [n]$, we want to find 
the predecessor and the successor among $\{A[i,k]+B[k,j]: k\in [d]\}$.
\end{problem}

Again, we introduce an intermediate problem, which the original problem reduces to:

\begin{problem}[\ExactTriTwo]\label{prob:exacttri2}
We are given an $n\times d$ real matrix $A$ and 
a $d\times n$ real matrix $B$, where $d\le n$.
For each $i,j\in [n]$, we are also given two indices $k_{ij}^-,k_{ij}^+\in [d]$.

For each $i,j\in [n]$, we want to find an index $k'_{ij}\in [d]$ (if it exists) satisfying
\begin{equation}\label{eqn:exacttri}
A[i,k_{ij}^-]+B[k_{ij}^-,j] < A[i,k'_{ij}]+B[k'_{ij},j] < A[i,k_{ij}^+]+B[k_{ij}^+,j]. 
\end{equation}
\end{problem}

\begin{lemma}\label{lem:exacttri1}
\ExactTriOne\ reduces to $\OO(1)$ calls to an oracle for
\ExactTriTwo\ using Las Vegas randomization.
\end{lemma}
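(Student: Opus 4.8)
The plan is to follow the proof of Lemma~\ref{lem:apsp1} almost verbatim, with the one-sided task of maintaining a running minimum replaced by the two-sided task of trapping the fixed target value $C[i,j]$ strictly between a predecessor and a successor. I recurse on the number of columns $d$ (with a trivial $O(n^2)$-time base case once $d$ is constant). The first step is to pick a random subset $R\subseteq[d]$ of size $d/2$ and recursively solve \ExactTriOne\ on the columns restricted to $R$, after first appending two dummy columns $k_{-\infty},k_{+\infty}$ --- with $A[i,k_{-\infty}]$ a very large negative constant, $A[i,k_{+\infty}]$ a very large positive constant, and all corresponding $B$-entries zero --- so that a predecessor and a successor always exist. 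This returns, for every $i,j$, indices $k^-_{ij},k^+_{ij}\in R$ for which, writing $v^\pm_{ij}:=A[i,k^\pm_{ij}]+B[k^\pm_{ij},j]$, the value $v^-_{ij}$ is the predecessor and $v^+_{ij}$ the successor of $C[i,j]$ within $\{A[i,k]+B[k,j]:k\in R\}$; in particular $v^-_{ij}\le C[i,j]<v^+_{ij}$.

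The second step is the oracle loop. I repeatedly feed the current $k^-_{ij},k^+_{ij}$ into the \ExactTriTwo\ oracle; by~(\ref{eqn:exacttri}) it returns, for each $(i,j)$ admitting one, an index $k'_{ij}$ whose value lies strictly inside $(v^-_{ij},v^+_{ij})$. I then reset $k^-_{ij}\leftarrow k'_{ij}$ if $A[i,k'_{ij}]+B[k'_{ij},j]\le C[i,j]$ and $k^+_{ij}\leftarrow k'_{ij}$ otherwise, which preserves $v^-_{ij}\le C[i,j]<v^+_{ij}$ while strictly shrinking the interval. When the oracle reports no $k'_{ij}$ for any $(i,j)$, the open interval $(v^-_{ij},v^+_{ij})$ contains no element of $\{A[i,k]+B[k,j]:k\in[d]\}$, so $v^-_{ij}$ is the true predecessor and $v^+_{ij}$ the true successor of $C[i,j]$ over all $d$ columns --- and when $C[i,j]$ is attained we get $v^-_{ij}=C[i,j]$, matching the convention in Problem~\ref{prob:exacttri1}. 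Verifying that this interval invariant does yield the predecessor/successor over all of $[d]$ (and not merely over $R$) is the point where a little care is needed, since the recursive call only certifies the $R$-restricted answer; but it is immediate once the invariant $v^-_{ij}\le C[i,j]<v^+_{ij}$ is in place.

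The one quantitative ingredient, which is the real crux and is essentially identical to Lemma~\ref{lem:apsp1}, is the bound on the number of oracle iterations. For each $(i,j)$ I claim only $O(\log n)$ values of $\{A[i,k]+B[k,j]:k\in[d]\}$ lie strictly between the \emph{initial} $v^-_{ij}$ and $v^+_{ij}$, w.h.p.: listing the values on the $\le C[i,j]$ side of the target in decreasing order, the first one that belongs to the random half-size set $R$ occurs within the first $O(\log n)$ with probability $1-n^{-\Omega(1)}$ (standard: the first $s$ of them avoid $R$ with probability $\le 2^{-s}$), hence at most $O(\log n)$ of them exceed $v^-_{ij}$, and symmetrically on the $>C[i,j]$ side. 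Since each iteration consumes at least one strictly-interior value for every still-unfinished $(i,j)$, $O(\log n)$ iterations suffice, after a union bound over the $n^2$ pairs and the $O(\log d)$ recursion levels. With $O(\log d)$ recursion depth and $O(\log n)$ oracle calls per level, the reduction makes $O(\log d\log n)=\OO(1)$ oracle calls; since the loop is always correct when it halts (and halts in at most $d$ iterations per level even in the worst case), this is a Las Vegas reduction, as claimed.
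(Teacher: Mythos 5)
Your proof is correct and follows essentially the same route as the paper's: a random half-size subset $R$, a recursive call to get the $R$-restricted predecessor/successor, and an oracle loop that shrinks the interval around $C[i,j]$, terminating in $O(\log n)$ iterations w.h.p.\ by the same "few values between consecutive elements of a random half" argument. The only additions are minor bookkeeping (dummy $\pm\infty$ columns and a more explicit probability calculation) that the paper leaves implicit.
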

\begin{proof}
We solve \ExactTriOne\ as follows:
Take a random subset $R\subseteq [d]$ of size $d/2$.
For each $i,j\in [n]$, first compute 
indices $k_{ij}^{-(R)},k_{ij}^{+(R)}\in [d]$ such that
$A[i,k_{ij}^{-(R)}]+B[k_{ij}^{-(R)},j]$ and 
$A[i,k_{ij}^{+(R)}]+B[k_{ij}^{+(R)},j]$ are the predecessor and successor (respectively) of the value $C[i,j]$ among the elements in $\{A[i,k]+B[k,j]: k\in R\}$.
This computation can be done recursively.

Next, initialize $k_{ij}^-=k_{ij}^{-(R)}$ and
$k_{ij}^+=k_{ij}^{+(R)}$.  Invoke the oracle for \ExactTriTwo\ to find some $k_{ij}'$ satisfying (\ref{eqn:exacttri}) for each $i,j\in [n]$.   
If $k_{ij}'$ exists and
$A[i,k_{ij}']+B[k_{ij}',j] \le C[i,j]$, reset $k_{ij}^-=k_{ij}'$.  
If $k_{ij}'$ exists and
$A[i,k_{ij}']+B[k_{ij}',j] > C[i,j]$,
reset $k_{ij}^+=k_{ij}'$. 
Now, repeat.  When $k_{ij}'$ is nonexistent for all $i,j\in[n]$, we can stop, 
since we would know that $k_{ij}^-$ and $k_{ij}^+$ are the indices
defining the predecessor and successor of $C[i,j]$ in $\{A[i,k]+B[k,j]: k\in [d]\}$ for all $i,j\in[n]$.

To bound the number of iterations, observe that for each $i,j\in [n]$,
the number of indices $k'\in[d]$ satisfying 
$A[i,k_{ij}^{-(R)}]+B[k_{ij}^{-(R)},j] < A[i,k']+B[k',j] < A[i,k_{ij}^{+(R)}]+B[k_{ij}^{+(R)},j]$
is $O(\log n)$ w.h.p.\ (since in a set of $d$ values, there are at most $O(\log n)$ values between two
consecutive elements in a random subset of size $d/2$ w.h.p.).  Thus, $O(\log n)$ iterations suffice w.h.p.

As the recursion has $O(\log d)$ depth, the total number of
oracle calls is $O(\log d \log n)$ w.h.p.
\end{proof}

We now present our main reduction from \ExactTriTwo\ to \AESparseTri, which is similar to
our earlier reduction.  The details look a bit more involved, because of the need to work with pairs of indices $(k^-,k^+)$ instead of a single index $k$.  This also causes some loss of efficiency.
Nevertheless, it can still lead to a tight conditional lower bound for \AESparseTri\ in the case when the degeneracy is sufficiently small ($\ll m^{1/4}$).

\begin{lemma}\label{lem:exacttri2}
\ExactTriTwo\ reduces to 
$O(d^2)$  instances of \AESparseTri\ on $n$-node graphs with 
$\OO(n^2/d^2+dn)$ edges and degeneracy $\OO(d)$.
\end{lemma}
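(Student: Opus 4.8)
\textbf{Proof plan for Lemma~\ref{lem:exacttri2}.}

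The plan is to mimic the reduction in Lemma~\ref{lem:apsp2}, but with a dyadic-interval gadget that simultaneously encodes the two strict inequalities in~(\ref{eqn:exacttri}) rather than the single inequality of~(\ref{eqn:apsp}). First I would sort the master list $L$ of all $O(d^2 n)$ differences $A[i,k']-A[i,k]$ and $B[k,j]-B[k',j]$ (now ranging over all pairs $k,k'\in[d]$), so that the two-sided comparison $A[i,k^-_{ij}]+B[k^-_{ij},j] < A[i,k'_{ij}]+B[k'_{ij},j] < A[i,k^+_{ij}]+B[k^+_{ij},j]$ becomes, via Fredman's trick applied twice, a conjunction $A[i,k']-A[i,k^-] < B[k^-,j]-B[k',j]$ \emph{and} $A[i,k']-A[i,k^+] > B[k^+,j]-B[k',j]$ on the integer ranks in $L$. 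The loss of efficiency, and the $O(d^2)$ rather than $O(d)$ instances, comes precisely from having to bucket the $(i,j)$ pairs by the \emph{pair} $(k^-_{ij},k^+_{ij})\in[d]^2$ instead of by a single $k$.

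The core construction: fix a pair $(k^-,k^+)\in[d]^2$, let $P_{k^-,k^+}=\{(i,j): k^-_{ij}=k^-,\ k^+_{ij}=k^+\}$, and split it into $\up{|P_{k^-,k^+}|/(n^2/d^2)}$ subsets $P$ of size $O(n^2/d^2)$. For each such $P$ build a tripartite graph $G_{k^-,k^+,P}$ with left nodes $x[i]$, right nodes $z[j]$, and middle nodes indexed by a choice of $k'\in[d]$ together with a pair of dyadic intervals $(I,J)$. An edge $x[i]\,z[j]$ is created for each $(i,j)\in P$; an edge $x[i]\,y[k',I,J]$ is created if the rank of $A[i,k']-A[i,k^-]$ lies in the left half of $I$ and the rank of $A[i,k']-A[i,k^+]$ lies in the right half of $J$; and an edge $y[k',I,J]\,z[j]$ is created if the rank of $B[k^-,j]-B[k',j]$ lies in the right half of $I$ and the rank of $B[k^+,j]-B[k',j]$ lies in the left half of $J$. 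By the dyadic-interval fact used in Lemma~\ref{lem:apsp2} (applied once for the $I$-coordinate and once, with roles flipped, for the $J$-coordinate), a triangle $x[i]\,y[k',I,J]\,z[j]$ exists iff both strict inequalities hold, i.e.\ iff $k'$ is a valid answer for $(i,j)$; invoking the witness version of \AESparseTri\ on the edge $x[i]\,z[j]$ returns such a $k'$ when one exists. I would then check the edge counts: Step~1 gives $O(n^2/d^2)$ edges; for Steps~2--3, a fixed value lies in $O(\log n)$ dyadic intervals, so each fixed $(i,k')$ (resp.\ $(j,k')$) contributes $O(\log^2 n)$ middle edges, for a total of $\OO(dn)$; hence $\OO(n^2/d^2 + dn)$ edges overall. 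The number of graphs is $\sum_{(k^-,k^+)\in[d]^2}\up{|P_{k^-,k^+}|/(n^2/d^2)} = O(d^2)$, since the $P_{k^-,k^+}$ partition $[n]^2$. For the degeneracy bound $\OO(d)$, I would apply exactly the vertex-splitting trick from the proof of Theorem~\ref{thm:apsp:deg}: split any left node with more than $d$ right-neighbors into copies of $d$ right-neighbors each, keeping all $\OO(d)$ middle-neighbors on every copy, which raises the edge count by only $\OO(n^2/d^2 + dn)$ and leaves every node with $\OO(d)$ neighbors.

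The main obstacle I anticipate is bookkeeping the two-sided inequality correctly: the dyadic-interval encoding naturally certifies $a<b$, so I need to make sure that for the upper bound $A[i,k']+B[k',j] < A[i,k^+]+B[k^+,j]$ — rewritten as $A[i,k']-A[i,k^+] < B[k^+,j]-B[k',j]$, equivalently $B[k^+,j]-B[k',j] > A[i,k']-A[i,k^+]$ — the ``smaller'' quantity ($A[i,k']-A[i,k^+]$) is attached to the $x$-side and the ``larger'' one ($B[k^+,j]-B[k',j]$) to the $z$-side, which forces me to use the \emph{left} half of $J$ on the $x$-side and the \emph{right} half of $J$ on the $z$-side, i.e.\ swapped relative to the $I$-coordinate; getting this orientation and the strict-vs-nonstrict conventions exactly right (including the degenerate case where $C[i,j]$ itself is attained, handled as in Lemma~\ref{lem:exacttri1}) is the delicate part. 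Everything else — sorting, counting edges, the degeneracy split, and reading off answers from witnesses — is a routine repetition of the pattern established in Section~\ref{sec:proof_in_intro}.
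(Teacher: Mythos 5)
Your overall plan is the paper's: bucket the pairs $(i,j)$ by $(k^-_{ij},k^+_{ij})\in[d]^2$ (which is exactly where the $O(d^2)$ instances come from), give each middle node a $k'$ together with a pair of dyadic intervals, apply Fredman's trick once per inequality in (\ref{eqn:exacttri}), and obtain the degeneracy bound by the node-splitting trick of Theorem~\ref{thm:apsp:deg}. The edge counts and the instance count are also correct. However, the gadget as you have specified it encodes the \emph{negations} of both inequalities in (\ref{eqn:exacttri}), so it would never produce a triangle.

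Concretely, the first inequality $A[i,k^-]+B[k^-,j]<A[i,k']+B[k',j]$ rearranges to $A[i,k^-]-A[i,k'] < B[k',j]-B[k^-,j]$, not to $A[i,k']-A[i,k^-] < B[k^-,j]-B[k',j]$ as stated in your plan: you negated both sides, which reverses the inequality. Under the dyadic convention from Lemma~\ref{lem:apsp2} ($a<b$ iff $a$ lies in the left half and $b$ in the right half of a common dyadic interval), your $I$-gadget therefore certifies $A[i,k']+B[k',j]<A[i,k^-]+B[k^-,j]$, i.e.\ that $k'$ lies \emph{below} the predecessor. For the second inequality, your closing paragraph correctly derives $A[i,k']-A[i,k^+]<B[k^+,j]-B[k',j]$, but your construction places $A[i,k']-A[i,k^+]$ in the \emph{right} half of $J$ on the $x$-side and $B[k^+,j]-B[k',j]$ in the \emph{left} half on the $z$-side, certifying the reverse inequality, i.e.\ that $k'$ lies \emph{above} the successor. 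The conjunction your graph tests is thus unsatisfiable whenever the indices come from Lemma~\ref{lem:exacttri1}, and the reduction would always answer ``no $k'$ exists.'' The fix is the paper's uniform orientation: for \emph{both} coordinates put the $A$-difference in the left half on the $x$-side and the $B$-difference in the right half on the $z$-side, choosing the order of subtraction so that the $x$-quantity is the smaller side of each rearranged inequality ($A[i,k^-]-A[i,k']$ for the first, $A[i,k']-A[i,k^+]$ for the second). There is no swapping of halves between the two coordinates — only the subtraction order changes — and indeed your construction paragraph and your discussion paragraph contradict each other on which half of $J$ goes where. You correctly identified this orientation bookkeeping as the delicate part; it is precisely where the writeup goes wrong.
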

\begin{proof}
To solve Problem~\ref{prob:exacttri2}, first
sort the following list of $O(d^2n)$ elements in $O(d^2n\log n)$ time:
\[ L\ =\ \{A[i,k'] - A[i,k] : i\in [n],\,k,k'\in [d]\} \ \cup\ 
     \{B[k,j] - B[k',j] : j\in [n],\,k,k'\in [d]\}. \]

Fix $k^-,k^+\in[d]$.  Let
$P_{k^-,k^+}=\{(i,j)\in [n]^2: (k_{ij}^-,k_{ij}^+)=(k^-,k^+)\}$.
Divide $P_{k^-,k^+}$ into $\up{\frac{|P_{k^-,k^+}|}{n^2/d^2}}$ subsets
of size $O(n^2/d^2)$.
Fix one such subset $P\subseteq P_{k^-,k^+}$.
We solve \AESparseTri\ on the following tripartite graph $G_{k^-,k^+,P}$:
\begin{enumerate}
\item[0.] The left nodes are $\{x[i]: i\in [n]\}$, the middle nodes are $\{y[k,I^-,I^+]: k\in [d],\ \mbox{$I^-$ and $I^+$
are}$ $\mbox{dyadic intervals}\}$, and
the right nodes are $\{z[j]: j\in [n]\}$. 
\item For each $(i,j)\in P$, create an edge $x[i]\,z[j]$.
\item For each $i\in [n]$, $k'\in [d]$, and dyadic intervals $I^-$ and $I^+$,
create an edge $x[i]\, y[k',I^-,I^+]$ if 
the rank of $A[i,k^-]-A[i,k']$ in $L$ is in the left half of $I^-$ and
the rank of $A[i,k']-A[i,k^+]$ in $L$ is in the left half of $I^+$.
\item For each $j\in [n]$, $k'\in [d]$, and  dyadic intervals $I^-$ and $I^+$,
create an edge $y[k',I^-,I^+]\, z[j]$ if 
the rank of $B[k',j]-B[k^-,j]$ in $L$ is in the right half of $I^-$ and
the rank of $B[k^+,j]-B[k',j]$ in $L$ is in the right half of $I^+$.
\end{enumerate}

Step~1 creates $O(n^2/d^2)$ edges.  Steps 2--3 create $O(dn\log^2 n)$ edges, since any fixed value lies in $O(\log n)$ dyadic intervals.
Thus, the graph $G_{k^-,k^+,P}$ has $\OO(n^2/d^2 + dn)$ edges.
The total number of graphs is $\sum_{k=1}^{d^2}\up{\frac{|P_{k^-,k^+}|}{n^2/d^2}}=O(d^2)$.  

For any given $(i,j) \in [n]^2$, take $(k^-,k^+)=(k_{ij}^-,k_{ij}^+)$ and $P$ to be the subset of $P_{k^-,k^+}$ containing $(i,j)$.  Finding a $k'$ with $A[i,k_{ij}^-]+B[k_{ij}^-,j] < A[i,k']+B[k',j] <  A[i,k_{ij}^+]+B[k_{ij}^+,j]$ is equivalent to finding a $k'$ with
$A[i,k^-]-A[i,k'] < B[k',j]-B[k^-,j]$ and
$A[i,k']-A[i,k^+] < B[k^+,j]-B[k',j]$, which is equivalent to finding
a triangle
$x[i]\,y[k',I^-,I^+]\,z[j]$ in the graph $G_{k^-,k^+,P}$.  Thus, the answers to
\ExactTriTwo\ can be deduced from the answers
to the \AESparseTri\ instances.

Finally, to ensure that each graph $G_{k^-,k^+,P}$ has degeneracy $\OO(d)$, we modify the graph by splitting nodes in the same
way as in the proof of Theorem~\ref{thm:apsp:deg}.
The number of edges increases by
$\OO(d\sum_x 
\up{\frac{\Delta_{x}}{d}})=\OO(\sum_x \Delta_x + dn)=\OO(n^2/d^2 + dn)$.
\end{proof}

\begin{theorem}
\label{thm:exacttri}
If \AESparseTri\ could be solved in $\OO(m^{5/4-\eps})$ time, then \ExactTri\ (and in fact, \AEExactTri) could be solved in $\OO(n^{3-4\eps/3})$ time  using Las Vegas randomization.

More generally, if \AESparseTri\ with $m$ edges and degeneracy $\OO(m^{\alpha})$ could be solved in $\OO(m^{1+\alpha-\eps})$ time
for some constant $\alpha\le 1/4$, then \ExactTri\ (and in fact, \AEExactTri) could be solved in $\OO(n^{3-2\eps/(1+2\alpha)})$ time  using Las Vegas randomization.
\end{theorem}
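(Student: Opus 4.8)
The proof is a routine composition of Lemma~\ref{lem:exacttri1} and Lemma~\ref{lem:exacttri2}, mirroring the proof of Theorem~\ref{thm:apsp}. First I would recall the reduction chain: \AEExactTri\ (as a matrix problem, after negating the third matrix) asks, for each $i,j$, whether some $k$ gives $C[i,j]=A[i,k]+B[k,j]$; this follows from solving \ExactTriOne, the predecessor/successor version. As in the \APSP\ case, \ExactTriOne\ for $n\times n$ matrices reduces to $O(n/d)$ instances of the rectangular version with an $n\times d$ and a $d\times n$ matrix, simply by splitting the range of the middle index $[n]$ into $n/d$ blocks of size $d$ and taking, for each $(i,j)$, the closest predecessor and successor over all blocks. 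So it suffices to solve the rectangular \ExactTriOne.

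\textbf{Composing the two lemmas.} By Lemma~\ref{lem:exacttri1}, the rectangular \ExactTriOne\ reduces to $\OO(1)$ calls to \ExactTriTwo; by Lemma~\ref{lem:exacttri2}, each \ExactTriTwo\ call reduces to $O(d^2)$ instances of \AESparseTri\ on graphs with $\OO(n^2/d^2+dn)$ edges and degeneracy $\OO(d)$. Hence, if \AESparseTri\ runs in $T(m)$ time, the rectangular \ExactTriOne\ takes $\OO(d^2\cdot T(n^2/d^2+dn))$ time, and the full $n\times n$ \AEExactTri\ takes $\OO((n/d)\cdot d^2\cdot T(n^2/d^2+dn))=\OO(dn\cdot T(n^2/d^2+dn))$ time. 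Now I choose $d$ to balance $n^2/d^2$ against $dn$, i.e.\ $d=n^{1/3}$, so that $n^2/d^2=dn=n^{4/3}$ and the graphs have $\OO(n^{4/3})$ edges; the running time becomes $\OO(n^{4/3}\cdot T(n^{4/3}))$. If $T(m)=\OO(m^{5/4-\eps})$, this is $\OO(n^{4/3}\cdot n^{(4/3)(5/4-\eps)})=\OO(n^{4/3+5/3-4\eps/3})=\OO(n^{3-4\eps/3})$, which refutes the Real Exact-Triangle hypothesis; that proves the first sentence.

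\textbf{The degeneracy version.} For the general statement, I keep track of the degeneracy bound $\OO(d)$ from Lemma~\ref{lem:exacttri2}. If \AESparseTri\ on graphs with $m$ edges and degeneracy $\OO(m^\alpha)$ runs in $\OO(m^{1+\alpha-\eps})$ time, I want the degeneracy bound $\OO(d)$ to be $\OO(m^\alpha)$ where $m=\Theta(n^2/d^2+dn)$. Choosing $d$ so that $d=(n^2/d^2)^\alpha$, i.e.\ $d=n^{2\alpha/(1+2\alpha)}$, guarantees (since $\alpha\le 1/4$ forces $d\le n^{1/3}$, hence $dn\le n^2/d^2$) that $m=\Theta(n^2/d^2)$ and the degeneracy $\OO(d)=\OO(m^\alpha)$ is as required. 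Then each \AESparseTri\ call costs $\OO((n^2/d^2)^{1+\alpha-\eps})$, and the total is $\OO(dn\cdot (n^2/d^2)^{1+\alpha-\eps})$. Substituting $d=n^{2\alpha/(1+2\alpha)}$ and simplifying the exponents gives $\OO(n^{3-2\eps/(1+2\alpha)})$; I would carry out this one-line exponent arithmetic to confirm the stated bound.

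\textbf{Main obstacle.} There is essentially no conceptual obstacle here — both lemmas are already proved and the theorem is pure bookkeeping. The only place requiring care is the exponent computation in the degeneracy case: one must verify that the choice $d=n^{2\alpha/(1+2\alpha)}$ simultaneously makes $dn\le n^2/d^2$ (so that $m=\Theta(n^2/d^2)$, using $\alpha\le 1/4$) and makes the degeneracy $\OO(d)$ match $\OO(m^\alpha)$, and then that the resulting time $\OO(dn\cdot(n^2/d^2)^{1+\alpha-\eps})$ simplifies to $\OO(n^{3-2\eps/(1+2\alpha)})$. I would also note, as in Theorem~\ref{thm:apsp:deg}, that the randomization is Las Vegas, inherited solely from Lemma~\ref{lem:exacttri1}, and that the ``and in fact \AEExactTri'' clause is automatic because the predecessor/successor formulation already solves the all-edges version.
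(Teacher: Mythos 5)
Your proposal is correct and follows essentially the same route as the paper's proof: compose Lemma~\ref{lem:exacttri1} and Lemma~\ref{lem:exacttri2}, reduce the $n\times n$ problem to $O(n/d)$ rectangular instances to get a total of $\OO(dn)$ \AESparseTri\ calls, and choose $d=n^{2\alpha/(1+2\alpha)}$ (respectively $d=n^{1/3}$ for the first claim) so that the edge count is dominated by $n^2/d^2$ and the degeneracy $\OO(d)$ matches $\OO(m^\alpha)$. The exponent arithmetic you outline checks out and yields exactly the stated bounds.
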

\begin{proof}
By combining the above two lemmas, if \AESparseTri\ with $m$ edges and degeneracy $D$ could be solved in $T(m,D)$ time, then
Problem~\ref{prob:exacttri1} could be solved
 in $\OO(d^2\cdot T(n^2/d^2 + dn,\,d))$ 
time.  \ExactTri\ (in fact, \AEExactTri) reduces to $O(n/d)$ instances
of \ExactTriOne\ and could then be solved in 
$\OO(dn\cdot T(n^2/d^2 + dn,\,d))$ 
time.  
Choose $d$ so that $d=(n^2/d^2)^\alpha$, i.e., $d=n^{2\alpha/(1+2\alpha)}$.  Since $\alpha\le 1/4$, we have $d\le n^{1/3}$
and so $dn\le n^2/d^2$.  The time bound becomes
 $\OO(dn\cdot T(n^2/d^2,d))=\OO(n^{3-2\eps/(1+2\alpha)})$ if $T(m,m^\alpha)=\OO(m^{1+\alpha-\eps})$.
\end{proof}

\subsection{\ThreeSUM\ \texorpdfstring{$\rightarrow$}{rightarrow} \AESparseTri}\label{sec:3sum}

We next adapt our proof to reduce from \ThreeSUM.
We will more generally consider an asymmetric version of
\ThreeSUM\ with three sets $A$, $B$, and $C$
of sizes $n$, $n$, and $\nhat$ respectively with $\nhat\le n$ (the
standard version has $\nhat=n$).  Sort $A$ and $B$ and divide the sorted lists of $A$ and $B$ into sublists $A_1,\ldots,A_{n/d}$ and $B_1,\ldots,B_{n/d}$ of size $d$, for a given parameter $d\le n$.  Let $A[i,k]$ denote the $k$-th element of $A_i$ for each $i\in [n/d]$ and $k\in [d]$, and $B[j,\ell]$ denote the $\ell$-th element of $B_j$ for each $j\in [n/d]$ and $\ell\in [d]$.  We will consider a slightly stronger
problem: for each $c\in C$, find the predecessor and successor of $c$
among $A+B$.  
(If $c$ is in this set, we will deviate from convention and  define the predecessor of $c$ to be itself.)
By a known observation (which was used in Gr{\o}nlund and Pettie's \ThreeSUM\ algorithm~\cite{gronlund2014} and in subsequent algorithms~\cite{chan3sum}), for each $c\in C$, it suffices to search for $c$ in $A_i+B_j$ for $O(n/d)$ pairs $(i,j)$
(since these $(i,j)$ pairs form a ``staircase'' in the $[n/d]\times [n/d]$ grid).  Thus, \ThreeSUM\ (or \AllThreeSUM) reduces to the following problem (which was explicitly formulated, for example, in Chan's paper on \ThreeSUM~\cite{chan3sum}):

\newcommand{\ThreeSUMOne}{\mbox{\sf Real-3SUM-Variant$_1$}} %
\newcommand{\ThreeSUMTwo}{\mbox{\sf Real-3SUM-Variant$_2$}} %
\newcommand{\ThreeSUMThree}{\mbox{\sf Real-3SUM-Variant$_3$}} %

\begin{problem}[\ThreeSUMOne]\label{prob:3sum1}
We are given an $(n/d)\times d_A$ real matrix $A$ and an
$(n/d)\times d_B$ real matrix $B$ with $d_A,d_B\le d$.
For each $i,j\in [n/d]$, we are also given a set $C_{ij}$ of real
numbers with $\sum_{i,j}|C_{ij}|=O(\nhat n/d)$.

For each $i,j\in [n/d]$ and each $c\in C_{ij}$, we want to find the predecessor and successor 
of the value $c$ among the elements in
$\{A[i,k]+B[j,\ell]: k\in [d_A],\, \ell\in [d_B]\}$.
\end{problem}

We again introduce an intermediate problem, which the original problem reduces to via random sampling:

\begin{problem}[\ThreeSUMTwo]\label{prob:3sum2}
We are given an $(n/d)\times d$ real matrix $A$ and 
an $(n/d)\times d$ real matrix $B$.
For each $i,j\in [n/d]$, we are also given a set $Q_{ij}\subseteq [d]^4$ of quadruples with $\sum_{i,j}|Q_{ij}|=O(\nhat n/d)$.

For each $i,j\in [n/d]$ and each $q=(k^-,\ell^-,k^+,\ell^+)\in Q_{ij}$, we want to find indices $k'_{ijq},\ell'_{ijq}\in [d]$ (if they exist) satisfying
\begin{equation}\label{eqn:3sum}
A[i,k^-]+B[j,\ell^-] < A[i,k'_{ijq}]+B[j,\ell'_{ijq}] < A[i,k^+]+B[j,\ell^+]. 
\end{equation}
\end{problem}

\begin{lemma}\label{lem:3sum1}
\ThreeSUMOne\ reduces to $\OO(1)$ calls to an oracle for
\ThreeSUMTwo\ using Las Vegas randomization.
\end{lemma}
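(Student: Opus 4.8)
The plan is to mirror the Monte-Carlo refinement strategy used in the proofs of Lemmas~\ref{lem:apsp1} and~\ref{lem:exacttri1}. Given a \ThreeSUMOne\ instance, fix a block pair $(i,j)$ and a query $c\in C_{ij}$; we must locate $c$'s predecessor and successor in the sum-set $S_{ij}=\{A[i,k]+B[j,\ell]:k\in[d_A],\ell\in[d_B]\}$. First I would subsample random column-index subsets, recursively solve the resulting \emph{restricted} \ThreeSUMOne\ instance to obtain, for every $(i,j,c)$, a pair of indices $(k^-,\ell^-)$ and $(k^+,\ell^+)$ realizing the predecessor and successor of $c$ in the restricted sum-set, and use $(k^-,\ell^-,k^+,\ell^+)$ as the initial quadruple in $Q_{ij}$. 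Then I invoke the \ThreeSUMTwo\ oracle; whenever it returns some $(k',\ell')$ with $A[i,k^-]+B[j,\ell^-]<A[i,k']+B[j,\ell']<A[i,k^+]+B[j,\ell^+]$, I compare $A[i,k']+B[j,\ell']$ against $c$ and reset $(k^-,\ell^-):=(k',\ell')$ if it is $\le c$, or $(k^+,\ell^+):=(k',\ell')$ otherwise, and repeat. When the oracle reports nonexistence for all triples, the maintained $(k^-,\ell^-)$ and $(k^+,\ell^+)$ are exactly the predecessor and successor in $S_{ij}$. (Padding $A$ and $B$ with $\pm\infty$ columns lets us assume the column dimension is exactly $d$ and handles the cases where $c$ has no predecessor or successor.)

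Correctness is immediate: each oracle call strictly shrinks the open interval of sums still lying strictly between the two maintained bounds (the returned element becomes one of the new endpoints), and termination exactly says no sum lies strictly between them. The running-time accounting has the same shape as in the earlier lemmas: the number of \ThreeSUMTwo\ calls is (recursion depth) $\times$ (number of refinement iterations per level), with recursion depth $O(\log d)$ provided each recursive call shrinks the relevant size measure by a constant factor. Since the reduction must be Las Vegas, I would run the iteration loop with an a~priori cap of $\mathrm{polylog}(n)$ rounds and resample whenever the cap is exceeded.

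\textbf{The main obstacle} is bounding the number of refinement iterations, i.e.\ bounding, w.h.p.\ and simultaneously over all $O(\nhat n/d)$ triples $(i,j,c)$, the number of elements of the \emph{full} sum-set $S_{ij}$ that fall in the gap between the predecessor and successor of $c$ computed on the \emph{restricted} sum-set. In the single-index setting of Lemmas~\ref{lem:apsp1}/\ref{lem:exacttri1} this is trivial: $S_{ij}$ has only $d$ elements indexed by a single $k$, so a uniformly random size-$d/2$ subset of $[d]$ induces a uniformly random size-$d/2$ subset of the \emph{values}, and standard concentration bounds the gap by $O(\log n)$. That argument breaks here because $S_{ij}$ has product structure and size up to $d_A d_B$: restricting to a product $R_A\times R_B$ of random column subsets does \emph{not} yield a uniformly random subset of the values. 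The bad configuration is a ``rank-one'' cluster --- if the $\Theta(\log n)$ smallest sums exceeding $c$ all share a single column index $k^\star$ (realizable whenever $A_i$ has a large jump at $k^\star$ while the relevant block of $B_j$ is tightly clustered), then a random $R_A$ misses $k^\star$ with constant probability, the restricted successor leaps past the entire cluster, and the gap --- hence the iteration count --- is far larger than polylogarithmic.

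To get around this I would subsample the two coordinates \emph{separately} and take the predecessor/successor over the union $(R_A\times[d_B])\cup([d_A]\times R_B)$ of two singly-restricted sum-sets, rather than over $R_A\times R_B$: among any $t=\Theta(\log^2 n)$ smallest sums exceeding $c$, the corresponding index pairs must use at least $\sqrt t$ distinct $k$-coordinates \emph{or} at least $\sqrt t$ distinct $\ell$-coordinates, so one of the two restrictions contains one of these pairs with probability $1-1/n^{\Omega(1)}$ (the hidden constant tunable by the choice of $t$), capping the gap at $O(\log^2 n)$; a union bound over the $O(\nhat n/d)$ triples makes this hold w.h.p. The remaining delicate point --- and the part I expect to be the most technical --- is organizing the recursion so that using a \emph{union} of two restrictions does not force the recursion to branch, which would inflate the oracle-call count to $d^{\Omega(1)}$ instead of $\mathrm{polylog}(n)$: one wants each level to spawn only a bounded number of subinstances while still shrinking a size measure like $d_A d_B$ (or $d_A+d_B$, or $\max(d_A,d_B)$) geometrically, so that the total number of \ThreeSUMTwo\ calls across the whole recursion stays $\OO(1)$.
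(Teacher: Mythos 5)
Your setup and your diagnosis of the central difficulty are both correct: the obstacle really is the product structure of the sum-set $\{A[i,k]+B[j,\ell]\}$, and the ``rank-one cluster'' you describe genuinely defeats a naive transplant of the arguments of Lemmas~\ref{lem:apsp1} and~\ref{lem:exacttri1}. But your proposed repair does not close the argument, and you say so yourself. Taking predecessors/successors over $(R_A\times[d_B])\cup([d_A]\times R_B)$ forces each recursive call to spawn two subinstances (one with $d_A$ halved, one with $d_B$ halved); the recursion tree then has $2^{\Theta(\log d)}=d^{\Theta(1)}$ nodes, so the total number of oracle calls is polynomial in $d$ rather than $\OO(1)$. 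You flag this as ``the most technical part'' and leave it unresolved, but it is not a technicality: without it the lemma's conclusion (and hence Theorem~\ref{thm:3sum}) does not follow. This is a genuine gap.

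The paper's fix lives in the \emph{iteration step}, not in the sampling. One keeps the one-sided sampling, $R\subseteq[d_A]$ of size $d_A/2$ with $\ell$ unrestricted, so the recursion remains a single chain of depth $O(\log d_A)$. The key extra move is that when the oracle returns a pair $(k',\ell')$ strictly inside the current interval on, say, the predecessor side, one does \emph{not} set $(k^-,\ell^-):=(k',\ell')$; instead one binary-searches within row $k'$ of $B$ (pre-sorted once) for the index $\ell$ such that $A[i,k']+B[j,\ell]$ is the predecessor of $c$ among $\{A[i,k']+B[j,\ell]:\ell\in[d_B]\}$, and resets $(k^-,\ell^-):=(k',\ell)$. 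Progress is then measured over the $d_A$ \emph{per-row-best} values $\pi(k')$ (the predecessor of $c$ restricted to first index $k'$): the initial bound equals $\max_{k\in R}\pi(k)$, only the $O(\log n)$ (w.h.p.) indices $k'$ with $\pi(k')$ exceeding the maximum over the random half $R$ can ever trigger an improvement, and each such $k'$ triggers at most one improvement because the bound is snapped to $\pi(k')$ exactly. This collapses your rank-one cluster in a single iteration, gives $O(\log n)$ iterations per recursion level and $O(\log d\log n)$ oracle calls overall, and costs only $O((\nhat n/d)\log^2 d\log n)$ extra time for the binary searches. In short: you looked for a better sample when the intended solution was a better update rule.
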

\begin{proof}
We solve \ThreeSUMOne\ as follows:
Take a random subset $R\subseteq [d_A]$ of size $d_A/2$.
For each $i,j\in [n/d]$ and each $c\in C_{ij}$, first compute 
indices $k_{ijc}^{-(R)},\ell_{ijc}^{-(R)},k_{ijc}^{+(R)},\ell_{ijc}^{+(R)}\in [d]$ such that
$A[i,k_{ijc}^{-(R)}]+B[j,\ell_{ijc}^{-(R)}]$ and 
$A[i,k_{ijc}^{+(R)}]+B[j,\ell_{ijc}^{+(R)}]$ are the predecessor and successor (respectively) of the value $c$ among the elements in $\{A[i,k]+B[j,\ell]: k\in R,\,\ell\in [d_B]\}$.
This computation can be done recursively.

Next, initialize $(k_{ijc}^-,\ell_{ijc}^-)=(k_{ijc}^{-(R)},\ell_{ijc}^{-(R)})$ and
$(k_{ijc}^+,\ell_{ijc}^+)=(k_{ijc}^{+(R)},\ell_{ijc}^{+(R)})$.  Invoke the oracle for \ThreeSUMTwo\ to find some 
$(k_{ijc}',\ell_{ijc}')\in [d_A]\times [d_B]$ satisfying (\ref{eqn:3sum}) for each $i,j\in [n]$ and $c\in C_{ij}$.   If $(k_{ijc}',\ell_{ijc}')$ exists and $A[i,k_{ijc}']+B[j,\ell_{ijc}']\le c$,
find the index $\ell\in [d_B]$
such that $A[i,k_{ijc}']+B[j,\ell]$ is the predecessor of the value $c$
among the elements in $\{A[i,k_{ijc}']+B[j,\ell]: \ell\in [d_B]\}$.
This index can be found in $O(\log d)$ time by binary search,
assuming that each row of $B$ has been sorted (which requires only
$O(n\log n)$ preprocessing time).
Reset $(k_{ijc}^-,\ell_{ijc}^-) =(k_{ijc}',\ell)$.  
If $(k_{ijc}',\ell_{ijc}')$ exists and $A[i,k_{ijc}']+B[j,\ell_{ijc}']> c$, we proceed similarly, replacing ``predecessor'' with ``successor'' and $-$ superscripts with $+$.
Now, repeat.  When $(k_{ijc}',\ell_{ijc}')$ is nonexistent for all $i,j\in[n]$, we can stop.

To bound the number of iterations, observe that for each $i,j\in [n]$
and each $c\in C_{ij}$,
the number of indices $k'\in[d_A]$ such that the predecessor
of $c$ among $\{A[i,k']+B[j,\ell]:\ell\in [d_B]\}$
is greater than $A[i,k_{ijc}^{-(R)}]+B[\ell,k_{ijc}^{-(R)}]$
is $O(\log n)$ w.h.p (since in a set of $d_A$ values, there are 
at most $O(\log n)$ values greater than the maximum of a random subset of size $d_A/2$).  A similar statement, replacing ``predecessor'' with ``successor'', $-$ superscripts with $+$, and ``greater'' with ``less'', holds.  Thus, $O(\log n)$ iterations suffice w.h.p.

As the recursion has $O(\log d_A)$ depth, the total number of
oracle calls is $O(\log d \log n)$ w.h.p., and the extra cost
of the binary searches is $O((\nhat n/d)\log^2 d\log n)$, which is negligible.
\end{proof}

We now present our main reduction from \ThreeSUMTwo\ to
\AESparseTri, which is inspired by
Gr{\o}nlund and Pettie's work~\cite{gronlund2014} and is also based
on Fredman's trick.  Although we now need to work with even more indices, the basic idea is similar to our earlier reductions.  The extra complications cause further loss of efficiency
(but we still obtain a tight conditional lower bound for \AESparseTri, albeit for a more restricted range of degeneracy $\ll m^{1/5}$).
The whole proof is still simple (comparable to the original reductions from \IntThreeSUM\ by P{\u{a}}tra{\c{s}}cu~\cite{patrascu2010towards} or Kopelowitz, Pettie and Porat~\cite{kopelowitz2016higher}, but completely bypassing hashing arguments).

\begin{lemma}\label{lem:3sum2}
\ThreeSUMTwo\ reduces to one instance of
\AESparseTri\ on a graph with $\OO(\nhat n/d + d^2n)$ edges
and degeneracy $\OO(d)$.
\end{lemma}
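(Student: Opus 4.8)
\medskip
\noindent\textbf{Proof plan.}
The plan is to follow the template of Lemmas~\ref{lem:apsp2} and~\ref{lem:exacttri2}. First apply Fredman's trick to~(\ref{eqn:3sum}): the lower inequality is equivalent to $A[i,k^-]-A[i,k']<B[j,\ell']-B[j,\ell^-]$ and the upper one to $A[i,k']-A[i,k^+]<B[j,\ell^+]-B[j,\ell']$. I would sort, once, the list
\[ L\ =\ \{A[i,k']-A[i,k]:i\in[n/d],\ k,k'\in[d]\}\ \cup\ \{B[j,\ell]-B[j,\ell']:j\in[n/d],\ \ell,\ell'\in[d]\}, \]
which has $O(nd)$ elements ($O(nd\log n)$ time), and thereafter use $\mathrm{rank}_L(\cdot)$ together with dyadic intervals over $[0,O(nd))$ to transmit comparisons between the two sides of a triangle. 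The graph has left nodes $x[i,k^-,k^+]$ and right nodes $z[j,\ell^-,\ell^+]$ (with $i,j\in[n/d]$ and $k^-,k^+,\ell^-,\ell^+\in[d]$), and one \emph{query edge} $x[i,k^-,k^+]\,z[j,\ell^-,\ell^+]$ for each $(i,j)$ and each $(k^-,\ell^-,k^+,\ell^+)\in Q_{ij}$; these edges are pairwise distinct (the endpoints' labels determine $i,j$ and the quadruple), so there are $\sum_{i,j}|Q_{ij}|=O(\nhat n/d)$ of them.

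The crux is the middle layer. I would let the middle nodes be $y[I^-,I^+]$ indexed \emph{only} by a pair of dyadic intervals, create an edge $x[i,k^-,k^+]\,y[I^-,I^+]$ whenever \emph{some} $k'\in[d]$ has $\mathrm{rank}_L(A[i,k^-]-A[i,k'])$ in the left half of $I^-$ and $\mathrm{rank}_L(A[i,k']-A[i,k^+])$ in the left half of $I^+$, and create an edge $y[I^-,I^+]\,z[j,\ell^-,\ell^+]$ whenever \emph{some} $\ell'\in[d]$ has $\mathrm{rank}_L(B[j,\ell']-B[j,\ell^-])$ in the right half of $I^-$ and $\mathrm{rank}_L(B[j,\ell^+]-B[j,\ell'])$ in the right half of $I^+$. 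Because the lower inequality is mediated solely by $I^-$ and the upper one solely by $I^+$, a triangle through the query edge of $(i,j,(k^-,\ell^-,k^+,\ell^+))$ exists iff there are $I^-,I^+$, one index $k'$ realizing both left-half conditions, and one index $\ell'$ realizing both right-half conditions; pairing that $k'$ with that $\ell'$ and using that the left half of a dyadic interval lies entirely below its right half (together with the usual fact that $a<b$ iff some dyadic interval separates $a$ and $b$ this way) recovers exactly~(\ref{eqn:3sum}). The essential point is that \emph{no witness index is stored on a middle node}: this is where the construction departs from the \APSP/\ExactTri\ reductions (which do store the single witness index on the middle node), and it is what keeps the number of structural edges at $\OO(d^2n)$ rather than $\OO(d^3n)$.

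It remains to bound the edge count and the degeneracy. A left node $x[i,k^-,k^+]$ sends structural edges only for its $d$ choices of $k'$, each choice contributing $O(\log^2 n)$ interval pairs, hence it has $\OO(d)$ middle-incidences; over the $\OO(dn)$ left nodes this is $\OO(d^2n)$ edges, and symmetrically there are $\OO(d^2n)$ edges on the $z$-side, so the graph has $\OO(\nhat n/d+d^2n)$ edges. For degeneracy I would, exactly as in the proof of Theorem~\ref{thm:apsp:deg}, split each left node $x$ having more than $d$ query neighbors into $\up{\frac{\Delta_x}{d}}$ copies, each keeping all $\OO(d)$ of $x$'s middle-incidences and at most $d$ query edges; there are $\OO(dn+\nhat n/d^2)$ copies, the edge count grows by only $\OO(d^2n+\nhat n/d)$, and this preserves which query edges lie in triangles. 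Afterwards every left node has $\OO(d)$ total degree and every right node has $\OO(d)$ middle-incidences, so repeatedly peeling off a left node (while any remain) and then a right node shows the graph has degeneracy $\OO(d)$. Finally, invoking the witness version of \AESparseTri\ (which outputs a triangle through each edge when one exists), the answer to each query of \ThreeSUMTwo\ is read off from whether its query edge lies in a triangle.

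I expect the main obstacle to be discovering this middle-layer design---recognizing that the two Fredman inequalities decouple across two independent dyadic intervals, so that \emph{neither} witness index $k'$ nor $\ell'$ needs to appear on a middle node. Once that is in place, the edge-count bookkeeping, the node splitting for degeneracy, and the standard care with rank ties ($\le$ vs.\ $<$, handled by treating $L$ as a list of distinct positions, symbolically perturbing equal values if needed) are routine, in direct analogy with the \APSP\ and \ExactTri\ cases handled above.
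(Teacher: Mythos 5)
Your proposal is correct and matches the paper's proof essentially step for step: the same sorted list $L$, the same node labels $x[i,k^-,k^+]$, $y[I^-,I^+]$, $z[j,\ell^-,\ell^+]$ with the witness indices $k',\ell'$ deliberately absent from the middle layer, the same edge counts, and the same node-splitting fix for degeneracy borrowed from Theorem~\ref{thm:apsp:deg}. No further comment is needed.
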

\begin{proof}
To solve \ThreeSUMTwo, first
sort the following list of $O(dn)$ elements in $O(dn\log n)$ time:
\[ L\ =\ \{A[i,k'] - A[i,k] : i\in [n/d], \,k,k'\in [d]\} \ \cup\ 
     \{B[j,\ell] - B[j,\ell'] : j\in [n/d], \,\ell,\ell'\in [d]\}. \]

We solve \AESparseTri\ on the following tripartite graph $G$:
\begin{enumerate}
\item[0.] The left nodes are $\{x[i,k^-,k^+]: i\in [n/d], \,k^-,k^+\in[d]\}$, the middle nodes are $\{y[I^-,I^+]: \mbox{$I^-$ and $I^+$ are dyadic intervals}\}$, and
the right nodes are $\{z[j,\ell^-,\ell^+]: j\in [n/d], \,\ell^-,\ell^+\in [d]\}$. 
\item For each $i,j\in [n/d]$ and $(k^-,\ell^-,k^+,\ell^+)\in Q_{ij}$, create an edge $x[i,k^-,k^+]\,z[j,\ell^-,\ell^+]$.
\item For each $i\in [n/d]$, $k^-,k^+,k'\in [d]$, and dyadic intervals $I^-$ and $I^+$,
create an edge $x[i,k^-,k^+]\, y[I^-,I^+]$ if 
the rank of $A[i,k^-]-A[i,k']$ in $L$ is in the left half of $I^-$ and
the rank of $A[i,k']-A[i,k^+]$ in $L$ is in the left half of $I^+$.
\item For each $j\in [n/d]$, $\ell^-,\ell^+,\ell'\in [d]$, and  dyadic intervals $I^-$ and $I^+$,
create an edge $y[I^-,I^+]\, z[j,\ell^-,\ell^+]$ if 
the rank of $B[j,\ell']-B[j,\ell^-]$ in $L$ is in the right half of $I^-$ and
the rank of $B[j,\ell^+]-B[j,\ell']$ in $L$ is in the right half of $I^+$.
\end{enumerate}

Step~1 creates $\sum_{i,j}|Q_{ij}|=O(\nhat n/d)$ edges.  Steps 2--3 create $O((n/d)d^3\log^2 n)$ edges, since any fixed value lies in $O(\log n)$ dyadic intervals.
Thus, the graph $G$ has $\OO(\nhat n/d + d^2n)$ edges.

For any given $(i,j) \in [n]^2$ and $(k^-,\ell^-,k^+,\ell^+)\in Q_{ij}$, finding a $(k',\ell')$ with
$A[i,k^-]+B[j,\ell^-] < A[i,k']+B[j,\ell'] < A[i,k^+]+B[j,\ell^+]$
is equivalent to finding a $(k',\ell')$ with
$A[i,k^-]-A[i,k'] < B[j,\ell']-B[j,\ell^-]$ and
$A[i,k']-A[i,k^+] < B[j,\ell^+]-B[j,\ell']$, which is equivalent to finding
a triangle
$x[i,k^-,k^+]\,y[I^-,I^+]\,z[j,\ell^-,\ell^+]$ in the graph $G$.  Thus, the answers to \ThreeSUMTwo\ can be deduced from the answers
to the \AESparseTri\ instances.

Finally, to ensure that the graph $G$ has degeneracy $\OO(d)$, we modify
the graph by splitting nodes in the same way as in the last paragraph of Theorem~\ref{thm:apsp:deg}'s proof. The number of edges increases by 
$\OO(d\sum_x 
\up{\frac{\Delta_{x}}{d}})=\OO(\sum_x \Delta_x + (n/d)d^2\cdot d)=\OO(\nhat n/d + d^2n)$.
\end{proof}

\begin{theorem}\label{thm:3sum}
If \AESparseTri\ with $m$ edges could be solved in $\OO(m^{6/5-\eps})$ time, then \ThreeSUM\ (and in fact, \AllThreeSUM) could be solved in $\OO(n^{2-5\eps/3})$ time  using Las Vegas randomization.

More generally, if \AESparseTri\ with $m$ edges and degeneracy $\OO(m^{\alpha})$ could be solved in $\OO(m^{1+\alpha-\eps})$ time
for some constant $\alpha\le \beta/(3+2\beta)$, then \ThreeSUM\ 
for three sets of sizes $n$, $n$, and $\nhat=n^\beta\ (\beta\le 1)$ could be solved in $\OO(n^{1+\beta-\eps(1+\beta)/(1+\alpha)})$ time  using Las Vegas randomization.
\end{theorem}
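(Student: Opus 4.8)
The plan is to compose three reductions and then optimize the block size $d$. First I would recall that, by the ``staircase'' observation behind Gr\o nlund and Pettie's \ThreeSUM\ algorithm (sketched just before Problem~\ref{prob:3sum1}), both \AllThreeSUM\ and \ThreeSUM\ --- including the asymmetric version with $|C|=\nhat$ --- reduce in near-linear time to finding, for every $c\in C$, the predecessor and successor of $c$ among $A+B$: after sorting $A,B$ and cutting them into $n/d$ blocks of length $d$, each $c$ only needs to be searched in $O(n/d)$ blocks $A_i+B_j$, so $\sum_{ij}|C_{ij}|=O(\nhat n/d)$, and this is exactly \ThreeSUMOne\ with $d_A=d_B=d$. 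Then Lemma~\ref{lem:3sum1} turns this into $\OO(1)$ calls to \ThreeSUMTwo, and Lemma~\ref{lem:3sum2} turns each of those into a single \AESparseTri\ instance on a graph with $m=\OO(\nhat n/d+d^2n)$ edges and degeneracy $\OO(d)$. Composing everything, \ThreeSUM\ reduces, with Las Vegas randomization, to $\OO(1)$ \AESparseTri\ instances of that size, on top of the $\OO(dn)$ preprocessing (the sorting in Lemma~\ref{lem:3sum2} and the binary searches in Lemma~\ref{lem:3sum1}), all of which is negligible.

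Next I would pick $d$. Writing $\nhat=n^\beta$, I would set $d=n^{\alpha(1+\beta)/(1+\alpha)}$, which makes the degeneracy $\OO(d)$ equal to $\OO(m^\alpha)$ provided the $d^2n$ term does not dominate the edge count; that condition, $d^3\le n^\beta$, is a one-line algebraic manipulation equivalent to exactly the hypothesis $\alpha\le\beta/(3+2\beta)$. With this choice $m=\OO(\nhat n/d)=\OO(n^{(1+\beta)/(1+\alpha)})$, so feeding it into the assumed bound $\OO(m^{1+\alpha-\eps})$ for \AESparseTri\ on graphs of degeneracy $\OO(m^\alpha)$ gives total time $\OO(n^{(1+\beta)(1+\alpha-\eps)/(1+\alpha)})=\OO(n^{(1+\beta)-\eps(1+\beta)/(1+\alpha)})$, which is the general bound claimed (the $\OO(dn)$ preprocessing has exponent $1+\alpha(1+\beta)/(1+\alpha)<1+\beta$, hence is absorbed). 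The first bullet is the instance $\alpha=1/5$, $\beta=1$ --- giving $d=n^{1/3}$, $m=\OO(n^{5/3})$, degeneracy $\OO(n^{1/3})=\OO(m^{1/5})$, and time $\OO(n^{2-5\eps/3})$; here I would note that an $\OO(m^{6/5-\eps})$-time algorithm for \AESparseTri\ on \emph{all} graphs is in particular such an algorithm on graphs of degeneracy $\OO(m^{1/5})$, so the degeneracy-restricted hypothesis with $\alpha=1/5$ does apply.

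There is not really a hard step here: Lemmas~\ref{lem:3sum1} and~\ref{lem:3sum2} already do all the work --- Fredman's trick, the dyadic-interval encoding of the comparison $a'-a<b-b'$, the node-splitting that controls degeneracy, and the random-sampling argument that bounds the number of surviving candidates --- so the theorem is just their composition together with the optimization of $d$. The one spot that deserves a careful line is verifying that $d^3\le n^\beta$ is equivalent to $\alpha\le\beta/(3+2\beta)$ (equivalently $3\alpha(1+\beta)\le\beta(1+\alpha)$), which is what guarantees the $d^2n$ term never dominates; everything else --- the preprocessing budget, the $\OO(1)$ oracle-call overhead, and the fact that all randomization is Las Vegas (inherited from Lemma~\ref{lem:3sum1} and from the witness-producing form of \AESparseTri) --- is routine bookkeeping.
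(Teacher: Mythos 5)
Your proposal is correct and follows essentially the same route as the paper's own proof: compose the staircase reduction to \ThreeSUMOne{} with Lemmas~\ref{lem:3sum1} and~\ref{lem:3sum2}, set $d=n^{\alpha(1+\beta)/(1+\alpha)}$, and check $d^3\le n^\beta$ so that the $\nhat n/d$ term dominates the edge count. Your algebra $3\alpha(1+\beta)\le\beta(1+\alpha)\iff\alpha\le\beta/(3+2\beta)$ is right (and matches the theorem statement; the paper's proof body contains a typo writing $\beta/(2+3\beta)$), and the specialization $\alpha=1/5$, $\beta=1$ for the first bullet is handled correctly.
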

\begin{proof}
By combining the above two lemmas, if \AESparseTri\ could be solved in $T(m,D)$ time, then
\ThreeSUMOne\ could be solved
 in $\OO(T(\nhat n/d + d^2n,\,d))$ 
time.  
\ThreeSUM\ (more generally, \AllThreeSUM) reduces to an instance of \ThreeSUMOne, with $\nhat=n^\beta$.   (For the original symmetric version of \ThreeSUM, $\beta=1$.)
Choose $d$ so that $d=(n^{1+\beta}/d)^\alpha$, i.e., $d=n^{(1+\beta)\alpha/(1+\alpha)}$.  Since $\alpha\le \beta/(2+3\beta)$, we have $d\le n^{\beta/3}$
and so $d^2n\le n^{1+\beta}/d$.
The time bound becomes $\OO(T(n^{1+\beta}/d,d))=\OO(n^{1+\beta-\eps(1+\beta)/(1+\alpha)})$ if $T(m,m^\alpha)=\OO(m^{1+\alpha-\eps})$.  
\end{proof}

\section{Hardness of All-Edges Sparse Triangle Counting}\label{sec:sparsetri:count}

For the counting variant of \AESparseTri, we show that our reductions from \ExactTri\ and \ThreeSUM\ can be made more efficient, yielding near $m^{4/3}$ conditional lower bounds to match
the reduction from \APSP.  Note that the $O(m^{2\omega / (\omega + 1)})$ time algorithm by Alon, Yuster and Zwick~\cite{AlonYZ97} still works for \AESparseTriCount, so these lower bounds are tight assuming $\omega = 2$. 

\subsection{\ExactTri\ \texorpdfstring{$\rightarrow$}{rightarrow} \AESparseTriCount}\label{sec:exacttri:count}

To adapt the \ExactTri\ $\rightarrow$ \AESparseTri\ proof from Section~\ref{sec:exacttri}, we introduce a counting version of the
intermediate problem, which the original problem reduces to:

\begin{problem}[\ExactTriThree]\label{prob:exacttri3}
We are given an $n\times d$ real matrix $A$ and 
a $d\times n$ real matrix $B$, where $d\le n$.
For each $i,j\in [n]$, we are also given an index $k_{ij}\in [d]$.

For each $i,j\in [n]$, we want to count the number of indices $k'\in [d]$ satisfying
\begin{equation}\label{eqn:exacttri3}
A[i,k']+B[k',j] < A[i,k_{ij}]+B[k_{ij},j].
\end{equation}
(Note that we can solve the analogous problem with $>$ by negation of all matrix entries, and thus 
with $\le$ as well.)
\end{problem}

\begin{lemma}\label{lem:exacttri23}
\ExactTriTwo\ reduces to
$\OO(1)$ calls to an oracle for \ExactTriThree\ using Las Vegas randomization.
\end{lemma}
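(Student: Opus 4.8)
The plan is to use the counting oracle first to decide, for every cell $(i,j)$, whether a valid index exists, and then to reduce the production of an actual witness to a \emph{parallel selection} problem. Write $s_{ij}(k):=A[i,k]+B[k,j]$. Using the $\le$ and $<$ forms of \ExactTriThree provided by Problem~\ref{prob:exacttri3} (the $\le$ form obtained by negating the matrices), two oracle calls yield, for all $i,j$, the ranks $r_{ij}^-=\#\{k\in[d]:s_{ij}(k)\le s_{ij}(k_{ij}^-)\}$ and $r_{ij}^+=\#\{k\in[d]:s_{ij}(k)<s_{ij}(k_{ij}^+)\}$, so the number of indices satisfying (\ref{eqn:exacttri}) is $N_{ij}=r_{ij}^+-r_{ij}^-$. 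If $N_{ij}\le 0$ we report ``nonexistent''; otherwise the index $k'_{ij}$ whose value $s_{ij}(k'_{ij})$ has rank $r_{ij}^-+1$ --- the successor of $s_{ij}(k_{ij}^-)$ --- automatically satisfies (\ref{eqn:exacttri}). Hence it suffices to solve, for all $n^2$ cells at once and with only $\OO(1)$ more \ExactTriThree calls, the following task: given a target rank $\rho_{ij}\in[d]$ for each cell, output the index whose value has that rank. (We assume for simplicity that the sums are in general position; ties are handled in the standard way.)

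For the selection task I would mimic the structure of the proof of Lemma~\ref{lem:exacttri1}: draw a single uniformly random $R\subseteq[d]$ with $|R|=d/2$, shared by all cells, recurse on the $R$-subinstance to localize each cell's target to a short interval of values bracketed by two consecutive elements of $R$, and then use $O(\log n)$ additional \ExactTriThree calls to pin down the exact index. The key combinatorial fact --- already used in the proof of Lemma~\ref{lem:exacttri1} --- is that between two consecutive elements of a uniformly random half-subset of $d$ reals there are only $O(\log n)$ of the $d$ reals w.h.p.; this is what keeps each recursion level to $O(\log n)$ oracle calls. With recursion depth $O(\log d)$ and a brute-force base case once the surviving column list has size $O(\log n)$ (costing $\OO(n^2)$ total time to evaluate and sort the relevant sums per cell), the reduction uses $\OO(1)$ oracle calls and Las Vegas randomization, as claimed.

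The step I expect to be the main obstacle is carrying out the selection for all $\Theta(n^2)$ cells \emph{simultaneously} within poly-logarithmically many oracle calls. A per-cell randomized quickselect would cost about $d$ oracle calls, since a cell whose search window has already shrunk keeps ``wasting'' rounds until a random pivot happens to land in that window, and one pays for an oracle call no matter how few cells it advances. The fix is exactly the shared, \emph{oblivious} random subset $R$ above: because $R$ is chosen without reference to any cell's value ordering, a single recursive call localizes every cell's target at once, whereas value-based partitioning cannot be used since the same $d$ columns are ordered differently in different cells. A related subtlety is that \ExactTriThree's threshold must be one of the present columns, so one cannot freely restrict to a subcollection of columns while keeping an arbitrary cell's original thresholds; converting those thresholds to plain ranks at the outset (as in the first step) is what makes the recursion well-defined, and I expect the careful translation of these ranks down the recursion --- rather than any new idea --- to be where the proof spends its effort.
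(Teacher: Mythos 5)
Your first step---two counting calls with thresholds $k_{ij}^+$ and $k_{ij}^-$ (the latter with $\le$), comparing the counts to decide for every cell whether a witness exists---is exactly the paper's first step. The divergence, and the gap, is in how you then produce an actual witness index. You reduce witness finding to parallel selection: find, for each cell, the index of rank $r_{ij}^-+1$ among $\{A[i,k]+B[k,j]\}_{k\in[d]}$, and you propose to do this by recursing on a shared random half-subset $R$ and ``translating the ranks down the recursion.'' That translation is where the argument breaks. The rank, within a uniformly random half-subset $R$, of the element of rank $\rho$ in $[d]$ is a hypergeometric variable with mean about $\rho/2$ but standard deviation $\Theta(\sqrt{\rho})$; equivalently, knowing the target's global rank localizes it only to within $\Theta(\sqrt{d})$ positions of the element of rank $\lceil\rho/2\rceil$ in $R$, not to a single gap between consecutive $R$-elements. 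The $O(\log n)$ gap-size fact you invoke is a local statement: it bounds how many columns sit between two \emph{given} consecutive $R$-elements, but it does not tell you \emph{which} pair of consecutive $R$-elements brackets a target specified by its global rank (learning that would require the $[d]$-ranks of $\Theta(d)$ elements of $R$, i.e., $\Theta(d)$ oracle calls). The concluding ``$O(\log n)$ additional \ExactTriThree{} calls to pin down the exact index'' likewise has no mechanism behind it: a counting call reports the rank of a supplied index, it does not return the index achieving a supplied rank. Note also that you are after a \emph{specific} witness (the one of rank $r_{ij}^-+1$), which is a strictly harder object to extract than an arbitrary witness.

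The paper avoids selection entirely and finds an arbitrary witness by standard random-sampling witness finding \cite{AlonGMN92,seidel1995}: restrict the candidate set $k'_{ij}$ to $O(\log n)$ random subsets of $[d]$ of size $2^s$ for each $s\in[\log d]$, so that w.h.p.\ some subset contains exactly one witness; for that subset, recover the unique witness bit by bit, invoking the decision version (your first step, restricted to a sub-collection of columns) once per bit. This uses polylogarithmically many oracle calls, is Las Vegas because a claimed $k'_{ij}$ can be verified directly and the procedure repeated on failure, and it sidesteps rank translation altogether. To repair your write-up you would essentially have to replace the selection recursion with this isolation argument.
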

\begin{proof}
First consider a weaker \emph{decision} version of
\ExactTriTwo\ where
we just want to decide the existence of $k_{ij}'$
for each $i,j\in[n]$.  This can
be solved by calling the oracle for \ExactTriThree\
twice, first with $k_{ij}$ set to $k_{ij}^+$, then with $k_{ij}$
set to $k_{ij}^-$ (the latter with $\le$ instead of $<$).
For each $i,j$, the answer is yes iff the two counts are different.

Next, we reduce \ExactTriTwo\ to the decision version,
by standard random sampling techniques for witness finding~\cite{AlonGMN92, seidel1995}, which we briefly sketch:
If we know that $k_{ij}'$ is unique, for each $t\in [\log d]$, we can determine the $t$-th bit
of $k_{ij}'$ by solving the decision problem 
after restricting $k_{ij}'$ to lie in the subset
$\{k\in [d]: \mbox{the $t$-th bit of $k$ is 1}\}$.
To ensure uniqueness, we restrict $k_{ij}'$ to lie in 
$O(\log n)$ random subsets of $[d]$ of size $2^s$ for each $s\in [\log d]$; then $k_{ij}'$ would be unique w.h.p.\ in one of the subsets.  

The overall number of calls to
the oracle for \ExactTriThree\ is polylogarithmic.
(Note that the above requires only Las Vegas randomization, since
fo each $i,j$, we know definitively whether $k_{ij}'$ exists, 
and if so, we can repeat until success, since we can easily verify
whether a given $k_{ij}'$ works.)
\end{proof}

The advantage of \ExactTriThree\ is that we are back to working with single indices instead of pairs of indices, so we recover
the same result as the reduction from \APSP:

\begin{lemma}
\ExactTriThree\ reduces to $O(d)$  instances
of  \AESparseTriCount\ on graphs with $\OO(n^2/d+dn)$ edges
and degeneracy $\OO(d)$.
\end{lemma}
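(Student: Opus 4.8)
The plan is to transcribe, almost verbatim, the reduction from \APSPVar\ to \AESparseTri\ in Lemma~\ref{lem:apsp2}, but to call an oracle for \AESparseTriCount\ instead of \AESparseTri\ and to observe that the returned triangle \emph{counts} are exactly the quantities that \ExactTriThree\ asks for. First I would sort the list
\[ L\ =\ \{A[i,k']-A[i,k]: i\in[n],\,k,k'\in[d]\}\ \cup\ \{B[k,j]-B[k',j]: j\in[n],\,k,k'\in[d]\} \]
in $\OO(d^2 n)$ time, breaking ties among equal values so that every element coming from the ``$B$-part'' precedes every element coming from the ``$A$-part''. This guarantees that for an $A$-element $a=A[i,k']-A[i,k]$ and a $B$-element $b=B[k,j]-B[k',j]$ we have $\mathrm{rank}(a)<\mathrm{rank}(b)$ in $L$ if and only if $a<b$ strictly, which is exactly what is needed, since $A[i,k']+B[k',j]<A[i,k]+B[k,j]$ is equivalent to $A[i,k']-A[i,k]<B[k,j]-B[k',j]$.

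Next, for each $k\in[d]$ I would set $P_k=\{(i,j)\in[n]^2: k_{ij}=k\}$, split $P_k$ into $\up{|P_k|/(n^2/d)}$ blocks of size $O(n^2/d)$ (giving $O(d)$ blocks over all $k$), and for each block $P$ build the same tripartite graph $G_{k,P}$ as in Lemma~\ref{lem:apsp2}: left nodes $x[i]$, right nodes $z[j]$, and middle nodes $y[k',I]$ over dyadic intervals $I$, with an edge $x[i]\,z[j]$ for each $(i,j)\in P$, an edge $x[i]\,y[k',I]$ whenever the rank of $A[i,k']-A[i,k]$ lies in the left half of $I$, and an edge $y[k',I]\,z[j]$ whenever the rank of $B[k,j]-B[k',j]$ lies in the right half of $I$. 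Exactly as in Lemma~\ref{lem:apsp2}, each $G_{k,P}$ has $\OO(n^2/d+dn)$ edges and there are $O(d)$ of them; and, as in the proof of Theorem~\ref{thm:apsp:deg}, splitting any left node $x[i]$ with more than $d$ right-neighbors into $\up{\Delta_{x[i]}/d}$ copies (each linked to $\le d$ right-neighbors and to all $\OO(d)$ middle-neighbors of $x[i]$) makes the degeneracy $\OO(d)$ while raising the edge count by only $\OO(n^2/d+dn)$.

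The only genuinely new point --- and the one I expect to be the crux --- is that the number of triangles through the edge $x[i]\,z[j]$ in $G_{k,P}$ (with $k=k_{ij}$ and $(i,j)\in P$) equals exactly the count to be output for $(i,j)$. This follows from the ``unique dyadic interval'' fact already exploited in Lemma~\ref{lem:apsp2}: for a fixed index $k'$ there is exactly one dyadic interval $I$ with the rank of $A[i,k']-A[i,k]$ in its left half and the rank of $B[k,j]-B[k',j]$ in its right half precisely when $A[i,k']-A[i,k]<B[k,j]-B[k',j]$, and no such $I$ otherwise. Hence each $k'$ with $A[i,k']+B[k',j]<A[i,k_{ij}]+B[k_{ij},j]$ contributes exactly one triangle $x[i]\,y[k',I]\,z[j]$, distinct $k'$ give triangles through distinct middle nodes, and (since $G_{k,P}$ is tripartite with the $x\,z$ edges restricted to $P$) there are no other triangles; moreover node-splitting leaves each $z[j]$ adjacent to a single copy of $x[i]$, so this count survives. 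Thus \AESparseTriCount\ on the $O(d)$ graphs returns, on each edge $x[i]\,z[j]$, precisely the desired count, which solves \ExactTriThree. Apart from this exact-bijection argument and the tie-breaking that aligns rank comparisons with strict real comparisons, the proof is a direct transcription of Lemma~\ref{lem:apsp2}.
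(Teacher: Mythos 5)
Your proposal is correct and follows exactly the paper's route: reuse the graphs $G_{k,P}$ from Lemma~\ref{lem:apsp2}, apply the node-splitting of Theorem~\ref{thm:apsp:deg} for the degeneracy bound, and observe that by the uniqueness of the dyadic interval each qualifying $k'$ contributes exactly one triangle through $x[i]\,z[j]$, so the triangle counts are precisely the desired quantities. The extra care you take with tie-breaking in $L$ and with counts surviving the node-splitting is a welcome (if implicit in the paper) refinement, not a deviation.
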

\begin{proof}
We can reuse the same construction of the graphs $G_{k,P}$ from the proof of Lemma~\ref{lem:apsp2}.  (The degeneracy bound follows as in
the proof of Theorem~\ref{thm:apsp:deg}.)

For any given $(i,j) \in [n]^2$, take $k=k_{ij}$ and $P$ to be the subset of $P_k$ containing $(i,j)$.
Counting the number of indices $k'$ with $A[i,k']+B[k',j] <  A[i,k_{ij}]+B[k_{ij},j]$ is equivalent to counting the
number of indices $k'$ with
$A[i,k']-A[i,k] < B[k,j]-B[k',j]$, which is equivalent to 
counting the number of triangles of the form
$x[i]\,y[k',I]\,z[j]$ in the graph $G_{k,P}$.
\end{proof}

\begin{theorem}\label{thm:exacttri:count}
If \AESparseTriCount\ could be solved in $\OO(m^{4/3-\eps})$ time, then \ExactTri\ (and in fact, \AEExactTriCount\ or \AENegTriCount) could be solved in $\OO(n^{3-3\eps/2})$ time using Las Vegas randomization.

More generally, if \AESparseTriCount\ for graphs with $m$ edges and degeneracy $\OO(m^\alpha)$ could be solved in $\OO(m^{1+\alpha-\eps})$ time for some constant $\alpha\le 1/3$, then \ExactTri\ (and in fact, \AEExactTriCount\ or \AENegTriCount) could be solved in $\OO(n^{3-2\eps/(1+\alpha)})$ time using
Las Vegas randomization.
\end{theorem}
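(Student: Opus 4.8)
The plan is to run exactly the argument of Theorem~\ref{thm:exacttri}, but to bottom out at the \emph{counting} intermediate problem \ExactTriThree\ instead of at \ExactTriTwo. Concretely, I would chain the two results just established --- Lemma~\ref{lem:exacttri23} (\ExactTriTwo\ reduces to $\OO(1)$ calls to \ExactTriThree) and the lemma immediately above (\ExactTriThree\ reduces to $O(d)$ instances of \AESparseTriCount\ on $n$-node graphs with $\OO(n^2/d+dn)$ edges and degeneracy $\OO(d)$) --- together with the earlier Lemma~\ref{lem:exacttri1} (\ExactTriOne\ reduces to $\OO(1)$ calls to \ExactTriTwo). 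Routing through \ExactTriThree\ helps because its queries are anchored by a \emph{single} reference index $k_{ij}$ rather than a pair $(k^-_{ij},k^+_{ij})$, so the $(i,j)$-pairs split into only $O(d)$ classes rather than $O(d^2)$ and the resulting \AESparseTriCount\ graphs carry $\OO(n^2/d)$ edges each instead of $\OO(n^2/d^2)$; this is exactly what lifts the exponent from $5/4$ (Theorem~\ref{thm:exacttri}) to $4/3$.

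First I would reduce each of the three problems in the statement to $\OO(n/d)$ instances of \ExactTriThree\ on $n\times d$ and $d\times n$ matrices. Since \ExactTri\ reduces to \AEExactTri, which reduces to \AEExactTriCount, it suffices to treat the two counting problems. For \AENegTriCount\ on two $n\times n$ matrices, split the contracted (middle) index into $n/d$ blocks of size $d$; within a block one must, for each $(i,j)$, count the $k$ with $A[i,k]+B[k,j]<0$, and adjoining one dummy index $k$ with $A[i,k]=B[k,j]=0$ (whose own contribution, being $0$, is excluded by the strict inequality) turns this into a single \ExactTriThree\ instance per block. For \AEExactTriCount, I would first call \ExactTriOne\ --- which accepts the target matrix $C$ explicitly --- to obtain, per block, the predecessor index $k^-$ of $C[i,j]$ among that block's values $\{A[i,k]+B[k,j]\}$ for every $(i,j)$; the number of $k$ in the block with $A[i,k]+B[k,j]=C[i,j]$ is then $0$ unless the predecessor value equals $C[i,j]$, in which case it is a ``$\le$-count'' minus a ``$<$-count'' of that block's values against the reference $A[i,k^-]+B[k^-,j]$, each an \ExactTriThree\ query (the ``$\le$'' one via the negation trick noted in the definition of \ExactTriThree). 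By Lemmas~\ref{lem:exacttri1} and~\ref{lem:exacttri23}, each \ExactTriOne\ call expands into $\OO(1)$ \ExactTriThree\ queries, so this stays within the framework; all the randomness is Las Vegas, inherited from those lemmas.

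With this, every source problem has been reduced to $\OO(n/d)$ instances of \ExactTriThree, each expanding into $O(d)$ calls to \AESparseTriCount\ on graphs with $\OO(n^2/d+dn)$ edges and degeneracy $\OO(d)$. Hence, if \AESparseTriCount\ with $m$ edges and degeneracy $D$ runs in $T(m,D)$ time, the total running time is $\OO(n\cdot T(n^2/d+dn,\,d))$. Taking $d$ with $d=(n^2/d)^\alpha$, i.e.\ $d=n^{2\alpha/(1+\alpha)}$, and using $\alpha\le 1/3$ (so $d\le\sqrt n$, hence $dn\le n^2/d$ and $n^2/d=n^{2/(1+\alpha)}$), this becomes $\OO(n\cdot T(n^2/d,\,d))=\OO(n^{3-2\eps/(1+\alpha)})$ whenever $T(m,m^\alpha)=\OO(m^{1+\alpha-\eps})$, which is the general claim; the first claim is the case $\alpha=1/3$ (an $\OO(m^{4/3-\eps})$-time algorithm for all inputs in particular works on the degeneracy-$\OO(m^{1/3})$ instances produced here).

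The only genuinely new ingredient relative to Theorem~\ref{thm:exacttri} is the reduction of \AEExactTriCount\ to \ExactTriOne\ together with \ExactTriThree: the externally supplied values $C[i,j]$ cannot be passed directly to \ExactTriThree, whose threshold is named by a matrix index rather than an arbitrary real, so a predecessor-finding step is needed to rewrite ``count of $k$ with value equal to $C[i,j]$'' as a difference of ``count of $k$ below an index-named threshold'' queries. I expect this conceptual point --- together with double-checking that the $O(d)$ / $\OO(n^2/d)$ bookkeeping really yields the $4/3$ exponent rather than $5/4$ --- to be the only places needing care; everything else transcribes the earlier proof.
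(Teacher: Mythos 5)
Your proposal is correct and follows essentially the same route as the paper: chain Lemma~\ref{lem:exacttri1}, Lemma~\ref{lem:exacttri23}, and the \ExactTriThree$\to$\AESparseTriCount\ lemma, then choose $d=n^{2\alpha/(1+\alpha)}$. The only difference is that you spell out the reductions of \AENegTriCount\ (dummy zero index) and \AEExactTriCount\ (predecessor via \ExactTriOne, then a ``$\le$''-count minus ``$<$''-count) to \ExactTriOne/\ExactTriThree, which the paper dispatches with a one-line parenthetical; your elaboration is sound.
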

\begin{proof}
By combining Lemma~\ref{lem:exacttri1} and the above two lemmas, if \AESparseTri\ with $m$ edges and degeneracy $D$ could be solved in $T(m,D)$ time, then \ExactTriOne\ could be solved
 in $\OO(d\cdot T(n^2/d + dn,\,d))$ 
time.  \ExactTri\ reduces to $O(n/d)$ instances
of \ExactTriOne\ (note that \AEExactTriCount\ and \AENegTriCount\ also reduce 
to $O(n/d)$ instances of \ExactTriOne\ and \ExactTriThree), and could then be solved in 
$\OO(n\cdot T(n^2/d + dn,\, d))$ 
time.  
Choose $d$ so that $d=(n^2/d)^\alpha$,
i.e., $d = n^{2\alpha/(1+\alpha)}$.  Since $\alpha\le 1/3$, we 
have $d\le \sqrt{n}$ and so $dn\le n^2/d$.  
The time bound becomes
$\OO(n\cdot T(n^2/d,\,d))= \OO(n^{3-2\eps/(1+\alpha)})$ if $T(m,m^\alpha)=\OO(m^{1+\alpha-\eps})$.
\end{proof}

\subsection{\ThreeSUM\ \texorpdfstring{$\rightarrow$}{rightarrow} \AESparseTriCount}
\label{sec:3sum:count}

We similarly adapt the proof for \ThreeSUM\ from Section~\ref{sec:3sum}, by introducing a counting version of the
intermediate problem:

\begin{problem}[\ThreeSUMThree]\label{prob:3sum3}
We are given an $(n/d)\times d$ real matrix $A$ and 
an $(n/d)\times d$ real matrix $B$.
For each $i,j\in [n/d]$, we are also given a set $Q_{ij}\subseteq [d]^2$ of pairs with $\sum_{i,j}|Q_{ij}|=O(\nhat n/d)$.

For each $i,j\in [n/d]$ and each $q=(k,\ell)\in Q_{ij}$, we want to
count the number of $(k',\ell')\in [d]^2$ satisfying
\begin{equation}\label{eqn:3sum3}
A[i,k']+B[j,\ell'] < A[i,k]+B[j,\ell].
\end{equation}
\end{problem}

\begin{lemma}\label{lem:3sum23}
\ThreeSUMTwo\ reduces to $\OO(1)$ calls to an oracle for \ThreeSUMThree\ using Las Vegas randomization.
\end{lemma}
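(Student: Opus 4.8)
The plan is to follow the template of Lemma~\ref{lem:exacttri23}, adapting its two-stage strategy---first a \emph{decision} version implemented via counting, then witness recovery via random sampling---to the present setting where the witness is a \emph{pair} of indices.

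First I would isolate the decision version of \ThreeSUMTwo, in which for each $i,j$ and each $q=(k^-,\ell^-,k^+,\ell^+)\in Q_{ij}$ we only need to decide whether some $(k',\ell')$ satisfying~(\ref{eqn:3sum}) exists. The key identity is that the number of $(k',\ell')\in[d]^2$ lying strictly between the two thresholds equals (the number strictly below $A[i,k^+]+B[j,\ell^+]$) minus (the number at most $A[i,k^-]+B[j,\ell^-]$), so a witness exists iff this difference is positive. Each of these two counts, over all $(i,j,q)$ at once, is exactly what one call to \ThreeSUMThree\ computes: form $Q^+_{ij}=\{(k^+,\ell^+):(k^-,\ell^-,k^+,\ell^+)\in Q_{ij}\}$ and $Q^-_{ij}$ analogously (note $\sum_{i,j}|Q^{\pm}_{ij}|=O(\nhat n/d)$, as required), call the oracle once on each, and for the ``$\le$'' count use the standard trick (already noted for \ExactTriThree) that negating all entries of $A$ and $B$ turns ``$<$'' into ``$>$'', hence ``$\le$'' into a complemented count. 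This yields the decision version using $\OO(1)$ oracle calls.

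Next I would reduce witness finding to the decision version by the usual random-sampling technique for witness finding~\cite{AlonGMN92,seidel1995}, as in Lemma~\ref{lem:exacttri23}: restrict the candidate index to $O(\log n)$ random subsets of size $2^s$ for each $s$ so that a valid witness becomes unique w.h.p.\ in one of them, then read off its bits by further decision queries. The only new point is that the witness is a pair $(k',\ell')$, which I would handle by restricting $k'$ and $\ell'$ independently and recovering each coordinate's $\log d$ bits separately; since each decision query already supports arbitrary product candidate sets $K_A\times K_B$ (implemented by padding the columns of $A$ outside $K_A$, and of $B$ outside $K_B$, with $+\infty$), this suffices. Because a proposed pair can be verified directly against~(\ref{eqn:3sum}), only Las Vegas randomization is needed.

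The step I expect to be the main obstacle is the bookkeeping needed to make \ThreeSUMThree---whose threshold is specified by an \emph{index} rather than a value---report the count we actually want once the candidate set has been restricted, since the threshold value $A[i,k^+]+B[j,\ell^+]$ is no longer expressible through the restricted matrices. I would resolve this exactly as one implicitly does in Lemma~\ref{lem:exacttri23}: adjoin ``shadow'' copies of all columns of $A$ (shifted up by a huge $M$) and of $B$ (shifted down by $M$), so that a shadow--shadow pair realizes precisely the desired threshold value while shadow--candidate pairs land in a trivial range, and then subtract off the contamination from shadow--shadow pairs and the explicitly computable count of candidate--shadow pairs, using the ``empty restriction'' instance as the baseline. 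Checking that each contamination term is either independent of the restriction or known in closed form is routine, but it is where the care lies.
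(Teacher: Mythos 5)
Your proposal is correct and follows the paper's overall architecture: reduce to a decision version by two counting calls (one on the $(k^+,\ell^+)$ projections of the quadruples with strict inequality, one on the $(k^-,\ell^-)$ projections with non-strict inequality, a witness existing iff the counts differ), then recover witnesses by randomized isolation, with only Las Vegas randomization since candidates are verifiable against~(\ref{eqn:3sum}).

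The one place you diverge is the recovery of the second coordinate, and it is worth noting what each route buys. The paper isolates and bit-fixes only $k'_{ijq}$; once $k'_{ijq}$ is known, a valid $\ell'_{ijq}$ is found by an $O(\log d)$ binary search in the (pre-sorted) row $B[j,\cdot]$, since for fixed $k'$ the admissible $\ell'$ are exactly those with $B[j,\ell']$ in an explicit open interval. This sidesteps both of the complications you take on: (i) you need product-set restrictions $K_A\times K_B$ to isolate a \emph{pair}, and a single family of random subsets at scales $2^s$ does not suffice for that---the witness set may be concentrated in a few ``rows'' $\{k'\}\times L$, so you need a two-stage (or two-scale) isolation, first on the $k'$-projection and then on $\ell'$ within the isolated row, i.e.\ $O(\log^2 d)$ scale pairs rather than $O(\log d)$ scales (still $\OO(1)$ calls, so the lemma survives); and (ii) the threshold-representability issue you flag, which your shadow-column device with $\pm M$ shifts does resolve correctly (the contamination from candidate--shadow pairs is exactly $|K_A|$ and shadow--candidate pairs contribute nothing), but which the paper's binary-search shortcut makes unnecessary for the $\ell'$ coordinate; for the $k'$ coordinate the paper handles it implicitly, and the simpler fix is just to carry the threshold column along in the restricted instance, noting it adds nothing to a strict count against itself. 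So: same proof skeleton, with your version paying a constant-factor polylog overhead and extra bookkeeping where the paper exploits the sorted-row structure of $B$.
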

\begin{proof}
First consider a weaker \emph{decision} version of \ThreeSUMTwo\ where
we just want to decide the existence of $(k_{ijq}',\ell_{ijq}')$
for each $i,j\in[n]$ and each $q\in Q_{ij}$.  This can
be solved by calling the oracle for \ThreeSUMThree\
twice, first with $Q_{ij}$ replaced by $\{(k^+,\ell^+) : (k^-,\ell^-,k^+,\ell^+)\in Q_{ij}\}$, then with $Q_{ij}$ replaced by $\{(k^-,\ell^-) : (k^-,\ell^-,k^+,\ell^+)\in Q_{ij}\}$ (the latter with $\le$ instead of $<$).
For each $i,j$ and $q\in Q_{ij}$, the answer is yes iff the corresponding two counts are different.

Next, we reduce \ThreeSUMTwo\ to the decision version,
by standard techniques for witness finding~\cite{AlonGMN92, seidel1995}, as we have
already sketched in Lemma~\ref{lem:exacttri23}'s proof: 
If we know that $k_{ijq}'$ is unique, for each $t\in [\log d]$, we can determine the $t$-th bit
of $k_{ijq}'$ by solving the decision problem 
after restricting $k_{ijq}'$ to lie in the subset
$\{k\in [d]: \mbox{the $t$-th bit of $k$ is 1}\}$.
Knowing $k_{ijq}'$, we can find a corresponding $\ell_{ijq}'$
in $O(\log d)$ time by binary search, assuming that each row of $B$
is sorted (which requires only $O(n\log n)$ preprocessing time).
To ensure uniqueness, we restrict $k_{ijq}'$ to lie in 
$O(\log n)$ random subsets of $[d]$ of size $2^s$ for each $s\in [\log d]$; then $k_{ij}'$ would be unique w.h.p.\ in one of the subsets.  

The overall number of calls to
the oracle for \ThreeSUMThree\ is polylogarithmic.
\end{proof}

The graph construction from Lemma~\ref{lem:3sum2} can now be simplified, since we are working with fewer indices:

\begin{lemma}\label{lem:3sum3}
\ThreeSUMThree\ reduces to one instance
of \AESparseTriCount\ on a graph with $\OO(\nhat n/d + dn)$ edges
and degeneracy $\OO(d)$.
\end{lemma}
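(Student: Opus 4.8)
\textbf{Proof proposal for Lemma~\ref{lem:3sum3}.}

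The plan is to follow the template of Lemmas~\ref{lem:apsp2} and~\ref{lem:3sum2}, exploiting the fact that after passing to the counting intermediate problem the relevant condition~(\ref{eqn:3sum3}) is a \emph{single} inequality. By Fredman's trick, $A[i,k']+B[j,\ell']<A[i,k]+B[j,\ell]$ is equivalent to $A[i,k']-A[i,k]<B[j,\ell]-B[j,\ell']$, so I would first sort the list
\[ L\;=\;\{A[i,k']-A[i,k]: i\in[n/d],\,k,k'\in[d]\}\;\cup\;\{B[j,\ell]-B[j,\ell']: j\in[n/d],\,\ell,\ell'\in[d]\} \]
of $O(dn)$ reals, breaking ties so that a $B$-difference precedes an equal $A$-difference (so that comparing ranks in $L$ matches the strict real comparison in the direction we need). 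By the dyadic-interval fact from Lemma~\ref{lem:apsp2}, condition~(\ref{eqn:3sum3}) for fixed $(i,j,k,\ell)$ is equivalent to the existence of a unique dyadic interval $I$ with $\operatorname{rank}_L(A[i,k']-A[i,k])$ in the left half of $I$ and $\operatorname{rank}_L(B[j,\ell]-B[j,\ell'])$ in the right half of $I$. Hence, writing $a^{(i,k)}_I$ for the number of $k'\in[d]$ whose $A$-difference has rank in the left half of $I$, and $b^{(j,\ell)}_I$ for the number of $\ell'\in[d]$ whose $B$-difference has rank in the right half of $I$, the count asked for in~(\ref{eqn:3sum3}) equals $\sum_I a^{(i,k)}_I b^{(j,\ell)}_I$.

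The key point is to realize these bilinear sums by a few \emph{Boolean} triangle-counting graphs, rather than one graph that expands the witness indices $k',\ell'$ explicitly (which would force $\OO(d^2n)$ edges). First I would precompute, directly in $\OO(dn)$ time, all nonzero values $a^{(i,k)}_I$ and $b^{(j,\ell)}_I$ (each $(i,k)$, resp.\ $(j,\ell)$, contributes to only $O(d\log n)$ intervals) together with their binary digits $a^{(i,k),(s)}_I,b^{(j,\ell),(t)}_I\in\{0,1\}$ for $0\le s,t\le\log_2 d$. For each pair $(s,t)$ I form a tripartite graph $G_{s,t}$ with left nodes $x[i,k]$, right nodes $z[j,\ell]$, and middle nodes indexed by dyadic intervals $I$; there is an edge $x[i,k]\,y[I]$ iff $a^{(i,k),(s)}_I=1$, an edge $y[I]\,z[j,\ell]$ iff $b^{(j,\ell),(t)}_I=1$, and a query edge $x[i,k]\,z[j,\ell]$ iff $(k,\ell)\in Q_{ij}$. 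The single \AESparseTriCount\ instance is the disjoint union of the $O(\log^2 d)$ graphs $G_{s,t}$: the triangle count on the query edge for $(i,j,k,\ell)$ in $G_{s,t}$ is $\sum_I a^{(i,k),(s)}_I b^{(j,\ell),(t)}_I$, and combining these over $(s,t)$ with weights $2^{s+t}$ recovers $\sum_I a^{(i,k)}_I b^{(j,\ell)}_I$, as desired.

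The edge bound is then immediate: each $G_{s,t}$ has $\sum_{i,j}|Q_{ij}|=O(\nhat n/d)$ query edges and $\OO(dn)$ other edges (at most the number of triples $(i,k,I)$ with $a^{(i,k)}_I\neq 0$, plus the analogous count for $b$), so the disjoint union has $\OO(\nhat n/d+dn)$ edges. To force degeneracy $\OO(d)$, I would split each left node with more than $d$ query neighbours into copies carrying $\le d$ query neighbours each while keeping all $O(d\log n)$ middle neighbours, exactly as in the proof of Theorem~\ref{thm:apsp:deg}; ordering the vertices as (left copies, right nodes, middle nodes) exhibits degeneracy $\OO(d)$, and the splitting adds only $\OO(\nhat n/d+dn)$ further edges.

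I expect the only real subtlety to be this binarisation: the naive graph that expands $k'$ and $\ell'$ onto the middle nodes costs $\OO(d^2n)$ edges (a left node would meet $\Theta(d)$ middle nodes for each of its $\Theta(d\log n)$ relevant intervals), which is too much for the intended near-$m^{4/3}$ bound; replacing that expansion by $O(\log^2 d)$ Boolean product graphs, while using the uniqueness clause of the dyadic-interval fact to make the triangle counts equal the \emph{exact} values in~(\ref{eqn:3sum3}) and not merely their supports, is where the argument has to be careful.
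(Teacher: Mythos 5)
Your proposal is correct, and it follows the same skeleton as the paper's proof: sort the list $L$ of pairwise differences, use Fredman's trick to turn the inequality~(\ref{eqn:3sum3}) into a rank comparison, use the unique-dyadic-interval fact to write the answer as $\sum_I a^{(i,k)}_I b^{(j,\ell)}_I$, put the pairs of $Q_{ij}$ as query edges between nodes $x[i,k]$ and $z[j,\ell]$, and fix the degeneracy by the node-splitting trick of Theorem~\ref{thm:apsp:deg}. Where you genuinely diverge is in how the bilinear sum is realized by triangle counts. The paper keeps a single middle layer $\{y[I]\}$ and creates one edge per tuple $(i,k,k',I)$ (resp.\ $(j,\ell,\ell',I)$), so that the triangle count through a query edge — with parallel edges counted with multiplicity — equals $\sum_I a_I b_I$ directly; read literally as a simple graph, that construction would only return $|\{I: a_I\ge 1 \wedge b_I\ge 1\}|$, so the multigraph (or an equivalent fix) is essential, and the paper leaves this implicit. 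Your binarization into $O(\log^2 d)$ simple graphs $G_{s,t}$, recombined with weights $2^{s+t}$, lands on the same $\OO(\nhat n/d+dn)$ edge bound while staying strictly within \AESparseTriCount\ as defined on simple graphs; the cost is only the polylogarithmic blowup absorbed by $\OO(\cdot)$ and the (harmless) need to read off each query edge's count separately in each component. You correctly identified this as the one delicate step — avoiding the $\OO(d^2n)$ blowup of the non-counting construction of Lemma~\ref{lem:3sum2} while still getting exact counts rather than supports — and your tie-breaking convention for equal $A$- and $B$-differences is a detail the paper omits but that is indeed needed for strict inequality. In short: same approach, with a more careful and self-contained treatment of the counting step.
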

\begin{proof}
To solve \ThreeSUMThree, first
sort the following list of $O(dn)$ elements in $O(dn\log n)$ time:
\[ L\ =\ \{A[i,k'] - A[i,k] : i\in [n/d], \,k,k'\in [d]\} \ \cup\ 
     \{B[j,\ell] - B[j,\ell'] : j\in [n/d], \,\ell,\ell'\in [d]\}. \]

We solve \AESparseTriCount\ on the following tripartite graph $G$:
\begin{enumerate}
\item[0.] The left nodes are $\{x[i,k]: i\in [n/d], \,k\in[d]\}$, the middle nodes are $\{y[I]: \mbox{$I$ is a dyadic}$ $\mbox{interval}\}$, and
the right nodes are $\{z[j,\ell]: j\in [n/d], \,\ell\in [d]\}$. 
\item For each $i,j\in [n/d]$ and $(k,\ell)\in Q_{ij}$, create an edge $x[i,k]\,z[j,\ell]$.
\item For each $i\in [n/d]$, $k,k'\in [d]$, and each dyadic interval $I$,
create an edge $x[i,k]\, y[I]$ if 
the rank of $A[i,k']-A[i,k]$ in $L$ is in the left half of $I$.
\item For each $j\in [n/d]$, $\ell,\ell'\in [d]$, and each dyadic interval $I$,
create an edge $y[I]\, z[j,\ell]$ if 
the rank of $B[j,\ell]-B[j,\ell']$ in $L$ is in the right half of $I$.
\end{enumerate}

Step~1 creates $\sum_{i,j}|Q_{ij}|=O(\nhat n/d)$ edges.  Steps 2--3 create $O((n/d)d^2\log n)$ edges, since any fixed value lies in $O(\log n)$ dyadic intervals.
Thus, the graph $G$ has $\OO(\nhat n/d + dn)$ edges.

For any given $(i,j) \in [n]^2$ and $(k,\ell)\in Q_{ij}$, counting the number of $(k',\ell')$ with
$A[i,k']+B[j,\ell'] < A[i,k]+B[j,\ell]$
is equivalent to counting the number of $(k',\ell')$ with
$A[i,k']-A[i,k] < B[j,\ell]-B[j,\ell']$, which is equivalent to counting
the number of triangles of the form
$x[i,k]\,y[I]\,z[j,\ell]$ in the graph $G$.  Thus, the answers to \ThreeSUMThree\ can be deduced from the answers
to the \AESparseTriCount\ instances.

The degeneracy bound follows as before.
\end{proof}

\begin{theorem}\label{thm:3sum:count}
If \AESparseTriCount\ could be solved in $\OO(m^{4/3-\eps})$ time, then \ThreeSUM\ (and in fact, \AllThreeSUM) could be solved in $\OO(n^{2-3\eps/2})$ time using Las Vegas randomization.

More generally, if \AESparseTriCount\ with $m$ edges and degeneracy $\OO(m^{\alpha})$ could be solved in $\OO(m^{1+\alpha-\eps})$ time
for some constant $\alpha\le 1/3$, then \ThreeSUM\ (and in fact, \AllThreeSUM) could be solved in $\OO(n^{2-2\eps/(1+\alpha)})$ time  using Las Vegas randomization.
\end{theorem}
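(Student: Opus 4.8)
The plan is to mirror the proof of Theorem~\ref{thm:3sum}, but feeding in the counting-friendly chain of intermediate problems built in Lemmas~\ref{lem:3sum1}, \ref{lem:3sum23}, and~\ref{lem:3sum3}. First I would compose these three lemmas: by Lemma~\ref{lem:3sum3}, one instance of \ThreeSUMThree\ reduces to a single \AESparseTriCount\ instance on a graph with $\OO(\nhat n/d + dn)$ edges and degeneracy $\OO(d)$; by Lemma~\ref{lem:3sum23}, \ThreeSUMTwo\ reduces to $\OO(1)$ oracle calls to \ThreeSUMThree; and by Lemma~\ref{lem:3sum1}, \ThreeSUMOne\ reduces to $\OO(1)$ oracle calls to \ThreeSUMTwo. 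Writing $T(m,D)$ for the time to solve \AESparseTriCount\ on graphs with $m$ edges and degeneracy $D$, this composition shows (up to polylogarithmic overhead) that \ThreeSUMOne\ is solvable in $\OO(T(\nhat n/d + dn,\, d))$ time.

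Next I would invoke the reduction recalled just before Problem~\ref{prob:3sum1} (using Gr\o nlund and Pettie's ``staircase'' observation): \ThreeSUM, and more generally \AllThreeSUM\ on sets of sizes $n,n,\nhat$, reduces to a single instance of \ThreeSUMOne\ with that $\nhat$. Specializing to the symmetric case $\nhat=n$, we conclude that \ThreeSUM\ (and \AllThreeSUM) is solvable in $\OO(T(n^2/d + dn,\, d))$ time.

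It remains to pick $d$ and substitute the hypothesized running time. For the degeneracy version with exponent $\alpha\le 1/3$, choose $d$ so that $d=(n^2/d)^\alpha$, i.e.\ $d=n^{2\alpha/(1+\alpha)}$; since $\alpha\le 1/3$ we get $d\le\sqrt n$, hence $dn\le n^2/d = n^{2/(1+\alpha)}$, so the graph has $\OO(n^2/d)$ edges and degeneracy $\OO(d)=\OO((n^2/d)^\alpha)$. Plugging $T(m,m^\alpha)=\OO(m^{1+\alpha-\eps})$ yields running time $\OO((n^2/d)^{1+\alpha-\eps}) = \OO(n^{2(1+\alpha-\eps)/(1+\alpha)}) = \OO(n^{2-2\eps/(1+\alpha)})$, proving the second statement. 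The first statement is the special case $\alpha=1/3$ (so $d=\sqrt n$, $m=\OO(n^{3/2})$, and the instances we produce have degeneracy $\OO(m^{1/3})$, which any general $\OO(m^{4/3-\eps})$-time algorithm certainly handles): the same computation gives $\OO(n^{(3/2)(4/3-\eps)}) = \OO(n^{2-3\eps/2})$. All randomization is Las Vegas, inherited from the three lemmas.

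Since every ingredient is already established, I do not expect a genuine obstacle; the only care needed is the bookkeeping of the $n^{o(1)}$/polylogarithmic factors through the $\OO(1)$-call reductions, and checking that $dn\le n^2/d$ for all $\alpha\le 1/3$ so that the edge count is dominated by $n^2/d$. The conceptual point worth stating is \emph{why} the counting variant does better than Theorem~\ref{thm:3sum}: Lemma~\ref{lem:3sum3} (unlike Lemma~\ref{lem:3sum2}) works with single extra indices $k',\ell'$ rather than pairs, shrinking the graph from $\OO(\nhat n/d + d^2 n)$ to $\OO(\nhat n/d + dn)$ edges, which is exactly what upgrades the bound from $m^{6/5}$ to the (for $\omega=2$) tight $m^{4/3}$.
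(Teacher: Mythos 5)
Your proposal is correct and follows essentially the same route as the paper's proof: compose Lemmas~\ref{lem:3sum1}, \ref{lem:3sum23}, and~\ref{lem:3sum3} to solve \ThreeSUMOne\ in $\OO(T(\nhat n/d + dn,\,d))$ time, set $\nhat=n$, and choose $d=n^{2\alpha/(1+\alpha)}$ so that $dn\le n^2/d$, yielding the claimed bounds. The parameter choices and the final arithmetic match the paper exactly.
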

\begin{proof}
By combining Lemma~\ref{lem:3sum1} and the above two lemmas, if \AESparseTriCount\ with $m$ edges and degeneracy $D$ could be solved in $T(m,D)$ time, then
\ThreeSUMOne\ could be solved
 in $\OO(T(\nhat n/d + dn,\,d))$ 
time.  \ThreeSUM\ (in fact, \AllThreeSUM) reduces to an instance of \ThreeSUMOne\ with $\nhat =n$.  
Choose $d$ so that $d=(n^2/d)^\alpha$,
i.e., $d = n^{2\alpha/(1+\alpha)}$.  Since $\alpha\le 1/3$, we 
have $d\le \sqrt{n}$ and so $dn\le n^2/d$.  
The time bound becomes
$\OO(T(n^2/d,\,d))= \OO(n^{2-2\eps/(1+\alpha)})$ if $T(m,m^\alpha)=\OO(m^{1+\alpha-\eps})$.
\end{proof}

\section{Hardness of
All-Edges Monochromatic Triangle}
\label{sec:mono}

All the reductions to \AESparseTri\ (or \AESparseTriCount) in Sections \ref{sec:sparsetri}--\ref{sec:sparsetri:count}  also imply
reductions to \AEMonoTri\ (or \AEMonoTriCount).  This is because
of a result by Lincoln, Polak and Vassilevska W.~\cite{lincoln2020monochromatic} which reduces multiple instances of \AESparseTri\ to \AEMonoTri\ (the idea is to simply
overlay the multiple input graphs into one edge-colored graph,
after randomly permuting the nodes of each input graph):

\begin{lemma}\label{lem:monotri}
Any $n^2/m$  instances of \AESparseTri\ (resp.\ \AESparseTriCount) on graphs with $n$ nodes and $m$ edges reduce to $\OO(1)$
instances of \AEMonoTri\ (resp.\ \AEMonoTriCount) on graphs with $n$ nodes (and $\OO(n^2/m)$ colors) using Las Vegas randomization.
\end{lemma}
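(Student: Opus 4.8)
The plan is to turn the one-line ``overlay'' idea into a precise argument, treating the decision version first and then indicating the extra work needed for the counting version. Write $G_1,\dots,G_t$ for the $t=n^2/m$ given instances, each on a vertex set of size $n$ with $m$ edges; we may assume $m\le \binom{n}{2}/6$, since otherwise $t<6$ and there is nothing to prove. A \emph{round} consists of drawing independent uniformly random bijections $\pi_\ell\colon V(G_\ell)\to[n]$ and forming an edge-colored graph $H$ on $[n]$ in which a pair $\{x,y\}$ is present iff $\{\pi_\ell^{-1}(x),\pi_\ell^{-1}(y)\}\in E(G_\ell)$ for exactly one $\ell$, in which case $\{x,y\}$ gets color $\ell$; pairs claimed by two or more instances (\emph{collisions}) are deleted from $H$. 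By construction every monochromatic-$\ell$ triangle of $H$ pulls back through $\pi_\ell^{-1}$ to a genuine triangle of $G_\ell$, so $H$ never yields false positives (nor over-counts). We run $R=\Theta(\log n)$ independent rounds, producing $H_1,\dots,H_R$, and call the \AEMonoTri\ oracle once on each.

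The crux of the decision version is that a fixed triangle survives a round with constant probability. Fix $\ell$ and a triangle $\Delta$ of $G_\ell$, and condition on $\pi_\ell$, so the image of $\Delta$ is a fixed triple with three fixed edges; $\Delta$ survives iff no other instance claims any of those edges. For a fixed $\ell'\ne\ell$, $\pi_{\ell'}(G_{\ell'})$ contains a given edge with probability $m/\binom{n}{2}$, hence misses all three of $\Delta$'s image-edges with probability $\ge 1-3m/\binom{n}{2}\ge 1-6m/n^2$; multiplying over the $t-1$ independent choices and using $(1-x)^{6/x}\ge e^{-12}$ for $0<x\le\tfrac12$ with $x=6m/n^2$ (so that $t-1\le 6/x$), we get $\Pr[\Delta\text{ survives}]\ge e^{-12}=:c>0$. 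For each pair $(\ell,e)$ with $e\in E(G_\ell)$ lying in at least one triangle, fix one witness triangle; it fails to survive all $R$ rounds with probability $\le(1-c)^R$, and a union bound over the at most $tm\le n^2$ such pairs shows that for $R$ a sufficiently large multiple of $\log n$, with high probability every such $e$ lies in a monochromatic triangle of some $H_a$. Hence the \AEMonoTri\ answers on $H_1,\dots,H_R$ determine all the \AESparseTri\ answers. Since every ``yes'' can be certified by exhibiting the triangle, a bad run is detectable and we simply resample, so the randomization is Las Vegas; and the number of oracle calls is $R=\OO(1)$, each $H_a$ having $n$ nodes and $t=\OO(n^2/m)$ colors.

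For the counting version the same overlay is used, but now a surviving round \emph{under}-counts: in round $a$ the oracle reports, for a good edge $e=uv$ of $G_\ell$, only the number of common neighbors $w$ of $u,v$ in $G_\ell$ both of whose incident edges $uw,vw$ are collision-free in that round. The remedy is to certify an edge only in a round in which its entire ``triangle neighborhood'' survives, and to process edges in groups according to how large that neighborhood is: partition $E(G_\ell)$ by $\lfloor\log_2(\tau_\ell(e)+1)\rfloor$, where $\tau_\ell(e)$ is the number of triangles through $e$, noting that edges with $\tau_\ell(e)\ge 2^i$ number only $O(m^{3/2}/2^i)$ per instance (since $\sum_e\tau_\ell(e)=3\cdot\#\{\text{triangles}\}=O(m^{3/2})$), so the higher buckets are sparse and tolerate correspondingly sparser overlays — overlaying only a $\Theta(n^2/(2^i m))$-sized group of instances at a time — under which a bucket-$i$ edge together with all $O(2^i)$ of its triangle-partner edges survives with constant probability; one then amplifies as in the decision case and uses Las Vegas verification (each claimed count can be checked). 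I expect the main obstacle to lie exactly here: making this bucketed accounting tight, i.e.\ controlling collisions well enough that the total number of \AEMonoTriCount\ instances stays $\OO(1)$ rather than growing with the bucket index, is the one genuinely delicate point, and this is the part carried out in~\cite{lincoln2020monochromatic}.
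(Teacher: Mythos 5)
First, a point of reference: the paper does not actually prove Lemma~\ref{lem:monotri} --- it cites \cite{lincoln2020monochromatic} and gives only the one-line idea (randomly permute each instance and overlay). Your treatment of the \emph{decision} version is exactly that intended argument, carried out correctly: no false positives because a color-$\ell$ monochromatic triangle pulls back through $\pi_\ell^{-1}$, constant per-triangle survival probability by conditioning on $\pi_\ell$ and using independence of the other permutations, then $\Theta(\log n)$ rounds and a union bound over the $\le tm\le n^2$ witness triangles. (Two small blemishes: the inequality $1-3m/\binom{n}{2}\ge 1-6m/n^2$ points the wrong way --- you need a lower bound on $\binom{n}{2}$, giving constant $12$ rather than $6$, which is harmless; and your Las Vegas justification is not right. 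A ``yes'' is certifiable, but a false ``no'' --- an edge that is in a triangle all of whose witness triangles were destroyed in every round --- is \emph{not} detectable, so ``a bad run is detectable and we resample'' does not hold. What you actually get is a one-sided-error Monte Carlo reduction correct w.h.p.)

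The genuine gap is the counting half of the lemma, which the statement explicitly includes and which the paper needs for Theorems~\ref{thm:exacttri:monotricount} and~\ref{thm:3sum:monotri}. You correctly diagnose why repetition alone fails (each round under-counts, and the probability that \emph{all} triangles through a high-$\tau$ edge survive simultaneously is exponentially small in $\tau$), but the bucketing scheme you sketch does not close the problem: to count triangles through a bucket-$i$ edge you must retain its entire two-edge triangle neighborhood, so each group overlay must contain whole instances, and covering all $t$ instances with groups of size $\Theta(n^2/(2^i m))$ costs $2^i$ oracle calls for bucket $i$ --- up to $\mathrm{poly}(m)$ calls in total rather than the $\OO(1)$ the lemma promises. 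You acknowledge this yourself and defer ``the one genuinely delicate point'' to \cite{lincoln2020monochromatic}, which means the \AESparseTriCount\ $\rightarrow$ \AEMonoTriCount\ direction is asserted rather than proved. As written, your proposal establishes the lemma only for \AESparseTri\ $\rightarrow$ \AEMonoTri; the counting statement needs either the argument from the cited paper or a completed version of your bucketed accounting with a bound on the total number of oracle instances.
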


\subsection{\APSP\ \texorpdfstring{$\rightarrow$}{rightarrow} \AEMonoTri}

For example, we can combine our \APSP\ $\rightarrow$ \AESparseTri\ reduction (Section~\ref{sec:apsp}) with Lemma~\ref{lem:monotri} to obtain the following theorem.  The near $n^{5/2}$ conditional lower bound below for \AEMonoTri\ matches known upper bounds~\cite{VassilevskaWY06} if $\omega=2$;
the second bound in terms of the number of colors is also tight if $\omega=2$.

\begin{theorem}\label{thm:apsp:monotri}
If \AEMonoTri\ could be solved in $\OO(n^{5/2-3\eps/2})$ time, then \APSP\ could be solved in $\OO(n^{3-\eps})$ time using
Las Vegas randomization.

More generally, if \AEMonoTri\ with $\OO(n^\alpha)$ colors
could be solved in $\OO(n^{2+\alpha-\eps})$ time for some
constant $\alpha\le (1-\eps)/2$, then \APSP\ could be solved in $\OO(n^{3-\eps})$ time using
Las Vegas randomization.
\end{theorem}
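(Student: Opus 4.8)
The plan is to pipeline our \APSP{} $\to$ \AESparseTri{} reduction (Lemmas~\ref{lem:apsp1} and~\ref{lem:apsp2}, together with the node-balancing idea behind the proof of Theorem~\ref{thm:apsp:deg}) with the overlay reduction of Lemma~\ref{lem:monotri} that collapses many \AESparseTri{} instances into a few \AEMonoTri{} instances, and then to tune the free parameters.

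The first step is to recall that, for a parameter $d$ to be chosen, \APSP{} reduces via Las Vegas randomization to $\OO(n)$ instances of \AESparseTri{} on the tripartite graphs $G_{k,P}$ of Lemma~\ref{lem:apsp2}: these have $\OO(n^2/d+dn)$ edges, degeneracy $\OO(d)$ after the node-splitting of Theorem~\ref{thm:apsp:deg}, and a node count that is likewise controlled by $d$. The second step is to feed these instances into Lemma~\ref{lem:monotri}: bundling them in groups of size about $(\#\mathrm{nodes})^2/(\#\mathrm{edges})$ yields $\OO(1)$ \AEMonoTri{} instances per group on a common vertex set, each carrying about $(\#\mathrm{nodes})^2/(\#\mathrm{edges})$ colors. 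Choosing $d$ — and, if needed, deliberately using smaller bundles — lets me force the resulting \AEMonoTri{} instances to have at most $\OO(N^{\alpha})$ colors for the target $\alpha\le (1-\eps)/2$, where $N$ is their node count, which is exactly the regime the hypothesis addresses.

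The third step is to apply the hypothesis to each bundled instance, spending $\OO(N^{2+\alpha-\eps})$ time, and to sum over all bundles: with the right value of $d$ (hence the right $N$, number of bundles, and $\alpha$) the aggregate becomes $\OO(n^{3-\eps})$. The first (unrestricted-colors) statement then follows by invoking the general statement with $\alpha$ taken just below $1/2$, which is how the $\OO(n^{5/2-3\eps/2})$ bound on plain \AEMonoTri{} turns into an $\OO(N^{2+\alpha-\eps})$ hypothesis of the required shape (and accounts for the $3\eps/2$ versus $\eps$ discrepancy in the exponents).

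The crux — and the step I expect to be the main obstacle — is this last balancing act. The node-to-edge ratio of the graphs $G_{k,P}$ is rigidly tied to $d$, so the number of bundled \AEMonoTri{} instances, their size $N$, and their color exponent $\alpha$ are not independent knobs: one must verify that a single choice of $d$ simultaneously keeps the color budget inside $\alpha\le (1-\eps)/2$ and drives the total running time all the way down to $\OO(n^{3-\eps})$, rather than the $\OO(n^{3})$-or-worse bound a careless choice produces. Doing the node-splitting/degeneracy bookkeeping so that $N$ is as small as possible — equivalently, making the bundled \AEMonoTri{} graphs as balanced as possible — is the delicate part.
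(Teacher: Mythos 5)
Your pipeline is exactly the paper's: Lemmas~\ref{lem:apsp1} and~\ref{lem:apsp2} produce $\OO(n)$ instances of \AESparseTri, Lemma~\ref{lem:monotri} overlays them into \AEMonoTri\ instances, and the color exponent $\alpha$ falls out of the node-to-edge ratio. Your accounting of the bundling is also right: with $\OO(n)$ nodes and $\OO(n^2/d)$ edges per instance, each bundle of $d$ instances becomes $\OO(1)$ \AEMonoTri\ instances with $\OO(d)=\OO(n^\alpha)$ colors, giving $\OO(n/d)$ bundles and a total of $\OO(n^{1-\alpha}\cdot n^{2+\alpha-\eps})=\OO(n^{3-\eps})$; and the first claim does follow from the second at $\alpha=(1-\eps)/2$, which is where $5/2-3\eps/2$ comes from.

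However, the step you flag as ``the delicate part'' is a genuine gap, and the tool you reach for to close it is the wrong one. The graphs $G_{k,P}$ do \emph{not} have $\OO(n)$ nodes: the middle part consists of nodes $y[k',I]$ and can contain up to $\OO(dn)$ of them (one per middle-incident edge in the worst case). Since the \AEMonoTri\ instance produced by Lemma~\ref{lem:monotri} lives on the full vertex set, an $\OO(dn)$-node instance would cost $\OO((dn)^{2+\alpha-\eps})$, which destroys the bound. The node-splitting from Theorem~\ref{thm:apsp:deg} cannot help here --- it only \emph{adds} left-node copies to control degeneracy, and degeneracy plays no role in Lemma~\ref{lem:monotri}. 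The paper's fix is a separate ``high vs.\ low degree'' preprocessing: every middle node of degree at most $n^{1-\eps}/d$ is handled by brute force (enumerate all left--right pairs through it) and deleted, costing $\OO(n^{2-\eps})$ per graph and $O(n^{3-\eps})$ overall; the surviving middle nodes number only $\OO(d^2n^\eps)$, which is at most $n$ precisely because $\alpha\le(1-\eps)/2$. That pruning step is both where the constraint on $\alpha$ comes from and where the extra $\eps$-loss in the exponent originates, and without it the reduction does not go through.
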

\begin{proof}
The graph $G_{k,P}$ from Lemma~\ref{lem:apsp2}'s proof may have a large
number of middle nodes.  We first observe that all $O(d)$ graphs can be modified to have $\OO(n+d^2n^\eps)$ nodes, after spending $O(dn^{2-\eps})$ time, by using a ``high vs.\ low degree'' trick:
For each middle node $y$ of degree at most $n^{1-\eps}/d$,
we enumerate all its neighbors $x$ among the left nodes
and all its neighbors $z$ among the right nodes, and for each
such pair $xz$ that is an edge, we record that its answer for \AESparseTri\ is yes.
We can then remove all such $y$'s. All this takes $O((dn\log n)\cdot 
(n^{1-\eps}/d))=\OO(n^{2-\eps})$ time per graph $G_{k,P}$.
The remaining middle nodes
have degree at least $n^{1-\eps}/d$, and so the number of middle nodes
is $O(\frac{dn\log n}{n^{1-\eps}/d})=\OO(d^2n^\eps)$.

Now, as in Theorem~\ref{thm:apsp}'s proof, \APSP\ reduces
to $\OO(1)$ rounds of $O(n)$  instances of \AESparseTri\ on graphs
with $\OO(n^2/d+dn)$ edges and $\OO(n+d^2n^\eps)$ nodes, where the reduction runs in $O((n/d)\cdot dn^{2-\eps})=O(n^{3-\eps})$ time.
Choose $d=n^\alpha$.  Since $\alpha\le (1-\eps)/2$, we have $dn\le n^2/d$
and $d^2n^\eps\le n$, so these graphs have $\OO(n^2/d)$ edges
and $\OO(n)$ nodes.
By Lemma~\ref{lem:monotri}, these $\OO(n)$ \AESparseTri\ instances
reduce to $\OO(n/d)=\OO(n^{1-\alpha})$ \AEMonoTri\ instances on graphs with $\OO(n)$ nodes and $\OO(d) = \OO(n^\alpha)$ colors. Thus, \AEMonoTriCount\ has an $\OO(n^{2+\alpha - \eps})$ time algorithm with $\OO(n^\alpha)$ colors, then \APSP\ has an $\OO(n^{1-\alpha} \cdot n^{2+\alpha - \eps}) = \OO(n^{3-\eps})$ time algorithm.
\end{proof}

\subsection{\ExactTri\ \texorpdfstring{$\rightarrow$}{rightarrow} \AEMonoTri\ (and \AEMonoTriCount)}

We can similarly combine our \ExactTri\ $\rightarrow$ \AESparseTriCount\ reduction (Section~\ref{sec:exacttri:count}) with Lemma~\ref{lem:monotri} (the proof is basically the same):

\begin{theorem}\label{thm:exacttri:monotricount}
If \AEMonoTriCount\ could be solved in $\OO(n^{5/2-3\eps/2})$ time, then \ExactTri\ (and in fact, \AEExactTriCount) could be solved in $\OO(n^{3-\eps})$ time using
Las Vegas randomization.

More generally, if \AEMonoTriCount\ with $\OO(n^\alpha)$ colors
could be solved in $\OO(n^{2+\alpha-\eps})$ time for some
constant $\alpha\le (1-\eps)/2$, then \ExactTri\ (and in fact, \AEExactTriCount) could be solved in $\OO(n^{3-\eps})$ time using
Las Vegas randomization.
\end{theorem}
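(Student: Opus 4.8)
The plan is to mirror the proof of Theorem~\ref{thm:apsp:monotri}, replacing the \APSP\ $\rightarrow$ \AESparseTri\ chain by the \ExactTri\ $\rightarrow$ \AESparseTriCount\ chain of Section~\ref{sec:exacttri:count}. Recall from there that \ExactTri\ (and likewise \AEExactTriCount\ or \AENegTriCount) reduces to $O(n/d)$ instances of \ExactTriOne; each of these reduces by Lemmas~\ref{lem:exacttri1} and~\ref{lem:exacttri23} to $\OO(1)$ calls of \ExactTriThree; and each \ExactTriThree\ instance reduces to $O(d)$ instances of \AESparseTriCount\ on the graphs $G_{k,P}$ from the proof of Lemma~\ref{lem:apsp2}, which have $\OO(n^2/d+dn)$ edges. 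Composing, \ExactTri\ produces $\OO(n)$ instances of \AESparseTriCount\ on such graphs.

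First I would cut the number of nodes of each $G_{k,P}$ down to $\OO(n+d^2n^\eps)$ by the ``high vs.\ low degree'' preprocessing used in Theorem~\ref{thm:apsp:monotri}'s proof. The only extra care needed, since we are now counting, is bookkeeping triangle multiplicities: for every middle node $y$ of degree at most $n^{1-\eps}/d$, enumerate all pairs $(x,z)$ with $x$ a left neighbor and $z$ a right neighbor of $y$, and for each such pair with $xz$ a base edge add $1$ to the running triangle count of $xz$; deleting all these $y$'s afterwards then neither loses nor double-counts any triangle. Exactly as in Theorem~\ref{thm:apsp:monotri}, this costs $\OO(n^{2-\eps})$ per graph, i.e.\ $\OO((n/d)\cdot d\cdot n^{2-\eps})=\OO(n^{3-\eps})$ in total, and leaves every middle node with degree at least $n^{1-\eps}/d$, hence only $\OO(d^2n^\eps)$ of them.

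Next, set $d=n^\alpha$ with $\alpha\le (1-\eps)/2$, so that $d^2\le n^{1-\eps}\le n$ and therefore $dn\le n^2/d$ and $d^2n^\eps\le n$; each $G_{k,P}$ now has $\OO(n^2/d)$ edges and $\OO(n)$ nodes. Group the $O(d)$ \AESparseTriCount\ instances coming from a single \ExactTriThree\ call and feed them to Lemma~\ref{lem:monotri} with $m=n^2/d$: since $n^2/m=d$, they collapse to $\OO(1)$ instances of \AEMonoTriCount\ on graphs with $\OO(n)$ nodes and $\OO(n^\alpha)$ colors. Summing over all \ExactTriThree\ calls, \ExactTri\ (and \AEExactTriCount) reduces to $\OO(n/d)=\OO(n^{1-\alpha})$ such \AEMonoTriCount\ instances, plus $\OO(n^{3-\eps})$ overhead for the reductions. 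Consequently an $\OO(n^{2+\alpha-\eps})$-time algorithm for \AEMonoTriCount\ with $\OO(n^\alpha)$ colors would solve \ExactTri\ in $\OO(n^{1-\alpha}\cdot n^{2+\alpha-\eps})+\OO(n^{3-\eps})=\OO(n^{3-\eps})$ time, using Las Vegas randomization (from Lemmas~\ref{lem:exacttri1}, \ref{lem:exacttri23} and~\ref{lem:monotri}). The first statement is the special case $\alpha=(1-\eps)/2$, for which $2+\alpha-\eps=5/2-3\eps/2$ and an unrestricted \AEMonoTriCount\ algorithm certainly handles $\OO(n^\alpha)$ colors.

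The main (essentially only) obstacle is making sure the exact triangle counts survive both lossy steps — the low-degree elimination above and the random-overlay construction behind Lemma~\ref{lem:monotri}. The counting version of Lemma~\ref{lem:monotri} already asserts that the overlay preserves per-edge counts, and the low-degree elimination is just a summation over a partition of the middle nodes, so no genuinely new idea beyond the \APSP\ case is required; the rest is bookkeeping of indices and running times identical to Theorem~\ref{thm:apsp:monotri}.
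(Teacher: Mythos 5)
Your proposal is correct and follows essentially the same route as the paper, which proves this theorem simply by noting that the \ExactTri\ $\rightarrow$ \AESparseTriCount\ chain of Section~\ref{sec:exacttri:count} combines with Lemma~\ref{lem:monotri} exactly as in the proof of Theorem~\ref{thm:apsp:monotri}. Your explicit treatment of the counting version of the ``high vs.\ low degree'' step (accumulating per-edge counts over the eliminated low-degree middle nodes) is precisely the small adaptation the paper leaves implicit, and your parameter choices and final time bounds match.
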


We can also adapt the \ExactTri\ $\rightarrow$ \AESparseTri\ reduction:

\begin{theorem}\label{thm:exacttri:monotrinoncount}
If \AEMonoTri\ could be solved in $\OO(n^{7/3-4\eps/3})$ time, then \ExactTri\ could be solved in $\OO(n^{3-\eps})$ time using
Las Vegas randomization.

More generally, if \AEMonoTri\ with $\OO(n^{2\alpha})$ colors
could be solved in $\OO(n^{2+\alpha-\eps})$ time for some
constant $\alpha\le (1-\eps)/3$, then \ExactTri\ could be solved in $\OO(n^{3-\eps})$ time using
Las Vegas randomization.
\end{theorem}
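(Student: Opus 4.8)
The plan is to run the \ExactTri\ $\to$ \AESparseTri\ reduction of Section~\ref{sec:exacttri} and then overlay batches of the resulting sparse‑triangle instances into monochromatic‑triangle instances via Lemma~\ref{lem:monotri}, exactly in the spirit of Theorem~\ref{thm:apsp:monotri}. Recall from Lemmas~\ref{lem:exacttri1}--\ref{lem:exacttri2} and the proof of Theorem~\ref{thm:exacttri} that \ExactTri\ reduces, using Las Vegas randomization, to $\OO(dn)$ instances of \AESparseTri, one on each graph $G_{k^-,k^+,P}$ with $\OO(n^2/d^2+dn)$ edges; taking $d=n^\alpha$ with $\alpha\le 1/3$ makes this $\OO(n^2/d^2)$. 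The obstacle is the one handled in Theorem~\ref{thm:apsp:monotri}: $G_{k^-,k^+,P}$ has $\OO(dn)$ middle nodes $y[k',I^-,I^+]$ (one per edge created in Steps 2--3 of Lemma~\ref{lem:exacttri2}), whereas Lemma~\ref{lem:monotri} wants $\OO(n)$-node graphs. Note that I would \emph{not} apply the degeneracy‑reduction node‑splitting of Lemma~\ref{lem:exacttri2}, since that step inflates the number of \emph{left} nodes past $n$ and only the total node count matters here.

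To shrink the middle layer I would reuse the ``high vs.\ low degree'' trick, now with threshold $\tau=n^{1-\eps}/d^2$ (a factor $d$ smaller than in Theorem~\ref{thm:apsp:monotri}, because there are a factor $d$ more graphs). For each middle node $y$ of degree $\le\tau$, enumerate all $O(\tau^2)$ pairs $(x[i],z[j])$ of a left‑ and a right‑neighbor of $y$; whenever such a pair is an edge, record its \AESparseTri\ answer as yes; then delete $y$. Over a single graph this costs $O\bigl(\tau\cdot(\text{number of Steps 2--3 edges})\bigr)=\OO(\tau dn)$, hence $\OO(d^2n^2\tau)=\OO(n^{3-\eps})$ over all $\OO(dn)$ graphs. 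After the deletions every surviving middle node has degree $>\tau$, so there are $\OO(dn/\tau)=\OO(d^3n^\eps)$ of them, which is $\OO(n)$ precisely when $d\le n^{(1-\eps)/3}$, i.e.\ $\alpha\le(1-\eps)/3$ --- exactly the hypothesis in the theorem.

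With the graphs now on $\OO(n)$ nodes and with $m=\OO(n^2/d^2)=\OO(n^{2-2\alpha})$ edges each, I would invoke Lemma~\ref{lem:monotri}: batch $n^2/m=\OO(n^{2\alpha})$ of the \AESparseTri\ instances at a time into $\OO(1)$ \AEMonoTri\ instances on $\OO(n)$-node graphs with $\OO(n^{2\alpha})$ colors. The $\OO(dn)=\OO(n^{1+\alpha})$ total \AESparseTri\ instances thus produce $\OO(n^{1-\alpha})$ \AEMonoTri\ instances. Consequently, if \AEMonoTri\ with $\OO(n^{2\alpha})$ colors can be solved in $\OO(n^{2+\alpha-\eps})$ time, then \ExactTri\ is solved in $\OO(n^{1-\alpha}\cdot n^{2+\alpha-\eps})+\OO(n^{3-\eps})=\OO(n^{3-\eps})$ time with Las Vegas randomization. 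The first, unparameterized, statement is the special case $\alpha=(1-\eps)/3$, for which $2+\alpha-\eps=7/3-4\eps/3$ and for which an $\OO(n^{7/3-4\eps/3})$-time algorithm for general \AEMonoTri\ certainly also solves the $\OO(n^{2\alpha})$-color case within that time.

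I expect the only genuinely delicate point to be the choice of $\tau$: it must be small enough ($\tau\le n^{1-\eps}/d^2$) that the high/low‑degree preprocessing over all $\OO(dn)$ graphs stays within $\OO(n^{3-\eps})$, yet large enough ($\tau\ge d$) that only $\OO(n)$ middle nodes survive. These two requirements are simultaneously met exactly when $\alpha\le(1-\eps)/3$, which is why this route to \AEMonoTri\ tops out at $n^{7/3}$ rather than at the $n^{5/2}$ obtained for the counting version in Theorem~\ref{thm:exacttri:monotricount} (where the intermediate problem \ExactTriThree\ uses a single index and yields $\OO(d)$, not $\OO(d^2)$, graphs). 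Everything else is bookkeeping mirroring Theorems~\ref{thm:exacttri} and~\ref{thm:apsp:monotri}.
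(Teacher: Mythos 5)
Your proposal is correct and follows essentially the same route as the paper's proof: the $\OO(dn)$ sparse-triangle instances from Lemmas~\ref{lem:exacttri1}--\ref{lem:exacttri2}, the high-vs-low-degree pruning of middle nodes at threshold $n^{1-\eps}/d^2$ (giving $\OO(n+d^3n^\eps)$ nodes and forcing $\alpha\le(1-\eps)/3$), and the batching via Lemma~\ref{lem:monotri} into $\OO(n^{1-\alpha})$ instances of \AEMonoTri\ with $\OO(n^{2\alpha})$ colors all match the paper's argument, including the time accounting.
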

\begin{proof}
The proof is similar. We first observe that the $O(d^2)$ graphs from Lemma~\ref{lem:exacttri2} can be modified to have $\OO(n + d^3 n^\eps)$ nodes after spending $O(dn^{2-\eps})$ time via the ``high vs.\ low degree'' trick, by handling the middle nodes with degree at most $n^{1-\eps}/d^2$ by brute-force. 

Now \ExactTri\ reduces to $\OO(nd)$ instances of \AESparseTri\ on graphs with $\OO(n^2/d^2 + dn)$ edges and $\OO(n + d^3 n^\eps)$ nodes, where the reduction runs in $O(n^{3-\eps})$ time. Choose $d = n^\alpha$. Since $\alpha \le (1-\eps) / 3$, we have $dn \le n^2 / d^2$ and $d^3 n^\eps \le n$, so these graphs have $\OO(n^2/d^2)$ edges and $\OO(n)$ nodes. By Lemma~\ref{lem:monotri}, these $\OO(nd)$ \AESparseTri\ instances
reduce to $\OO(n/d)=\OO(n^{1-\alpha})$ \AEMonoTri\ instances on graphs with $\OO(n)$ nodes and $\OO(n^{2\alpha})$ colors. If there exists an $\OO(n^{2+\alpha - \eps})$ time algorithm for \AEMonoTri\ with $\OO(n^{2\alpha})$ colors, then there exists an $\OO(n/d \cdot n^{2 + \alpha - \eps}) = \OO(n^{3-\eps})$ time algorithm for \AEMonoTri.
\end{proof}

\subsection{\ThreeSUM\ \texorpdfstring{$\rightarrow$}{rightarrow} \AEMonoTri\ (and \AEMonoTriCount)}

We can also combine our \ThreeSUM\ $\rightarrow$ \AESparseTriCount\ reduction (Section~\ref{sec:3sum:count}) with Lemma~\ref{lem:monotri}:

\begin{theorem}
\label{thm:3sum:monotri}
If \AEMonoTriCount\ could be solved in $\OO(n^{5/2-15\eps / 8})$ time, then \ThreeSUM\ could be solved in $\OO(n^{2-\eps})$ time using
Las Vegas randomization. 

More generally, if \AEMonoTriCount\ with $\OO(n^{\alpha})$ colors
could be solved in $\OO(n^{2+\alpha-\frac{2+\alpha}{2} \eps})$ time for some
constant $\alpha\le \frac{1-\eps}{2+\eps/2}$, then \ThreeSUM\ could be solved in $\OO(n^{2-\eps})$ time using
Las Vegas randomization.

\end{theorem}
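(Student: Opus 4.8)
The plan is to mirror the proofs of Theorems~\ref{thm:apsp:monotri}, \ref{thm:exacttri:monotricount}, and \ref{thm:exacttri:monotrinoncount}: compose the \ThreeSUM{} $\to$ \AESparseTriCount{} chain of Section~\ref{sec:3sum:count} with the overlay reduction of Lemma~\ref{lem:monotri}, inserting a ``high vs.\ low degree'' preprocessing step in between. Concretely, I would first reduce \ThreeSUM{} (in fact \AllThreeSUM) to a batch of \AESparseTriCount{} instances by stringing together Lemma~\ref{lem:3sum1} ($\to$ \ThreeSUMTwo), Lemma~\ref{lem:3sum23} ($\to$ \ThreeSUMThree), and Lemma~\ref{lem:3sum3} ($\to$ \AESparseTriCount), leaving the internal parameter $d$ free and splitting the input (working with the asymmetric \ThreeSUMOne{} on $\nhat\le n$, and/or splitting by block-pairs) so that the resulting instances are numerous and small enough for the later overlay to be profitable. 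Each instance is the tripartite ``Fredman-encoded'' graph of Lemma~\ref{lem:3sum3}, with $\OO(\nhat n/d+dn)$ edges and degeneracy $\OO(d)$; its only flaw is a possibly enormous middle layer of dyadic-interval nodes.

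The first technical step is to trim that middle layer exactly as in the proofs of Theorems~\ref{thm:apsp:monotri} and \ref{thm:exacttri:monotrinoncount}: for a degree threshold $\tau$, delete every middle node of degree at most $\tau$ after directly enumerating, in time quadratic in its degree, all candidate triangles through it and updating the per-edge counts; this costs $\OO(\tau)$ times the number of Fredman-encoding edges, which we balance to $\OO(n^{2-\eps})$, leaving only $\OO(\poly(d)\,n^{\eps})$ middle nodes per instance. With $d$ chosen as a suitable polynomial in $n$, the surviving \AESparseTriCount{} instances have $\OO(n)$ vertices, $\OO(n^2/d)$ edges, and degeneracy $\OO(d)$.

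The second step is to overlay these instances into \AEMonoTriCount{} instances through Lemma~\ref{lem:monotri}: bundling $n^2/m$ of them per monochromatic instance gives graphs on $\OO(n)$ vertices with $\OO(n^2/m)=\OO(n^{\alpha})$ colors, where $\alpha$ is governed by the choice of $d$. Running the hypothetical $\OO(n^{2+\alpha-\frac{2+\alpha}{2}\eps})$-time algorithm for \AEMonoTriCount{} with $\OO(n^{\alpha})$ colors on each monochromatic instance and summing---checking that the reduction overhead and the standard randomized witness-finding are also $\OO(n^{2-\eps})$, so that the whole procedure is Las Vegas---should produce an $\OO(n^{2-\eps})$-time algorithm for \ThreeSUM{}. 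The admissibility restriction $\alpha\le\frac{1-\eps}{2+\eps/2}$ should drop out of this balancing: it is what forces the $d$-dependent quantities (the surviving middle nodes $\poly(d)\,n^{\eps}$ and the $dn$ edge term) to fit the $n$-vertex / $(n^2/d)$-edge budget while the brute-force pass stays within time, and substituting the extremal $\alpha$ converts $2+\alpha-\frac{2+\alpha}{2}\eps$ into the stated $\tfrac{5}{2}-\tfrac{15}{8}\eps$.

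I expect the main difficulty to be precisely this parameter bookkeeping. Because \ThreeSUM{} is quadratic rather than cubic, the na\"ive composition through Section~\ref{sec:3sum:count} produces only $\OO(1)$ ``big'' \AESparseTriCount{} instances, so one must carve the \ThreeSUM{} instance into enough pieces---each yielding instances small enough---that the overlay of Lemma~\ref{lem:monotri} produces \AEMonoTriCount{} instances whose vertex and color counts make the sum of the per-call running times collapse to $\OO(n^{2-\eps})$ instead of something larger. As in Section~\ref{sec:3sum:count}, the reason one reaches an $n^{5/2}$-type bound here rather than the weaker $n^{7/3}$-type bound for the decision problem \AEMonoTri{} is that the counting reduction of Lemma~\ref{lem:3sum3} encodes each comparison with a \emph{single} dyadic interval instead of the pair used in the decision reduction of Lemma~\ref{lem:3sum2}, which holds $m$ down to $\OO(\nhat n/d+dn)$.
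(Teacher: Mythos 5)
Your proposal follows the paper's proof essentially step for step: trim the oversized middle layer of the Lemma~\ref{lem:3sum3} graph with the high-vs-low-degree trick (the paper takes threshold exponent $\delta=\frac{2+\alpha}{2}\eps$), manufacture enough independent instances, and overlay them via Lemma~\ref{lem:monotri}. The one device you gesture at but leave unspecified---how to ``carve'' the \ThreeSUM{} input---is exactly the standard self-reduction of a size-$n$ instance into $O((n/r)^2)$ independent instances of size $r$, with $r=n^{2/(2+\alpha)}$ chosen so that the resulting \AEMonoTriCount{} instances have $\OO(n^{2/(2+\alpha)})$ nodes and $O(n^{2\alpha/(2+\alpha)})$ colors and the stated bound drops out.
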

\begin{proof}
We first observe that the graph $G$ from Lemma~\ref{lem:3sum3} (ignoring the node-splitting step to lower degeneracy)
can be modified to have $\OO(n+d^2n^\delta)$ nodes,
after spending $O(n^{2-\delta})$ time for any $\delta > 0$.  This follows from
the same ``high vs.\ low degree'' trick from the proof of Theorem~\ref{thm:apsp:monotri}. We set $\delta = \frac{2 + \alpha}{2} \eps$ in the rest of the proof. This way, $\alpha \le (1-\delta) / 2$ since $\alpha\le \frac{1-\eps}{2+\eps/2}$.

\ThreeSUM\ (with $\nhat=n$) thus reduces to $\OO(1)$ instances of
\AESparseTriCount\ on graphs with $\OO(n^2/d+dn)$ edges and $\OO(n+d^2n^\delta)$
nodes, plus $\OO(n^{2-\delta})$ work.  By choosing $d=n^{\alpha} \le n^{(1-\delta)/2}$, these graphs have
$\OO(n^{2-\alpha})$ edges and $\OO(n)$ nodes.

However, in order to apply Lemma~\ref{lem:monotri}, we need a sufficient number of independent
instances.  We use the well known fact (see e.g. \cite{baran2005subquadratic, kopelowitz2016higher, patrascu2010towards, lincoln2016deterministic}) that a \ThreeSUM\ instance of size $n$
reduces to $O((n/r)^2)$ independent \ThreeSUM\ instances each
of size $r$.  This way, we obtain $\OO(1)$ rounds of $O((n/r)^2)$
independent \AESparseTriCount\ instances each with $\OO(r^{2-\alpha})$ edges
and $\OO(r)$ nodes after $\OO((n/r)^2 \cdot r^{2-\delta}) = \OO(n^2/r^\delta)$ work. 
By choosing $r=n^{2/(2+\alpha)}$, the amount of extra work becomes $\OO(n^2/r^\delta) = \OO(n^{2-\eps})$, and each round has
$O(n^{2\alpha/(2+\alpha)})$ 
independent \AESparseTriCount\ instances each with $\OO(n^{(4-2\alpha)/(2+\alpha)})$ edges
and $\OO(n^{2/(2+\alpha)})$ nodes. 
By Lemma~\ref{lem:monotri}, this reduces to $\OO(1)$ instances
of \AEMonoTriCount\ on graphs with $\OO(n^{2/(2+\alpha)})$  nodes and $O(n^{2\alpha/(2+\alpha)})$  colors.  If
\AEMonoTriCount\ with $N^\alpha$ colors could be solved in $T(N)=\OO(N^{2+\alpha-\frac{2+\alpha}{2}\eps})$ time, the time
bound for \ThreeSUM\ is $\OO(T(n^{2/(2+\alpha)}))=\OO(n^{2-\eps})$. 

We obtain the first claim in the theorem statement by setting $\alpha = \frac{1-\eps}{2+\eps/2}$, and use $2+\alpha - \frac{2+\alpha}{2} \eps \ge 5/2-15\eps / 8$ for $\eps > 0$. 
\end{proof}

\begin{theorem}
\label{thm:3sum:monotrinoncount}
If \AEMonoTri\ could be solved in $\OO(n^{9/4-45\eps/32})$ time, then \ThreeSUM\ could be solved in $\OO(n^{2-\eps})$ time using
Las Vegas randomization. 

More generally, if \AEMonoTri\ with $\OO(n^{3\alpha/(1+\alpha)})$ colors
could be solved in $\OO(n^{2+\frac{\alpha}{1+\alpha}-\frac{2+3 \alpha}{2+2\alpha} \eps})$ time for some
constant $\alpha\le \frac{1-\eps}{3+3\eps/2}$, then \ThreeSUM\ could be solved in $\OO(n^{2-\eps})$ time using
Las Vegas randomization.

\end{theorem}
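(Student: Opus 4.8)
The plan is to follow the template of Theorem~\ref{thm:3sum:monotri} (and of Theorem~\ref{thm:exacttri:monotrinoncount}), but in place of the counting reduction \ThreeSUM\ $\rightarrow$ \AESparseTriCount\ from Section~\ref{sec:3sum:count}, use the non-counting reduction \ThreeSUM\ $\rightarrow$ \AESparseTri\ built from Lemma~\ref{lem:3sum1} and Lemma~\ref{lem:3sum2}, and then overlay many independent \AESparseTri\ instances into a single \AEMonoTri\ instance via Lemma~\ref{lem:monotri}. The extra pairs of indices in the non-counting construction make the graph slightly larger---roughly $\OO(\nhat n/d + d^2 n)$ edges instead of $\OO(\nhat n/d + dn)$, and $\OO(nd)$ left/right nodes instead of $\OO(n)$---which is exactly why the resulting bound, and the admissible range of the color parameter, come out weaker than in the counting case.

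The first step is to tame the number of middle nodes $y[I^-,I^+]$ in the graph $G$ of Lemma~\ref{lem:3sum2}: as in the proof of Theorem~\ref{thm:exacttri:monotrinoncount}, apply the ``high vs.\ low degree'' trick. For every middle node of degree at most a threshold $\theta$, brute-force enumerate all pairs formed by one left neighbor and one right neighbor, directly recording a ``yes'' answer for each such pair that is itself an edge, and then delete the node. Since the middle nodes are incident to only $\OO(d^2 n)$ edges (Steps~2--3 of Lemma~\ref{lem:3sum2}), this costs $\OO(\theta\, d^2 n)$ per graph and leaves $\OO(d^2 n/\theta)$ middle nodes; the $\OO(nd)$ left and $\OO(nd)$ right nodes are kept as is.

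Next, use the standard fact \cite{baran2005subquadratic, kopelowitz2016higher, patrascu2010towards, lincoln2016deterministic} that \ThreeSUM\ of size $n$ reduces to $O((n/r)^2)$ independent \ThreeSUM\ instances of size $r$, each of which reduces (via Lemma~\ref{lem:3sum1}\,$+$\,Lemma~\ref{lem:3sum2}) to $\OO(1)$ instances of \AESparseTri. Set $d=r^\alpha$, so that for $\alpha\le 1/3$ each graph has $\OO(r^{2-\alpha})$ edges and $\OO(r^{1+\alpha})$ left/right nodes, and set the degree threshold $\theta=r^\alpha$, which makes the surviving middle nodes number $\OO(r^{1+\alpha})$ as well. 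Thus we obtain $\OO(1)$ rounds of $O((n/r)^2)$ independent \AESparseTri\ instances, each on a graph with $N:=\OO(r^{1+\alpha})$ nodes and $M:=\OO(r^{2-\alpha})$ edges, so $N^2/M=\OO(r^{3\alpha})$. Choosing $r=n^{2/(2+3\alpha)}$ makes $O((n/r)^2)=O(r^{3\alpha})=O(N^2/M)$, precisely the count Lemma~\ref{lem:monotri} needs, so each round collapses into $\OO(1)$ instances of \AEMonoTri\ on an $N$-node graph with $\OO(N^2/M)=\OO(N^{3\alpha/(1+\alpha)})$ colors (using $r=N^{1/(1+\alpha)}$). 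One then checks that all overhead---the sorting in Lemma~\ref{lem:3sum2}, the Las-Vegas sampling of Lemma~\ref{lem:3sum1}, and, dominantly, the brute-force work $\OO\bigl((n/r)^2\,\theta d^2 r\bigr)=\OO\bigl(n^{(2+12\alpha)/(2+3\alpha)}\bigr)$---stays within $\OO(n^{2-\eps})$, which is exactly where the constraint $\alpha\le\tfrac{1-\eps}{3+3\eps/2}$ comes from. Since $N^{\,2+\alpha/(1+\alpha)}=n^2$ and $2+\tfrac{\alpha}{1+\alpha}-\tfrac{2+3\alpha}{2+2\alpha}\eps=\bigl(2+\tfrac{\alpha}{1+\alpha}\bigr)\bigl(1-\tfrac{\eps}{2}\bigr)$, an $\OO\bigl(N^{\,2+\alpha/(1+\alpha)-\frac{2+3\alpha}{2+2\alpha}\eps}\bigr)$-time algorithm for \AEMonoTri\ with $\OO(N^{3\alpha/(1+\alpha)})$ colors yields an $\OO(n^{2-\eps})$-time algorithm for \ThreeSUM. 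The first claim then follows by taking $\alpha=\tfrac{1-\eps}{3+3\eps/2}$: a general \AEMonoTri\ algorithm in particular solves the color-restricted instances produced by the reduction, and for this $\alpha$ one verifies $2+\tfrac{\alpha}{1+\alpha}-\tfrac{2+3\alpha}{2+2\alpha}\eps\ge 9/4-45\eps/32$.

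I expect the main obstacle to be the choice of the threshold $\theta$: unlike in the counting construction (where the side nodes already number only $\Theta(n)$), here the left/right sides contribute $\Theta(nd)$ nodes, so $\theta$ must be tuned to make the surviving middle-node count match $r^{1+\alpha}$ exactly, while keeping the $\OO(\theta d^2 r)$ brute-force cost---amplified by the $(n/r)^2$ splitting factor---below $n^{2-\eps}$. It is this balance that simultaneously forces $d=r^\alpha$ and pins down the upper bound $\alpha\le\tfrac{1-\eps}{3+3\eps/2}$; verifying that the exponents then line up with the stated color count $\OO(n^{3\alpha/(1+\alpha)})$ and running time is the only delicate bookkeeping.
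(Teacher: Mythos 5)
Your proposal is correct and follows essentially the same route as the paper's proof: combine Lemmas~\ref{lem:3sum1} and~\ref{lem:3sum2} with the high-vs-low-degree trick to bound the middle nodes, split \ThreeSUM{} into $O((n/r)^2)$ instances of size $r$ with $d=r^\alpha$ and $r=n^{2/(2+3\alpha)}$, and overlay via Lemma~\ref{lem:monotri}. The only (immaterial) difference is the degree threshold---you take $\theta=d$, giving $\OO(dr)$ middle nodes and $\OO(d^3r)$ brute-force work per instance, while the paper uses $r^{1-\delta}/d^2$ with $\delta=\tfrac{2+3\alpha}{2}\eps$---but both parametrizations yield exactly the constraint $\alpha\le\tfrac{1-\eps}{3+3\eps/2}$ and the same final exponents.
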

\begin{proof}
We first observe that the graph $G$ from Lemma~\ref{lem:3sum2} (ignoring the node-splitting step to lower degeneracy)
can be modified to have $\OO(nd+d^4n^\delta)$ nodes,
after spending $O(n^{2-\delta})$ time for any $\delta > 0$.  This follows from
the same ``high vs.\ low degree'' trick. In the rest of the proof, we will set $\delta = \frac{2+3\alpha}{2} \eps$. It is straightforwards to verify $\alpha \le (1-\delta) /3$  when $\alpha\le \frac{1-\eps}{3+3\eps/2}$.

Thus, \ThreeSUM\ (with $\nhat=n$) thus reduces to $\OO(1)$ instances of
\AESparseTri\ on graphs with $\OO(n^2/d+d^2n)$ edges and $\OO(nd+d^4n^\delta)$
nodes, plus $\OO(n^{2-\delta})$ work.  By choosing $d=n^{\alpha} \le n^{(1-\delta)/3}$, these graphs have
$\OO(n^{2-\alpha})$ edges and $\OO(n^{1+\alpha})$ nodes.

We again use the well known fact that a \ThreeSUM\ instance of size $n$
reduces to $O((n/r)^2)$ independent \ThreeSUM\ instances each
of size $r$.  This way, we obtain $\OO(1)$ rounds of $O((n/r)^2)$
independent \AESparseTri\ instances each with $\OO(r^{2-\alpha})$ edges
and $\OO(r^{1+\alpha})$ nodes after $\OO((n/r)^2 \cdot r^{2-\delta}) = \OO(n^2/r^\delta)$ work. We choose $r=n^{2/(2+3\alpha)}$, so the amount of extra work becomes $\OO(n^{2-\eps})$, and each round has
$O(n^{6\alpha/(2+3\alpha)})$ 
independent \AESparseTri\ instances each with $\OO(n^{(4-2\alpha)/(2+3\alpha)})$ edges
and $\OO(n^{(2+2\alpha)/(2+3\alpha)})$ nodes.
By Lemma~\ref{lem:monotri}, this reduces to $\OO(1)$ instances
of \AEMonoTri\ on graphs with $\OO(n^{(2+2\alpha)/(2+3\alpha)})$  nodes and $O(n^{6\alpha/(2+3\alpha)})$   colors.  If
\AEMonoTri\ with $N$ nodes and $N^{3\alpha/(1+\alpha)}$ colors could be solved in $T(N)=\OO(N^{2+\frac{\alpha}{1+\alpha}-\frac{2+3 \alpha}{2+2\alpha} \eps})$ time, the running time bound for \ThreeSUM\ is $\OO(T(n^{(2+2\alpha)/(2+3\alpha)}))=\OO(n^{2-\eps})$. 

To get the first claim in the theorem, we set $\alpha = \frac{1-\eps}{3+3\eps/2}$ and use $2+\frac{\alpha}{1+\alpha}-\frac{2+3 \alpha}{2+2\alpha} \eps \ge 9/4 - 45\eps / 32$ for $\eps > 0$. 
\end{proof}

\section{Hardness of Colorful BMM}
\label{sec:colorBMM}

In this section, we show that  \APSP\ and \OV\ can be reduced to \ColorBMM.

\subsection{\APSP\ \texorpdfstring{$\rightarrow$}{rightarrow} \ColorBMM}

\newcommand{\AAA}{{\cal A}}
\newcommand{\BBB}{{\cal B}}

We first present our reduction from \APSP\ to \ColorBMM, which is simple and is inspired by
Williams's \APSP\ algorithm~\cite{Williams18}
(and its derandomization by Chan and Williams~\cite{ChanW21}) using ANDs of ORs.

\begin{lemma}\label{lem:distincteq}
\MinPlus\ of an $n\times d$ real matrix $A$
and a $d\times n$ real matrix $B$ reduces to
one instance of \ColorBMM\ for an $n\times\OO(d^4n^\eps)$ and
an $\OO(d^4n^\eps)\times n$ Boolean matrix
with $O(d)$ colors (and $\OO(d^2n)$ nonzero input entries), after spending $\OO(dn^{2-\eps})$ time.
\end{lemma}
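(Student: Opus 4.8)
The plan is to follow the blueprint of Williams's \APSP\ algorithm~\cite{Williams18} (and its derandomization~\cite{ChanW21}): a single Min-Plus product of an $n\times d$ matrix and a $d\times n$ matrix can be ``compiled'' into a family of logical ANDs of ORs, and the defining condition of \ColorBMM---that for a cell $(i,j)$ the colors of the witnesses $k$ (those with $A[i,k]\wedge B[k,j]$) exhaust $\Gamma$---is itself an AND over colors of an OR over witnesses. So I would build one \ColorBMM\ instance whose coverage pattern encodes all these formulas at once, with the colors chosen so that the coverage bit for $(i,j)$ lets one read off $C[i,j]$.

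Concretely, I would first set up Fredman's trick exactly as in Lemma~\ref{lem:apsp2}: sort, in $O(d^2n\log n)$ time, the $O(d^2n)$-element list $L=\{A[i,k']-A[i,k]:i\in[n],\,k,k'\in[d]\}\cup\{B[k,j]-B[k',j]:j\in[n],\,k,k'\in[d]\}$, so that $A[i,k']+B[k',j]<A[i,k]+B[k,j]$ is equivalent to $\mathrm{rank}_L(A[i,k']-A[i,k])<\mathrm{rank}_L(B[k,j]-B[k',j])$. By the dyadic-interval fact from Lemma~\ref{lem:apsp2} (for integer ranks, $a<b$ iff some unique dyadic interval $I$ has $a$ in its left half and $b$ in its right half), and after a symbolic perturbation that gives each $(i,j)$ a unique minimizer $k^*_{ij}\in[d]$, this minimizer is characterized by
\[ \bigwedge_{k\ne k^*}\ \bigvee_{I}\ \Big(\big[\mathrm{rank}_L(A[i,k^*]-A[i,k])\in\mathrm{left}(I)\big]\wedge\big[\mathrm{rank}_L(B[k,j]-B[k^*,j])\in\mathrm{right}(I)\big]\Big). \]
I would turn this into a \ColorBMM\ instance by using the index $k$ of each AND-term as a color ($O(d)$ colors), taking the ``middle'' indices to be the triples $(k^*,k,I)$ colored by $k$, setting the left-matrix entry at $(i,(k^*,k,I))$ to the first bracket above and the right-matrix entry at $((k^*,k,I),j)$ to the second, and carrying the candidate $k^*$ in the row (one block of $n$ rows per value of $k^*$; the resulting $\OO(d)$ identically shaped blocks are merged, or run separately, into the stated parameters). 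In the $k^*$-block the witnesses for $(i,j)$ then cover all colors precisely when the displayed formula holds, so the \ColorBMM\ output reveals $k^*_{ij}$ and hence $C[i,j]=A[i,k^*_{ij}]+B[k^*_{ij},j]$. Since for each pair $(k^*,k)$ only $O(\log n)$ dyadic intervals produce a $1$, each row has $O(d^2\log n)$ nonzeros, i.e.\ $\OO(d^2n)$ nonzeros total.

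For the stated middle dimension $\OO(d^4n^\eps)$ and the $\OO(dn^{2-\eps})$ preprocessing, I would use the high-vs-low-degree trick exactly as in the proof of Theorem~\ref{thm:apsp:monotri}: a priori there are $\Theta(d^2n)$ dyadic intervals and hence $\Theta(d^4n)$ middle indices, so I would delete every middle index of left-degree at most $n^{1-\eps}/d$ and resolve it by brute force---enumerating its incident rows and columns and marking the corresponding color covered for each incident pair---at total cost $O((d^2n\log n)\cdot(n^{1-\eps}/d))=\OO(dn^{2-\eps})$, after which the surviving middle indices number $\OO(d^4n^\eps)$.

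The step I expect to be the main obstacle is this \ColorBMM\ encoding: a \ColorBMM\ instance yields only one bit per cell, while a minimizer index carries $\Theta(\log d)$ bits, so the coloring and the two Boolean matrices must be engineered so that the coverage bits alone reconstruct $C$ (this is what forces the candidate $k^*$ into the rows, or a comparable $\Theta(\log d)$- or $d$-factor cost somewhere), all while keeping the color count $O(d)$, the nonzero count $\OO(d^2n)$, and the extra work $\OO(dn^{2-\eps})$. Everything else---Fredman's trick, the dyadic-interval gadget, the perturbation, and the degree bookkeeping---is routine and parallels the \AESparseTri\ reductions of Section~\ref{sec:sparsetri}.
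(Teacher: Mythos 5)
Your setup (Fredman's trick, the sorted list $L$, the dyadic-interval gadget, the symbolic perturbation, and the high-vs-low-degree pruning) matches the paper's, but there is a genuine gap at exactly the step you flag as the main obstacle, and your proposed resolution does not meet the lemma's parameters. You characterize the minimizer by testing each candidate $k^*$ separately and placing the candidate into the row index; that yields either $d$ separate \ColorBMM\ instances or a single instance with $nd$ rows, and neither is ``one instance for an $n\times\OO(d^4n^\eps)$ matrix'' --- there is no way to merge the $d$ blocks back down to $n$ rows, because each output cell of \ColorBMM\ carries one bit while the minimizer carries $\Theta(\log d)$ bits. This factor-$d$ overhead is not cosmetic: in Theorem~\ref{thm:distincteq} the $(\min,+)$-product of two $n\times n$ matrices is split into $n/d$ rectangular products, so your version makes $\Theta(n)$ oracle calls rather than $\OO(n/d)$, the resulting \APSP\ time becomes $\OO(n^{3+\alpha-\eps})$ instead of $\OO(n^{3-\eps})$, and the derived lower bound for \ColorBMM\ degenerates to the trivial $n^2$. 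The missing idea is to extract the minimizer \emph{bit by bit}: for each $t\in[\log d]$, let $K_t=\{k\in[d]:\mbox{the $t$-th bit of $k$ is }1\}$ and observe that the $t$-th bit of $k_{ij}=\argmin_{k}(A[i,k]+B[k,j])$ equals $1$ iff $\bigwedge_{k\in[d]-K_t}\bigvee_{k'\in K_t}\,[A[i,k']+B[k',j]<A[i,k]+B[k,j]]$. This is already in \ColorBMM\ form with rows indexed by $i$ alone, middle indices $(k,k',I)$ colored by $k\in[d]-K_t$, and $O(d)$ colors; only $\log d$ such instances are needed, and that overhead is absorbed by $\OO$. (You anticipated ``a $\Theta(\log d)$- or $d$-factor cost somewhere,'' but your construction realizes the $d$-factor version, which is the one that breaks the reduction.)

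A secondary issue: after deleting the low-degree middle indices you propose to ``mark the corresponding color covered for each incident pair.'' That does not combine correctly with the oracle output, since ``all colors covered'' is not the disjunction of ``all colors covered by low-degree middles'' and ``all colors covered by high-degree middles.'' The paper instead declares every pair incident to a low-degree middle index \emph{bad} ($\OO(n^{2-\eps})$ of them) and computes its entire $(\min,+)$ entry by brute force in $O(d)$ time per pair, which is precisely where the $\OO(dn^{2-\eps})$ preprocessing budget goes.
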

\begin{proof}

For simplicity, we assume that $A[i,k]+B[k,j]\neq A[i,k']+B[k',j]$
for all $i,j\in [n]$ and $k,k'\in [d]$.  This can be ensured, for example, by
adding $k\delta$ to $A[i,k]$ for an infinitesimally small $\delta>0$. Note that we don't need to explicitly store $A[i,k]+k\delta$. Instead, we can store a number $a+b\delta$ as a pair of numbers $(a, b)$. Every time we need to compare two numbers $(a_1, b_1)$ and $(a_2, b_2)$, we first compare $a_1$ and $a_2$ and only compare $b_1$ and $b_2$ if $a_1 = a_2$. This implementation only incurs a constant factor overhead. 

Fix $t\in[\log d]$.
We will describe how to
compute the $t$-th bit of $k_{ij}=\arg\min_{k\in [d]}
(A[i,k]+B[k,j])$.
Let $K_t=\{k\in [d]: \mbox{the $t$-th bit of $k$ is 1}\}$.
Note that the $t$-th bit of $k_{ij}$ is 1 iff
\[ %
 \bigwedge_{k\in [d]-K_t}\bigvee_{k'\in K_t} \big[A[i,k']+B[k',j]<A[i,k]+B[k,j]\big]\ =\ 
  \bigwedge_{k\in [d]-K_t}\bigvee_{k'\in K_t} \big[A[i,k']-A[i,k]<B[k,j]-B[k',j]\big].
\]

Thus, the answers can be determined by solving \ColorBMM\ on the following matrices $\AAA$ and~$\BBB$ and color mapping to $[d]-K_t$:
\begin{enumerate}
\item For each $i\in [n]$, $k\in [d]-K_t$, $k'\in K_t$,
and each dyadic interval $I$,
let $\AAA[i,(k,k',I)]=1$ iff the rank of $A[i,k']-A[i,k]$ lies in
the left half of the dyadic interval $I$.
\item For each $j\in [n]$, $k\in [d]-K_t$, $k'\in K_t$,
and each dyadic interval $I$,
let $\BBB[(k,k',I),j]=1$ iff the rank of $B[k,j]-B[k',j]$ lies in
the right half of the dyadic interval $I$.
\item Define $\col((k,k',I))=k$.
\end{enumerate}

However, the inner dimension (i.e., the number of columns of $\AAA$ or rows of $\BBB$)
is large.  We lower the inner dimension by using the ``high vs.\ low degree'' trick:  Label a triple $(k,k',I)$ ``high'' if the number of 
indices $i$ for which
$\AAA[i,(k,k',I)]=1$ exceeds $n^{1-\eps}/d^2$; otherwise, label it ``low''.  For each low triple $(k,k',I)$,
we enumerate all $(i,j)$ for which $\AAA[i,(k,k',I)]=1$ and
$\BBB[(k,k',I),j]=1$, and mark these pairs $(i,j)$
as ``bad''.  There are $O((d^2n\log n)\cdot n^{1-\eps}/d^2)=\OO(n^{2-\eps})$
bad pairs, and for each bad pair $(i,j)$, we can compute
its corresponding entry of the $(\min,+)$-product in $O(d)$ time by brute force.  This takes $\OO(dn^{2-\eps})$ time.  For the remaining good pairs $(i,j)$, it suffices to
keep only the high triples, and the number of high triples is $O(\frac{d^2n\log n}{n^{1-\eps}/d^2})=\OO(d^4n^\eps)$.
So, in the remaining \ColorBMM\ instance, $\AAA$ and $\BBB$ have dimensions $\OO(n)\times \OO(d^4n^\eps)$ and $\OO(d^4n^\eps)\times \OO(n)$.  
\end{proof}

\begin{theorem}\label{thm:distincteq}
If \ColorBMM\ for two $n\times n$ Boolean matrices could be solved in $\OO(n^{9/4-5\eps/4})$ time, then
\APSP\ could be solved in $\OO(n^{3-\eps})$ time.

More generally, if \ColorBMM\ for two $n\times n$ Boolean matrices with $O(n^\alpha)$ colors  
could be solved in $\OO(n^{2+\alpha-\eps})$ time for some constant $\alpha\le (1-\eps)/4$, then
\APSP\ could be solved in $\OO(n^{3-\eps})$ time.
\end{theorem}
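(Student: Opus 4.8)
The plan is to mirror the proof of Theorem~\ref{thm:apsp:monotri}, using Lemma~\ref{lem:distincteq} as the engine. First I would recall the classical equivalence~\cite{Fischer71} between \APSP\ and computing the $(\min,+)$-product of two $n\times n$ real matrices, and observe that this product splits into $n/d$ instances of the $(\min,+)$-product of an $n\times d$ and a $d\times n$ matrix: partition the shared dimension into $n/d$ blocks of width $d$, and take the entrywise minimum of the $n/d$ resulting $n\times n$ matrices (the $\OO(n^3/d)$ cost of this step is dominated by the oracle calls below, since the hypothesis forces $\alpha\ge\eps$ --- \ColorBMM\ has $\Theta(n^2)$-size output --- and then $n^3/d=n^{3-\alpha}=\OO(n^{3-\eps})$). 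By Lemma~\ref{lem:distincteq}, each rectangular $(\min,+)$-product reduces, after $\OO(dn^{2-\eps})$ extra work, to a single \ColorBMM\ instance whose matrices have dimensions $n\times\OO(d^4n^\eps)$ and $\OO(d^4n^\eps)\times n$ and whose color set has size $O(d)$.

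Next I would make these \ColorBMM\ instances square so they can be handed to the oracle, which is stated for $n\times n$ inputs. An instance with outer dimension $n$ and inner dimension $M$ can be padded to an $N\times N$ instance with $N=\max(n,M)$: appending all-zero columns to $A$ and all-zero rows to $B$ (giving the new inner indices arbitrary colors) introduces no new witnesses, so the set of witness-colors of each pair $(i,j)$ is unchanged, and appending all-zero rows to $A$ and all-zero columns to $B$ only creates spurious output entries, which are discarded; this costs $\OO(n^2)$. Now set $d=n^\alpha$. The assumption $\alpha\le(1-\eps)/4$ is exactly what guarantees that the inner dimension $\OO(d^4n^\eps)=\OO(n^{4\alpha+\eps})$ is $\OO(n)$, hence $N=\OO(n)$, while the number of colors is $O(n^\alpha)=\OO(N^\alpha)$. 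Each squarified instance is therefore solved by the oracle in $\OO(N^{2+\alpha-\eps})=\OO(n^{2+\alpha-\eps})$ time (the $\OO$ absorbs the polylogarithmic factors hidden in $N$, as the exponent is a constant).

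Putting the pieces together, \APSP\ reduces to $n/d=n^{1-\alpha}$ oracle calls plus $\OO((n/d)\cdot dn^{2-\eps})=\OO(n^{3-\eps})$ additional work, for total time $\OO(n^{1-\alpha}\cdot n^{2+\alpha-\eps}+n^{3-\eps})=\OO(n^{3-\eps})$; this establishes the general (second) statement. The first statement follows by instantiating $\alpha=(1-\eps)/4$, which satisfies $\alpha\le(1-\eps)/4$ (with equality) and for which $2+\alpha-\eps=9/4-5\eps/4$, together with the trivial observation that an oracle for \ColorBMM\ on $n\times n$ matrices with an arbitrary number of colors in particular handles the $O(n^\alpha)$-color case within the same time bound.

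Given Lemma~\ref{lem:distincteq}, I do not anticipate a genuine obstacle; the argument is essentially bookkeeping, matching the template already used for \AEMonoTri. The two points that deserve a moment of care are (i) confirming that the zero-padding used to square the \ColorBMM\ instances leaves every witness-color set intact, and (ii) checking that $\alpha\le(1-\eps)/4$ is precisely the condition that keeps the inner dimension $\OO(d^4n^\eps)$ within $\OO(n)$, so that the squared instances have side length $\OO(n)$ and the oracle bound $\OO(n^{2+\alpha-\eps})$ applies.
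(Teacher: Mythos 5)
Your proposal is correct and follows essentially the same route as the paper's proof: split the square $(\min,+)$-product into $n/d$ rectangular products, apply Lemma~\ref{lem:distincteq} to each, and set $d=n^\alpha$ so that $\alpha\le(1-\eps)/4$ keeps the inner dimension $\OO(d^4n^\eps)$ within $\OO(n)$. The extra bookkeeping you supply (the zero-padding to square matrices and the observation that the output size forces $\alpha\ge\eps$, taming the $O(n^3/d)$ combination cost) is correct and merely makes explicit steps the paper leaves implicit.
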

\begin{proof}
The $(\min,+)$-product of two $n\times n$ real matrices,
and thus \APSP, reduces to $n/d$ rectangular $(\min,+)$-products between $n \times d$ matrices and $d \times n$ matrices.
Choose $d=n^\alpha$.  Since $\alpha\le (1-\eps)/4$, we have $d^4n^\eps\le n$. Then the theorem is immediately implied by Lemma~\ref{lem:distincteq}.
\end{proof}

\subsection{\OV\ \texorpdfstring{$\rightarrow$}{rightarrow} \ColorBMM}

We can similarly reduce \OV\ to \ColorBMM, since the former also reduces to computing expressions involving ANDs of ORs (as exploited in Abboud, Williams, and Yu's \OV\ algorithm~\cite{abboud2014more}).  In fact, the reduction from \OV\ is even simpler, and more efficient, than the reduction from \APSP.

\begin{lemma}\label{lem:colorbmm:ov}
One instance of \OV\ for $n$ Boolean vectors in $f$ dimensions
reduces in $O(ndf)$ time to
one instance of \ColorBMM\ for 
an $(n/d)\times (d^2f)$ and a $(d^2f)\times (n/d)$ Boolean
matrix with $d^2$ colors for any given $d\le n$.
\end{lemma}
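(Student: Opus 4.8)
The plan is to use the standard group-splitting trick underlying Abboud, Williams, and Yu's \OV\ algorithm~\cite{abboud2014more}, which rewrites a group-versus-group orthogonality query as an AND of ORs, and then to observe that this AND-of-ORs structure is precisely what \ColorBMM\ computes. So the reduction should be essentially immediate once \ColorBMM\ is taken as the target problem; the real content, as remarked in the introduction, lies in having isolated \ColorBMM\ rather than in this final step.

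First I would partition the $n$ input vectors into $n/d$ groups $G_1,\dots,G_{n/d}$ of $d$ vectors each (assuming for simplicity that $d\mid n$; otherwise pad with all-ones vectors). Write $u_{i,s}$ for the $s$-th vector of $G_i$, where $s\in[d]$. The crucial reformulation is: for a fixed pair $(i,j)$, the groups $G_i$ and $G_j$ contain \emph{no} orthogonal pair if and only if for every $(s,t)\in[d]^2$ the vectors $u_{i,s}$ and $u_{j,t}$ are not orthogonal, i.e.\ $\bigwedge_{(s,t)\in[d]^2}\bigvee_{\ell\in[f]}\big(u_{i,s}[\ell]\wedge u_{j,t}[\ell]\big)$ holds.

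Then I would build the \ColorBMM\ instance with inner index set $[d]\times[d]\times[f]$ (so the inner dimension is $d^2f$), with color map $\col((s,t,\ell))=(s,t)$ (so there are exactly $d^2$ colors), and with $\AAA[i,(s,t,\ell)]=u_{i,s}[\ell]$ and $\BBB[(s,t,\ell),j]=u_{j,t}[\ell]$. Since $\AAA[i,(s,t,\ell)]\wedge\BBB[(s,t,\ell),j]=u_{i,s}[\ell]\wedge u_{j,t}[\ell]$, the set of colors witnessed for the pair $(i,j)$ is precisely $\{(s,t): \exists\,\ell\ \ u_{i,s}[\ell]\wedge u_{j,t}[\ell]\}$, and this equals the full color set $\Gamma=[d]^2$ exactly when $G_i$ and $G_j$ contain no orthogonal pair. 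Hence the original \OV\ instance is a YES-instance if and only if the \ColorBMM\ output for some pair $(i,j)$ reports that not all colors are covered, and scanning the $(n/d)^2$ output bits completes the reduction. Writing down $\AAA$ and $\BBB$ takes $O((n/d)\cdot d^2f)=O(ndf)$ time, matching the claimed bound.

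I expect no genuine obstacle; the only wrinkle is the convention on whether the two orthogonal vectors must be distinct, since the construction above also inspects the ``diagonal'' pairs $s=t$, $i=j$, which amount to testing whether $u_{i,s}$ is orthogonal to itself. I would dispatch this by a trivial preprocessing step: if some input vector is the zero vector and $n\ge 2$, output YES immediately; otherwise no vector is orthogonal to itself, so including the diagonal pairs in the construction is harmless. (Alternatively, one may invoke the folklore equivalence between the ``distinct'' and ``not-necessarily-distinct'' versions of \OV.)
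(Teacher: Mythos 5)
Your proposal is correct and follows essentially the same route as the paper's proof: partition into $n/d$ groups of $d$ vectors, express a group-versus-group non-orthogonality check as an AND over the $d^2$ within-group index pairs of an OR over the $f$ coordinates, and encode this as a \ColorBMM\ instance with inner index set $[d]\times[d]\times[f]$ and color map $(s,t,\ell)\mapsto(s,t)$. The only (harmless) difference is that you work with the one-set formulation of \OV\ and explicitly dispatch the diagonal/self-orthogonality issue, whereas the paper implicitly uses the equivalent two-set formulation.
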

\begin{proof}
Divide the input set $A$ and $B$ into groups $A_1,\ldots,A_{n/d}$
and $B_1,\ldots,B_{n/d}$ of $d$ Boolean vectors each.  Let $A[i,k,s]$ be the
$s$-th bit of the $k$-th vector in $A_i$ and let $B[j,\ell,s]$ be the
$s$-th bit of the $\ell$-th vector in $B_j$.

For each $i,j\in [n/d]$, there are no orthogonal pairs of vectors
in $A_i\times B_j$ iff
\[
 \bigwedge_{k,\ell\in [d]}\bigvee_{s\in [f]} (A[i,k,s]\wedge B[j,\ell,s]).
\]

Thus, the answer can be determined by solving \ColorBMM\ on the following matrices $\AAA$ and~$\BBB$:
\begin{enumerate}
\item For each $i\in [n/d]$, $k,\ell\in [d]$, and $s\in [f]$,
let $\AAA[i,(k,\ell,s))]=A[i,k,s]$.
\item For each $j\in [n/d]$, $k,\ell\in [d]$, and $s\in [f]$,
let $\BBB[(k,\ell,s),j]=B[j,\ell,s]$.
\item Define $\col((k,\ell,s))=(k,\ell)$.
\end{enumerate}
The inner dimension (i.e., the number of triples $(k,\ell,s)$) is $d^2f$.
\end{proof}

\begin{theorem}\label{thm:colorbmm:ov}
If \ColorBMM\ for two $N\times N$ Boolean matrices could be solved in $\OO(N^{3-\eps})$ time, then
\OV\ for $n$ Boolean vectors in $f$ dimensions could be solved in $\OO(f^{O(1)}n^{2-2\eps/3})$ time.

More generally, if \ColorBMM\ for two $N\times N$ Boolean matrices with $N^\alpha$ colors  
could be solved in $\OO(N^{2+\alpha-\eps})$ time for some constant $\alpha\le 1$, then
\OV\ for $n$ Boolean vectors in $f$ dimensions could be solved in $\OO(f^{O(1)}n^{2-2\eps/(2+\alpha)})$ time.
\end{theorem}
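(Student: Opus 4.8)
The plan is to plug Lemma~\ref{lem:colorbmm:ov} into the hypothesized \ColorBMM{} algorithm with a balanced choice of the free parameter $d$. First I would invoke Lemma~\ref{lem:colorbmm:ov}: an \OV{} instance on $n$ vectors in $f$ dimensions reduces, in $O(ndf)$ time, to a \emph{single} \ColorBMM{} instance on an $(n/d)\times(d^2f)$ and a $(d^2f)\times(n/d)$ Boolean matrix with $d^2$ colors. The only substantive work is to convert this rectangular instance into a square $N\times N$ instance with at most $N^\alpha$ colors, so that the assumed $\OO(N^{2+\alpha-\eps})$-time algorithm applies, and then to check that the $O(ndf)$ preprocessing is dominated by the \ColorBMM{} call.

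For the general (parameterized) claim I would take $d=n^{\alpha/(2+\alpha)}$, so that the outer dimension is $N:=n/d=n^{2/(2+\alpha)}$ and the number of colors is exactly $d^2=N^\alpha$. Since $\alpha\le 1$ we have $d\le n^{1/3}$, and (recalling that in the \OV{} setting $f$ is subpolynomial, or more precisely as long as $f\le n^{2(1-\alpha)/(2+\alpha)}$) the inner dimension $d^2f$ is at most $N$; I can then zero-pad the inner dimension up to $N$ without changing the number of colors or creating any new covered colors, obtaining a genuine $N\times N$ \ColorBMM{} instance with $N^\alpha$ colors. Running the assumed algorithm costs $\OO(N^{2+\alpha-\eps})=\OO(n^{2-2\eps/(2+\alpha)})$, and the $O(ndf)=O(f\cdot n^{2-2/(2+\alpha)})$ reduction time is absorbed into the $f^{O(1)}$ factor (using $\eps<1$). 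For the first, unparameterized claim I would instead set $\alpha=1$, where the bound ``$N$ colors'' is vacuous (the inner dimension of a square \ColorBMM{} instance already caps the color count), and balance the two matrix dimensions by taking $d=(n/f)^{1/3}$; then the instance is already square with $N=n^{2/3}f^{1/3}$ and at most $N$ colors, so $\OO(N^{3-\eps})=\OO(f^{1-\eps/3}n^{2-2\eps/3})=\OO(f^{O(1)}n^{2-2\eps/3})$, with the $O(f^{2/3}n^{4/3})$ reduction cost negligible.

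The step I expect to need the most care is the rectangular-to-square reduction for \ColorBMM{}. Unlike ordinary Boolean matrix multiplication, one cannot split the contracted (inner) dimension of a \ColorBMM{} instance into blocks and combine the per-block answers: the predicate $\{\col(k):A[i,k]\wedge B[k,j]\}=\Gamma$ is not decomposable over a partition of the $k$-range, since each block only reveals the colors it covers and one would need their union. Consequently the argument above must only \emph{pad} (never tile) the inner dimension, which is legitimate precisely when $d^2f\le n/d$; this forces $d$ to be chosen small enough and is the reason the bound carries an explicit $f^{O(1)}$ factor (harmless in the standard regime $f=n^{o(1)}$). The remaining points---that zero-padding preserves both the set of covered colors for every pair $(i,j)$ and the total number of colors, and that the $O(ndf)$ preprocessing is dominated by the \ColorBMM{} call in each case---are routine.
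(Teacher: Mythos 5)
Your proposal is correct and follows essentially the same route as the paper: invoke Lemma~\ref{lem:colorbmm:ov}, set $d=n^{\alpha/(2+\alpha)}$ so that $d^2=(n/d)^\alpha$, pad the rectangular instance to a square one, and apply the hypothesized algorithm. The only cosmetic difference is that the paper absorbs $f$ uniformly by padding both dimensions to $N=f^{O(1)}n/d$ (which sidesteps your case split at $\alpha=1$ and the condition $d^2f\le n/d$), whereas you pad only the inner dimension; both are valid, and your remark that the inner dimension can be padded but not tiled is a correct and worthwhile observation.
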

\begin{proof}
Choose $d$ so that $d^2 = (n/d)^\alpha$, i.e., $d=n^{\alpha/(2+\alpha)}$.  Since $\alpha\le 1$, we have $d \le n^{1/3}$ and so
$d^2\le n/d$. Thus, if there is an $\OO(N^{2+\alpha-\eps})$ time algorithm for \ColorBMM\ with $O(N^\alpha)$ colors, we can solve \OV\ in 
 $\OO((f^{O(1)}n/d)^{2+\alpha-\eps})=\OO(f^{O(1)}n^{2-2\eps/(2+\alpha)})$ time.
\end{proof}

\section{Triangle Collection and Triangle-Collection*}
\label{sec:TC}

We first show reductions between variants of \TCollong\ in this section. Their relationships are depicted in Figure~\ref{fig:TC}. 
\begin{figure}[ht]
    \centering
    \begin{tikzpicture}
    
        \node at(0, 0)  [anchor=center] (TC){\TCol};
        \node at(0, 4)  [anchor=center] (TC1){\TC};
        \node at(6, 4)  [anchor=center] (TC2){\TCss};
       \node at(6, 0)  [anchor=center] (TCl){\TCsss};
       
       \draw[->,line width=1pt] (TC1) to[]  node[left] {} (TC);
	   \draw[->,line width=1pt] (TC2) to[]  node[left] {} (TC1);
	   \draw[->,line width=1pt, bend right] (TCl) to[]  node[left] {} (TC);
	   \draw[->,line width=1pt] (TC1) to[]  node[left] {} (TCl);
	   
       \draw[->, line width=1pt] (TCl) to[]  node[right] {Thm.~\ref{thm:TCLC:TC**}} (TC2);
       \draw[->,dashed, line width=1pt, bend right] (TC) to[]  node[above] {Thm.~\ref{thm:TC:TCLC}} (TCl);

    \end{tikzpicture}
    \caption{The reductions for variants of \TCol. All unlabeled arrows follow from problem definitions. The reduction from \TCsss\ to \TCss\ has a $\poly(p)$ loss if the parameter is  $p$. The dashed arrow represents a sub-cubic reduction that only holds when \TCsss\ has parameter $n^\eps$ for $\eps > 0$. All reductions also hold for the All-Color-Pair variants. }
    \label{fig:TC}
\end{figure}
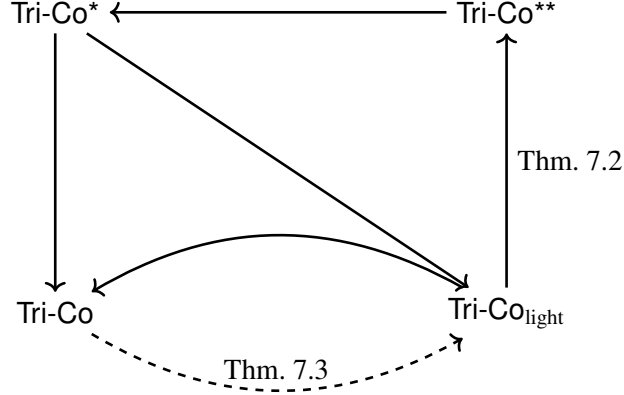

The original version of \TCollong\ (\TCol) as defined in \cite{abboud2018matching} is as follows: given a graph $G=(V,E)$ with node colors $\col:V\mapsto [K]$, determine whether for all triples of distinct colors $a,b,c\in K$ there exists some triangle $x,y,z\in V$ such that $\col(x)=a,\col(y)=b,\col(z)=c$.

We first give a simple proof that the original \TCol\ is equivalent to the tripartite version of the problem, whose definition we recall here:
\TriangleCollection*

Let us call this tripartite-\TCol\ for now. 
The goal of the lemma below is to show that \TCol\ is equivalent to its tripartite version. After the lemma, when we talk about \TCol\ we will mean tripartite-\TCol. 

\begin{lemma} The Original \TCol\ (resp. \ACPTCb) and tripartite-\TCol\ (resp. tripartite-\ACPTCb) are equivalent.\end{lemma}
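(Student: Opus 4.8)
The plan is to prove the equivalence in both directions by simple, local gadget constructions that preserve the set of ``color triples that form a triangle.''

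\textbf{Tripartite-\TCol\ reduces to the original \TCol.} This direction is essentially immediate: a tripartite instance \emph{is} a special case of the general instance once we observe that the color sets $K_A,K_B,K_C$ can be taken to be pairwise disjoint (the hypothesis already gives $K_A\cap K_B\cap K_C=\emptyset$, and we can relabel colors so that the three sets are fully disjoint, tagging each color with its side). Then for distinct colors $a,b,c$ in the combined palette, a triangle with those three colors can only use one node from each side (since every edge goes between two different sides, a triangle must hit all three sides), so ``for all triples of distinct colors there is a rainbow triangle'' in the general sense is equivalent to ``for all $(a,b,c)\in K_A\times K_B\times K_C$ there is a triangle'' — triples that are not transversal to $K_A\times K_B\times K_C$ are vacuously satisfiable only if... here we must be slightly careful, so the cleaner move is: add three isolated ``dummy'' nodes of each color-side so that any triple involving two colors from the same side is trivially a \emph{yes} (we can actually force this by adding a universal triangle on a fresh color in each pair of sides). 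I would phrase it via the standard trick of making all ``degenerate'' triples automatically satisfied, so the only constraints that bite are the transversal ones, which are exactly the tripartite constraints. The All-Color-Pairs version is handled identically, pair by pair.

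\textbf{Original \TCol\ reduces to tripartite-\TCol.} Given a general instance $G=(V,E)$ with coloring $\col:V\to[K]$, build a tripartite graph on $A\sqcup B\sqcup C$ where each part is a copy of $V$; put $x_A\in A$, $y_B\in B$, $z_C\in C$, each inheriting the color of its original vertex but \emph{tagged} with the side. For each edge $uv\in E$ add all cross-side edges between the corresponding copies ($u_Av_B$, $u_Av_C$, $u_Bv_C$, etc.), and in addition add, for every $u\in V$, the three edges $u_Au_B$, $u_Au_C$, $u_Bu_C$ making each ``vertex triple'' a triangle. Then a triangle in the tripartite graph with colors $(a_A,b_B,c_C)$, where $a,b,c\in[K]$ are distinct, corresponds exactly to a triangle $xyz$ in $G$ with $\col(x)=a,\col(y)=b,\col(z)=c$; the added ``self'' edges let a triangle on three vertices of distinct colors in $G$ realize its color triple, and conversely a rainbow tripartite triangle projects to an honest rainbow triangle in $G$ (if two of its three vertices were the same original vertex their colors would coincide, contradicting distinctness). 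So ``all distinct triples in $[K]$'' in $G$ is equivalent to ``all triples in $K_A\times K_B\times K_C$ with distinct underlying colors'' in the tripartite instance; triples with a repeated underlying color are made trivially satisfiable by adding, for each pair $(a,b)$ of colors, a universal gadget triangle. Again the ACP variant goes through verbatim, outputting the answer for each pair $(a,b)$.

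\textbf{Main obstacle.} The only real subtlety is bookkeeping around \emph{degenerate} triples — triples $(a,b,c)$ where two of the colors are equal, or in the tripartite setting triples where the underlying (untagged) colors are not distinct. The original \TCol\ only quantifies over distinct triples, while the tripartite version quantifies over all of $K_A\times K_B\times K_C$; one must make sure the extra triples the tripartite problem insists on are always satisfiable (via small universal gadgets) and that conversely the degenerate triples the original problem ignores never get ``imported'' as real constraints. This is a routine but slightly fiddly case analysis, and it is where I would spend the most care; everything else is a direct gadget argument with no running-time subtleties, since all blowups are by constant (or at most $n^{o(1)}$) factors.
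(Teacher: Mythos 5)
Your high-level plan is the same as the paper's (tag-and-copy in one direction, special-case-plus-gadgets in the other), and you correctly locate the entire difficulty in the degenerate triples. But the gadgets you propose for those degenerate triples — which are the only substantive content of the lemma — do not work as stated. A ``universal triangle on a fresh color'' cannot satisfy the constraint for a triple $(a,b,c)$: the problem demands a triangle whose three node colors are \emph{exactly} $a$, $b$, $c$, so a triangle carrying fresh colors is irrelevant to that triple. Likewise, ``for each pair $(a,b)$ of colors, a universal gadget triangle'' is never instantiated; for a triple such as $((a,A),(a,B),(c,C))$ you must exhibit a triangle with one node of each of those three specific colors, and nothing in your construction produces one (your self-edges $u_Au_B,u_Au_C,u_Bu_C$ only handle the case where all three underlying colors equal $\col(u)$, not the case of exactly two repeated colors).

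The fixes are short but they have to be written down. For tripartite-to-original, the paper adds, for each part $X\in\{A,B,C\}$, a clique on one fresh node $k^X$ per color $k\notin K_X$; since any non-transversal triple of distinct colors avoids some $K_X$ entirely, the corresponding three clique nodes form a triangle with exactly those colors, and these cliques create no new triangles for transversal triples. For original-to-tripartite, before copying, the paper adds for each color $k$ three nodes $k,k',k''$ with edge $(k',k'')$ and, for each pair of colors $k,\ell$, edges $(k,\ell')$ and $(k,\ell'')$; then any triple with a repeated color $(a,b,b)$ is witnessed by the triangle $a\,b'\,b''$, after which the three-copies construction (cross edges only; your self-edges are unnecessary) finishes the argument. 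Without constructions of this kind your proof does not establish either direction, even though the skeleton and the identification of the obstacle are right.
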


\begin{proof}
Suppose we are given an instance $G=(V,E)$ with colors $\col:~V\mapsto K$ for a set $K$, of \TCol,\ and we want to know if for all triples of {\em distinct} colors there is a triangle of that color triple. We will add to $G$ the following tripartite graph. For every color $k\in K$, add three nodes $k,k',k''$, and add the edge $(k',k'')$. For every pair of colors $k,\ell\in K$, add edges $(k,\ell'),(k,\ell'')$. This new part of the graph has the property that every triple of colors, at least two of which are the same, has a triangle of that triple. The new part does not add any new triangles for distinct color triples. Let's call the new graph $G^*$.

Now we can assume that in our given graph $G^*=(V^*,E^*)$ with colors $\col:~V^*\mapsto K$,  we want to solve the problem of whether every color triple (not necessarily distinct) contains a triangle.

We now create an instance $G'$ of tripartite-\TCol\ by simply creating three copies of $V^*$, $V_1,V_2,V_2$ where $V_i=\{v_i~|~v\in V^*\}$.
We create three sets of disjoint color sets $K_1,K_2,K_3$ where $K_i=\{(k,i)~|~k\in K\}$, so that if for $v\in V^*$, $\col(v)=k$, then in $G'$, for each $i=1,2,3$, $\col(v_i)=(k,i)$.
For every edge $(u,v)\in E^*$ we add edges $(u_i,v_j)$ for all $i\neq j$, $i,j\in [3]$. 

Then for a triple of colors $(a,1)\in K_1, (b,2)\in K_2, (c,3)\in K_3$, there is a triangle $x_1\in V_1, y_2\in V_2, z_3\in V_3$ with those colors iff color triple $a,b,c\in K$ has a triangle $x,y,z$ in $G^*$. Thus solving tripartite-\TCol\ in $G'$ solves (non-distinct color triple) \TCol\ in $G^*$, and hence \TCol\ in $G$.

Now let us reduce tripartite-\TCol\ to \TCol.\ Given an instance $G=(V,E)$ of tripartite-\TCol\ with partitions $A,B,C$ and colors $\col:A \mapsto K_A$, $\col:B \mapsto K_B$, $\col:C \mapsto K_C$, we only need to make sure that every triple of distinct colors that has at least one pair of colors in $K_A$ or in $K_B$ or in $K_C$ contains a triangle. A way to do this was already shown in \cite{abboud2018matching}. There are several way to do this. For instance, for each choice $X\in \{A,B,C\}$, add a node $k^X$ for every color $k\in (K_A\cup K_B\cup K_C)\setminus K_X$ and add edges between all pairs of nodes $k^X$ and $\ell^X$.
This adds three cliques, and ensures that for every triple of distinct colors that are not from $K_A\times K_B\times K_C$, there is a triangle. At the same time, this does not change whether some triple of distinct colors from $K_A\times K_B\times K_C$ has a triple or not.

The equivalence between  \ACPTCb\ and  tripartite-\ACPTCb\ follows from the same reductions. 
\end{proof}

In light of the above equivalence, from now on when we refer to \TCol,\ we will mean its {\em tripartite} version, as defined above.

Recall the definition of \TC\ defined by Abboud, Vassilevska W. and Yu~\cite{abboud2018matching}. 

\TriangleCollectionStar*

Also recall the definitions of the variants \TCss\ and \TCsss. 

\TriangleCollectionStarStar*

\TriangleCollectionStarStarStar*

We now show that \TC,\ \TCss\ and \TCsss\ are equivalent up to $n^{o(1)}$ factors when their parameters are $n^{o(1)}$. It suffices to show that \TCsss\ reduces to \TCss,\ as \TCss\ is a special case of \TCsss,\ and the original problem \TC\ is sandwiched between them.

\begin{theorem}
\label{thm:TCLC:TC**}
An instance of \TCsss\ (resp. \ACPTCsss) with parameter $p$ and $n$ colors (and hence $\leq np$ nodes) can be reduced in $O(n^2p^3)$ time to an instance of \TCss\ (resp. \ACPTCss) on $O(np^3)$ nodes and $n$ colors, and parameter $t=p^3$.
\end{theorem}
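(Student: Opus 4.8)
The plan is to cut a \TCsss\ instance into $p^3$ vertex-disjoint blocks indexed by triples $(i,j,k)\in[p]^3$, where block $(i,j,k)$ retains, from each color class in $A$ its $i$-th vertex, from each color class in $B$ its $j$-th vertex, and from each color class in $C$ its $k$-th vertex, inheriting all edges of $G$ between the retained vertices. Concretely: I would first assume WLOG that $K_A,K_B,K_C$ are pairwise disjoint (otherwise tag each color with the index of the partition it occurs in, which leaves the number of vertices unchanged and multiplies the number of colors by at most $3$), and fix, for every color $a$, an arbitrary enumeration $v_a^{(0)},\dots,v_a^{(p_a-1)}$ of the $p_a\le p$ vertices of color $a$. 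For each $(i,j,k)\in[p]^3$ the block $G_{i,j,k}$ receives a fresh copy of $v_a^{(i)}$ for every $a\in K_A$ with $i<p_a$, of $v_b^{(j)}$ for every $b\in K_B$ with $j<p_b$, and of $v_c^{(k)}$ for every $c\in K_C$ with $k<p_c$ (keeping the original colors and the tripartition), together with the edge between two such copies whenever the corresponding vertices of $G$ are adjacent. The output \TCss\ instance is the disjoint union of the $G_{i,j,k}$, with parameter $t=p^3$; since each block keeps at most one vertex of each color and the color sets are partition-disjoint, every block has at most one vertex of every color, so this is a legal \TCss\ instance. The ACP version uses the identical construction.

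For correctness I would argue triple by triple. Fix colors $a\in K_A$, $b\in K_B$, $c\in K_C$. Since the blocks are vertex-disjoint, any triangle of the new graph lies inside a single $G_{i,j,k}$, and the only vertices of colors $a,b,c$ in $G_{i,j,k}$ are the copies of $v_a^{(i)},v_b^{(j)},v_c^{(k)}$, whose mutual adjacencies are inherited from $G$. Hence the new graph contains a triangle of colors $(a,b,c)$ iff there exist $i,j,k$ with $v_a^{(i)}v_b^{(j)}v_c^{(k)}$ a triangle of $G$, i.e.\ iff $G$ contains \emph{some} triangle with one vertex of color $a$, one of color $b$, and one of color $c$ --- exactly the \TCsss\ condition for this triple. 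Therefore the \TCss\ answer (quantified over all triples), and likewise the \ACPTCss\ answer for each pair $(a,b)$ (quantified over $c$), agrees with the \TCsss\ (resp.\ \ACPTCsss) answer.

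For the size and time bounds: there are exactly $|[p]^3|=p^3=t$ blocks and at most $3n=O(n)$ colors. A vertex $v_a^{(i)}$ with $a\in K_A$ belongs only to the blocks $G_{i,j,k}$ with its own fixed $i$ and arbitrary $j,k$ --- that is, to $p^2$ blocks --- and symmetrically for $B$- and $C$-vertices; since $G$ has at most $np$ vertices, the new graph has $O(np^3)$ vertices. Each edge of $G$ carries over to a block precisely when both index-coordinates of its endpoints are pinned and the third coordinate is free, i.e.\ to $p$ blocks; since $G$ has $O((np)^2)=O(n^2p^2)$ edges, the new graph has $O(n^2p^3)$ edges. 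Reading $G$ and writing out all vertices and edges of the disjoint union therefore takes $O(n^2p^3)$ time, as claimed.

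I expect no serious obstacle; the one point that genuinely forces the exponent is that all three index-coordinates must be pinned simultaneously. Pinning only two --- say $i$ and $j$, while collapsing each $C$-color to a single vertex --- would introduce spurious triangles, because a collapsed $C$-vertex could be adjacent to $v_a^{(i)}$ through one original $C$-vertex and to $v_b^{(j)}$ through a different one; this is why $t=p^3$ rather than $p^2$. The same ``pin all three'' requirement forces $\Theta(p^2)$ fresh copies of each vertex to make the family vertex-disjoint, which is exactly what gives the $O(np^3)$-vertex bound. Everything else is routine bookkeeping.
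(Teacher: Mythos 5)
Your proposal is correct and is essentially the paper's own construction: the paper likewise builds the disjoint union of $p^3$ graphs indexed by $(i,j,k)\in[p]^3$, where graph $(i,j,k)$ contains (a fresh copy of) the $i$-th vertex of each $A$-color, the $j$-th of each $B$-color, and the $k$-th of each $C$-color, with inherited edges. Your accounting of the vertex, edge, and time bounds and the triple-by-triple correctness argument match the paper's.
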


\begin{proof}
Given an instance $G$ of \TCsss\ with parameter $p$, recall by the definition of \TCol\ $G$ has partitions $A,B,C$ and the colors of $A,B,C$ are disjoint. Let's call the nodes $(v,i)$ where $v$ is a color and $i\leq p$, meaning the $i$th node of color $v$.

Let's create an instance $G'$ of \TCss\ with parameter $p^3$ as follows.
For every node $(a,i)$ of $A$ create $p^2$ copies $(a,i,j,k)$ for all $j,k\in [p]$.
Similarly, for every node $(b,j)$ of $B$ create $p^2$ copies $(b,i,j,k)$ for all $i,k\in [p]$, and for every node $(c,k)$ of $C$ create $p^2$ copies $(c,i,j,k)$ for all $i,j\in [p]$.

$G'$ will be a disjoint union of $p^3$ graphs: one for each choice of a triple $i,j,k\in [p]$.

For fixed $i,j,k\in [p]$, add an edge between each node $(a,i,j,k)$ in $A$ and each node $(b,i,j,k)$ in $B$ whenever $(a,i)$ and $(b,j)$ had an edge in $G$. Similarly, add an edge between each node $(a,i,j,k)$ in $A$ and each node $(c,i,j,k)$ in $C$ whenever $(a,i)$ and $(c,k)$ had an edge in $G$. Finally, add an edge between each node $(b,i,j,k)$ in $B$ and each node $(c,i,j,k)$ in $C$ whenever $(b,j)$ and $(c,k)$ had an edge in $G$.

Notice that for any triple of colors $a,b,c$, there is a triangle $(a,i,j,k),(b,i,j,k),(c,i,j,k)$ for some $(i,j,k)$ if and only if for some $(i,j,k)$ $(a,i),(b,j),(c,k)$ was a triangle in $G$.

Thus, $G'$ is a YES instance of \TCol\ if and only if $G$ is a YES instance of \TCol.\ The number of colors is the same, and the parameter is now $p^3$.

The reduction from \ACPTCsss\ to \ACPTCss\ is essentially the same. 
\end{proof}

Now we show that \TCol\ reduces in truly subcubic time to \TCsss\ with parameter $n^\eps$ for any constant $\eps>0$. Thus, \TC\ with parameters $n^\eps$ is subcubically equivalent to \TCol.

\begin{theorem}
\label{thm:TC:TCLC}
For any $\eps> 0$, an instance of \TCol\ (resp. \ACPTC) on $n$ nodes can be reduced to an instance of \TCsss\ (resp. \ACPTCsss) on at most $n$ nodes and parameter $n^\eps$ in time $n^{3-\min\{\eps, 0.3\}+o(1)}$. 
\end{theorem}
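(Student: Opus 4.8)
The plan is to partition the color classes by size. Call a color \emph{heavy} if more than $n^{\eps}$ nodes carry it and \emph{light} otherwise; since $G$ has at most $n$ nodes, each of the three sides $A,B,C$ has fewer than $n^{1-\eps}$ heavy colors. Given a triple $(a,b,c)$, if all three are light then whether some triangle has these colors depends only on the subgraph $G'$ of $G$ induced by the light-colored nodes, and $G'$ is a bona fide \TCsss\ (resp.\ \ACPTCsss) instance with parameter $n^{\eps}$ and at most $n$ nodes. So I would output $G'$ as the single produced instance: one oracle call resolves all light-light-light triples at once (for \ACPTCsss\ it returns, for every light pair $(a,b)$, whether \emph{every} light $c$ gives a triangle). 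Every triple in which at least one color is heavy is instead decided inside the reduction itself, during a preprocessing phase; the final answer is assembled by combining the oracle's output with these explicit verdicts --- for \TCol\ via one global conjunction, and for \ACPTC\ via a per-pair conjunction (for light $(a,b)$, AND the \ACPTCsss\ verdict on the light $c$'s with the explicit verdicts on the heavy $c$'s; for $(a,b)$ with $a$ or $b$ heavy, the explicit verdicts alone suffice).

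The technical core is the preprocessing phase: for each of the $O(n^{1-\eps})$ heavy colors $a$, compute $\mathrm{Cov}_a=\{(b,c):\text{some triangle has colors }(a,b,c)\}$ (a heavy $b$ or $c$ being handled symmetrically), so as to keep the total time at $n^{3-\min\{\eps,0.3\}+o(1)}$. I would do this through a degree/density dichotomy. A \emph{low-degree pass} first handles every node of degree at most $\tau$ by brute force --- enumerate all pairs of its neighbors in the other two sides and mark the realized color triples --- at total cost $\OO(n\tau^2)$; this settles every triple possessing a triangle through a low-degree node. What is left is, for each heavy $a$ and each surviving color pair $(b,c)$, to detect a triangle $x\in A_a$, $y\in B_b$, $z\in C_c$ in which all of $x,y,z$ are high-degree; for this I would use (rectangular) Boolean matrix multiplication together with standard colored-witness recovery to read off, for each relevant color pair, which heavy colors cover it. Choosing $\tau$ and the matrix dimensions to balance the two costs yields the claimed exponent, with the cap of $0.3$ reflecting the best currently-known rectangular matrix-multiplication bounds; the whole reduction is deterministic via known deterministic Boolean-product witness-finding.

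I expect the matrix-multiplication resolution of the heavy-involving triples to be the main obstacle. One needs not just the \emph{existence} of a common high-degree neighbor but, in effect, the \emph{set of heavy colors} realized among the common neighbors of an edge, and carrying this out without paying an $n^{\omega}$ factor \emph{per} heavy color --- which would be fatal for small $\eps$ --- is exactly what forces the delicate packing of the heavy classes into few rectangular products and caps the achievable saving at $0.3$ (a thicker ``thin'' dimension would push the rectangular product above $n^{2}$). Everything else --- cycling through the three symmetric roles of $A,B,C$, bounding the number of surviving color pairs so the matrix-multiplication work is paid only on them, and reassembling the \TCol\ and \ACPTC\ outputs from the oracle answer together with the explicit verdicts --- is routine once this subroutine is available.
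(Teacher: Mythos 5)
Your first half coincides with the paper's proof: declare a color heavy if it has more than $n^{\eps}$ nodes, output the subgraph induced by the light-colored nodes as the \TCsss\ (resp.\ \ACPTCsss) instance, and resolve every triple involving a heavy color inside the reduction itself. The gap is in the one step you yourself flag as the main obstacle --- resolving the heavy-involving triples --- and the sketch you give for it does not work as stated. The degree dichotomy buys nothing here: to keep the low-degree pass within $n^{3-\eps}$ you must take $\tau\le n^{1-\eps/2}$, but the input graph can be dense (e.g.\ complete tripartite), so \emph{every} node can have degree exceeding $\tau$ and the ``surviving'' graph is the whole graph; there is no analogue of the sparse-graph bound on the number of high-degree nodes that would shrink the matrix dimensions. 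Likewise, ``colored-witness recovery'' is solving a harder problem than necessary: for a product in which several heavy classes are packed into one inner dimension you would need \emph{all} colors realized among the witnesses of each entry, which standard witness-finding (one witness per entry) does not provide, and which is essentially the \ColorBMM\ problem this paper proves is itself hard.

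What you are missing is that no witnesses and no packing are needed, because the product is done \emph{separately for each heavy color} and is cheap because it is thin. For a heavy color $c$ with $n^{p}$ nodes, form the $n\times n^{p}$ Boolean incidence matrix of edges into the class of $c$ and its $n^{p}\times n$ counterpart, and take their Boolean product: this tells you, for every node pair $(u,v)$, whether $u$ and $v$ have a common neighbor of color $c$, and for each such pair that is also an edge you mark the triple $(\col(u),c,\col(v))$ as covered. The cost is $n^{\omega(1,p,1)+o(1)}$ per color, and since at most $n^{1-p}$ colors can have $n^{p}$ nodes, the total is $\max_{\eps\le p\le 1} n^{\omega(1,p,1)+1-p+o(1)}\le n^{\omega(1,\eps,1)+1-\eps+o(1)}$ (via $\omega(1,p,1)\le\omega(1,\eps,1)+(p-\eps)$), which is $n^{3-\eps+o(1)}$ for $\eps<0.3$ and $n^{2.7+o(1)}$ otherwise. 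Your instinct about the $0.3$ threshold and about tying the thin dimension to the heavy classes is right, but without this per-color rectangular product the heavy case remains unresolved.
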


\begin{proof}
Let $G$ be a given instance of \TCol\ on $n$ nodes. %

Let's look at the number of nodes that are colored $c$ for each $c$; we will call this the frequency of the color. Consider every color $c$ with frequency at least $n^\eps$. The number of such colors is at most $\leq n^{1-\eps}$.

For each such color $c$, let $n^p$ be the number of nodes of color $c$. Create two matrices $A$ and $B$. $A$ is $n\times n^p$ and $A[v,i]$ is $1$ if node $v$ has an edge to the $i$th node of color $c$. Similarly, $B$ is $n^p\times n$ and $B[i,v]=1$ if node $v$ has an edge to the $i$th node of color $c$.
Multiply $A$ by $B$, and for every pairs of nodes $u,v$ connected by an edge, increment a counter for the color triple $color(u),c,color(v)$. Thus in $n^{\omega(1, p, 1)+o(1)}$ time we get all color triples that include $c$ and have a triangle, where $\omega(1, p, 1)$ is the rectangular matrix multiplication exponent for multiplying an $n \times n^p$ matrix with an $n^p \times n$ matrix. 

Thus, in $\max_{\eps \le p \le 1} n^{\omega(1, p, 1)+1-p+o(1)}$ time we can handle all colors with frequency at least $n^\eps$. Note that $\omega(1, p, 1) \le \omega(1, \eps, 1) + (p - \eps)$ since we can first split a pair of an $n \times n^p$ matrix and an $n^p \times n$ matrix to pieces of size $n \times n^\eps$ and $n^\eps \times n$ each, and aggregate the results together. Therefore, we can bound the running time as $\max_{\eps \le p \le 1} n^{\omega(1, p, 1)+1-p} \le \max_{\eps \le p \le 1} n^{\omega(1, \eps, 1) + (p - \eps)+1-p} = n^{\omega(1, \eps, 1) + 1 -\eps}$. If $\eps < 0.3$, then $\omega(1, \eps, 1) = 2$ as shown in \cite{legallurr}, so the reduction runs in $n^{3-\eps + o(1)}$ time; if $\eps \ge 0.3$, then we use $\omega(1, \eps, 1) + 1 -\eps \le \omega(1, 0.3, 1) + 1 - 0.3 = 2.7$, so the reduction runs in $n^{2.7 + o(1)}$ time.

What remains are the color triples that have all their colors with frequency less than $n^\eps$. We take the subgraph of $G$ consisting only of nodes with infrequent colors. This is an instance of \TCsss\ with parameter $n^{\eps}$. 

Thus if \TCsss\ with parameter $n^\eps$ can be solved in $O(n^{3-\delta})$ time for some $\delta>0$, then we can solve \TCol\ in time $O(n^{3-\min\{\eps,0.3\}+o(1)}+n^{3-\delta})$.

The reduction between the All-Color-Pairs versions is essentially the same. 
\end{proof}

Recall our main theorem is the following. 
\mainThm*

In Section~\ref{sec:mono_to_ACPTC}, we show a reduction from \AEMonoTri\ to \ACPTC. Combined with the reduction from \APSP\ to \AEMonoTri\ from Theorem~\ref{thm:apsp:monotri} and the reduction from \ThreeSUM\ to \AEMonoTri\ from Theorem~\ref{thm:3sum:monotri}, we obtain the \APSP\ and \ThreeSUM\ hardness in the main theorem. In Section~\ref{sec:colorBMM_to_ACPTC}, we show a reduction from \ColorBMM\ to \ACPTC. By combining with the reduction from \OV\ to \ColorBMM\ and unrolling the whole reduction, we in fact obtain a reduction from \OV\ to \TC. This shows the \OV\ hardness of the main theorem, and thus will conclude the proof of the main theorem. 

\subsection{From All-Edges Monochromatic Triangle to All-Color-Pairs Triangle Collection}
\label{sec:mono_to_ACPTC}
Here we show that the \AEMonoTri\ problem can be reduced to the \ACPTC\ problem. By our results above, it suffices to reduce  \AEMonoTri\ to \ACPTCsss\ with parameter $n^{o(1)}$. 

Recall that in \AEMonoTri\ we are given an $n$ node graph $G=(V,E)$ with colors $\col:E\mapsto [n^2]$ on the edges and we want to know for every edge $(u,v)\in E$ if there exists a $w\in V$ so that $(u,v),(v,w),(w,u)\in E$ (i.e. they form a triangle) and $\col(u,v)=\col(v,w)=\col(w,u)$.

\begin{theorem}
\label{thm:mono:ACPTC}
An $n$ node instance of \AEMonoTri\ can be reduced in $O(n^2\log n)$ time to an $O(n\log n)$-node instance of \ACPTCsss\ with parameter $O(\log n)$.  
\end{theorem}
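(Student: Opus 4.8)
The plan is to construct the \ACPTCsss\ instance directly from a bit-decomposition of the edge colours, so that the colour pair $(\hat u,\hat v)$ of the collection instance corresponds to the edge $uv$ of the \AEMonoTri\ instance, the third colour ranges over all candidate apices $w\in V$, and a triangle of colours $(\hat u,\hat v,\hat w)$ exists \emph{iff} $w$ does \emph{not} complete a monochromatic triangle on $uv$ (i.e.\ it is not the case that $uw,vw\in E$ and $\col(uw)=\col(vw)=\col(uv)$). Since \AEMonoTri\ asks, for each edge $uv$, whether \emph{some} $w$ completes such a triangle, its answer for $uv$ is then exactly the negation of the \ACPTCsss\ answer for the pair $(\hat u,\hat v)$, so running the collection algorithm and negating the relevant outputs finishes the reduction.

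\textbf{Construction.} First I would relabel the edge colours to lie in $[n^2]$ (there are fewer than $n^2$ edges, hence at most that many colours), so each colour is a string of $\ell=O(\log n)$ bits. For each $u\in V$ put into part $A$ a block of $O(\log n)$ nodes of colour $\hat u$: a ``bit checker'' $(u,t,b)$ for each bit position $t\in[\ell]$ and each $b\in\{0,1\}$, a ``pass-through'' node $(u,\ast)$, and a ``non-edge'' node $(u,\mathrm{ne})$; symmetrically in part $B$ for each $v\in V$; and in part $C$ a single node $\hat w$ for each $w\in V$. This already yields $O(n\log n)$ nodes and at most $O(\log n)$ nodes per colour, matching the parameter. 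The edges are wired to implement the identity that $w$ fails to complete a monochromatic triangle on $uv$ iff $uw\notin E \vee vw\notin E \vee \col(uw)\neq\col(uv) \vee \col(vw)\neq\col(uv)$, where each colour inequality is itself an OR over bit positions: $(u,t,b)$ is adjacent to $\hat w$ iff $uw\in E$ and the $t$-th bit of $\col(uw)$ is $b$, and adjacent to $(v,\ast)$ iff $uv\in E$ and the $t$-th bit of $\col(uv)$ is $1-b$ (so the triangle $(u,t,b)$--$(v,\ast)$--$\hat w$ certifies a bit-$t$ mismatch between $\col(uw)$ and $\col(uv)$); $(u,\mathrm{ne})$ is adjacent to $\hat w$ iff $uw\notin E$ and to $(v,\ast)$ iff $uv\in E$; $(u,\ast)$ is adjacent to every $\hat w$; and all this is mirrored on the $B$ side, with $(u,\ast)$ serving as the apex when a $B$-side node does the certifying. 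Crucially, an $A$--$B$ edge is created \emph{only} between a $\ast$-node on one side and a non-$\ast$-node on the other, and in particular $(u,\ast)$ is \emph{never} joined to $(v,\ast)$.

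\textbf{Verification and accounting.} I would then check the two directions. If $w$ completes a monochromatic triangle on $uv$, then no triangle of colours $(\hat u,\hat v,\hat w)$ exists: any candidate must use a $\ast$-node on exactly one side, and chasing the adjacency conditions forces either some bit of $\col(uv)$ to equal both $0$ and $1$ (bit-checker case) or forces $uw$ or $vw$ to be a non-edge (the $\mathrm{ne}$ case) — impossible. Conversely, if $w$ does not complete such a triangle, one of the four disjuncts holds and supplies an explicit witness triangle (a bit-checker triangle for a colour-mismatch disjunct, an $\mathrm{ne}$-triangle for a non-edge disjunct). The degenerate cases $w\in\{u,v\}$ need a one-line separate check: such $w$ can never be an apex, and the $\mathrm{ne}$-nodes already furnish a triangle for those $\hat w$, so they do not spuriously flip the answer to ``no''. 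Counting edges — $O(n^2)$ incident to the $\ast$-nodes, $O(n^2)$ incident to the $\mathrm{ne}$-nodes, and $O(m\log n)$ bit-checker edges — gives the $O(n^2\log n)$ construction time; the same reduction works verbatim for the ACP variant since we want a Boolean per colour pair.

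\textbf{Main obstacle.} The delicate point is calibrating the $A$--$B$ adjacency so that it is simultaneously rich enough to produce a witness triangle for each of the four disjuncts in the non-blocking case and lean enough that the blocking case produces \emph{no} triangle whatsoever; the ``$\ast$ on exactly one side'' rule, and above all the deliberate omission of the $(u,\ast)$--$(v,\ast)$ edge, is the linchpin — without it every pair $(\hat u,\hat v)$ with $uv\in E$ would trivially have a triangle for every $\hat w$. The rest is routine bookkeeping over bit positions.
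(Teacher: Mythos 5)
Your proposal is correct and follows essentially the same strategy as the paper's proof: decompose the edge colours into $O(\log n)$ bits, wire the collection instance so that a triangle of colours $(\hat u,\hat v,\hat w)$ exists exactly when $w$ fails to complete a monochromatic triangle on $uv$ (via a bit mismatch), and negate the ACP output per colour pair. The differences are only organizational — the paper takes a disjoint union of $O(\log n)$ graphs $G_t$, one per bit position, with two copies of each vertex in each, rather than a single graph with bit-checker/pass-through/non-edge gadget nodes, and it sidesteps non-edges by assuming tripartiteness, whereas your explicit $\mathrm{ne}$-nodes handle them directly.
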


\begin{proof}
Let $G=(V,E)$ be the instance of \AEMonoTri\ with edge colors $\col :E\mapsto [n^2]$.
Without loss of generality, $G$ is tripartite with node parts $A,B,C$.

We will build an instance of \ACPTCsss\ $H$.
For every $t\in [2\log n]$ we will create a graph $G_t$ and $H$ will be the disjoint union of these $O(\log n)$ graphs.

For a fixed $t\in [2\log n]$, the nodes of $G_t$ are as follows. $G_t$ will be a tripartite graph on parts $A',B',C'$. For every node $z\in C$ of $G$, we add a node $z$ to $G_t$; $z$ has color $z$  (we associate the nodes of $G$ with the integers $[n]$). These nodes will be in part $C'$. For every node $v\in A\cup B$ of $G$ we
create two copies of $v$ in $G_t$: $v_0$ and $v_1$, both having color $v$. If $v$ was in $A$, $v_0,v_1$ are in the partition $A'$, and if $v$ was in $B$, then $v_0,v_1$ are in partition $B'$.

For every pair of nodes of $G$, $v\in A\cup B$ and $z\in C$, if the $t$th bit of the color $\col(v,z)$ is $b$, then add an edge between $v_b$ and $z$.
For every pair of nodes of $G$, $v\in A$ and $v'\in B$, add an edge between $v_p$ and $v'_{p'}$ for all choices of $p$ and $p'$ s.t. $p\neq p'$. In addition, add an edge between between $v_b$ and $v'_{b}$
if the $t$th bit of the color $\col(v,v')$ is not $b$.

Notice that for every $z\in C, v\in A, v'\in B$ (which are now a triple of colors), $G_t$ does not have a triangle of these colors iff the $t$th bit of the colors of $(z,v),(z,v'),(v,v')$ in $G$ match. Thus, there are no triangles colored $z,v,v'$ in $H=\cup_t G_t$ if and only if the colors of $(z,v),(z,v'),(v,v')$ in $G$ match for all choices of $t$, and hence if and only if the colors are exactly the same.

By construction, the number of nodes of each color is at most $O(\log n)$. 
Thus we have reduced \AEMonoTri\ to an $O(n\log n)$-node instance of \ACPTCsss.

\end{proof}

\subsection{\ColorBMM\ \texorpdfstring{$\rightarrow$}{rightarrow} \ACPTCb\ and \OV\ \texorpdfstring{$\rightarrow$}{rightarrow} \TC}
\label{sec:colorBMM_to_ACPTC}
We first give a simple reduction from \ColorBMM\ to \ACPTCb.

\begin{lemma}
\label{lem:colorBMM:ACPTC}
Any instance of \ColorBMM~ on $n\times n$ matrices can be reduced in $O(n^2)$ time to an instance of \ACPTCb~ on $n$ nodes. 

If the maximum number of $k$ of any given color in the \ColorBMM~ instance is $n^{o(1)}$, then \ColorBMM{} also reduces to \ACPTC\ with parameter $n^{o(1)}$.\label{lem:colorbmmtoacptc}
\end{lemma}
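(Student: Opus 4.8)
The plan is to build a tripartite graph directly from the ColorBMM instance so that the triangle structure mirrors the witness-covering condition. Recall ColorBMM gives Boolean matrices $A, B$ of size $n \times n$ and a color map $\col : [n] \to \Gamma$, and for each $(i,j)$ we must decide whether $\{\col(k) : A[i,k] \wedge B[k,j]\} = \Gamma$. The natural encoding: let the left partition be $\{x[i] : i \in [n]\}$, the right partition be $\{z[j] : j \in [n]\}$, and the middle partition be $\{y[k] : k \in [n]\}$. We color $x[i]$ with a distinct color $a_i$, color $z[j]$ with a distinct color $b_j$, and color $y[k]$ with $\col(k)$ — so $K_A = \{a_i\}$, $K_B = \{b_j\}$, $K_C = \Gamma$, which are disjoint. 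Add edge $x[i]\,y[k]$ iff $A[i,k] = 1$, edge $y[k]\,z[j]$ iff $B[k,j] = 1$, and edge $x[i]\,z[j]$ for all $i,j$ (so every $(a_i, b_j)$ pair is "live"). Then for a color triple $(a_i, b_j, c)$ with $c \in \Gamma$, there is a triangle iff there exists $k$ with $\col(k) = c$, $A[i,k] = 1$, and $B[k,j] = 1$ — exactly the statement that color $c$ is a witness for $(i,j)$. Hence the ACP-Tri-Co answer for $(a_i, b_j)$ — namely "for all $c \in K_C$ there is a triangle of colors $(a_i, b_j, c)$" — is precisely "the witnesses of $(i,j)$ cover all of $\Gamma$", which is the ColorBMM answer for $(i,j)$. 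This graph has $3n$ nodes and is built in $O(n^2)$ time, giving the first claim.

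For the second claim, observe that the only colors shared by more than one node are in $K_C = \Gamma$: colors $a_i$ and $b_j$ each appear exactly once, while color $\gamma \in \Gamma$ appears on node $y[k]$ for each $k$ with $\col(k) = \gamma$. Thus the maximum number of nodes of any fixed color equals $\max_{\gamma} |\col^{-1}(\gamma)|$, which is exactly the quantity assumed to be $n^{o(1)}$ in the hypothesis. Therefore the constructed instance is a valid \TCsss\ (Light-Colors) instance with parameter $n^{o(1)}$, and the ACP variant is likewise an \ACPTCsss\ instance with that parameter; combining with Theorem~\ref{thm:TCLC:TC**} and Theorem~\ref{thm:intro:abboud}'s surrounding equivalences (\ACPTCsss\ with parameter $n^{o(1)}$ reduces to \ACPTC\ with parameter $n^{o(1)}$) yields the stated reduction to \ACPTC.

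I do not expect a genuine obstacle here — the reduction is essentially a direct syntactic translation, which is why it is labeled a Lemma rather than a Theorem. The one point requiring a little care is making sure the three color classes are genuinely disjoint (handled by using fresh symbols $a_i, b_j$ distinct from all of $\Gamma$) and that adding the complete bipartite $x$–$z$ edge set does not spuriously create triangles of the wrong color type: since $x[i]$ and $z[j]$ have unique colors, any triangle through edge $x[i]\,z[j]$ has its third vertex in the middle layer, so its color triple is always of the form $(a_i, b_j, \gamma)$, and no triangle can have two vertices of the same color. The parameter bookkeeping in the second part is the only place where one must explicitly track that the light-colors parameter is inherited from the color-multiplicity bound of the ColorBMM instance; this is immediate once the coloring is set up as above.
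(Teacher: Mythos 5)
Your construction is exactly the paper's: a tripartite graph with uniquely-colored left and right nodes indexed by the rows and columns, a middle node for each index $k$ colored $\col(k)$, edges given by the entries of $A$ and $B$ plus a complete bipartite left--right layer, so that the \ACPTCb{} answer for the color pair of $(i,j)$ is precisely the \ColorBMM{} answer, and the second claim handled by observing that the color-multiplicity bound makes this an \ACPTCsss{} instance with parameter $n^{o(1)}$ and then invoking Theorem~\ref{thm:TCLC:TC**}. The proposal is correct and matches the paper's proof in all essentials.
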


\begin{proof}
Let $A$ and $B$ be two $n\times n$ matrices that constitute an instance of \ColorBMM.
Let $\col: [n]\rightarrow [K]$ be the color function of the columns of $A$ and rows of $B$. Note that without loss of generality $K\leq O(n)$.

We will create an instance $G$ of \ACPTCb~ with $O(n)$ colors.

For every color $k\in [K]$ and every $t\in [n]$ such that $\col(t)=k$, create a node $k_{2,t}$ of color $k$.

For every $i\in [n]$, create a node $i_1$ of color $i_1$ and a node $i_3$ of color $i_3$, and add an edge between $i_1$ and $j_3 $ for all $i,j\in [n]$.

For every pair $i\in [n], t\in [n]$ with $\col(t)=k$, add an edge $(i_1,k_{2,t})$ if $A[i,t]=1$ and add an edge $(i_3,k_{2,t})$ if $B[t,i]=1$.

Notice that for a fixed triple $i\in [n],j\in [n],k\in [K]$, there is a triangle with colors $i_1,k,j_3$ if and only if there is some $t\in [n]$, $A[i,t]\cdot B[t,j]=1$. 

\ACPTCb~ asks for every pair of colors $i_1,j_3$ to compute whether there is some $k$ with no triangles of color triple $i_1,j_3,k$, conversely, whether for all colors $k$ there is a triangle with color triple $i_1,j_3,k$, i.e. whether for all $i,j,k$ there is some $t$ of color $k$ such that $A[i,t]\cdot B[t,j]=1$.

We get an instance of \ACPTCb~ with $O(n)$ nodes and $O(n)$ colors.

Now suppose that the number of $k$ in any given color of the \ColorBMM~ instance is at most $T\leq n^{o(1)}$. In this case, the number of nodes of each color in the above reduction is at most $n^{o(1)}$, so we actually get an instance of \ACPTCsss\ with parameter $n^{o(1)}$. By our reduction from \ACPTCsss\ with parameter $n^{o(1)}$ to \ACPTC, we get an instance of \ACPTC\ with $n^{1+o(1)}$ nodes and $O(n)$ colors. 
\end{proof}

We get as a corollary a reduction from \OV\ to \TC\  (not only to \ACPTCb).

\begin{corollary}
\label{cor:OV:TC}
For any $1 \le d\leq n$, \OV\ for $n$ Boolean vectors in $f$ dimensions reduces in $O(n^2/d^2 +ndf)$ time to an instance of \TCsss\  on $O(nf/d + d^2)$ nodes with parameter $f$. Thus if \OV\ requires $n^{2-o(1)}$ time for $f=n^{o(1)}$, then \TCsss\  on $N$ nodes with parameter $N^{o(1)}$ requires $N^{3-o(1)}$ time, then so does \TC\ with parameter $n^{o(1)}$. 
\end{corollary}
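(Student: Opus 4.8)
The plan is to read this off from the composition of Lemma~\ref{lem:colorbmm:ov} (\OV\ $\to$ \ColorBMM) and Lemma~\ref{lem:colorbmmtoacptc} (\ColorBMM\ $\to$ \ACPTCb), \emph{unrolled} so that the output is an instance of the plain decision problem \TCsss\ instead of its All-Color-Pairs version. The guiding point is that \OV\ asks a single Boolean question: in the \ColorBMM\ instance of Lemma~\ref{lem:colorbmm:ov} (obtained by splitting each of the two vector sets into $n/d$ groups of $d$ vectors), ``no orthogonal pair'' is the conjunction over all group pairs $(i,j)\in[n/d]^2$ of the individual \ColorBMM\ answers, which once fully unrolled reads $\bigwedge_{i,j\in[n/d]}\bigwedge_{k,\ell\in[d]}\bigvee_{s\in[f]}(A[i,k,s]\wedge B[j,\ell,s])$, where $A[i,k,s]$ denotes the $s$-th bit of the $k$-th vector of the $i$-th group of $A$ (similarly for $B$). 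So I would build a tripartite \TCsss\ instance in which quantifying over \emph{all} color triples plays the role of the outer $\bigwedge_{i,j,k,\ell}$ and the existence of a triangle plays the role of $\bigvee_s$.

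Concretely, after the standard reduction to a bipartite \OV\ instance with vector sets $A,B$, I would take (using pairwise disjoint label spaces): an $A$-part with color set $[n/d]$ and a node $(i,s)$ of color $i$ for every $i\in[n/d]$, $s\in[f]$; a $B$-part with color set a copy of $[n/d]$ and a node $(j,s)$ of color $j$; and a $C$-part with color set $[d]^2$ and a single node for each color $(k,\ell)$. Put an edge $(i,s)\,(j,s')$ iff $s=s'$; an edge from $(i,s)$ to the color-$(k,\ell)$ node iff $A[i,k,s]=1$; and an edge from $(j,s)$ to the color-$(k,\ell)$ node iff $B[j,\ell,s]=1$. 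Then a triangle on the color triple $(i,j,(k,\ell))$ must use nodes $(i,s),(j,s'),(k,\ell)$ with $s=s'$, $A[i,k,s]=1$, $B[j,\ell,s]=1$, so it exists iff the $k$-th vector of $A$'s $i$-th group and the $\ell$-th vector of $B$'s $j$-th group have a common $1$-coordinate, i.e.\ are non-orthogonal; hence the \TCsss\ instance is a ``yes'' instance iff \OV\ is. This instance has $O(nf/d)$ nodes in the $A$- and $B$-parts and $O(d^2)$ in the $C$-part, at most $f$ nodes per color, and (counting the $s=s'$ edges together with the data-dependent ones, using that $A,B$ have $O(nf)$ one-bits total, each generating $O(d)$ edges) $O(n^2f/d^2+ndf)$ edges, which also bounds the construction time. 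The All-Color-Pairs versions come for free from the identical gadget.

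For the stated hardness consequence I would set $f=n^{o(1)}$ and $d=n^{1/3}$, so the instance has $N=n^{2/3\pm o(1)}$ nodes and parameter $f=N^{o(1)}$ and is built in $O(n^{4/3+o(1)})=O(N^{2+o(1)})$ time; an $O(N^{3-\delta})$-time algorithm for \TCsss\ with parameter $N^{o(1)}$ would then solve \OV\ in $O(N^{3-\delta}+N^{2+o(1)})=O(n^{2-2\delta/3+o(1)})$ time, contradicting OVH, so \TCsss\ on $N$ nodes with parameter $N^{o(1)}$ needs $N^{3-o(1)}$ time. To transfer this to \TC\ with parameter $n^{o(1)}$, I would use that \TCss\ is a special case of \TC, which is a special case of \TCsss, together with Theorem~\ref{thm:TCLC:TC**} (\TCsss\ with parameter $p$ reduces to \TCss\ with parameter $p^3$, blowing up the node count by only a $p^{O(1)}=n^{o(1)}$ factor): a subcubic algorithm for \TC\ with parameter $n^{o(1)}$ would give one for \TCss, hence one for \TCsss\ with parameter $n^{o(1)}$, contradicting the previous sentence.

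I expect the main obstacle to be getting the unrolling right. The tempting shortcut --- turning the \ACPTCb\ output into one bit by merging all $A$-colors into one color and all $C$-colors into one --- is \emph{wrong}: it replaces ``for all group pairs $(i,j)$'' by ``there exist $(i,j)$'', which is not what \OV\ asks. The correct fix, letting the \TCsss\ color triple itself carry the two group indices $i,j$ while the third part carries the within-group indices $(k,\ell)$ --- so that ranging over all color triples is exactly ranging over all $n$-vs-$n$ vector pairs --- while keeping every color class of size $\le f$ and the total size $O(nf/d+d^2)$, is the real content; the edge counting and the balancing of $d$ are routine.
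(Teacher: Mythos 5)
Your proposal is correct and follows essentially the same route as the paper: unroll the composition of Lemma~\ref{lem:colorbmm:ov} with Lemma~\ref{lem:colorbmmtoacptc} and observe that, since \OV\ asks a single Boolean question, the all-color-pairs quantifier is exactly the outer $\bigwedge$ over group pairs $(i,j)$, so the plain \TCsss\ decision already suffices. The only difference is gadget-level: the paper keeps the dimension index $s$ on the middle nodes (one node per triple $(k,\ell,s)$, colored $(k,\ell)$, yielding $O(n/d+d^2f)$ nodes and $O(n^2/d^2+ndf)$ edges), whereas you put $s$ on the left/right nodes, yielding $O(nf/d+d^2)$ nodes but $O(n^2f/d^2+ndf)$ edges --- an extra factor $f$ on the complete-bipartite part relative to the stated $O(n^2/d^2+ndf)$ construction time, which is harmless since $f=n^{o(1)}$ in the application. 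Your verification of the triangle condition, the parameter bound $f$, the choice $d=n^{1/3}$, and the transfer to \TC\ via Theorem~\ref{thm:TCLC:TC**} (with \TCss\ as a special case of \TC) all match the intended argument.
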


\begin{proof}
Consider the reduction from Lemma~\ref{lem:colorbmm:ov}
from \OV\ to \ColorBMM. 
For any $d\leq n$, it partitioned the input sets of vectors $A$ and $B$ into groups $\{A_i\}_{i\in [n/d]}$ and $\{B_i\}_{i\in [n/d]}$ of $d$ vectors each, letting $A[i,j,s]$ be the $s$th bit of the $k$th vector of $A_i$ and $B[j,\ell,s]$ be the $s$th bit of the $\ell$th vector of $B_j$.

It then created an $n\times d^2f$ matrix $\AAA$ with $\AAA[i,(k,\ell,s))]=A[i,k,s]$ and an $d^2f\times n$ matrix $\BBB$ with $\BBB[(k,\ell,s),j]=B[j,\ell,s]$. The color of $(k,\ell,s)$ was $(k,\ell)$. The number of columns of $\AAA$ of each color is thus at most $f$.

We can then apply our reduction from \ColorBMM~ to \ACPTCsss~ from the proof of Lemma~\ref{lem:colorbmmtoacptc} to show that \OV\ reduces to \ACPTCsss.

However, notice that in the above reduction from \OV\ to \ColorBMM, there are no pairs of orthogonal vectors iff for all $i,j\in [n/d]$ and $k,\ell\in [d]$, there exists some $s\in [f]$ such that $\AAA[i,(k,\ell,s))]=\BBB[(k,\ell,s),j]=1$. When we create the \ACPTCsss~ instance from the \ColorBMM~ instance in our reduction, it suffices to figure out whether for all triples of colors there is some triangle, which is exactly the \TCsss\ problem.
\end{proof}

\section{Other Reductions}
\label{sec:other}

\subsection{Hardness of \AEColorSparseTri}

\begin{theorem}\label{thm:colorsparsetri}
If \AEColorSparseTri\ with $m$ edges and degeneracy $O(m^\alpha)$
could be solved in $\OO(m^{1+\alpha-\eps})$ time for some constant $\alpha\le 1/2$, then
\OV\ for $n$ Boolean vectors in $f$ dimensions could be solved in $\OO(f^{O(1)}n^{2-2\eps/(2-\alpha)})$ time.
\end{theorem}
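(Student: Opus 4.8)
The plan is to route the reduction through \ColorBMM: I would combine the parameterized reduction from \OV\ to \ColorBMM\ in Lemma~\ref{lem:colorbmm:ov} with a new, direct reduction from \ColorBMM\ to \AEColorSparseTri, taking care to make the degeneracy of the constructed graph as small as $\OO(m^\alpha)$ for any target $\alpha\le 1/2$.

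First I would give the \ColorBMM$\to$\AEColorSparseTri\ reduction, which is the edge analogue of the \ColorBMM$\to$\ACPTCb\ construction in Lemma~\ref{lem:colorbmmtoacptc}. Given an $N_1\times N_2$ matrix $\AAA$, an $N_2\times N_1$ matrix $\BBB$ and a color map $\col:[N_2]\to\Gamma$, build a tripartite graph on left nodes $\{x_i\}_{i\in[N_1]}$, middle nodes $\{y_k\}_{k\in[N_2]}$ (with $y_k$ colored $\col(k)$) and right nodes $\{z_j\}_{j\in[N_1]}$, with an edge $x_iy_k$ iff $\AAA[i,k]=1$, an edge $y_kz_j$ iff $\BBB[k,j]=1$, and an edge $x_iz_j$ for all $i,j$. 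The common neighbours of $x_i$ and $z_j$ are exactly the $y_k$ with $\AAA[i,k]\wedge\BBB[k,j]$, so the \AEColorSparseTri\ answer on the edge $x_iz_j$ is exactly the \ColorBMM\ answer at $(i,j)$ (answers on the other edges are discarded). This graph has $O(N_1^2+\mathrm{supp}(\AAA)+\mathrm{supp}(\BBB))$ edges, and — by eliminating the middle layer first and then the complete bipartite part $K_{N_1,N_1}$ — degeneracy $\Theta(N_1)$.

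Feeding in Lemma~\ref{lem:colorbmm:ov} with a parameter $d$ gives $N_1=n/d$, $N_2=d^2f$ and $\mathrm{supp}(\AAA),\mathrm{supp}(\BBB)=O(ndf)$, hence an instance with $m=O((n/d)^2+ndf)$ edges and degeneracy $\Theta(n/d)$. To make the degeneracy genuinely $\OO(m^\alpha)$ for $\alpha<1/2$ I would (i) keep $d$ large enough (roughly $d\gtrsim(n/f)^{1/3}$) that the middle layer dominates, so $m=\Theta(ndf)$ and $n/d$ really is a small power of $m$; and (ii) partition the $[n/d]\times[n/d]$ query pairs into $O\big((n/(dr))^2\big)$ blocks of $r\times r$ submatrices, so each block's graph contains $K_{r,r}$ plus $O(rd^2f)$ middle edges, i.e.\ $O(r^2+rd^2f)$ edges and degeneracy $\Theta(r)$; choosing $r=(d^2f)^{\alpha/(1-\alpha)}$ (which requires $\alpha\le 1/2$ so that $r\le d^2f$) makes the degeneracy exactly an $\alpha$-th power of the block's edge count, and a node-splitting step exactly like the one in Theorem~\ref{thm:apsp:deg} controls the degeneracy once the middle edges are folded in. Finally I would pick $d$ (a power of $n$ up to $f^{O(1)}$ factors) to balance the $O(ndf)$ cost of building the blocks against the number of blocks times $\OO(m^{1+\alpha-\eps})$; this optimization yields the claimed $\OO(f^{O(1)}n^{2-2\eps/(2-\alpha)})$ time bound for \OV.

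The hard part is precisely the degeneracy control, and it is also the source of the hypothesis $\alpha\le 1/2$: the ``query layer'' $K_{N_1,N_1}$ is $N_1$-degenerate while carrying $N_1^2$ of its own edges, so it can never have degeneracy below the square root of its edge count, and hence degeneracy $m^\alpha$ with $\alpha<1/2$ is only attainable if the middle layer supplies the bulk of the $m$ edges. Simultaneously, the partitioning and splitting used to pin the degeneracy at $m^\alpha$ must not inflate the total edge count (summed over blocks) beyond $\OO(n^2f)$, and the parameters $d,r$ must stay in a feasible range (so that $n/(dr)\ge 1$ and the $O(ndf)$ preprocessing is subdominant). Once these constraints are reconciled, the rest is the same kind of bookkeeping as in the reductions of Sections~\ref{sec:colorBMM} and \ref{sec:sparsetri}.
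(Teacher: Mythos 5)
Your proposal is correct, and while its first half coincides with the paper's (both reduce \OV\ to \ColorBMM\ via Lemma~\ref{lem:colorbmm:ov} and encode \ColorBMM\ as a tripartite \AEColorSparseTri\ instance with a complete bipartite ``query layer''), the degeneracy control is genuinely different. The paper keeps the left part at full size $n/d$ and blocks only the \emph{columns of the second matrix} into groups of $d^2$ (requiring $d\le n^{1/3}$), so each of the $O(n/d^3)$ instances has $O(dnf)$ edges and degeneracy $O(d^2f)$ certified by eliminating the \emph{left} nodes first; setting $d^2=(dn)^\alpha$ gives the bound. You instead block the query pairs symmetrically into $r\times r$ tiles and certify degeneracy $O(r)$ by eliminating the \emph{middle} layer first (each middle node has degree $\le 2r$) and then the $K_{r,r}$. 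Both orderings are valid, and your constraint $r\le d^2f$, i.e.\ $\alpha\le 1/2$, plays exactly the role of the paper's $d\le n^{1/3}$. Two remarks. First, the node-splitting step you import from Theorem~\ref{thm:apsp:deg} is unnecessary: your middle-first elimination already pins the degeneracy at $O(r)$ with no modification of the graph (one should just pad any block whose actual edge count falls below $\Theta(rd^2f)$ with dummy edges so that the degeneracy is genuinely $O(m^\alpha)$ of the instance's own $m$; the paper's proof makes the same implicit assumption). Second, if you actually carry out the optimization, your parameterization is \emph{stronger} than claimed: the total cost is $\OO(f^{O(1)}n^2d^{-2\eps/(1-\alpha)})$, and the binding constraint $r\le n/d$ allows $d=n^{(1-\alpha)/(1+\alpha)}$, giving $\OO(f^{O(1)}n^{2-2\eps/(1+\alpha)})$, which beats the paper's $n^{2-2\eps/(2-\alpha)}$ for all $\alpha<1/2$; at that optimum $r=n/d$, so the blocking collapses to a single instance — the whole tripartite graph already has degeneracy $O(n/d)=O(m^\alpha)$ by the middle-first argument. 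The claimed bound then follows a fortiori (or directly by taking the suboptimal $d=n^{(1-\alpha)/(2-\alpha)}$).
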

\begin{proof}
By Lemma~\ref{lem:colorbmm:ov},
\OV\ reduces to $O(n/d^3)$ instances of \ColorBMM\ for 
an $(n/d)\times (d^2f)$ and an $(d^2f)\times d^2$ Boolean
matrix, if $d\le n^{1/3}$.  Each such instance reduces to 
an instance of \AEColorSparseTri, by mapping the two given matrices into
a tripartite graph (the first matrix maps to
edges between the left and middle parts, the second matrix maps
to edges between the middle and right parts, and we have
a complete bipartite graph between the left and right parts).
Clearly, this tripartite graph has $O((n/d)d^2f)= O(dnf)$ edges and
degeneracy $O(d^2f)$.

Choose $d$ so that $d^2 = (dn)^\alpha$, i.e., $d=n^{\alpha/(2-\alpha)}$.  Since $\alpha\le 1/2$, we have $d \le n^{1/3}$.
The time bound for \OV\ is $\OO((n/d^3)\cdot (f^{O(1)}dn)^{1+\alpha-\eps})=\OO(f^{O(1)}n^{2-2\eps/(2-\alpha)})$.
\end{proof}

Since \AEColorSparseTri\ generalizes \AESparseTri\ (which corresponds to the case with just 1 color), we can combine with 
Theorems~\ref{thm:apsp:deg}, \ref{thm:3sum}, and \ref{thm:colorsparsetri} to conclude 
that the 
\AEColorSparseTri\ problem with $m$ edges and degeneracy $D$ 
has a lower bound near $mD$ for $D\ll m^{1/5}$ 
if any one of the Real APSP, Real 3SUM, or OV hypothesis is true.
Since one can enumerate all triangles in a graph with $m$ edges and degeneracy $D$ in $O(mD)$ time~\cite{chiba1985} and thus solve \AEColorSparseTri, we have thus obtained a \emph{tight} conditional lower bound (up to $n^{o(1)}$ factor) for this problem under Hypothesis~\ref{conj:conj2}, at least for a restricted range of $D$.

\subsection{Real-to-Integer Reductions}
\label{sec:real-to-int}

Since \AESparseTri\ with $m$ edges can be reduced back to \IntAllThreeSUM\ on
$m$ integers~\cite{jafargholi2016mathrm}, Theorem~\ref{thm:3sum} immediately implies:

\begin{corollary}
\label{cor:real-3sum:int-3sum}
If \IntAllThreeSUM\ could be solved in $\OO(n^{6/5-\eps})$ time,
then \AllThreeSUM\ could be solved in $\OO(n^{2-5\eps/3})$ time 
with Las Vegas randomization.
\end{corollary}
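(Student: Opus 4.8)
The plan is to chain Theorem~\ref{thm:3sum} with a known ``reverse'' reduction that sends \AESparseTri\ back to \IntAllThreeSUM. First I would recall the reduction of Jafargholi and Viola~\cite{jafargholi2016mathrm}: an instance of \AESparseTri\ on a graph with $m$ edges can be converted, in $\OO(m)$ time, into an instance of \IntAllThreeSUM\ on $\OO(m)$ integers from a polynomially bounded range, so that the per-edge answers of the original problem can be recovered from the per-number answers of the \IntAllThreeSUM\ instance. (If the cited reduction only gives the decision version of sparse triangle, the standard witness-finding trick by random sampling already used elsewhere in the paper upgrades it to the all-edges version at a polylogarithmic overhead, and it keeps the randomization Las Vegas.)

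Given this, suppose \IntAllThreeSUM\ has an $\OO(n^{6/5-\eps})$-time algorithm. Feeding an $m$-edge \AESparseTri\ instance through the Jafargholi--Viola reduction produces an \IntAllThreeSUM\ instance on $\OO(m)$ integers, which the assumed algorithm solves in $\OO(m^{6/5-\eps})$ time; together with the $\OO(m)$ cost of building the instance (which is dominated, since $6/5-\eps \ge 1$ in the relevant regime and in any case is absorbed into the $\OO$), we obtain an $\OO(m^{6/5-\eps})$-time algorithm for \AESparseTri\ on $m$-edge graphs. Now I would invoke Theorem~\ref{thm:3sum}: an $\OO(m^{6/5-\eps})$-time algorithm for \AESparseTri\ yields an $\OO(n^{2-5\eps/3})$-time Las Vegas algorithm for \ThreeSUM, and in fact for \AllThreeSUM. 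Composing the randomized steps keeps the algorithm Las Vegas, which is exactly the claim.

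The only point needing care — and the closest thing to an obstacle here — is the bookkeeping of slack and randomization: one must confirm that (i) the blow-up in the number of integers in the reverse reduction is genuinely $\OO(m)$ rather than $m^{1+c}$ for a constant $c>0$, since a polynomial blow-up would spoil the exponent in a way that Theorem~\ref{thm:3sum} cannot absorb; and (ii) every randomized ingredient (the optional witness-finding sampling and the randomization internal to Theorem~\ref{thm:3sum}) is Las Vegas, so that the composition is Las Vegas rather than merely Monte Carlo. Both are routine, and no new ideas are required: once the reverse reduction is cited, the corollary is an immediate consequence of Theorem~\ref{thm:3sum}.
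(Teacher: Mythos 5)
Your proposal matches the paper's argument exactly: the paper derives this corollary in one line by citing the Jafargholi--Viola reduction from \AESparseTri\ on $m$ edges to \IntAllThreeSUM\ on $\OO(m)$ integers and composing it with Theorem~\ref{thm:3sum}. Your additional care about the linear blow-up and the Las Vegas composition is correct but is exactly the implicit bookkeeping the paper leaves to the reader.
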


\newcommand{\IntAllConvThreeSUM}{\mbox{\sf Int-All-Nums-Convolution-3SUM}}
\newcommand{\AllConvThreeSUM}{\mbox{\sf Real-All-Nums-Convolution-3SUM}}
\newcommand{\ConvThreeSUM}{\mbox{\sf Real-Convolution-3SUM}}

Although the above bound is weak, such real-to-integer reductions may have other implications.
For example, it is known that for \IntAllThreeSUM\ reduces to \IntAllConvThreeSUM\footnote{
In {\sf Convolution-3SUM}, we are given three arrays $A,B,C$ of $n$ numbers, and want to decide the existence of indices $i$ and $j$ with $C[i]=A[j]+B[i-j]$.  In the ``All-Nums'' version, we want to decide, for each $i$, the existence of an index $j$ satisfying the same equation.
}~\cite{patrascu2010towards,kopelowitz2016higher,ChanHe}, without increasing the
running time except by polylogarithmic or sublogarithmic factors, using Las Vegas
randomization (these reductions were stated for the original non-``All'' versions).  
By the above corollary, if
\AllConvThreeSUM\ could be solved in $O(n^{6/5-\eps})$ time, then \AllThreeSUM\ could be solved in subquadratic time.  Despite the weakness of the bound, this is interesting, since
the question of whether \ThreeSUM\ could be reduced to \ConvThreeSUM\ has been raised in previous papers~\cite{kopelowitz2016higher,ChanHe}.

One application is conditional lower bounds for
the \emph{jumbled indexing problem} under the Real 3SUM hypothesis.  
Amir, Chan, Lewenstein, and Lewenstein~\cite{AmirCLL14} proved
lower bounds for the problem under the Integer 3SUM hypothesis.  By combining their
reduction from \IntAllThreeSUM\ to jumbled indexing and
our reduction from \AllThreeSUM\ to \IntAllThreeSUM, one can prove that assuming
the Real 3SUM hypothesis, 
no data structure for jumbled indexing can simultaneously
have $O(n^{1+\alpha_\sigma})$ preprocessing time and $O(n^{\beta_\sigma})$ query time
for some constants $\alpha_\sigma,\beta_\sigma>0$ when the alphabet size $\sigma$ is a sufficiently large constant. Though our bound is weaker than~\cite{AmirCLL14}'s original bound, it is based on a more believable hypothesis.

We can obtain similar real-to-integer reductions for {\sf Exact-Triangle}, by combining our previous reduction from \AEExactTri\ to \AEMonoTri, and the following reduction from \AEMonoTri\ to \IntAEExactTri:

\begin{lemma}
\label{lem:AEMonoT:IntExactT}
\AEMonoTri\ with $n^\alpha$ colors
reduces to \IntAEExactTri\ for integer edge weights in $\pm[n^{\alpha+o(1)}]$.
\end{lemma}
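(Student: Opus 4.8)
The idea is to encode the color of each edge in \AEMonoTri{} into its integer weight in \IntAEExactTri{}, in such a way that a triangle with three edges of the same color becomes a triangle whose weights sum to exactly zero, while any triangle whose edges do not all share a color has a nonzero weight sum. First I would make the instance tripartite on parts $A,B,C$ (standard, at a constant-factor loss), so that every triangle has exactly one edge in $A\times B$, one in $B\times C$, and one in $A\times C$. Colors are from $[n^\alpha]$; write $K = n^\alpha$. The natural first attempt is to assign, to an edge $e$ of color $c$, the weight $+c$ if $e\in A\times B$, the weight $+c$ if $e\in B\times C$, and the weight $-2c$ if $e\in A\times C$. Then a monochromatic triangle of color $c$ has weight sum $c + c - 2c = 0$. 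Conversely, a triangle with colors $c_1$ (on the $A$--$B$ edge), $c_2$ (on the $B$--$C$ edge), $c_3$ (on the $A$--$C$ edge) has weight sum $c_1 + c_2 - 2c_3$, which is zero iff $c_1 + c_2 = 2c_3$; this does \emph{not} force $c_1 = c_2 = c_3$, so this naive encoding is wrong.

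The fix is to use two "coordinates": pick a base $q > 3K$ (say $q = n^{\alpha} \cdot 4$) and encode color $c$ as the pair $(c, c^2)$ packed as $c + c^2 q$, or more simply use a random/arithmetic-progression-free flavor. The cleanest route is: assign the $A$--$B$ edge of color $c$ weight $c + c^2 q$, the $B$--$C$ edge of color $c$ weight $c + c^2 q$, and the $A$--$C$ edge of color $c$ weight $-2c - 2c^2 q$. A triangle with colors $c_1, c_2, c_3$ as above has weight sum $(c_1 + c_2 - 2c_3) + (c_1^2 + c_2^2 - 2c_3^2)q$. Since $|c_1 + c_2 - 2c_3| < q$, this integer is zero iff both $c_1 + c_2 - 2c_3 = 0$ and $c_1^2 + c_2^2 - 2c_3^2 = 0$. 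But $c_1 + c_2 = 2c_3$ together with $c_1^2 + c_2^2 = 2c_3^2$ forces $c_1 = c_2 = c_3$: indeed, substituting $c_2 = 2c_3 - c_1$ into the quadratic gives $c_1^2 + (2c_3 - c_1)^2 = 2c_3^2$, i.e. $2c_1^2 - 4c_1 c_3 + 2c_3^2 = 0$, i.e. $2(c_1 - c_3)^2 = 0$. So the triangle has weight sum zero iff it is monochromatic. The edge weights are bounded by $2K + 2K^2 q = O(K^3) = n^{3\alpha + O(1)}$; to get the claimed bound $\pm[n^{\alpha + o(1)}]$ rather than $n^{3\alpha}$, I would instead pack the two coordinates at a finer granularity --- but note the paper only needs $\pm[n^{\alpha + o(1)}]$, so if $3\alpha$ is not good enough one replaces the "square" map $c \mapsto (c, c^2)$ by an injective map into a Sidon-type set of size $n^{\alpha}$ inside $[n^{\alpha + o(1)}]$; alternatively, observe that it suffices that the map $c\mapsto (c, \phi(c))$ be such that $c_1 + c_2 = 2c_3$ and $\phi(c_1) + \phi(c_2) = 2\phi(c_3)$ jointly imply $c_1 = c_2 = c_3$, which holds for $\phi$ strictly convex on $[K]$; choosing $\phi$ with range $[K^{1+o(1)}]$ (e.g. $\phi(c) = \lceil c^{1+1/\log K}\rceil$, adjusted to stay strictly convex) keeps all weights in $\pm[n^{\alpha+o(1)}]$.

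Having set up the encoding, the reduction itself is immediate: run \IntAEExactTri{} on the weighted graph; for each edge $e$ (which lies in one of the three parts), the oracle reports whether $e$ is in a zero-weight triangle, which by the above is exactly whether $e$ is in a monochromatic triangle. The running time overhead is $O(n^2)$ for building the instance plus one oracle call, and the number of nodes is unchanged. The \textbf{main obstacle} is precisely the weight-budget: naively the "strictly convex second coordinate" construction costs a cube, so the crux is exhibiting an injection $\phi:[K]\to[K^{1+o(1)}]$ that is strictly convex (equivalently, a $B_2^+$-type / Sidon-flavored set ruling out the three-term relation $\phi(c_1)+\phi(c_2)=2\phi(c_3)$ with $c_1+c_2=2c_3$ unless $c_1=c_2=c_3$) while keeping the range nearly linear; I would handle this with the convexity argument above, since strict convexity of $\phi$ is elementary and a range of $K^{1+o(1)}$ is achievable by taking $\phi(c)=c^2$ restricted to a sufficiently spread subsequence or by the power-map perturbation just described, after which the remaining verification is the short calculation $2(c_1-c_3)^2=0$.
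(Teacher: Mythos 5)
Your reduction skeleton is the same as the paper's: make the graph tripartite, give the two edges of a triangle incident to the third part weight $+f(c)$ and the remaining edge weight $-2f(c)$, so that a triangle with colors $c_1,c_2,c_3$ has weight sum $f(c_1)+f(c_2)-2f(c_3)$, and then arrange the encoding $f$ so that this vanishes only when $c_1=c_2=c_3$. You also correctly diagnose that the identity map fails. The gap is in how you propose to achieve the weight bound $\pm[n^{\alpha+o(1)}]$, and it is not a cosmetic gap: none of your candidate constructions can reach that bound. The two-coordinate packing $c+\phi(c)\cdot q$ needs $q=\Omega(K)$ to keep the coordinates from interfering, so even with $\phi$ of range $K^{1+o(1)}$ the packed weights are $\Theta(K^{2+o(1)})=n^{2\alpha+o(1)}$, not $n^{\alpha+o(1)}$. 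Dropping the first coordinate and using $\phi$ alone requires the \emph{image} of $\phi$ to contain no nontrivial solution of $\phi(c_1)+\phi(c_2)=2\phi(c_3)$, i.e.\ to be free of three-term arithmetic progressions; strict convexity of $\phi$ does not give this (the squares are strictly convex yet $1+49=2\cdot 25$). And the object you call the ``crux'' --- a strictly convex integer-valued injection $\phi:[K]\to[K^{1+o(1)}]$ --- does not exist: integer-valued strict convexity forces second differences at least $1$, hence $\phi(K)-\phi(0)=\Omega(K^2)$, so any such $\phi$ has quadratic range. (A true Sidon set of size $K$ likewise needs a universe of size $\Omega(K^2)$.) The $o(1)$ loss matters downstream: Corollary~\ref{cor:real-exactT:int-exactT} applies the lemma with $n^{2\alpha}$ colors and needs weights $\pm[n^{2\alpha+o(1)}]$, which a quadratic-range encoding would degrade to $\pm[n^{4\alpha}]$.

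The missing ingredient is Behrend's theorem: $[N]$ contains a progression-free (Salem--Spencer) set of size $N/e^{O(\sqrt{\log N})}=N^{1-o(1)}$, constructible in $\OO(N)$ time. The paper's proof takes such a set $S\subseteq[n^{\alpha+o(1)}]$ of size $n^{\alpha}$, injects the colors into $S$ via $f$, and uses the single-coordinate weights $+f,+f,-2f$; AP-freeness of $S$ makes $f(c_1)+f(c_2)=2f(c_3)$ force $f(c_1)=f(c_2)=f(c_3)$, hence $c_1=c_2=c_3$ by injectivity. This is a genuinely nonelementary combinatorial input (Behrend's construction is itself built from a convexity/sphere argument, but in a higher-dimensional grid, which is how it evades the quadratic barrier for one-dimensional convex sequences), and your proof does not go through without it.
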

\begin{proof}
To produce the reduction, we use a result from additive combinatorics about Salem-Spencer set. A Salem-Spencer set is a set of numbers that do not contain any $3$-term arithmetic progressions. In other words, if a set $S$ is a  Salem-Spencer set, then for any $a, b, c \in S$, $a+c=2b$ if and only if $a=b=c$. It is known that $[N]$ contains a large Salem-Spencer set. 

\begin{quote}{\bf Fact.} \em\cite{behrend1946sets}
There exists a subset of $[N]$ that is a Salem-Spencer set of size $N/e^{O(\sqrt{\log N})} \ge N^{1-o(1)}$. Furthermore, we can find such a set efficiently, in $\OO(N)$ time. \end{quote}

First, we find a Salem-Spencer set $S$ of size $n^\alpha$ in $[n^{\alpha+o(1)}]$. Then we find an arbitrary injective mapping $f$ from the set of colors to elements in set $S$. Say $f(x)$ maps the color $x$ to a number in $S$. Without loss of generality, we assume the \AEMonoTri\ instance on graph $G$ with $n^\alpha$ colors is tripartite on partitions $I, J, K$. For every edge $e$ between $I$ and $K$ or between $J$ and $K$ with color $x$, we replace its color with an integer $f(x)$; for every edge $e$ between $I$ and $J$ with color $x$, we replace its color with $-2f(x)$. For any triangle in $G$ with edge colors $x_1, x_2, x_3$, it becomes a zero triangle if and only if $f(x_1) + f(x_2) - 2f(x_3) = 0$. Since $S$ is a Salem-Spencer set, it is equivalent to $f(x_1) = f(x_2) = f(x_3)$. Since $f$ is injective, it is further equivalent to $x_1=x_2=x_3$. Thus, a zero triangle in the new graph corresponds to a monochromatic triangle in the original graph. The other direction is more straightforward.  Therefore, by running \IntAEExactTri\ on the new graph, we can solve the original \AEMonoTri\ instance. 
\end{proof}

Since in Section~\ref{sec:mono} we showed super-quadratic lower bounds of \AEMonoTri\ based on the Real APSP hypothesis, the Real Exact-Triangle hypothesis or the Real 3SUM hypothesis, Theorem~\ref{thm:exacttri:monotrinoncount} implies super-quadratic lower bound of \IntAEExactTri\ based on any of the three hypotheses. In particular, we obtain the following corollary:

\begin{corollary}
\label{cor:real-exactT:int-exactT}
If \IntAEExactTri\  for integers in $\pm[n^{2\alpha}]$ could be solved in
$\OO(n^{2+\alpha-\eps})$ time for some constant $\alpha\le (1-\eps)/3$, then
\AEExactTri\  could be solved in $n^{3-\eps+o(1)}$ time.
\end{corollary}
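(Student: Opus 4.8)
The plan is to compose two reductions that are already in hand. First I would invoke (the proof of) Theorem~\ref{thm:exacttri:monotrinoncount}, which reduces \AEExactTri\ to $\OO(n^{1-\alpha})$ instances of \AEMonoTri\ on $\OO(n)$-node graphs with $\OO(n^{2\alpha})$ colors, together with $\OO(n^{3-\eps})$ additional work, in such a way that an $\OO(n^{2+\alpha-\eps})$-time algorithm for \AEMonoTri\ with $\OO(n^{2\alpha})$ colors (available under the hypothesis $\alpha\le(1-\eps)/3$) yields an $\OO(n^{3-\eps})$-time Las Vegas algorithm for \AEExactTri. One should check that this reduction really handles the all-edges version \AEExactTri\ and not merely the decision problem \ExactTri; this holds because it is built on Lemmas~\ref{lem:exacttri1} and~\ref{lem:exacttri2}, whose intermediate problem \ExactTriOne\ asks for predecessors and successors and hence recovers the answer at every edge.

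Next I would apply Lemma~\ref{lem:AEMonoT:IntExactT} with $2\alpha$ playing the role of $\alpha$: each of these \AEMonoTri\ instances, having $\OO(n^{2\alpha})=n^{2\alpha+o(1)}$ colors, reduces deterministically in $\OO(n^2)$ time to an \IntAEExactTri\ instance on the same vertex set with integer edge weights in $\pm[n^{2\alpha+o(1)}]$. Composing the two reductions gives a reduction from \AEExactTri\ to \IntAEExactTri\ on $\OO(n)$ nodes with weights in $\pm[n^{2\alpha+o(1)}]$.

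It then remains to reconcile this with the hypothesis, which is stated for weights in $\pm[n^{2\alpha}]$ relative to the node count. I would pad each \IntAEExactTri\ instance with isolated vertices up to exactly $N:=n^{1+o(1)}$ vertices, where the $o(1)$ exponent is chosen large enough that simultaneously $N^{2\alpha}\ge n^{2\alpha+o(1)}$ and $N\ge\OO(n)$; the padded instance then has at most $N$ vertices and weights in $\pm[N^{2\alpha}]$, so by assumption it is solvable in $\OO(N^{2+\alpha-\eps})=n^{2+\alpha-\eps+o(1)}$ time. Plugging this $n^{2+\alpha-\eps+o(1)}$-time procedure for \AEMonoTri\ with $\OO(n^{2\alpha})$ colors back into the first reduction gives an algorithm for \AEExactTri\ running in $\OO(n^{1-\alpha})\cdot n^{2+\alpha-\eps+o(1)}+\OO(n^{3-\eps})=n^{3-\eps+o(1)}$ time, using the Las Vegas randomization inherited from Theorem~\ref{thm:exacttri:monotrinoncount}.

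The step that needs the most care is not a single inequality but the bookkeeping of the $n^{o(1)}$ slack: the Salem--Spencer (Behrend) set forces the weight bound up to $\pm[n^{2\alpha+o(1)}]$ rather than $\pm[n^{2\alpha}]$, and one must verify that this extra factor is absorbed by a harmless $n^{o(1)}$ inflation of the instance size and does not disturb the constraint $\alpha\le(1-\eps)/3$ --- which, being a statement about the fixed constant $\alpha$, is unaffected. Everything else is a routine composition of fine-grained reductions.
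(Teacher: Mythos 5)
Your proposal is correct and matches the paper's (very terse) argument exactly: the corollary is obtained by composing the reduction of Theorem~\ref{thm:exacttri:monotrinoncount} (\AEExactTri\ $\rightarrow$ \AEMonoTri\ with $\OO(n^{2\alpha})$ colors) with Lemma~\ref{lem:AEMonoT:IntExactT} (\AEMonoTri\ $\rightarrow$ \IntAEExactTri\ with weights in $\pm[n^{2\alpha+o(1)}]$ via Salem--Spencer sets). Your extra care about the all-edges version surviving the first reduction and about absorbing the $n^{o(1)}$ slack in the weight bound is exactly the right bookkeeping, which the paper leaves implicit.
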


A similar proof, together with Theorem~\ref{thm:exacttri:monotricount}, also implies the following
(note that \IntAEExactTriCount\ easily reduces to two instances of \IntAENegTriCount\ by subtracting counts). 

\begin{corollary}
If \IntAEExactTriCount\ (or \IntAENegTriCount) for integers in $\pm[n^\alpha]$ could be solved
in $\OO(n^{2+\alpha-\eps})$ time for some constant $\alpha\le (1-\eps)/2$, then
\AEExactTriCount\ (or \AENegTriCount) could be solved in $n^{3-\eps+o(1)}$ time.
\end{corollary}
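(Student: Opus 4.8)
The plan is to mirror the derivation of Corollary~\ref{cor:real-exactT:int-exactT}, substituting each ingredient by its counting analog. The core statement I would prove is: if \IntAEExactTriCount\ for integer edge weights in $\pm[n^\alpha]$ can be solved in $\OO(n^{2+\alpha-\eps})$ time for some constant $\alpha\le(1-\eps)/2$, then both \AEExactTriCount\ and \AENegTriCount\ can be solved in $n^{3-\eps+o(1)}$ time. The \IntAENegTriCount\ variant of the corollary then follows from this, since, as noted above, \IntAEExactTriCount\ reduces to two calls of \IntAENegTriCount\ by subtracting counts, which at worst enlarges the integer range by an additive constant and hence keeps it within $\pm[n^{\alpha+o(1)}]$.

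The one new ingredient is a counting version of Lemma~\ref{lem:AEMonoT:IntExactT}: \AEMonoTriCount\ with $n^\alpha$ colors reduces to \IntAEExactTriCount\ for integer weights in $\pm[n^{\alpha+o(1)}]$. The graph construction is verbatim the one in the proof of Lemma~\ref{lem:AEMonoT:IntExactT}: inject the color set into a Salem--Spencer subset of $[N]$ with $N=n^{\alpha+o(1)}$ (which exists by Behrend's bound and is computable in $\OO(N)$ time), and, writing $f$ for the resulting injection, on the tripartite instance with parts $I,J,K$ replace the color of every edge incident to $K$ by $f(\cdot)$ and of every $I$-$J$ edge by $-2f(\cdot)$. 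The observation that upgrades this to the counting setting is that the construction is a \emph{bijection} between the monochromatic triangles of the original graph and the zero-weight triangles of the new graph: a triangle with edge-colors $x_1,x_2,x_3$ acquires weight sum $f(x_1)+f(x_2)-2f(x_3)$, which by the Salem--Spencer property and the injectivity of $f$ is zero precisely when $x_1=x_2=x_3$. Hence for each edge the number of monochromatic triangles through it equals the number of zero-weight triangles through it, so an \IntAEExactTriCount\ call on the new graph answers \AEMonoTriCount\ on the old one.

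It then remains to chain the reductions, tracking the sub-polynomial losses exactly as in Corollary~\ref{cor:real-exactT:int-exactT}. Given a $\OO(n^{2+\alpha-\eps})$-time algorithm for \IntAEExactTriCount\ on weights in $\pm[n^\alpha]$, I would, for an $n$-node instance whose weights lie in $\pm[n^{\alpha+o(1)}]$, add isolated vertices to bring the vertex count up to $N:=n^{1+o(1)}$ so that the weights lie in $\pm[N^\alpha]$; the assumed algorithm then solves it in $\OO(N^{2+\alpha-\eps})=n^{2+\alpha-\eps+o(1)}$ time. Composing with the counting Salem--Spencer reduction above shows that \AEMonoTriCount\ with $\OO(n^\alpha)$ colors is solvable in $n^{2+\alpha-\eps+o(1)}$ time. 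Finally, plugging this subroutine into the reduction behind Theorem~\ref{thm:exacttri:monotricount} --- which for $\alpha\le(1-\eps)/2$ reduces \AEExactTriCount, and (being assembled from Theorem~\ref{thm:exacttri:count} and Lemma~\ref{lem:monotri}, both of which also cover the negative-triangle counting variant) likewise \AENegTriCount, to $\OO(n^{1-\alpha})$ instances of \AEMonoTriCount\ with $\OO(n^\alpha)$ colors on $\OO(n)$-node graphs --- yields running time $\OO(n^{1-\alpha})\cdot n^{2+\alpha-\eps+o(1)}=n^{3-\eps+o(1)}$ for \AEExactTriCount\ and \AENegTriCount\ alike.

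I do not anticipate a genuine obstacle: the mathematical content is the one-line bijectivity observation, which makes the Salem--Spencer reduction count-preserving for free. The points demanding care are purely bookkeeping --- confirming that the $n^{o(1)}$ blow-up inherent in Behrend's bound is exactly what forces the weight range $\pm[n^{\alpha+o(1)}]$, and that through the $N=n^{1+o(1)}$ rescaling all such factors get absorbed into the final $n^{3-\eps+o(1)}$ bound rather than $\OO(n^{3-\eps})$ --- together with double-checking that the reduction underlying Theorem~\ref{thm:exacttri:monotricount} genuinely yields a reduction for \AENegTriCount\ and not only for \AEExactTriCount, which it does because both Theorem~\ref{thm:exacttri:count} and Lemma~\ref{lem:monotri} are stated for the negative-triangle counting variant as well.
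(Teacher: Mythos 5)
Your proposal is correct and follows exactly the route the paper intends (the paper only sketches it): a count-preserving version of the Salem--Spencer reduction of Lemma~\ref{lem:AEMonoT:IntExactT}, which is indeed a bijection between monochromatic and zero-weight triangles through each edge, chained with Theorem~\ref{thm:exacttri:monotricount}, plus the subtraction trick to handle the \IntAENegTriCount{} hypothesis. The parameter bookkeeping ($n^\alpha$ colors mapping to weights in $\pm[n^{\alpha+o(1)}]$, absorbed into the $n^{3-\eps+o(1)}$ bound) is handled correctly.
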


Note that \AEExactTriCount\ (and \AENegTriCount) for integer edge weights in $\pm[n^\alpha]$ can be solved in $n^{2+\alpha+o(1)}$ time if $\omega=2$, by matrix multiplication on $\OO(n^\alpha)$-bit numbers.  Thus, the above result interestingly says that any improved algorithm for the small integer weight case would lead to an improved algorithm for the general real case (if $\omega=2$).

We remark that using Lemma~\ref{lem:AEMonoT:IntExactT}, it is possible to obtain another route of reduction from \AEMonoTri\ to \ACPTC. First, we use Lemma~\ref{lem:AEMonoT:IntExactT} to reduce \AEMonoTri\ to \IntAEExactTri, then adapt the reduction by Abboud, Vassilevska W. and Yu \cite{abboud2018matching} from \IntExactTri\ to \TC\ to get a reduction from \IntAEExactTri\ to \ACPTC. However, 
our proof of Theorem~\ref{thm:mono:ACPTC} is simpler and only introduces $\OO(1)$ overhead in contrast to the $n^{o(1)}$ overhead introduced by this reduction.

\subsection{An Application to String Matching}

Given two strings $a_1\cdots a_n$ and $b_1\cdots b_n$ in $\Sigma^*$,
define their \emph{Hamming similarity} to be $|\{i:a_i=b_i\}|$ (i.e., it is $n$ minus their Hamming distance).
Define their \emph{distinct Hamming similarity} to be
$|\{a_i: a_i=b_i\}|$.  
In other words, instead of counting the number of positions that matches, we count the number of distinct alphabet symbols $c\in\Sigma$ that are matched (where there exists $i$ with $a_i=b_i=c$).

We consider the following problem: given a text string $T = t_1\cdots t_N$
and a pattern string $P = p_1\cdots p_M$ in $\Sigma^*$ with $M\le N$,
compute the distinct Hamming similarity between $P$ and $t_{i+1}\cdots t_{i+M}$ for every $i=0,\ldots,N-M$.  Call this the \emph{pattern-to-text distinct Hamming similarity} problem.  The corresponding problem
for standard Hamming similarity/distance is well-studied, for which the current best  algorithms~\cite{matching1,matching2} have running time $\OO(N^{3/2})$ in terms of $N$, and
$\OO(|\Sigma| N)$ in terms of $N$ and $|\Sigma|$, but unfortunately matching lower bounds are currently not known under the standard conjectures.\footnote{
A conditional lower bound near $N^{3/2}$ for ``combinatorial algorithms'' was actually known, under the hypothesis that Boolean matrix multiplication requires near cubic time for combinatorial algorithms, by a reduction attributed to Indyk (see e.g. \cite{GawrychowskiU18}) (which we will actually adapt in our own reduction later).  However, known algorithms use Fast Fourier Transform (FFT), and there is no precise definition of combinatorial algorithms, especially if one wants to include FFT but forbid Strassen-like matrix multiplication methods.
For general non-combinatorial algorithms, Indyk's lower bound was only around $N^{\omega/2}$.
}
The known algorithms easily generalize to solve the pattern-to-text distinct Hamming similarity problem.
Using our \OV\ $\rightarrow$ \ColorBMM\  reduction, we are able to prove nearly matching lower bounds for the distinct version of pattern-to-text  Hamming similarity under the OV hypothesis.

\begin{lemma}
\ColorBMM\ for an $n\times df$ Boolean matrix $A$ and a $df\times n$ Boolean matrix $B$ with colors in $[d]$, such that there are $f$
indices with each color, reduces to the pattern-to-text distinct Hamming similarity
problem for two strings with length $O(n^2f)$ and alphabet size $O(d)$,
assuming that $d\le n$.
\end{lemma}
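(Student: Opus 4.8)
The plan is to adapt the classical reduction (attributed to Indyk) from Boolean matrix multiplication to pattern‑to‑text Hamming problems, using the color structure to turn the ``does a witness exist'' question into ``is a particular color among the witnesses'', and then to sum over colors. Let $A$ be $n\times df$, $B$ be $df\times n$, with $\col:[df]\to[d]$ such that each color class has exactly $f$ indices. Write the $df$ columns of $A$ (rows of $B$) as pairs $(c,s)$ with $c\in[d]$ the color and $s\in[f]$ an index within the color class, so $A[i,(c,s)]$ and $B[(c,s),j]$ are the relevant bits. First I would build, for each row $i$ of $A$, a block string $\alpha_i$ over the alphabet $[d]\times[f]\times\{0,1\}$ (or a suitable encoding into an alphabet of size $O(d)$, padding with fresh ``blank'' symbols that never match): a position in $\alpha_i$ corresponding to $(c,s)$ carries the symbol $(c,s,A[i,(c,s)])$. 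Symmetrically build $\beta_j$ from column $j$ of $B$ with symbol $(c,s,B[(c,s),j])$ at position $(c,s)$. Then $\alpha_i$ and $\beta_j$ match in position $(c,s)$ exactly when $A[i,(c,s)]=B[(c,s),j]$; among these, the matches that ``contribute a color'' are those with $A[i,(c,s)]=B[(c,s),j]=1$ — but to detect these specifically, I would instead use two symbols per position as in the standard BMM‑to‑string trick: one symbol that matches iff both bits are $1$, and filler symbols otherwise. The cleanest way is: at position $(c,s)$, $\alpha_i$ carries symbol $(c,s)$ if $A[i,(c,s)]=1$ and a unique dummy symbol $\bot^A_{i,c,s}$ otherwise; $\beta_j$ carries symbol $(c,s)$ if $B[(c,s),j]=1$ and a unique dummy symbol $\bot^B_{j,c,s}$ otherwise. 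Then $\alpha_i$ and $\beta_j$ match at position $(c,s)$ iff $A[i,(c,s)]\wedge B[(c,s),j]$, and the \emph{distinct} Hamming similarity of $\alpha_i$ against $\beta_j$ counts exactly the number of \emph{distinct matched symbols}, i.e. $|\{(c,s):A[i,(c,s)]\wedge B[(c,s),j]\}|$. That is not quite the set of \emph{colors}; to collapse $s$ I would reuse the symbol $(c,s)\mapsto c$, i.e. make the alphabet symbol depend only on the color $c$ (not on $s$), so a matched position at $(c,s)$ for any $s$ contributes the single distinct symbol $c$. Then the distinct Hamming similarity of $\alpha_i$ vs $\beta_j$ equals $|\{\col(k):A[i,k]\wedge B[k,j]\}|$, and this equals $d=|\Gamma|$ iff the \ColorBMM{} answer at $(i,j)$ is ``all colors covered''.

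Next I would concatenate: let the pattern be $P=\alpha_1\,\alpha_2\cdots\alpha_n$ and the text be a suitably spaced concatenation of the $\beta_j$'s, $T=\beta_1\,\beta_2\cdots\beta_n$, but padded so that the alignments of interest are exactly the ones where a full $\alpha_i$ block sits over a full $\beta_j$ block. The standard device (again from Indyk's reduction) is to make $T$ consist of $\beta_1,\ldots,\beta_n$ each preceded by a run of $n\cdot df$ fresh never‑matching symbols, or equivalently to interleave guard regions, so that among the $N-M+1$ alignments, the ones where every block of $P$ is aligned with a block of $T$ occur at a predictable arithmetic‑progression set of offsets, one for each pair $(i,j)$ (or, in the cyclic‑shift formulation, $(i, j-i \bmod n)$), and all other alignments have distinct Hamming similarity dominated by guard mismatches and carry no information we need. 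Concretely I would take each $\alpha_i$ of length exactly $L:=df$ and pad $P$ to have $n$ such blocks ($M=nL$); build $T$ so that $\beta_j$ appears starting at position $jL$ for a window of length $nL$ of $T$, with the gaps filled by unique blank symbols of the text that are disjoint from all pattern symbols, and $T$ long enough ($N=O(n^2 L)=O(n^2 f)$) that every ordered pair $(i,j)$ realizes some offset at which $\alpha_i$ overlays $\beta_j$ and the other $n-1$ blocks of $P$ overlay blank regions of $T$. At that offset, the distinct Hamming similarity contributed by the $\alpha_i$‑vs‑$\beta_j$ overlap is the quantity computed above, and the other blocks contribute nothing, so reading off the distinct Hamming similarity at these $n^2$ designated offsets recovers all entries of the \ColorBMM{} output (we just compare each to $d$). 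The alphabet is $[d]$ together with $O(n^2 df)$ blank symbols used once each — but since blank symbols never need to be compared for equality beyond ``matches nothing'', we can reuse $O(1)$ blanks per position and still keep $|\Sigma|=O(d)$; more carefully, as long as the blank symbols of $P$ are a fixed set of size $O(1)$ and those of $T$ are a fixed disjoint set of size $O(1)$, they never match each other, so the alphabet size stays $O(d)$ as claimed.

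The main obstacle, and the step I would spend the most care on, is the padding/alignment bookkeeping: ensuring that (i) the $n^2$ relevant alignments exist within $N=O(n^2f)$, (ii) at each relevant alignment exactly one $\alpha$‑block overlays exactly one $\beta$‑block and every other overlap is pattern‑block‑vs‑blank or blank‑vs‑blank (contributing no \emph{distinct} matched symbol from $[d]$), and (iii) the blank symbols are arranged so that the total alphabet remains $O(d)$ while no spurious matches among blanks inflate the count. This is essentially the content of Indyk's BMM‑to‑Hamming reduction, adapted from Hamming \emph{distance} to \emph{distinct} Hamming similarity; the only genuinely new wrinkle is that the quantity being measured is now ``number of distinct matched alphabet symbols'' rather than ``number of matched positions'', which is exactly why we route through \ColorBMM{} and why we let the matched symbol depend only on the color — the $f$ within‑color indices $s$ collapse to a single distinct symbol, so the string‑matching output directly reports $|\{\col(k):A[i,k]\wedge B[k,j]\}|$ with no post‑processing beyond a threshold test against $|\Gamma|$. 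Combining this lemma with the \OV{}$\to$\ColorBMM{} reduction (Lemma~\ref{lem:colorbmm:ov}) then yields the advertised $N^{3/2-o(1)}$ lower bound for pattern‑to‑text distinct Hamming similarity under OVH, matching the $\OO(N^{3/2})$ upper bound.
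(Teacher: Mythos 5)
There is a genuine gap, and it sits exactly where you flagged the ``main obstacle'': the alignment bookkeeping. Your scheme isolates the pair $(i,j)$ by surrounding each $\beta_j$ in the text with guard regions long enough that, at the designated shift, every other pattern block lies over blanks. For that to hold for all $i$ simultaneously, consecutive $\beta_j$'s must be separated by at least the pattern length $M=nL$ with $L=df$, so the text has length $\Theta(n\cdot nL)=\Theta(n^2 df)$, not $O(n^2f)$; your step ``$N=O(n^2L)=O(n^2f)$'' silently drops the factor $d$. This is not a cosmetic loss: in the application (Theorem~\ref{thm:OV:ptt-Hamming}) the lemma is invoked with parameters $n\mapsto n/d$, $d\mapsto d^2$, $f\mapsto f$, so the paper's length $O((n/d)^2f)$ becomes $O(n^2f)$ under your construction, and the resulting $N^{3/2-\eps}$ bound exceeds the quadratic budget for \OV; the reduction becomes vacuous. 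If instead you pack the blocks consecutively to save length, then at the shift aligning $\alpha_i$ with $\beta_j$ every other block $\alpha_{i'}$ sits over some $\beta_{j'}$, and since distinct Hamming similarity is a single global count of matched symbols over the whole pattern, those overlaps pollute the answer for $(i,j)$.

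The paper's proof resolves exactly this tension without guards. It first permutes indices so that every index of color $\ell$ sits at a position congruent to $\ell\pmod n$ (inflating the inner dimension to $nf$), writes the text as the blocks $T_1\cdots T_n$ \emph{concatenated with no separators} inside outer $\$$-padding, and separates consecutive pattern blocks by a \emph{single} $\#$. Because each block has length $nf$ and the separator adds $1$, the block $P_{j\pm s}$ is phase-shifted by $\pm s\pmod n$ relative to the text's lattice, so its occurrences of symbol $\ell$ land at positions $\equiv \ell\pm s\pmod n$ and can never match the occurrences of $\ell$ in $T$ (all of which are $\equiv\ell\pmod n$). Thus only the $P_j$-over-$T_i$ overlap can contribute a match of any color, the blocks can be packed back to back, and the total length stays $O(n^2f)$. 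Your reduction correctly identifies the ``make the matched symbol depend only on the color'' idea and the Indyk-style framework, but without this modular phase argument (or some substitute achieving cross-talk-free packing at density $O(n^2 f)$) the stated length bound, and hence the downstream tight lower bound, is not obtained.
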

\begin{proof}
First we move the columns of $A$ and the rows of $B$ so that the color of each index $k$ is $k\bmod n$, if $k\bmod n$ is in $[d]$.  This can be accomplished by taking each $\ell\in [d]$, and mapping the smallest index with color $\ell$ to index $\ell$, the second smallest index with color $\ell$ to index $n+\ell$, the third to $2n+\ell$, etc.  All unused indices are colored ``$!$'', with the corresponding empty columns of $A$ and rows of $B$ set to false.
The inner dimension (the number of columns of $A$ or the rows of $B$) is now increased to $nf$.  

Our transformation of the matrix problem to a string problem is
a variant of a known reduction from Boolean matrix multiplication to the standard pattern-to-text Hamming distance problem, attributed to Indyk (see e.g. \cite{GawrychowskiU18}).

For the $i$-th row of $A$, we define a corresponding string 
$T_i=a'_{i,1}a'_{i,2}\cdots a'_{i,nf}$ with $a'_{i,k}=\col(k)$ if $A[i,k]$ is true, and $a'_{i,k}=\$$ otherwise.  Here, $\$$ is a new extra symbol.

For the $j$-th column of $B$, we similarly define a corresponding string 
$P_j=b'_{1,j}b'_{2,j}\cdots b'_{nf,j}$ with $b'_{k,j}=\col(k)$ if $B[k,j]$ is true, and $b'_{k,j}=\#$ otherwise.  Here, $\#$ is another new extra symbol.

We define the text string $T=\$^{n(nf+1)}T_1T_2\cdots T_n\$^{n(nf+1)}$,
and the pattern string $P=P_1\#P_2\#\cdots\#P_n$, and solve the
pattern-to-text distinct Hamming similarity problem on these two strings
of length $O(n^2f)$, with alphabet $[d]\cup\{!,\#,\$\}$.

To determine the $(i,j)$-th output entry for \ColorBMM,
consider the shift of the pattern string $P$ so that $P_j$ is aligned with
$T_i$.  Fix $\ell\in [d]$. All occurrences of the symbol $\ell$ in $T$ are at indices congruent to $\ell\pmod{n}$.  In the shifted copy of $P$,
all occurrences of $\ell$ in $P_j$ are also at indices congruent to $t\pmod n$, but the occurrences of $\ell$ in $P_{j\pm 1}$ are at indices
congruent to $(\ell\pm 1)\pmod n$ (because of the separator $\#$
between $P_j$ and $P_{j\pm 1})$, and more generally, the occurrences of $t$ in $P_{j\pm s}$ are at indices congruent to $\ell\pm s\pmod{n}$ for each $s$.
Thus, the only common occurrences of $\ell$ in $T$ and the shifted $P$ are in the $P_j$ portion of $P$, and a common occurrence exists
iff there exists $k\in [nf]$ with $\col(k)=\ell$ such that $A[i,k]\wedge B[k,j]$.
It follows that the $(i,j)$-th output entry for \ColorBMM\ is yes
iff the distinct Hamming similarity between $T$ and the shifted $P$ is $d$ (i.e., only the 3 extra symbols $!,\#,\$$ are not matched).
\end{proof}
 
\begin{theorem}
\label{thm:OV:ptt-Hamming}
If the pattern-to-text distinct Hamming similarity
problem for two strings with length $O(N)$ could be solved in
$O(N^{3/2-\eps})$ time,
then \OV\ for $n$ Boolean vectors in $f$ dimensions could be
solved in $O(f^{O(1)}n^{2-4\eps/3})$ time.

More generally, if the pattern-to-text distinct Hamming similarity
problem for two strings with length $O(N)$ and alphabet size $O(N^\alpha)$ could be solved in
$O(N^{1+\alpha-\eps})$ time for some constant $\alpha\le 1/2$,
then \OV\ for $n$ Boolean vectors in $f$ dimensions could be
solved in $O(f^{O(1)}n^{2-2\eps/(1+\alpha)})$ time.
\end{theorem}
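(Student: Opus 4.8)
The plan is to compose the preceding lemma with Lemma~\ref{lem:colorbmm:ov} and then optimize a single free parameter $d$.

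First, given an \OV{} instance on $n$ Boolean vectors in $f$ dimensions, I would apply Lemma~\ref{lem:colorbmm:ov} with a parameter $d$ to be chosen later (satisfying $d\le n^{1/3}$): in $O(ndf)$ time this yields one \ColorBMM{} instance on an $(n/d)\times(d^2f)$ and a $(d^2f)\times(n/d)$ Boolean matrix with $d^2$ colors, and -- crucially -- with exactly $f$ inner indices of each color (the indices $(k,\ell,s)$ with $s\in[f]$ all carry color $(k,\ell)$). This is precisely the rigid shape required by the preceding lemma: instantiate that lemma with its ``$n$'' set to $n/d$, its ``$d$'' set to $d^2$, and its ``$f$'' set to $f$, so that its hypothesis ``$d\le n$'' becomes $d^2\le n/d$, which holds because $d\le n^{1/3}$. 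The preceding lemma then produces one instance of pattern-to-text distinct Hamming similarity on two strings of length $N=O((n/d)^2f)=O(n^2f/d^2)$ over an alphabet of size $O(d^2)$, and I would solve it with the assumed algorithm, reading off the \ColorBMM{} answers and hence the \OV{} answer; the extra $O(ndf+N)$ for building these objects is lower-order.

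Next I would balance $d$. Treating $f$ as contributing only a $\poly(f)$ factor to $N$ (absorbed into the final $f^{O(1)}$), I would choose $d$ so that the alphabet size $\Theta(d^2)$ matches $N^\alpha=\Theta((n^2/d^2)^\alpha)$, giving $d^{2+2\alpha}=n^{2\alpha}$, i.e.\ $d=n^{\alpha/(1+\alpha)}$, whence $N=\Theta(f\cdot n^{2/(1+\alpha)})$ and the alphabet has size $\Theta(n^{2\alpha/(1+\alpha)})=O(N^\alpha)$ (pad with dummy symbols if exact equality is desired). The composability constraint $d\le n^{1/3}$ reads $\alpha/(1+\alpha)\le 1/3$, i.e.\ $\alpha\le 1/2$, which is exactly the hypothesis of the theorem. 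Substituting into the assumed running time, \OV{} would then be solved in
\[ O\!\left(f^{O(1)}\,N^{1+\alpha-\eps}\right)\ =\ O\!\left(f^{O(1)}\,\big(n^{2/(1+\alpha)}\big)^{1+\alpha-\eps}\right)\ =\ O\!\left(f^{O(1)}\,n^{2-2\eps/(1+\alpha)}\right), \]
which is the general claim; the first claim is the special case $\alpha=1/2$, where $d=n^{1/3}$, $N=\Theta(fn^{4/3})$, the alphabet has size $\Theta(n^{2/3})=\Theta(N^{1/2})$ (so the unrestricted ``$O(N^{3/2-\eps})$'' hypothesis already falls into the $\alpha=1/2$ regime), and $2-2\eps/(1+\alpha)=2-4\eps/3$.

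The whole argument is bookkeeping once the preceding lemma is granted; essentially all the nontrivial content -- the Indyk-style modular-shift encoding that forces a matched symbol of color $\ell$ to occur only inside the aligned block $P_j$, so that distinct Hamming similarity equals $d$ exactly when the colors are covered -- lives there and is assumed. The only genuine point of care in the present proof is checking that Lemma~\ref{lem:colorbmm:ov}'s output has the required factorized inner dimension (number of colors times a uniform count $f$ of indices per color) and that the two numerical constraints, $d\le n^{1/3}$ for composability and $\alpha\le 1/2$ for the target exponent, coincide.
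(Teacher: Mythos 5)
Your proposal is correct and follows essentially the same route as the paper's proof: compose Lemma~\ref{lem:colorbmm:ov} with the \ColorBMM-to-string lemma, set $d=n^{\alpha/(1+\alpha)}$ so the alphabet size $d^2$ equals $N^\alpha$ with $N=\Theta((n/d)^2f)$, and note that the constraint $d^2\le n/d$ is exactly $\alpha\le 1/2$. The bookkeeping, including reading off the first claim as the $\alpha=1/2$ case, matches the paper.
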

\begin{proof}
Note that in the reduction from Lemma~\ref{lem:colorbmm:ov}, the number of indices per color is indeed bounded by $f$.
By the above lemma, it follows that if the pattern-to-text distinct Hamming similarity
problem with string length $N$ and alphabet size $\sigma$ could be solved in $T(N,\sigma)$ time, then the time bound for \OV\ is
$O(T((n/d)^2f,\,d^2))$, assuming that $d^2\le n/d$.

Choose $d$ so that $d^2 = (n/d)^{2\alpha}$, i.e., $d=n^{\alpha/(1+\alpha)}$.  Since $\alpha\le 1/2$, we have $d\le n^{1/3}$, and so $d^2\le n/d$.  The time bound becomes $O(T((n/d)^2f,\,d^2))=O(f^{O(1)}n^{2-2\eps/(1+\alpha)})$ if $T(N,N^\alpha)=O(N^{1+\alpha-\eps})$.
\end{proof}

\subsection{Hardness of Set-Disjointness and Set-Intersection from \APSP}
\label{sec:disjoint}
In this section, we show tight hardness of \SetDisj\ and \SetInter\  based on the Real APSP hypothesis.

Recall that in the \SetDisj\  problem, one is given a universe $U$, a collection of sets $\mathcal{F} \subseteq 2^U$, and $q$ queries of the form $(F_1, F_2) \in \mathcal{F} \times \mathcal{F}$ asking whether $F_1 \cap F_2 = \emptyset$. Tight hardness of \SetDisj\  was first shown by Kopelowitz, Pettie and Porat~\cite{kopelowitz2016higher} based on the Integer 3SUM hypothesis.  It was later strengthened by Vassilevska W. and Xu~\cite{williamsxumono} to be based on the Integer Exact-Triangle hypothesis. Now we prove the same conditional lower bound, but based on the Real APSP hypothesis. 

\begin{theorem}
\label{thm:setinter}
For any constant $0 < \gamma < 1$, no algorithm for \SetDisj\  where $|U| = O(N^{2-2\gamma})$, $|\mathcal{F}| = O(N)$, each set of $\mathcal{F}$ has size $O(N^{1-\gamma})$, and $q=O(N^{1+\gamma})$ can run in $O(N^{2-\epsilon})$ time for $\epsilon > 0$, under the Real APSP hypothesis.
\end{theorem}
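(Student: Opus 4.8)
The plan is to run the \APSP\ $\rightarrow$ \AESparseTri\ route of Lemmas~\ref{lem:apsp1} and~\ref{lem:apsp2}, observing that each tripartite \AESparseTri\ graph produced there is literally a \SetDisj\ instance, and then to fix its one out-of-range parameter (the universe size) by the ``high vs.\ low degree'' trick used in the proof of Theorem~\ref{thm:apsp:monotri}. Recall \APSP\ is equivalent to \MinPlus, and \MinPlus\ of two $n\times n$ real matrices reduces to $n/d$ rectangular $(\min,+)$-products of an $n\times d$ and a $d\times n$ matrix; by Lemma~\ref{lem:apsp1} each such product reduces to $\OO(1)$ calls to \APSPVar, and by (the proof of) Lemma~\ref{lem:apsp2} each \APSPVar\ call reduces to $O(d)$ instances of \AESparseTri\ on tripartite graphs $G_{k,P}$ with left and right parts of size $n$, with $\OO(n^2/d+dn)$ edges, and whose queried (left--right) edges are exactly the $O(n^2/d)$ pairs in $P$. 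Reading $G_{k,P}$ as a \SetDisj\ instance — universe $=$ middle nodes, $\mathcal F=$ the middle-neighbourhoods of the $O(n)$ left and right nodes, queries $=\{(N(x[i]),N(z[j])):(i,j)\in P\}$ — one gets $|\mathcal F|=O(n)$, every set of size $\OO(d)$, and $q=O(n^2/d)$. The only parameter out of range is $|U|$, which is $\OO(dn)$.

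To shrink the universe without windowing (windowing the $j$-index to make the right side small would reduce $|U|$ but reintroduce a factor $n^{\gamma}$ in the number of \SetDisj\ instances and kill subcubicity), I apply the high/low trick to each $G_{k,P}$: for every middle node of degree at most $n^{1-\eps}/d$, brute-force all its left$\times$right neighbour pairs, recording the ``yes'' answers directly, and delete the node; this costs $\OO((dn)\cdot n^{1-\eps}/d)=\OO(n^{2-\eps})$ per graph and leaves only $\OO(d^2n^{\eps})$ middle nodes. Choosing $d=n^{1-\gamma}$ (so each set has size $\OO(n^{1-\gamma})$ and $q=\OO(n^{1+\gamma})$) and taking $N=n^{1+o(1)}$ to absorb the $n^{\eps}$ and polylogarithmic factors into the exponent, the resulting \SetDisj\ instances satisfy $|U|=O(N^{2-2\gamma})$, $|\mathcal F|=O(N)$, set size $O(N^{1-\gamma})$, and $q=O(N^{1+\gamma})$, exactly as the theorem requires.

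Assembling the reduction, there are $\OO(n/d)\cdot O(d)=\OO(n)$ \SetDisj\ instances in total; the brute-force work is $\OO(n)\cdot\OO(n^{2-\eps})=\OO(n^{3-\eps})$, and the sorting and graph-construction overhead from Lemma~\ref{lem:apsp2} is $\OO(dn^2)=\OO(n^{3-\gamma})$. Hence an $O(N^{2-\epsilon})$ algorithm for \SetDisj\ in the stated regime would solve \APSP\ in $\OO(n)\cdot O(N^{2-\epsilon})+\OO(n^{3-\eps}+n^{3-\gamma})=\OO(n^{3-\delta})$ for some $\delta>0$ (pick $\eps$ and the slack in $N$ small enough relative to $\epsilon$ and $\gamma$), contradicting the Real APSP hypothesis. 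A minor technical point is that Lemma~\ref{lem:apsp1} needs \APSPVar\ to return an actual witness $k'_{ij}$, while \SetDisj\ only answers the decision question; this is recovered with $\OO(1)$ overhead by the standard bit-by-bit/random-restriction witness-finding technique sketched in Lemma~\ref{lem:exacttri23} (alternatively, reduce to \SetInter\ directly, which outputs intersection elements, and which this same reduction shows to be hard as well).

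The main obstacle is precisely this universe/instance-count bookkeeping: the plain dyadic-interval construction has universe $\sim dn$, a factor $n^{\gamma}$ too large, and the obvious way to shrink it costs the same factor in the instance count; the high/low-degree trick is what breaks the deadlock (trading $\OO(n^{2-\eps})$ brute-force time per graph for a $\OO(d^2n^{\eps})$ universe), and the care is in checking that $\eps$, $\gamma$, the exponent slack hidden in $N$, and the witness-finding overhead can all be chosen consistently so that the resulting \APSP\ running time stays truly subcubic. Everything else is a mechanical specialization of the already-proved \APSP\ $\rightarrow$ \AESparseTri\ reduction.
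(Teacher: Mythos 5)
Your proposal is correct and follows essentially the same route as the paper's own proof: reduce \APSP\ via Lemmas~\ref{lem:apsp1} and~\ref{lem:apsp2} to tripartite \AESparseTri\ instances, read each graph directly as a \SetDisj\ instance (universe $=$ middle nodes, sets $=$ neighbourhoods of left/right nodes, queries $=$ left--right edges), and shrink the universe with the high/low-degree trick, choosing $d=N^{1-\gamma}$. The only cosmetic differences are that the paper splits each graph into $n^{\delta}$ sub-instances so that $|U|=O(d^{2})$ holds exactly (rather than absorbing the extra $n^{\eps}$ factor into a slightly inflated $N$), and that it defers the witness-recovery point you handle explicitly to the general remark on witness finding in the preliminaries.
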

\begin{proof}
It is almost an immediate corollary of Lemma~\ref{lem:apsp2}. Suppose  an $O(N^{2-\epsilon})$ time algorithm for \SetDisj\  exists. We apply the previous reduction in Lemma~\ref{lem:apsp1} and Lemma~\ref{lem:apsp2} 
with $n=N$ and $d = N^{1-\gamma}$ to get $\OO(d)$ instances of  \AESparseTri. 
By the Real APSP hypothesis, each instance requires $n^{2-o(1)}$ time. 
By a close inspection of that reduction, the number of left and right nodes is $\Theta(n)$, the number of edges between the left and right nodes is $O(n^2 / d)$, and each left and right node has $d$ neighbors in the middle part. For each middle node whose degree is at most $n^{1-\delta} / d$ for some $\delta > 0$, we can enumerate all pairs of its neighbors can then ignore the node afterwards. Thus, we can bound the number of middle nodes by  $O(n^\delta d^2)$ by paying $O(n^{2-\delta})$ time.
We can split the instance to $n^\delta$ instances so that the number of middle nodes is $O(d^2)$ in each instance. 
Let $\mathcal{F}$ be the union of the left and right nodes, $U$ be the set of middle nodes, and add the edges between these accordingly. Clearly, this is a valid input to the \SetDisj\  problem, and thus we can solve each instance in $O(n^{2-\epsilon})$ time, and all the $n^\delta$ instances in $O(n^{2-\epsilon + \delta})$ time. 

Thus, now we have an $O(n^{2-\epsilon + \delta} + n^{2-\delta})$ time algorithm for the \AESparseTri\  instance, which is $O(n^{2-\epsilon / 2})$ by setting $\delta = \epsilon / 2$. This contradicts with the Real APSP hypothesis. 
\end{proof}

We also show lower bound for the \SetInter\  problem based on the Real APSP hypothesis, matching the previous lower bound from the Integer Exact Triangle hypothesis. Recall that the input of \SetInter\  is the same as the input of \SetDisj\ : a universe $U$, a collection of sets $\mathcal{F} \subseteq 2^U$, and $q$ queries. Instead of outputting whether each pair of sets intersect, the \SetInter\  problem asks the algorithm to output a certain number of elements in the $q$ intersections. Intuitively, the \SetInter\  problem asks to list a certain number of triangles in a certain type of tripartite graph. 

\begin{theorem}
For any constant $0 < \gamma < 1$ and $\delta \ge 0$, no algorithm for \SetDisj\  where $|U| = O(N^{1+\delta-\gamma})$, $|\mathcal{F}| = O(\sqrt{N^{1+\delta+\gamma}})$, each set of $\mathcal{F}$ has size $O(N^{1-\gamma})$,  $q=O(N^{1+\gamma})$, and the algorithm is required to output $O(N^{2-\delta})$ elements from the intersections, can run in $O(N^{2-\epsilon})$ time for $\epsilon > 0$, under the the Real APSP hypothesis.
\end{theorem}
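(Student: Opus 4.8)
The plan is to follow the template of Theorem~\ref{thm:setinter}, but to target \SetInter\ by exploiting the observation --- already used implicitly in the proof of Lemma~\ref{lem:apsp2} --- that that reduction really needs only the \emph{witness} version of \AESparseTri, and that a witness for the edge $x[i]\,z[j]$ of the graph $G_{k,P}$ built there is precisely a common neighbour of $x[i]$ and $z[j]$ among the middle nodes. Hence every \AESparseTri\ instance generated along the way in the \APSP\ reduction can be read off directly as a \SetInter\ instance: the family $\mathcal{F}$ has one set per left node and one per right node, namely its neighbourhood among the middle nodes; the universe $U$ is the set of middle nodes; and the queries are the left--right edges. Reporting elements of the intersections is then exactly reporting triangles through the corresponding edges of $G_{k,P}$.

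The first and most important point is to bound how large the output parameter $T$ must be. The crucial fact is that the Las Vegas random sampling inside Lemma~\ref{lem:apsp1} already forces, at every call to the \APSPVar\ oracle (hence to the \AESparseTri\ oracle on $G_{k,P}$), that for each edge $x[i]\,z[j]$ the number of indices $k'$ with $A[i,k']+B[k',j] < A[i,k_{ij}]+B[k_{ij},j]$ is $O(\log n)$ w.h.p.; since each such $k'$ picks out a \emph{unique} dyadic interval $I$, hence a unique triangle $x[i]\,y[k',I]\,z[j]$, every query has $O(\log n)$ elements in its intersection w.h.p. Thus the total number of (query, element) incidences in each instance is $\OO(q)$, and it suffices to run \SetInter\ with its output parameter set to (a constant times) this bound: the oracle is then forced to report \emph{all} incidences, yielding a witness for every edge whose intersection is non-empty. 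If the oracle ever returns the full quota $T$ of incidences we re-randomise and retry, which keeps the reduction Las Vegas.

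With that in hand I would redo the parameter bookkeeping of Theorem~\ref{thm:setinter}'s proof. As there, the high-vs-low-degree trick is applied to the middle nodes of each $G_{k,P}$: those of degree below a chosen threshold are removed after brute-forcing all pairs of their neighbours --- contributing an additive $\OO(N^{2-\Omega(1)})$ overhead and, as a byproduct, settling some edges directly --- after which the surviving middle nodes number $\OO(d^2)$ and are distributed over a small number of \SetInter\ sub-instances. Choosing the thin matrix dimension $d$, the sizes of the row/column blocks into which the $(\min,+)$-product is split (equivalently, into how many groups the left and right nodes are partitioned), and the granularity with which each set $P_k$ is subdivided, to be the appropriate monomials in $N$, $\gamma$, and $\delta$, one arranges $|U| = O(N^{1+\delta-\gamma})$, $|\mathcal{F}| = O(\sqrt{N^{1+\delta+\gamma}})$, each set of size $O(N^{1-\gamma})$ and $q = O(N^{1+\gamma})$, while the output budget $T = \OO(q) = O(N^{2-\delta})$ comes out automatically; it is convenient here to handle the regimes $\delta \le 1-\gamma$ and $\delta > 1-\gamma$ slightly differently (subdividing $P_k$ more finely in the latter). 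An $O(N^{2-\epsilon})$-time \SetInter\ algorithm would then solve all these instances, reconstruct every $(\min,+)$-product, and hence solve \APSP\ in $O(N^{3-\epsilon'})$ time for some $\epsilon' > 0$, contradicting the Real APSP hypothesis.

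The step I expect to be the real obstacle is precisely this balancing act. In Theorem~\ref{thm:setinter} only $|U|$ and $q$ had to be tuned, whereas here \emph{five} quantities --- $|U|$, $|\mathcal{F}|$, the set size, $q$, and the output budget $T$ --- must all land on their prescribed values simultaneously, and at the same time the number of sub-instances times $N^{2-\epsilon}$, plus the cumulative brute-force overhead, must stay subcubic in $N$. Verifying that the sampling-based $O(\log n)$-triangles-per-edge bound pins $T$ down tightly enough, and that the extra room the $\delta$ parameter provides is exactly what absorbs the enlarged universe and the shrunken family, is the delicate part; once that alignment is checked the rest is a routine adaptation of the reduction in Theorem~\ref{thm:setinter}.
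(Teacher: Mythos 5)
There is a genuine gap, and it sits exactly at the point you flag as ``the real obstacle.'' Your plan keeps the sampling of Lemma~\ref{lem:apsp1} as is (a random half of $[d]$), so each query edge has $O(\log n)$ witnesses w.h.p.\ and the total output per instance is $T=\OO(q)$. But the theorem asks for instances with $q=O(N^{1+\gamma})$ queries and only $T=O(N^{2-\delta})$ output elements, and $\OO(q)=O(N^{2-\delta})$ holds only on the line $\delta=1-\gamma$. A quick parameter check shows the vanilla reduction cannot be rebalanced to cover the other regimes: with $|\mathcal{F}|=n$, set size $d$, and $q=\OO(n^2/d)$, matching $|\mathcal{F}|=O(N^{(1+\delta+\gamma)/2})$, set size $O(N^{1-\gamma})$ and $q=O(N^{1+\gamma})$ forces $\delta=1-\gamma$ again, and even ignoring that, the per-instance time lower bound you inherit is only $n^{2-o(1)}=N^{1+\delta+\gamma-o(1)}$, which is $N^{2-o(1)}$ only when $\delta+\gamma=1$. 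Your proposed remedy for $\delta>1-\gamma$ --- subdividing each $P_k$ more finely --- does not help: with probability-$1/2$ sampling a constant fraction of the pairs in any $P$ have at least one improving index in the first oracle call, so the ratio $T/q$ stays $\Theta(1)$ no matter how $P$ is partitioned, while the required ratio is $N^{1-\delta-\gamma}=o(1)$; subdividing also only multiplies the instance count without raising the per-instance hardness.

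The paper's proof supplies precisely the missing knob: it generalizes the sampling rate in Lemma~\ref{lem:apsp1} to a parameter $t$. For $\delta+\gamma\le 1$ it keeps each index with probability $1/t$, so there are only $\OO(d/t)$ instances but each query has $\OO(t)$ witnesses, giving $q=\OO(n^2t/d)$ and $T=\OO(n^2t^2/d)$; with $n=N^{(1+\delta+\gamma)/2}$, $d=N^{1-\gamma}$, $t=N^{1-\delta-\gamma}$ all five parameters and the $N^{2-o(1)}$ per-instance bound land correctly. For $\delta+\gamma\ge 1$ it keeps each index with probability $1-1/t$, which blows the recursion depth up to $\OO(t)$ but makes each pair improve only $\OO(1)$ times over the whole algorithm, so the total output across all $\OO(dt)$ instances is $\OO(n^2)$ and each instance needs only $\OO(n^2/(td))\ll q$ output. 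Without this two-sided modification of the sampling probability, the statement for general $(\gamma,\delta)$ does not follow from your argument; the rest of your outline (witness version of \AESparseTri\ as \SetInter, the high-vs-low-degree trick to bound $|U|$, the Las Vegas accounting) matches the paper and is fine.
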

\begin{proof}
If $\delta > 2$, then the input size is already $\sqrt{N^{1+\delta+\gamma}}\cdot N^{1-\gamma} \ge N^2$, and the lower bound is trivially true. Thus, we may assume $\delta \le 2$.

The key intuition in the proof is to generalize Lemma~\ref{lem:apsp1} and Lemma~\ref{lem:apsp2}. In the proof of Lemma~\ref{lem:apsp1}, we take the random subset $R \subseteq [d]$ of size $d/2$. Suppose we keep each element with probability $1/t$ for some $1 \le t \le d$ instead, the $R$ has size $\OO(d/t)$ w.h.p., and so does the number of distinct $k^{(R)}_{i,j}$. 
Thus we reduce each instance of \MinPlus\ between an $n \times d$ real matrix and a $d \times n$ real matrix to $O(d/t)$ instances of some triangle problems, but in each instance of the triangle problem we need to enumerate $\OO(t)$ triangles per edge since the number of $k$ that beats $k^{(R)}_{i,j}$ is $\OO(t)$ w.h.p. 

On the other hand, suppose we keep each element in $[d]$ with probability $1-1/t$. Then for each pair of $(i, j)$, the probability that the best $k$ in $[d] \setminus R$ beats the best $k$ in $[d]$ is only $O(1/t)$, so we only need to enumerate fewer triangles. 

In the following, we will formalize the above intuitions.

\begin{claim}
\label{cl:set_inter1}
For any $2 \le t \le d$, \MinPlus\ of an $n \times d$ real matrix $A$ and a $d \times n$ real matrix $B$ (randomly) reduces to $\OO(d/t)$ instances of \SetInter\  where $|\mathcal{F}|=n$, each set of $\mathcal{F}$ has size $d$, $q=\OO(n^2t/d)$ and the number of elements to output is $\OO(n^2t^2/d)$. 
\end{claim}
\begin{proof}
We sample a random subset $R \subseteq [d]$ by keeping each element with probability $1/t$. Then we compute $k^{R}_{i,j}=\argmin_{k \in R} (A[i,k]+B[k,j])$ recursively. Similar to the proof of Lemma~\ref{lem:apsp2}, let $P_k = \{(i,j) \in [n^2]: k_{ij} = k\}$ for each $k \in R$, and we subdivide each $P_k$ to subsets of sizes $O(n^2t / d)$. 
W.h.p., $|R| = \OO(d/t)$, and the number of such subsets is $\OO(d/t)$. 

For each subset $P$, we create a triangle (listing) instance in the same way as Lemma~\ref{lem:apsp2}. Note that the number of edges between the left and right part is $|P| = O(n^2 t / d)$ and the number of middle neighbors of every left and right node is $d$. 
The difference is that, for each $(i, j) \in P$, we expect to see $\OO(t)$ distinct $k \in [d]$ such that $A_{i,k}+B_{k,j} < A_{i, k^{R}_{i,j}}+B_{k^{R}_{i,j}, j}$, i.e., the number of triangles involving edge $(x[i], z[j])$ is $\OO(t)$ with high probability. Thus, we need to output $\OO(t)$ triangles per edge in order to find the optimal $k$. In total, we need to output $\OO(n^2 t^2/d)$ triangles. 

It is then straightforward to reduce the triangle listing instances to \SetInter\ with the required parameters. 
\end{proof}
Note that the guarantees of Claim~\ref{cl:set_inter1} does not involve the size of the universe $U$. However, similar as before, we can overcome it via the ``high vs.\ low degree'' trick. 

Given $N, \gamma, \delta$, we can use Claim~\ref{cl:set_inter1} by setting $n=N^{\frac{\delta+\gamma+1}{2}}, d=N^{1-\gamma}$ and $t=N^{1-\delta-\gamma}$. We can  check that Claim~\ref{cl:set_inter1} gives the correct values of $|\mathcal{F}|$, sizes of each set, $q$, and the number of triangles to enumerate. The lower bound implied by Claim~\ref{cl:set_inter1} is $n^{2-o(1)}d / (d/t)=N^{2-o(1)}$. Note that we need $2 \le t \le d$, so it works for all ranges of $\gamma, \delta$ where $\delta+\gamma \le 1$. 

The following claim handles the remaining cases. 

\begin{claim}
\label{cl:set_inter2}
For any $2 \le t \le n^2/d$, \MinPlus\ of an $n \times d$ real matrix $A$ and a $d \times n$ real matrix $B$ (randomly) reduces to $\OO(dt)$ instances of \SetInter\  where $|\mathcal{F}|=n$, each set of $\mathcal{F}$ has size $d/t$, $q=\OO(n^2/d)$ and the number of elements to output is $\OO(n^2/td)$. 
\end{claim}
\begin{proof}
We sample a random subset $R \subseteq [d]$ by keeping each element with probability $1-1/t$. Then we compute $k^{R}_{i,j}=\argmin_{k \in R} (A[i,k]+B[k,j])$ recursively. Notably, the depth of the recursion is $\log_{1/(1-1/t)} n = \OO(t)$, so there will be an additional $\OO(t)$ factor on number of instances. 

Then similar to the proof of Lemma~\ref{lem:apsp2}, let $P_k = \{(i,j) \in [n^2]: k_{ij} = k\}$ for each $k \in R$, and we subdivide each $P_k$ to subsets of sizes $O(n^2 / d)$. The number of such subsets is $\OO(d)$. Since the depth of the recursion is $\OO(t)$, the total number of instances is $\OO(dt)$.

For each subset $P$, we create a triangle (listing) instance in the same way as Lemma~\ref{lem:apsp2}. 
One difference is that, for each $i$ or $j$, we only need to connect it to the middle nodes corresponding to some $k' \in [d] \setminus R$, which has size $\OO(d/t)$ with high probability. 

The other difference requires some analysis. For some pair $(i,j) \in [n] \times [n]$, we consider the number of times its $k^{R}_{i,j}$ gets improved over the \textit{whole algorithm}. For each iteration, suppose the current middle set is $R$, and we are sampling $R' \subseteq R$ by adding each element to $R'$ with probability $1-1/t$. Therefore, the probability that $k^R_{i, j} \not \in R'$ is $1/t$. It means that the best $k$ for the pair $(i, j)$ is updated in the iteration for $R$ with probability $1/t$. Since there are $\OO(t)$ iterations, the number of times $k^{R}_{i,j}$ gets improved is $\OO(1)$ w.h.p. by Chernoff bound, and each time it gets improved there are at most $\OO(1)$ distinct of $k \in R \setminus R'$ that can improve over $k^{R'}_{i,j}$ w.h.p. 
Because of this, the number of triangles that are required to be outputted over the whole course of the algorithm is only $\OO(n^2)$ with high probability. 

If some instance contains more than $\OO(n^2/td)$ triangles, we can split the instance to several instances so that we only need $\OO(n^2/td)$ triangles from each instance. One way to do this split is to order pairs in $P$ arbitrarily and use binary search to find the maximum prefix $P' \subseteq P$ such that the instance corresponding to $P'$ only has $\OO(n^2/td)$ triangles, and then remove $P'$ from $P$ and continue. 

It is then straightforward to reduce the triangle listing instances to \SetInter. 
\end{proof}
Similar as before, we can bound the size of $U$ via  the ``high vs.\ low degree'' trick. 

Given $N, \gamma, \delta$, we can use Claim~\ref{cl:set_inter2} by setting $n=N^{\frac{\delta+\gamma+1}{2}}, d=N^{\delta}$ and $t=N^{\delta+\gamma-1}$. We can check Claim~\ref{cl:set_inter2} gives the correct values of $|\mathcal{F}|$, size of each set, $q$, and the number of triangles to enumerate. The lower bound implied by Claim~\ref{cl:set_inter2} is $n^{2-o(1)}d / (td)=N^{2-o(1)}$. Note that we need $2 \le t$ and $td \le n^2$, so it works for all ranges of $\gamma, \delta$ where $\delta+\gamma \ge 1$ and $\delta \le 2$. 
\end{proof}

\section{Miscellaneous Remarks}
\label{sec:final}

The \ColorBMM\ problem we have introduced has several natural variants.
For example, given $n\times n$ Boolean matrices $A$ and $B$ and a mapping $\col:[n]\rightarrow \Gamma$, for each $i,j\in [n]$,
we may want to find the most frequent element in the multiset $\{\col(k): A[i,k]\wedge B[k,j],\ k\in [n]\}$.  Call this problem \ModeBMM.
It is not difficult to adapt the proofs in Section~\ref{sec:colorBMM} to obtain similar reductions from \APSP\ or \OV\ to \ModeBMM.

\ColorBMM\ and \ModeBMM\ are related to some new variants of equality matrix products.
Following the naming convention from~\cite{williamsxumono}, one could define the following
products of two $n\times n$ integer matrices $A$ and $B$, which to our knowledge have not been studied before:
\begin{itemize}
\item \DistinctEq: for each $i,j\in [n]$, compute 
the number of distinct elements in $\{ A[i,k]: A[i,k]=B[k,j],\ k \in [n] \}$;
\item \ModeEq: for each $i,j\in [n]$, find the most frequent element in the multiset 
$\{A[i,k]: A[i,k] =  B[k,j],\ k\in [n]\}$.
\end{itemize}

It is easy to see that \DistinctEq\ and \ModeEq\ can be reduced from \ColorBMM\ and \ModeBMM\ respectively (by setting
$A'[i,k]=\col(k)$ if $A[i,k]$ is true, and $A'[i,k]=\$$ otherwise, and
$B'[k,j]=\col(k)$ if $B[k,j]$ is true, and $B'[k,j]=\#$ otherwise, for two new symbols $\$$ and $\#$).  In particular, we immediately obtain near cubic lower bounds for these problems under OVH, by Theorem~\ref{thm:colorbmm:ov}.  One can similarly consider the
\DistinctPlus\ problem (computing the number of distinct elements in $\left\{ A[i,k]+B[k,j]: k \in [n] \right\}$).
In the case when the matrix entries are integers bounded by $n^\alpha$, all these products can be computed in $\OO(n^{2+\alpha+o(1)})$ time if $\omega=2$, and we have nearly matching lower bounds for all $\alpha\le 1$ under OVH, by Theorem~\ref{thm:colorbmm:ov}.

It is possible to obtain conditional lower bounds based on the conjectured hardness of \OddSUM\ for constant $k>1$, which is believed to require $n^{k+1-o(1)}$ time (the problem is to find  $2k+1$ input numbers summing to 0).
For example,
since \OddSUM\ is known to be reducible to an asymmetric instance of \ThreeSUM\ for three sets of sizes $n^k$, $n^k$, and $n$, Theorem~\ref{thm:3sum} with $\beta=1/k$ immediately implies:

\begin{corollary}
For any fixed $k$,
if \AESparseTri\ with $m$ edges and degeneracy $m^{\alpha}$ could be solved in $\OO(m^{1+\alpha-\eps})$ time
for some constant $\alpha\le 1/(3k+2)$, then \OddSUM\ could be solved in $\OO(n^{k+1-\eps(k+1)/(1+\alpha)})$ time using Las Vegas randomization.
\end{corollary}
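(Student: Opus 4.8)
The plan is to compose the standard reduction from \OddSUM\ to an asymmetric \ThreeSUM\ instance with Theorem~\ref{thm:3sum}; the corollary is then just a parameter substitution.

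First I would recall the folklore reduction from \OddSUM\ to asymmetric \ThreeSUM. Given an instance of \OddSUM\ on reals $x_1,\dots,x_n$, split the $2k+1$ summands into three groups of sizes $k$, $k$, and $1$, and set
\[
A=\Bigl\{\textstyle\sum_{r=1}^k x_{i_r}: i_1,\dots,i_k\in[n]\Bigr\},\quad
B=\Bigl\{\textstyle\sum_{r=1}^k x_{j_r}: j_1,\dots,j_k\in[n]\Bigr\},\quad
C=\{-x_\ell:\ell\in[n]\}.
\]
Then $|A|,|B|\le n^k$ and $|C|=n$, and the \OddSUM\ instance is a YES instance iff there exist $a\in A$, $b\in B$, $c\in C$ with $a+b=c$, i.e.\ iff the (third-set-negated) \ThreeSUM\ instance $(A,B,C)$ is a YES instance. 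The sets $A$ and $B$ are built by brute-force enumeration in $O(n^k\log n)$ time, which is negligible compared with the claimed running time, and this part of the reduction is deterministic.

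Next I would invoke Theorem~\ref{thm:3sum}, taking the theorem's parameter ``$n$'' to be $n^k$ and $\nhat$ to be $n=(n^k)^{1/k}$, so that $\beta=1/k$ (here $k\ge 1$, so $\beta\le 1$ as required). The degeneracy condition of the theorem, $\alpha\le\beta/(3+2\beta)$, becomes $\alpha\le (1/k)/(3+2/k)=1/(3k+2)$, which is exactly the hypothesis of the corollary. Hence, if \AESparseTri\ with $m$ edges and degeneracy $\OO(m^\alpha)$ can be solved in $\OO(m^{1+\alpha-\eps})$ time, Theorem~\ref{thm:3sum} yields an $\OO\bigl((n^k)^{1+\beta-\eps(1+\beta)/(1+\alpha)}\bigr)$-time Las Vegas algorithm for \ThreeSUM\ on sets of sizes $n^k,n^k,n$, and therefore (by the reduction above) for \OddSUM. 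Using $1+\beta=(k+1)/k$ we get $(n^k)^{1+\beta}=n^{k+1}$ and $(n^k)^{\eps(1+\beta)/(1+\alpha)}=n^{\eps(k+1)/(1+\alpha)}$, so the running time is $\OO(n^{k+1-\eps(k+1)/(1+\alpha)})$, as claimed; the only randomization used is the Las Vegas randomization inherited from Theorem~\ref{thm:3sum}.

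There is essentially no substantive obstacle here: the entire content is a careful bookkeeping of the asymmetric parameters---matching the theorem's large set size with $n^k$, its small set size $\nhat$ with $n$, and verifying that the degeneracy bound $1/(3k+2)$ coincides with $\beta/(3+2\beta)$ at $\beta=1/k$---together with checking that the running-time exponent collapses to the stated form.
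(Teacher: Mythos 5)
Your proposal is correct and matches the paper's own (very terse) justification: the paper likewise invokes the known reduction from \OddSUM\ to an asymmetric \ThreeSUM\ instance on sets of sizes $n^k$, $n^k$, $n$ and then applies Theorem~\ref{thm:3sum} with $\beta=1/k$, and your parameter bookkeeping ($\beta/(3+2\beta)=1/(3k+2)$ and the exponent collapsing to $k+1-\eps(k+1)/(1+\alpha)$) is exactly right. Your write-up is in fact slightly more explicit than the paper, which treats the \OddSUM-to-\ThreeSUM\ step as folklore.
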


\cite{barba2019subquadratic} have extended Gr{\o}nlund and Pettie's {\sf 3SUM} decision-tree result~\cite{gronlund2014} to certain algebraic generalizations of {\sf 3SUM} (given sets $A$, $B$, and $C$ of real numbers, does there exist a triple $(a,b,c)\in A\times B\times C$ such that $f(a,b)=c$ for a fixed bivariate polynomial $f$ of constant degree).  It seems possible to adapt our proof to obtain a reduction from such algebraic versions of {\sf 3SUM} to \AESparseTri\ (by replacing dyadic intervals with more advanced geometric range searching techniques), although the bounds would be worse.

\bibliographystyle{alpha}
\bibliography{ref}

\appendix

\section{Model of Computation}
\label{sec:model}

Within fine-grained complexity it is standard to work in the Word RAM model of computation with $O(\log n)$-bit words. 
We will consider variants of {\sf APSP} and {\sf 3SUM} in which the numbers are reals, as opposed to integers. To handle real numbers, the usual model of computation is a version of the Real RAM (see e.g. Section 6 in the full version of ~\cite{erickson-real}) which supports unit cost comparisons and arithmetic operations (addition, subtraction, multiplication, division) on real numbers, unit cost casting integers into reals, in addition to the standard unit cost operations supported by an $O(\log n)$-bit Word RAM\@.
No conversions from real numbers to integers are allowed, and randomization only happens by taking random $O(\log n)$-bit integers, not random reals.

It is known~\cite{HartmanisS74} that unit cost multiplication makes RAM machines very powerful.
For fine-grained complexity purposes when exact polynomial running times matter, even additions of real numbers can give the Real RAM a lot of power over the standard Word RAM\footnote{For instance, 
using 
$O(n)$ additions one can create $n$-bit numbers at a cost of $O(n)$ and then arithmetic operations between these $n$-bit numbers also take unit times, giving the Real RAM a polynomial advantage over the Word RAM.}.

However, it is not difficult to formulate ``reasonable'' restricted versions of the Real RAM which would not make it more powerful (i.e., it can still be simulated efficiently on the Word RAM when the input numbers are $O(\log n)$-bit integers). In all the restricted versions of the Real RAM, we only restrict the set of unit cost operations available between real numbers, while keeping the Word RAM part of the model and the interplay between reals and integers the same as the original Real RAM. 
We describe several specific options below. 

\paragraph{``Reasonable'' Real RAM models.}
We start by defining two of the weakest (i.e., most restrictive) models, under which all our reductions hold. We emphasize that \emph{from the perspective of conditional lower bounds, reductions that work in weaker models are better}
(they automatically work in stronger models),
 and \emph{hypotheses based on weaker models are more believable}.
Thus, these two simplest models are sufficient for the purposes of this paper.

\begin{enumerate}
\item[(A)] \emph{Real RAM with 4-linear comparisons}.  In this model, the only operations
allowed on the input real numbers are comparisons of the form 
$a+b<c$ or $a+b < a'+b'$ where $a,b,a',b',c$ are input values.  

All of our reductions from \ThreeSUM, \ExactTri, and 
\MinPlus{} clearly work in this restricted model.  All known algorithms
for these problems also work in this model, and so does Fredman's and Gr\o{}nlund and Pettie's decision tree upper bound~\cite{fredman1976new,gronlund2014} (which holds in the 4-linear decision trees).  An exception is
Kane, Lovett, and Moran's decision tree upper bound~\cite{KaneLM19}, which requires a relaxation to 6-linear comparisons.

\item[(A$'$)] \emph{Real RAM with 4-linear comparisons and restricted additions}.
For \APSP, the model needs slight strengthening, since additions are needed to compute path lengths.  We allow registers storing intermediate real numbers.
We may add two register values, and perform comparisons of the form
$a+b<a'+b'$ where $a,b,a',b'$ are register values. 
However, we require that at any time, the value
of a register is equal to the length of some path. 

Note that register values cannot be exponentially large because of this requirement
(in particular, we cannot repeatedly double a number by adding it to itself).
\end{enumerate}

While the above models are sufficient for our purposes,
we mention two stronger models that are popularly used from the algorithm designers'
perspectives, though they may not have been explicitly stated in previous works:

\begin{enumerate}
\item[(B)] \emph{Real RAM with low-degree predicates}.  In this model, the only type of operation
allowed on the input real numbers is testing the sign of a constant-degree polynomial, with constant number of arguments and constant integer coefficients, evaluated at a constant number of the input values.  (Clearly, (B) extends (A), with degree~1.)

This version of the Real RAM model is usually sufficient for algorithms in computational geometry (e.g., for convex hulls, Voronoi diagrams, or line segment intersections).  It
is reminiscent of the standard
\emph{algebraic decision tree} model~\cite{Ben-Or83}.
For example, in 2D, testing whether three points are in clockwise order reduces to testing the sign of a degree-2 polynomial, and testing whether a point is in the circle through three given points reduces to testing the sign of a degree-4 polynomial~\cite{BergCKO08}.

\item[(B$'$)] \emph{Real RAM with low-degree computation}. 
In a still stronger model, we allow registers storing intermediate real numbers, support additions, subtractions, and multiplications on these registers, and allow comparisons of two register values.
However,  we require that at any time, the value
of a register is equal to the evaluation of a constant-degree
polynomial with $n$ arguments and constant integer coefficients, evaluated at the $n$ input values.  
(Clearly, (B$'$) extends (A$'$), with degree~1.)

Note that when the input numbers are $O(\log n)$-bit integers, intermediate numbers remain $O(\log n)$-bit integers because of this requirement; thus, computation can be simulated efficiently in the Word RAM\@.  We can further relax the model by allowing the degree to be $n^{o(1)}$ and coefficients to be bounded by $2^{n^{o(1)}}$, and computation can still be simulated in the Word RAM with an $n^{o(1)}$-factor slow-down.
\end{enumerate}

Though rarer, there are some examples of Real RAM algorithms in the computational geometry literature that do not fit in the above stronger models (e.g., for computing minimum-link paths in polygons~\cite{KahanS99}), because intermediate numbers
are generated iteratively from earlier numbers in a manner that forms a tree of large depth.  However, such algorithms are generally recognized as impractical due to precision issues.

\paragraph{How an unrestricted Real RAM may be unreasonable.}
An unrestricted Real RAM that supports arithmetic operations and the
floor function is known to have the ability to
solve PSPACE-hard problems in polynomial time~\cite{Schonhage79}.
It could also alter the fine-grained complexity of problems in P\@.
For example, in this model, it is possible to solve \AESparseTri\ in near-linear time (exercise left to the reader), and thus  solve \APSP\ in truly subcubic time and \ThreeSUM\ in truly subquadratic time by our reductions.

Though not as well known, an unrestricted Real RAM \emph{without} the floor function
can still do unusual things.  For example, it is possible to detect \emph{one}
triangle in a sparse graph in near-linear time, or 
solve \IntAPSP\ (but not necessarily \APSP)
in $O(n^{\omega+o(1)})$ time (similar to~\cite{Yuval76}, but with a small
modification to avoid the floor function).  It is not clear if the Real APSP or 3SUM
hypothesis could still be true in this model, but because of the unusualness of the model, we
are not willing to bet on it.

\paragraph{Implications to the large integer setting.}
Even to readers who are not interested in problems with real-valued inputs in the Real RAM Model, our techniques are still useful for versions of the problems for \emph{large integer} input in the standard Word RAM\@.  Our reductions hold in the 4-linear comparison model with restricted additions, and the cost of each 4-linear comparison and restricted addition is almost linear in the number of bits in the input integers.  For example, the proof
of Theorem~\ref{thm:apsp} implies that for any fixed $\delta>0$, if \AESparseTri\ could be solved in $\OO(m^{4/3-\eps})$ time, then \IntAPSP\ for $O(n^{1/2-\delta})$-bit
integers could be solved in truly subcubic time (since our reduction uses
$\OO(n^{5/2})$ comparisons and restricted additions, each now costing $\OO(n^{1/2-\delta})$).
The previous reduction from \IntAPSP~\cite{williamsxumono} can only handle $n^{o(1)}$-bit integers.  The hypothesis that \IntAPSP\ does not have truly subcubic algorithm  
for $O(n^{1/2-\delta})$-bit
integers is more believable than the original \IntAPSP\ hypothesis for $O(\log n)$-bit integers.

For \IntThreeSUM, one can use hashing to map large integers to $O(\log n)$-bit integers with randomization, and so the result here would not be new.  But our techniques work for certain inequality variants of \IntThreeSUM, for which hashing is not
directly applicable.

\section{\ThreeSUM\ \texorpdfstring{$\rightarrow$}{rightarrow} \ExactTri}
\label{sec:3sum2exactT}
\begin{theorem}
\ExactTri\ requires $n^{2.5-o(1)}$ time assuming the Real $3$SUM hypothesis. 
\end{theorem}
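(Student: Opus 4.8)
The plan is to reuse, essentially verbatim, the known non-tight reduction from $\IntThreeSUM$ to $\IntAEExactTri$ of Vassilevska W.\ and Williams~\cite{VWfindingcountingj}, and to observe that every step of that reduction touches the input numbers \emph{only} through additions and comparisons, so that it runs unchanged in the restricted Real RAM (model~(A) of Appendix~\ref{sec:model}, i.e.\ $4$-linear comparisons). This gives a reduction from $\ThreeSUM$ to $\AEExactTri$, hence in particular to $\ExactTri$. The hashing-based \emph{tight} ($n^{3-o(1)}$) reduction (via P\u{a}tra\c{s}cu's detour through Convolution-$\ThreeSUM$) cannot be ported to reals, and it is precisely the absence of hashing that costs us the exponent $5/2$ rather than $3$.

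Concretely, after negating the third set, $\ThreeSUM$ asks whether sets $A,B,C$ of size $n$ contain $a\in A,b\in B,c\in C$ with $a+b=c$. First sort $A,B,C$ and cut each into $n/s$ contiguous blocks of $s$ elements, writing $A[i,p],B[j,q],C[k,r]$ for their elements. Since the blocks are sorted, a triple of blocks $(A_i,B_j,C_k)$ can contain a solution only when the interval $[A[i,1]+B[j,1],\,A[i,s]+B[j,s]]$ meets $[C[k,1],C[k,s]]$; the ``relevant'' block-triples form the same monotone staircase structure exploited in Gr\o nlund--Pettie's $\ThreeSUM$ algorithm~\cite{gronlund2014} and in Section~\ref{sec:3sum}, and can be enumerated using only comparisons of sums $a+b$ against $c$. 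For each relevant block-triple, deciding whether some $(p,q,r)$ satisfies $A[i,p]+B[j,q]=C[k,r]$ is an $\ExactTri$ question on the $O(s)$-node tripartite graph whose parts are the $s$ elements of $A_i,B_j,C_k$, where the $A_i$--$B_j$ edge on $(A[i,p],B[j,q])$ has weight $A[i,p]+B[j,q]$, the $A_i$--$C_k$ edge on $(A[i,p],C[k,r])$ has weight $-C[k,r]$, and the $B_j$--$C_k$ edge on $(B[j,q],C[k,r])$ has weight $0$, so that a zero-weight triangle encodes exactly a solution in that block-triple (and, crucially, no spurious solution appears in an \emph{irrelevant} block-triple, because there the relevant interval is disjoint). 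Overlaying these small graphs in groups of the appropriate size --- and, in the worst case, first splitting the $\ThreeSUM$ instance into $O((n/r)^2)$ independent subinstances of size $r$, exactly as in Section~\ref{sec:mono} --- and then tuning $s$ and $r$ by a parametrized analysis in the style of Theorems~\ref{thm:apsp:deg} and~\ref{thm:3sum}, one obtains a reduction whose upshot is: if $\ExactTri$ (equivalently $\AEExactTri$) on $m$-node graphs ran in $\OO(m^{5/2-\eps})$ time, then $\ThreeSUM$ on $n$ numbers would run in $\OO(n^{2-\eps'})$ time for some $\eps'>0$, contradicting the Real $\ThreeSUM$ hypothesis.

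The only thing that genuinely needs checking --- and the whole point of the argument --- is that nothing in the reduction falls outside the restricted Real RAM: the partition into blocks uses sorting/comparisons; the enumeration of relevant block-triples uses only $4$-linear comparisons (Fredman-style $a+b\lessgtr c'+b'$ and $a+b\lessgtr c$); and every weight placed into an $\ExactTri$ instance is an $O(1)$-term $\{-1,0,1\}$-combination of input values. No floor function, no bit manipulation of the real inputs, and no hashing occur. The remaining care point is to preclude \emph{spurious} zero-weight triangles arising from distinct index triples with coincidentally equal sums (or from the overlaying of components); this is handled exactly as in the proof of Lemma~\ref{lem:distincteq}, by perturbing each input value with a distinct infinitesimal multiple of $\delta$, stored symbolically as a pair $(x,y)$ representing $x+y\delta$ and compared lexicographically --- again using only comparisons, and so still within the model. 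I expect this model-bookkeeping (verifying comparison-only access and implementing the infinitesimal perturbation symbolically), rather than any new combinatorial idea, to be the main --- and essentially the only --- obstacle.
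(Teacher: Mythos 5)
Your high-level diagnosis is right (the tight hashing-based reduction does not port to reals, but the older Vassilevska W.--Williams comparison-only reduction does), and your block decomposition plus staircase enumeration of the $O((n/s)^2)$ candidate block-triples is sound and matches the first half of the paper's proof in Appendix~\ref{sec:3sum2exactT}. But the second half of your argument has a genuine gap: you never get the total size of the \ExactTri\ instances below $n^2$. Each relevant block-triple yields a graph with $\Theta(s^2)$ edges, and there are $\Theta((n/s)^2)$ of them, so the combined input handed to the \ExactTri\ oracle has size $\Theta(n^2)$ no matter how you group the subproblems; since the oracle must at least read its input, no tuning of $s$ and $r$ can make the reduction subquadratic. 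The proposed fix --- ``overlaying'' the small graphs as in Lemma~\ref{lem:monotri} --- does not apply here: that lemma produces \AEMonoTri\ instances, where edge colors filter out cross-instance triangles, whereas overlaid \ExactTri\ instances would admit spurious zero triangles mixing edges from different block-triples. The infinitesimal-perturbation trick from Lemma~\ref{lem:distincteq} cannot repair this: it is a tie-breaking device for $\argmin$-type comparisons, and perturbing the inputs destroys the exact-zero condition for the \emph{genuine} solutions as well.

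The missing idea, which is the heart of the paper's construction, is a gadget that \emph{reuses} edges across bucket pairs. After sorting into $g$ buckets and disposing (by an $\OO(n^{2-\eps})$ brute force) of the bucket pairs that have more than $n^{\eps}$ valid third buckets, the paper builds, for each $q\in[n^{\eps}]$ and $p\in[n/g]$, a single tripartite graph whose first two parts are the $g$ buckets of the first two sets and whose \emph{third part is the set of position pairs} $(s,t)\in[n/g]^2$: the edge from bucket $i$ to node $(s,t)$ carries the $s$-th element of $B_i$, the edge from bucket $j$ to $(s,t)$ carries the $t$-th element of $B_j$, and only the bucket--bucket edge $(i,j)$ carries the candidate third number (the $p$-th element of the $q$-th valid third bucket for the pair $(i,j)$). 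A single edge on the first side now serves all $g$ choices of $j$ simultaneously, so each instance has only $O(g+(n/g)^2)$ nodes; taking $g=n^{2/3}$ gives $n^{1/3+\eps}$ instances on $O(n^{2/3})$ nodes each, which an $O(N^{2.5-\delta})$-time \ExactTri\ oracle turns into a truly subquadratic \ThreeSUM\ algorithm. Your model-verification remarks remain valid for that construction (every edge weight is a single input value, and only comparisons of sums are used), but without this pairs-as-nodes encoding the reduction does not go through.
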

\begin{proof}
Given a \ThreeSUM\ instance, we first sort all the numbers, then split the sorted list into $g$ buckets $B_1,\ldots,B_g$ of size $n/g$ each for some $g$ to be determined later. Let $b_i$ be the first element in bucket $B_i$. Then all elements in $B_i$ are in the interval $[b_i,b_{i+1}]$.

Let's call a triple of buckets $B_i,B_j,B_k$ valid if $b_i+b_j+b_k\leq 0\leq b_{i+1}+b_{j+1}+b_{k+1}$. It was shown that 
number of valid triples is $O(g^2)$ (see e.g. \cite{VWfindingcountingj}), and we can find those triples in $O(g^2)$ time.

If some pair of buckets $B_i,B_j$ has at least $n^\eps$ buckets $B_k$ so that $B_i,B_j,B_k$ is a valid triple, then we go through each pair of elements in $B_i\cup B_j$ and check if the negative of their sum is in the list. The time for this over all such bucket pairs is asymptotically at most $\OO((g^2/n^\eps) (n/g)^2) = \OO(n^{2-\eps})$, which is truly subquadratic. 

The rest of the reduction follows from \cite{VWfindingcountingj}, but we include the full proof for completeness. 

Now for the remaining $B_i,B_j$ pairs, the number of $B_k$ that make a valid triple is at most $n^\eps$. 
We then create an \ExactTri\ instance for every $q\in [n^\eps]$ and $p$ in $[n/g]$ with three part of nodes. 

The first part represents buckets $B_i$, the second part represents buckets $B_k$, and the third part represents pairs $(s, t) \in [n/g] \times [n/g]$. 

Between $B_i$ and $B_i$, we add an edge with weight equal to the $p$-th number in the $q$-th bucket $B_k$ that forms a valid triple with $B_i,B_j$. Between $B_i$ and $(s, t)$, we add an edge with weight equal to the $s$-th number in bucket $B_i$. Between $B_j$ and $(s, t)$, we add an edge with weight equal to the $t$-th number in bucket $B_j$. 

Clearly, one of the \ExactTri\ instance has an exact triangle if and only if the remaining triples have a solution to the \ThreeSUM\ instance, since the set of triangles in these \ExactTri\ instances corresponds to all triples if numbers that could potentially sum up to $0$. 

The number of nodes is $O(g+(n/g)^2)$, and if we set $g=n^{2/3}$ we get $n^{\eps+1/3}$ instances of \ExactTri\ with $n^{2/3}$ nodes each. Suppose there exists an $O(n^{2.5-\delta})$ for $\delta > 0$ time algorithm for \ExactTri, then we can solve \ThreeSUM\ in $\OO(n^{2-\eps} + n^{\eps + 1/3} (n^{2/3})^{2.5-\delta})$ time, which is $\OO(n^{2-\delta/2})$ if we set $\eps = \delta / 2$, violating the Real $3$SUM hypothesis. 
\end{proof}

\end{document}